\documentclass{article}

\usepackage[utf8]{inputenc}
\usepackage{authblk}
\usepackage{setspace}
\usepackage{graphicx}
\graphicspath{ {./figures/} }
\usepackage{subcaption}
\usepackage{amsmath}
\usepackage{lineno}
\usepackage{float}
\usepackage{helvet}
\usepackage{etoolbox}
\usepackage{graphicx}
\usepackage{titlesec}
\usepackage{titletoc}
\usepackage{caption}
\usepackage{nicematrix}
\usepackage{booktabs}
\usepackage{xcolor} 
\usepackage{amsmath}
\usepackage{amssymb}
\usepackage{amsthm}
\usepackage[colorlinks=true,citecolor=cyan,urlcolor=cyan,linkcolor=black]{hyperref}
\usepackage{subcaption}
\usepackage{physics}
\usepackage{mathtools,delarray}
\usepackage{algpseudocode}
\usepackage{phfcc}
\usepackage{revsymb}
\usepackage{nicematrix}
\usepackage{tikz}
\usepackage{enumitem}

\usepackage[a4paper,top=2truecm,left=2truecm,right=2truecm,bottom=2truecm,includefoot]{geometry}

\usetikzlibrary{positioning, arrows.meta}
\usetikzlibrary{calc, positioning, decorations.pathreplacing}
\usetikzlibrary{arrows.meta}

\NiceMatrixOptions{parallelize-diags=false}
\usepackage{comment}

\phfMakeCommentingCommand[initials={NT}]{NT}
\phfMakeCommentingCommand[initials={AW}]{AW}
\phfMakeCommentingCommand[initials={JA}]{JA}

\newcommand{\CC}{\mathbb{C}}
\newcommand{\RR}{\mathbb{R}}

\newcommand{\M}{\mathcal{M}}

\renewcommand{\Tr}{\operatorname{Tr}}
\newcommand{\1}{\openone}

\newcommand{\ex}{\operatorname{ex}}
\newcommand{\diag}{\operatorname{diag}}
\newcommand{\spec}{\operatorname{spec}}

\newcommand{\relint}{\mathrm{relint}} 
\newcommand{\aff}{\mathrm{aff}}

\renewcommand{\ket}[1]{\vert #1\rangle}
\renewcommand{\bra}[1]{\langle #1\vert}

\newtheorem{theorem}{Theorem}
\numberwithin{theorem}{section}

\newtheorem{conjecture}[theorem]{Conjecture}
\newtheorem{corollary}[theorem]{Corollary}

\newtheorem{lemma}[theorem]{Lemma}

\newtheorem{proposition}[theorem]{Proposition}

\usepackage{phfcc}

\theoremstyle{definition}
\newtheorem{definition}[theorem]{Definition}
\newtheorem{example}[theorem]{Example}
\newtheorem{remark}[theorem]{Remark}

\usepackage[style=ieee, backend=biber, sorting=none]{biblatex}
\begin{document}

\title{The geometry of absolute separability \protect\\ and other convex matrix properties from spectrum}

\author[1,3,a]{Jennifer Ahiable}
\author[1,b]{Naga Bhavya Teja Kothakonda}
\author[1,2,3,c]{Andreas Winter}

\renewcommand\Authfont{\fontsize{10}{10.8}\selectfont}
\renewcommand\Affilfont{\fontsize{10}{10.8}\selectfont}
\renewcommand*{\Authsep}{, }
\renewcommand*{\Authand}{, }
\renewcommand*{\Authands}{, }
\setlength{\affilsep}{2em}  
\newsavebox\affbox

\affil[1]{{\small\it Grup d'Informaci\'o Qu\`antica, Departament de F\'isica,\protect\\ Universitat Aut\`onoma de Barcelona, 08193 Bellaterra (Barcelona), Spain\vspace{1mm}}}
\affil[2]{{\small\it ICREA--Instituci\'o Catalana de Recerca i Estudis Avan\c{c}ats,\protect\\ Pg.~Lluis Companys, 08010 Barcelona, Spain\vspace{1mm}}}
\affil[3]{{\small\it Department Mathematik/Informatik--Abteilung Informatik,\protect\\ Universit\"at zu K\"oln, Albertus-Magnus-Platz, 50923 K\"oln, Germany\vspace{1mm}}}

\affil[a]{{\small\emph{Email:} jennifer.ahiable@autonoma.cat}}
\affil[b]{{\small\emph{Email:} nagabhavyateja.kothakonda@uab.cat}}
\affil[c]{{\small\emph{Email:} andreas.winter@uni-koeln.de}}

\date{\vspace{-2mm}\small (4 August 2026)\vspace{-2mm}}


\maketitle

\thispagestyle{empty}

\begin{abstract}
We investigate the geometric structure of the set of spectra of bipartite absolute separable states ($\mathrm{ASEP}_{m,n}$) and absolute positive partial transpose states ($\mathrm{APPT}_{m,n}$), \emph{i.e.}, bipartite quantum states that remain separable or PPT respectively, under all global unitary transformations. First, we establish general geometric properties of absolute convex sets of matrices, their spectra and extreme points. 
Regarding absolute separability,  we  present a permutation-symmetric reformulation of the absolute PPT criterion and use it to demonstrate that $\mathrm{APPT}_{m,n}$ is a spectrahedron for all $m\leq n$: in particular,  all its faces are exposed. In contrast, while $\mathrm{ASEP}_{2,n}$ is also a spectrahedron, we prove that in general $\mathrm{ASEP}_{m,n}$ is a semialgebraic set for all $m\leq n$. Furthermore, we provide a  complete characterization of the faces and extreme points of $\mathrm{APPT}_{m,n}$ and demonstrate that the dimension of a face is determined by the rank of a certain matrix, with maximal proper faces having dimension $(mn-m-1)$. 

In the quantitative setting, we provide a rigorous lower bound on the maximal attainable purity of $\mathrm{APPT}_{m,n}$ via an inscribed polytope $\mathcal{P}_{m,n}$ and conjecture that the maximal purity of $\mathrm{APPT}_{m,n}$ (along with its spectra) coincides with the polytope for arbitrary dimensions except when $m=n=2$. Additionally, we also provide a rigorous upper bound on the minimal von Neumann entropy of $\mathrm{APPT}_{m,n}$ and demonstrate numerically that the minimum entropy eventually coincides with the polytope  $\mathcal{P}_{m,n}$ as the local system dimension $n$ increases. Finally, we show that the relative spectral volume of $\mathrm{APPT}_{m,n}$ decays exponentially in $n$ by a constant multiplicative factor of the relative volume of the inscribed polytope $\mathcal{P}_{m,n}$. 
\end{abstract}

\setcounter{tocdepth}{1}

\startcontents
\renewcommand{\baselinestretch}{0.7}\normalsize
\printcontents{}{0}{}
\renewcommand{\baselinestretch}{1.0}\normalsize


\section{Introduction}
At the heart of quantum information theory is the phenomenon of quantum entanglement \cite{schrodinger1935discussion,einstein1935can,horodecki2009quantum}. Over the years, many applications of entanglement have been proven to be crucial for information processing and computation, such as quantum cryptography \cite{ekert1991quantum} and quantum teleportation \cite{bennett1993teleporting}. However, although this phenomenon has been experimentally demonstrated to exist in Nature \cite{Clauser1969}, a closed-form description is particularly difficult. Mathematically, entanglement is described as the complementary property to separability. Formally, a bipartite quantum state $\rho\in \M_m\otimes \M_n$ is said to be \emph{separable} \cite{werner1989quantum} if and only if it can be written as $\rho = \sum_j X_j \otimes Y_j$ where $X_j \in \M_m$ and $Y_j \in \M_n$ are positive semidefinite matrices. Any state that does not admit this decomposition is referred to as \emph{entangled}. Thus, a better understanding of separability directly correlates to a better understanding of entanglement. However, determining whether a given state is separable or not is generally known to be difficult: more precisely, it is NP-hard \cite{gurvits2003classical,gharbian2010}. Regardless, several necessary and/or sufficient conditions have been provided to address the separability/entanglement detection problem \cite{asherperes1996,Horodecki1996Separability,guhne2009entanglement}. 
Among these is the \emph{positive partial transpose (PPT)} criterion \cite{asherperes1996,Horodecki1996Separability}: a state $\rho \in\M_m \otimes \M_n$ is said to have \emph{positive partial transpose} (or \emph{to be PPT}) if $\rho^\Gamma$ is positive semidefinite for the linear map
$\Gamma(X \otimes Y) = X \otimes Y^T$,
where $X \in \M_m$ and $Y \in \M_n$, and $T$ denotes the standard matrix transpose. The PPT criterion is only sufficient whenever the total dimension of the system is less or equal to $6$ but necessary for separability in arbitrary dimensions. 

Within this class of states lies a convex and compact subset of separable states that remain separable after the transformation $U\rho U^\dagger$ under all global unitaries $U$ in the unitary group $\mathcal{U}(mn)$, widely referred to as \emph{absolutely separable states} \cite{kus2001geometry}. Since the eigenvalues of a state are the precise invariants under unitary rotations, by definition, the absolute separability property of the state is determined by its spectrum. Thus, \emph{absolute separability} is a property of the eigenvalues of a state (occasionally also referred to as \emph{separability from spectrum}  \cite{knill2003,johnston2014separability}) and asks for a complete characterization of such states entirely in terms of their eigenvalues. 
Just like for separability, the most natural step to solve this problem is to consider the  relaxation known as \emph{absolute PPT}. Defined analogously to absolutely separable states, absolute PPT states are those states which remain PPT after global unitary rotations and have been completely characterized across all dimensions \cite{hildebrand2007positive}. Remarkably, it has been shown that the set of bipartite absolute separable states coincides with the set of absolute PPT states for qubit-qudit systems \cite{johnston2013sepfromspectra} and this has been conjectured to hold for any arbitrary dimension with numerical evidence to support the claim \cite{johnston2018inverse}. Even though partial characterizations have been presented \cite{abellanet2025sufficient,XiongSze2026,kondra2026fundamental}, the problem remains open. Previous studies such as in \cite{fawzi2021set} have shown that while the bipartite separable states form semialgebraic sets in general, they are  spectrahedral shadows if and only if the total dimension of the system is less or equal to $6$, demonstrating the equivalence with PPT states. Since the set of absolute separable and PPT states are convex and compact \cite{ganguly2014witness}, it is only natural that we consider the absolute separability problem from a similar geometric perspective.
More recently, the boundary characteristics particularly focused on the extreme points have been studied for qubit-qudit systems and qutrit-qudit systems \cite{SongChen2025,halder2021characterizing,wang2026extreme}. 

In this paper, we seek to understand the geometric properties of \emph{the set of spectra} of absolutely separable states (which we denote by $\mathrm{ASEP}_{m,n}$) and absolutely PPT states ($\mathrm{APPT}_{m,n}$), respectively, and illustrate how underlying symmetries within the sets could provide more insight into solving the absolute separability problem and the open question of the equivalence of both sets. In particular, we begin by defining general absolute convex sets of matrices and investigate equivalence properties of these sets in terms of their spectra. Following that, we present a permutation-invariant reformulation of Hildebrand's absolute PPT criterion and and identify that the set of spectra $\mathrm{ASEP}_{2,n}$ and $\mathrm{APPT}_{m,n}$ are spectrahedra. In the more general sense, we show that  $\mathrm{ASEP}_{m,n}$ are semialgebraic sets for arbitrary dimensions whenever $m\leq n$. Due to the spectrahedral description of $\mathrm{APPT}_{m,n}$, we show that
its topological boundary can be expressed as the feasible part of $(mn-2)$-dimensional determinantal hypersurfaces. With this, we are able to bound the number of irreducible determinant polynomials describing the topological boundary. Additionally, we provide a kernel-based description of the facial structure of $\mathrm{APPT}_{m,n}$ which allows us to characterize its boundary, faces and extreme points by a linear constraint matrix generated from the kernel-subspaces. Thus, we determine the complete set of possible facial dimensions of $\mathrm{APPT}_{m,n}$ via the rank of this constraint matrix such that the maximum dimension of a proper face of $\mathrm{APPT}_{m,n}$ is $(mn-m-1)$. 

We also extend our results to study an equivalent permutation-invariant polytope $\mathcal{P}_{m,n}\subseteq \mathrm{APPT}_{m,n}$
and compute its maximal attainable purity and minimum von Neumann entropy. This gives us rigorous lower and upper bounds on the maximum purity and minimal von Neumann entropy of $\mathrm{APPT}_{m,n}$, respectively, which are inherently different from those obtained via the separable  ball. We conjecture that the maximal purity of this polytope and  $\mathrm{APPT}_{m,n} $ are equal except when $m=n=2$, and demonstrate this numerically for $m=2,3,4$. Additionally, we show that the relative volume of $\mathrm{APPT}_{m,n}$ follows a similar exponential decay as that of the inscribed polytope rather the separable which decays much faster. We utilize mathematical tools from real semialgebraic geometry and convexity theory.

The rest of the paper is organized as follows. In Section \ref{sec:Preliminaries}, we introduce our notations and various mathematical details used throughout the paper. In Section \ref{sec:AM}, we define absolute convex sets of matrices and their sets of spectra, and prove general relations between the extreme points of the two. In Section \ref{sec:spectrahedron}, we present a permutation-invariant reformulation of $\mathrm{APPT}_{m,n}$  and demonstrate it is a spectrahedron, while  $\mathrm{ASEP}_{m,n}$ is semialgebraic. In Section~\ref{sec:faces and extreme points}, we introduce the kernel-based description of the facial structure and provide necessary and/or sufficient conditions for the faces and extreme points of the set. In Section \ref{sec:purity-etc}, we move to quantitative aspects: we provide a lower bound on the maximum purity, upper bound on the minimum von Neumann entropy and assess the relative volume of $\mathrm{APPT}_{m,n}$ via the separable ball and the inscribed polytope. In section \ref{sec:conclusion}, we discuss concluding remarks of our results, open questions and potential geometric approaches for solving them. The paper ends with Appendix \ref{app:Appendix_A1} where we provide the proofs of some results in subsection \ref{subsec:boundary}.


\section{Preliminaries}\label{sec:Preliminaries}
Let us introduce our notations and briefly introduce some mathematical concepts relevant for the subsequent chapters.

\subsection{Convex geometry}
A set $C\subset \RR^n$ is \emph{convex} if for any $\mathbf{x},\mathbf{y}\in C$, the line segment connecting $\mathbf{x}$ and $\mathbf{y}$ is contained in $C$: $[\mathbf{x},\mathbf{y}] = \{ \alpha \mathbf{x} +(1-\alpha)\mathbf{y} \mid 0\leq\alpha\leq 1 \} \subseteq C$. The \emph{affine hull} of the convex set $C\subset \RR^n$, denoted $\mathrm{aff}(C)$, is defined as the set of all finite affine combinations of points in $C$:
\begin{equation*}
    \mathrm{aff}(C) = \left\{ \sum_{i=1}^k \alpha_i \mathbf{x}_i \;\middle|\; k \in \mathbb{N}, \mathbf{x}_i \in C, \alpha_i \in \mathbb{R}, \sum_{i=1}^k \alpha_i = 1 \right\}.
\end{equation*} The \emph{relative interior} of $C$, denoted $\mathrm{relint}(C)$, is defined as
\begin{equation*}
    \mathrm{relint}(C) = \{ \mathbf{x} \in C \mid \forall \mathbf{y} \in C, \; \exists \epsilon > 0 \text{ such that } \mathbf{x} + \epsilon(\mathbf{x} - \mathbf{y}) \in C \}.
\end{equation*} 
A \emph{face} of $C$ is a convex subset $F\subseteq C$ such that for $\alpha \in (0,1)$ and $\mathbf{x},\mathbf{y} \in C$, if $\alpha \mathbf{x} +(1-\alpha)\mathbf{y} \in F$, then $\mathbf{x},\mathbf{y}\in F$. We refer to the non-empty faces of $C$ with $F\neq C$ as \emph{proper} faces and the faces $\emptyset$ and $C$ as \emph{trivial} faces. An \emph{extreme} point of $C$ is simply a singleton face. That is, a point $\mathbf{p} = \alpha \mathbf{x} +(1-\alpha)\mathbf{y} $ with $\mathbf{x},\mathbf{y}\in C$ and $\alpha\in (0,1)$ must satisfy $\mathbf{x}=\mathbf{y} = \mathbf{p}.$ The dimension of any convex set $C$, denoted $\dim(C)$, is the dimension of the smallest affine subspace that contains $C$ and thus dimensions of non-empty faces are well-defined given that they are themselves convex sets. With the exception of $C$ itself, all faces lie on the \emph{topological boundary}, denoted $\partial C$, of the convex set $C$, i.e., the set difference between the closure and the interior of the given set. The closure $\overline{C}$ is the intersection of all closed sets containing $C$ and interior $\mathrm{int}(C)$ is the union of all open sets contained in $C$. Note that $\mathrm{int}(C) = (\overline{C^c})^c$, where $\bullet^c$ denotes the set complement. Since we are in Euclidean space, the closure is the set of limit points of sequences from $C$. For any non-zero linear function $\ell$ on $\RR^n$, the set $H = \{\mathbf{x}\in \RR^n : \ell(\mathbf{x}) = \mu,\; \text{ for }\mu \in \RR\}$ is called an \emph{affine hyperplane}. A proper face $F$ of $C$ is said to be \emph{exposed} if and only if $F = C \cap H$, for some \emph{supporting} affine hyperplane $H$, which is defined by $\ell(\mathbf{x}) \leq \mu,$ for all $\mathbf{x} \in C$.

\begin{theorem}[Krein-Milman]
\label{thm:Krein-Milman}
A convex compact set in $\RR^n$ is the convex hull of its extreme points.
\end{theorem}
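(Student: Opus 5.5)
The plan is induction on the dimension of the convex compact set $C\subset\RR^n$, with supporting hyperplanes used to pass to lower-dimensional faces. Write $E=\ex(C)$. One inclusion is immediate: $\mathrm{conv}(E)\subseteq C$ because $C$ is convex. For the reverse inclusion we show that every $\mathbf{p}\in C$ is a finite convex combination of extreme points of $C$. This finite-dimensional version is the Minkowski form of the statement and needs no closure.

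We induct on $d=\dim(C)$. If $d=0$, then $C$ is a single point, which is extreme. For $d\geq 1$, work inside $\aff(C)\cong\RR^d$, so that $\relint(C)$ is the interior there; it is nonempty, which is a standard fact for nonempty convex sets. We split into two cases.

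\emph{Case 1: $\mathbf{p}\in\partial C$ relative to $\aff(C)$.} By the supporting hyperplane theorem in $\aff(C)$, there is a nonzero linear $\ell$ (nonconstant on $\aff(C)$) with $\ell(\mathbf{p})=\mu=\max_C \ell$.
\begin{itemize}
\item The set $F=C\cap\{\ell=\mu\}$ is compact and convex with $\dim F<d$. By the inductive hypothesis, $\mathbf{p}$ is a convex combination of points of $\ex(F)$.
\item Since $F$ is a face of $C$, we have $\ex(F)\subseteq\ex(C)$. To see this, take $\mathbf{q}\in\ex(F)$ with $\mathbf{q}=\alpha\mathbf{x}+(1-\alpha)\mathbf{y}$, $\mathbf{x},\mathbf{y}\in C$. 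Then $\mu=\alpha\ell(\mathbf{x})+(1-\alpha)\ell(\mathbf{y})$ with both values at most $\mu$, so $\mathbf{x},\mathbf{y}\in F$. Extremality of $\mathbf{q}$ in $F$ then forces $\mathbf{x}=\mathbf{y}=\mathbf{q}$.
\end{itemize}

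\emph{Case 2: $\mathbf{p}\in\relint(C)$.} Take any line through $\mathbf{p}$ inside $\aff(C)$. Its intersection with $C$ is a compact segment $[\mathbf{a},\mathbf{b}]$ containing $\mathbf{p}$, and both endpoints lie on the relative boundary. By Case 1, each of $\mathbf{a}$ and $\mathbf{b}$ is a convex combination of extreme points. Hence so is $\mathbf{p}$, as a convex combination of $\mathbf{a}$ and $\mathbf{b}$.

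The only genuinely nontrivial ingredient is the supporting hyperplane theorem at a relative boundary point. I would prove it via the nearest-point projection onto a closed convex set: choose points $\mathbf{z}_k\notin C$ in $\aff(C)$ converging to $\mathbf{p}$, and let $\mathbf{v}_k$ be the unit normals from $\mathbf{z}_k$ to their projections onto $C$. By compactness of the unit sphere, a subsequence of $\mathbf{v}_k$ converges to a limit $\mathbf{v}$, and $\ell=\langle \mathbf{v},\cdot\rangle$ supports $C$ at $\mathbf{p}$. This is where compactness and closedness are used. The remaining steps are routine bookkeeping.
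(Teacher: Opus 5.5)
Your proof is correct. It is the standard finite-dimensional (Minkowski) argument: induction on $\dim C$, a supporting hyperplane at a relative-boundary point to pass to the lower-dimensional exposed face $F$, the inheritance $\ex(F)\subseteq\ex(C)$, and a line through a relative-interior point to reduce to two boundary points. Case~2 at dimension $d$ uses only Case~1 at dimension $d$, and Case~1 uses only the induction hypothesis, so there is no circularity.

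The paper gives no proof of this statement. It quotes the theorem as background from the cited convexity texts (Rockafellar, Barvinok), where essentially your argument appears, so there is no alternative route to compare against. Your version gives the sharper conclusion that every point of $C$ is a \emph{finite} convex combination of extreme points. That is the form the paper implicitly relies on in Section~\ref{sec:purity-etc}, where it asserts that the maximum of the purity (convex) and the minimum of the entropy (concave) over $\mathcal{P}_{m,n}$ are attained at vertices.

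Three details are left implicit in your sketch:
\begin{enumerate}
\item Because the $\mathbf{z}_k$ and their projections $\mathbf{c}_k$ all lie in $\aff(C)$, the limit normal $\mathbf{v}$ lies in the direction space of $\aff(C)$. This is exactly what makes $\ell$ nonconstant on $\aff(C)$, and hence gives $\dim F<d$.
\item You need $\mathbf{c}_k\to\mathbf{p}$, which follows from $\|\mathbf{c}_k-\mathbf{z}_k\|\le\|\mathbf{p}-\mathbf{z}_k\|$. This is what lets you pass to the limit in $\langle\mathbf{v}_k,\mathbf{x}-\mathbf{c}_k\rangle\le 0$.
\item Nonemptiness of $\relint(C)$ is never actually used: the two cases cover $C$ regardless.
\end{enumerate}
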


Occasionally, it is convenient to study convex sets by working with their corresponding convex cones especially in the context of duality. A set $K \subseteq \mathbb{R}^n$ is a \emph{convex cone} if for any $\mathbf{x}, \mathbf{y} \in K$ and any scalars $\alpha, \beta \geq 0$, the linear combination $\alpha\mathbf{x} + \beta\mathbf{y}$ belongs to $K$. The \emph{dual cone} of a cone $K \subseteq \mathbb{R}^n$, denoted $K^*$, is the set defined as
\begin{equation*}
    K^* = \{ \mathbf{y} \in \mathbb{R}^n \mid \langle \mathbf{y}, \mathbf{x} \rangle \geq 0 \text{ for all } \mathbf{x} \in K \}.
\end{equation*} 
We define the \emph{polar} of the convex set $K$ (with respect to the origin $0$) as 
\begin{equation*}
    K^\circ = \{\mathbf{y}\in \RR^n : \langle \mathbf{y}, \mathbf{x}\rangle\leq 1,\; \text{ for all } \mathbf{x}\in K\}.
\end{equation*}
The latter notion is most suited to compact convex sets containing the origin in its interior, because then $K^\circ$ again has all those properties.

A cone is \emph{pointed} if it contains no lines (i.e., $K \cap -K = \{\mathbf{0}\}$). For any closed convex cone, the dual of the dual is the original cone, i.e., $(K^*)^* = K$. A cone is pointed if and only if its dual is \emph{generating} (it spans the entire space). 
An \emph{extreme ray} of a convex cone K is a ray $\{\mu\mathbf{x} \in \RR^n\mid \mu \geq 0\}\subseteq K$ such that if $\mathbf{x} = \mathbf{y} +\mathbf{z}$ for $\mathbf{y}, \mathbf{z} \in K$, then both $\mathbf{y}$ and $\mathbf{z}$ must lie on the same ray. We denote the set of extreme rays of $K$ as $\mathrm{ex}(K).$ If $K$ is a compact convex set, $\ex(K)$ should be used for the extreme points and $\exp(K)$, for exposed points.
We refer the reader to \cite{rockafellar2015convex,barvinok:convexity} for more on the background on convex sets, see also \cite{boyd2004convex}.


\subsection{Real algebraic geometry}
We recall some basic facts about real algebraic sets, semialgebraic sets and spectrahedra. More comprehensive references on foundations of the subjects are, among others, \cite{BCR:RAG,cynthiaspectrahedra,scheiderer2022extreme,netzerspectrahedra}.

For $\mathbf{x}=(x_1,\ldots, x_n)\in \RR^n$, we define $\mathbb{R}[\mathbf{x}] = \mathbb{R}[x_1,\ldots, x_n]$ to be the real commutative ring of polynomials. For any subset of polynomials $S \subset \mathbb{R}[\mathbf{x}]$, the \emph{real algebraic set} generated by $S$ is the subset of $\mathbb{R}^n$ defined as
\begin{equation*}
    \mathcal{Z}(S) = \{ \mathbf{x} \in \RR^n \mid \forall g\in S\ g(\mathbf{x}) = 0 \}.
 \end{equation*}
If $S = \{g\}$ is a single polynomial, $\mathcal{Z}(g)$ is referred to as an \emph{algebraic hypersurface}. A real algebraic set $C \subset \mathbb{R}^n$ is \emph{irreducible} if it cannot be expressed as the union $C = C_1 \cup C_2$ of two proper algebraic subsets $C_1, C_2 \subsetneq C$. Every real algebraic set $C$ admits a unique minimal decomposition into a finite union of irreducible algebraic sets $ C = C_1 \cup C_2 \cup \dots \cup C_k,$ where $C_i \not\subseteq C_j$ for all $i \neq j$. The uniquely determined sets $C_i$ are defined as the \emph{irreducible components} of $C$.

\begin{definition}
    A \emph{semialgebraic} subset of $\RR^n$ is a subset of the form 
    $$ \bigcup_{i=1}^{s}\bigcap_{j=1}^{t_i} \{ \mathbf{x} \in \mathbb{R}^n \mid g_{i,j}(\mathbf{x}) *_{i,j} 0 \}$$ where $g_{i,j}(\mathbf{x}) \in \mathbb{R}[\mathbf{x}]$ and $*_{i,j} $ is either $< $ or $=$, for $i =1,\ldots, s$ and $j = 1,\ldots,t_i$. 
\end{definition} 
\noindent A semialgebraic set $C\subseteq \RR^n$ is said to be \emph{basic closed} if it takes the form 
$$C = \{ \mathbf{x} \in \mathbb{R}^n \mid g_1(\mathbf{x}) \geq 0,\, \dots,\, g_k(\mathbf{x}) \geq 0 \}.$$
An important class of convex semialgebraic sets are spectrahedra and their shadows. We denote by $\mathcal{S}_N^+$, the positive semidefinite cone of $N\times N$ matrices and by $\mathcal{M}_N$, the space of complex Hermitian $N\times N$ matrices.
\begin{definition}
    A set $C\subseteq \RR^n$ is called a \emph{spectrahedron} if there exists an affine linear map, $\mathcal{L}: \RR^n \to \M_N$, i.e, $\mathcal{L}(\mathbf{x})= A_0 +  x_1 A_1 +\cdots+ x_n A_n, $ where $ A_i \in \M_N$ such that 
    \begin{equation*}
         C = \{ \mathbf{x} \in \mathbb{R}^n |\; \mathcal{L}(\mathbf{x})\geq 0\}.
    \end{equation*}
\end{definition}
Here, $\geq 0$ denotes positive semidefiniteness, and more generally $\geq$ the semidefinite (L\"owner) order: $A\geq B$ if and only if $A-B\geq 0$. A spectrahedron can equivalently be defined as the intersection of the positive semidefinite cone with an affine linear subspace. In particular, spectrahedra are basic closed semialgebraic sets. The image of spectrahedra under linear projections are called \emph{spectrahedral shadows}. The dual of spectrahedra are therefore spectrahedral shadows.  It has been shown that all faces of a spectrahedron are exposed \cite{ramana1995some,cynthiaspectrahedra}, a property inherited from the positive semidefinite cone. In general, the same cannot be said for their shadows.

\section{Absolute convex sets of matrices, their sets of spectra and extreme points}
\label{sec:AM}
Let $\mathcal{M}_d$ denote the space of complex Hermitian $d\times d$-matrices, equipped with the Hilbert-Schmidt inner product $\langle A, B \rangle = \mathrm{Tr}(A^\dagger B)$ where $\bullet^\dagger$ refers to the Hermitian adjoint (conjugate transpose in matrix form).
Let $M \subset \M_d$ be a closed convex cone in the space of Hermitian $d\times d$-matrices, containing the identity $\1$. Its dual $M^*$, which thanks to the trace inner product on $\M_d$ can itself be regarded as a cone in $\M_d$:
\[
  M^* = \{ X\in \M_d : \forall Y\in M\ \Tr XY \geq 0 \}. 
\]
In general, $M^*$ is a closed convex cone, too, and $(M^*)^* = M$ by Farkas' Lemma \cite{rockafellar2015convex}.

We define the \emph{absolute $M$}-set as the set
\[
  AM \coloneq \bigcap_{U \in \mathcal{U}(d)} UMU^\dagger, 
\]
which is clearly a closed convex cone as well, and has the dual cone 
\begin{equation}\label{eq:absolute_dual}
    (AM)^* = \sum_{U\in\mathcal{U}(d)} U M^* U^\dagger,
    \end{equation}
the set of all finite sums of elements of the form $UXU^\dagger$, $X\in M^*$ and $U\in\mathcal{U}(d)$. Note that by Caratheodory's lemma, every element of the right-hand side of Eq.~\eqref{eq:absolute_dual} can be realized as a sum of at most $d$ points of the form $UXU^\dagger$, $X\in M^*$, which is why (via compactness) the right-hand side is already a closed convex cone.
If $M$ does not contain the identity $\1$ or a multiple of it, then trivially $AM = \emptyset$, so we shall generally assume that $\1\in M$. 
If furthermore $M$ is not generating, or indeed if it does not contain an open neighbourhood of $\1$, then $AM \subset \{t\1 : t \in \RR \}$ (leaving the possibilities that $AM = \RR\1$ or $AM = \RR_{\geq0}\1$). This is also not very interesting, so we shall generally assume that $M$ is generating and contains an open neighbourhood of $\1$, which makes $AM$ generating too: namely, note that the open neighbourhood of $\1$ in $M$ may be assumed a ball in a unitarily invariant norm, say the Hilbert-Schmidt norm, and then $AM$ contains the same open ball. We will assume that $M$ is pointed (making $AM$ pointed, too, automatically), because pointed cones are generated by their extremal rays, denoted $\ex(M)$, due to the conic version of the Krein-Milman theorem. Special unitary matrices are the permutation matrices $U_{\pi}$ for permutations $\pi \in S_d$.



Define furthermore the diagonal matrices among $M$ (which could be an arbitrary set for this definition) as $\diag(M) \coloneq M \cap D_d$, where $D_d = {\text{diagonal real matrices}}$, and
$$\spec(M) \coloneq \{\diag(\lambda_1,...,\lambda_d) \mid (\lambda_1, \dots, \lambda_n) \text{ is the spectrum of some } A \in  M\}$$
the spectra of matrices in $M$ (with multiplicities), written as diagonal matrices. 
Note that $\diag(M) \subset \spec(M)$, but in general of course they are not equal, even for convex sets. However, it evidently holds that 
\begin{equation}\label{eq:AM-diag=spec}
     \diag(AM) = \spec(AM)
     \end{equation}
for any closed convex cone $M$, as well as 
\begin{equation}\label{eq:AM-spec}
     AM = \bigcup_{U\in\mathcal{U}(d)} U\left(\spec(AM)\right)U^\dagger.
     \end{equation}
The latter property means that membership of a matrix $X$ in $AM$ is decided entirely by the spectrum $\spec(X)$, with no role played by the eigenbasis. This is the reason why $AM$ is sometimes interpreted as describing ``$M$-ness from spectrum'': it is the largest convex cone of Hermitian matrices $X$ such that $U\left(\spec(X)\right)U^\dagger \in M$ for all unitaries $U$. Intuitively, all geometric properties of $AM$ are thus encoded in $\diag(AM)$, and next we shall derive some results supporting this idea.

The following properties follow from these definitions for arbitrary closed convex cones $M$:

\begin{theorem}
\label{thm:AM-dual}
$\diag\left((AM)^*\right) = \sum_{\pi\in S_d} U_{\pi} \left(\spec(M^*)\right) U_{\pi}^\dagger$.
\end{theorem}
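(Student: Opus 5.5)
The plan is to prove the two inclusions separately. Throughout, I read the right-hand side, as in Eq.~\eqref{eq:absolute_dual}, as the set of all finite sums of elements $U_\pi \Lambda U_\pi^\dagger$ with $\Lambda\in\spec(M^*)$ and $\pi\in S_d$. Since $\spec(M^*)$ is closed under positive scaling, this set is the convex cone generated by all permuted spectra of elements of $M^*$. I also note that $\spec(M^*)$ already lists eigenvalues in some order. Conjugating by $U_\pi$ only reorders the diagonal, so the particular ordering chosen in the definition of $\spec$ is irrelevant.

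\emph{Inclusion ``$\supseteq$''.} Let $\Lambda\in\spec(M^*)$, say $\Lambda$ is the spectrum of $X\in M^*$. Then $\Lambda = V^\dagger X V$ for some unitary $V$, hence $U_\pi\Lambda U_\pi^\dagger = (U_\pi V^\dagger) X (U_\pi V^\dagger)^\dagger$. This is one of the generators of $(AM)^*$ in Eq.~\eqref{eq:absolute_dual}. Finite sums of such elements therefore lie in $(AM)^*$, because it is a convex cone. They are diagonal, because $U_\pi\Lambda U_\pi^\dagger$ is diagonal. Hence they lie in $\diag((AM)^*)$.

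\emph{Inclusion ``$\subseteq$''.} Let $D\in\diag((AM)^*)$. By Eq.~\eqref{eq:absolute_dual}, $D=\sum_{k} U_k X_k U_k^\dagger$ for finitely many $X_k\in M^*$ and unitaries $U_k$. Let $\Delta$ be the pinching onto the diagonal, i.e.\ $\Delta(Y)=\sum_i \ketbra{i}Y\ketbra{i}$. Since $\Delta$ is linear and $\Delta(D)=D$, we get $D=\sum_k \Delta(Y_k)$ with $Y_k=U_kX_kU_k^\dagger$. It then suffices to show that each $\Delta(Y_k)$ is a convex combination of the matrices $U_\pi\Lambda_k U_\pi^\dagger$, where $\Lambda_k\in\spec(M^*)$ is the diagonalised spectrum of $X_k$ (equivalently of $Y_k$). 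Write $Y_k = W\Lambda_k W^\dagger$ with $W$ unitary. Then $(Y_k)_{ii}=\sum_j |W_{ij}|^2 (\Lambda_k)_{jj}$. The matrix $(|W_{ij}|^2)_{ij}$ is unistochastic, hence doubly stochastic. By Birkhoff's theorem it is a convex combination $\sum_\pi p_\pi P_\pi$ of permutation matrices. Consequently $\Delta(Y_k)=\sum_\pi p_\pi U_\pi \Lambda_k U_\pi^\dagger$, which is exactly the easy direction of the Schur--Horn theorem. Absorbing the weights $p_\pi\geq 0$ into the cone yields $D$ as a finite sum of the required form.

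The only genuinely nontrivial step is this last one: controlling the diagonal of a unitary conjugate of $X_k$ in terms of its spectrum. I handle it via the unistochastic matrix together with Birkhoff--von Neumann. One bookkeeping point remains: the representation of $D$ must use the \emph{same} $\Lambda_k$ for all permutations belonging to a given $k$. This is automatic, since $\Lambda_k$ is determined by $X_k$ alone. Closedness is not an issue for either side: both are described by finite sums, and the paper's remark after Eq.~\eqref{eq:absolute_dual} already guarantees that $(AM)^*$ equals the set of such finite sums.
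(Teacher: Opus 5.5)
Your proof is correct and follows the paper's argument. The inclusion $\supseteq$ is immediate. For $\subseteq$, you take a finite-sum representation from Eq.~\eqref{eq:absolute_dual}, pinch it term by term, and write each pinched term as a convex combination of permuted spectra of the corresponding element of $M^*$. The only difference is in that last step: you prove it directly via the unistochastic matrix $(|W_{ij}|^2)$ and Birkhoff's theorem, whereas the paper cites Schur's majorisation theorem followed by Hardy--Littlewood--P\'olya. These establish the same fact.
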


\begin{proof}
The r.h.s. is clearly contained in the l.h.s., so 
we only have to worry about the opposite inclusion. Let $X \in \diag((AM)^*)$, meaning $X = \sum_i U_i T_i U_i^\dagger$ for some $T_i \in M^*$, and $D(X) = X$, where $D$ is the completely dephasing map that projects $\M_d$ onto $D_d$. Thus, $X = D(X) = \sum_i D(U_i T_i U_i^\dagger)$, and from general properties of majorisation (Schur), it follows that for every $i$, $D(U_i T_i U_i^\dagger)$ is majorised by $\spec(T_i)$. 
This on the other hand (Hardy-Littlewood-Polya) implies that there are probabilities $q(\pi|i)$ on permutations $\pi$ such that 
$D(U_i T_i U_i^\dagger) = \sum_{\pi} q(\pi|i) U_{\pi}(\spec(T_i))U_{\pi}^\dagger$. 
Inserting this into the previous equation for $X$, we get $X = \sum_i \sum_{\pi} q(\pi|i) U_{\pi}\left (\spec(T_i) \right) U_{\pi}^\dagger$, 
and we are done. 
\end{proof}

\begin{theorem}\label{thm:AM-Dephasing}
    $ \diag(AM) = D(AM)$ where $D$ is the completely dephasing map that projects the space of Hermitian matrices $\M_d$ onto the space of diagonal matrices $D_d$.
\end{theorem}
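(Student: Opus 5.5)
The inclusion $\diag(AM)\subseteq D(AM)$ is immediate: if $X\in AM$ is diagonal, then $D(X)=X$, so $X\in D(AM)$. The work is in the reverse inclusion $D(AM)\subseteq\diag(AM)$. So let $X\in AM$; we must show that the diagonal matrix $D(X)$ lies in $AM$, i.e.\ in $UMU^\dagger$ for every unitary $U$.

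The plan has three steps.

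\emph{Step 1: write the dephasing map as a unitary average.} I will use
\[
D(X)=\frac{1}{d}\sum_{k=0}^{d-1} Z^k X Z^{-k},
\]
where $Z=\diag(1,\omega,\dots,\omega^{d-1})$ and $\omega=e^{2\pi i/d}$. Equivalently, $D(X)$ is the Haar average of $VXV^\dagger$ over diagonal unitaries $V$.

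\emph{Step 2: each term lies in $AM$.} The set $AM$ is invariant under unitary conjugation. Indeed, for any unitary $W$,
\[
WAMW^\dagger=\bigcap_U WUMU^\dagger W^\dagger=AM,
\]
because $U\mapsto WU$ is a bijection of $\mathcal{U}(d)$. Hence $Z^kXZ^{-k}\in AM$ for each $k$.

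\emph{Step 3: the average lies in $AM$.} The set $AM$ is a convex cone, so the finite convex combination in Step 1 lies in $AM$. Therefore $D(X)\in AM\cap D_d=\diag(AM)$.

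Alternatively, one could argue via Schur majorisation together with \eqref{eq:AM-spec}. The spectrum of $D(X)$ is majorised by that of $X$, so $D(X)$ is a convex combination of permuted copies of $\spec(X)$. Each of these copies lies in $AM$, by \eqref{eq:AM-spec} and permutation invariance. This mirrors the proof of Theorem~\ref{thm:AM-dual}, but the averaging argument above is more direct.

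No step is a real obstacle. The only points to check are the finite-average formula for $D$, which follows from the fact that $\sum_k\omega^{k(i-j)}=d\,\delta_{ij}$, and the unitary invariance of $AM$. Closedness of $AM$ is not even needed, since the average is finite.
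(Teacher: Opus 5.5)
Your proof is correct, and your main argument takes a different route from the paper. The paper proves the reverse inclusion by majorisation. By Schur, $D(Y)$ is majorised by $\spec(Y)$, so by Hardy--Littlewood--P\'olya $D(Y)=\sum_\pi q_\pi U_\pi \spec(Y) U_\pi^\dagger$. Each permuted spectrum then lies in $AM$ by unitary invariance. This is exactly the alternative you sketch at the end.

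Your main argument instead writes $D$ as the finite pinching average $\frac1d\sum_k Z^kXZ^{-k}$. It then uses only convexity and invariance of $AM$ under conjugation by the $d$ clock unitaries, so it needs no majorisation theory. It also applies verbatim to any convex set that is merely invariant under conjugation by diagonal unitaries, a weaker hypothesis than full unitary invariance.

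The paper's route buys finer information: it expresses $D(Y)$ explicitly as a convex combination of permutations of $\spec(Y)$, and that combination is genuinely mixed whenever $Y$ is not diagonal. This same device drives the proofs of Theorems~\ref{thm:AM-dual}, \ref{thm:AM-extremal} and \ref{thm:AM-exposed-extremal}, so it keeps the section on a single toolkit. As you observe, closedness of $AM$ is not needed in either argument, since both averages are finite.
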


\begin{proof} The forward inclusion $\diag(AM) \subset D(AM)$ is straightforward. Let $X \in \diag(AM)$, meaning $X \in AM$ and $X \in D_d$. Since $X$ is diagonal, it is invariant under the completely dephasing map, i.e., $X = D(X) $.  Thus, $X = D(X) \in D(AM)$. 

For the opposite inclusion, suppose $X \in D(AM)$. Then there exists $Y\in AM$ such that $X = D(Y).$ By majorization properties, it follows that $D(Y)$ is majorised by $\spec(Y)$. Consequently, by the Hardy-Littlewood-Polya theorem, there exist probabilities $q_\pi \ge 0$ such that
$ D(Y) = \sum_{\pi \in S_d} q_\pi U_\pi (\spec(Y)) U_\pi^\dagger.$ Since $Y\in AM,$ and $AM$ is unitarily invariant $\spec(Y) \in \spec(AM)\subset AM$. This means that for all $\pi \in S_d$, $U_\pi(\spec(Y))U_\pi^\dagger \in AM$. Furthermore, since $AM$ is a closed convex cone, it follows that $X \in AM$. By definition, $X$ is diagonal, therefore, we have $X \in AM \cap D_d = \diag(AM)$.
\end{proof}

\begin{theorem}
\label{thm:AM-dual-diagonals}
$\diag((AM)^*) = (\diag(AM))^*$, where on the right hand side we regard $\diag(AM)$ as a cone in the space of diagonal matrices $D_d$, and 
the dual is also considered in $D_d$ (which is selfdual under the trace inner product). 
\end{theorem}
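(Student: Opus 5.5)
The plan is to prove the two inclusions separately. The key tool is Theorem~\ref{thm:AM-Dephasing}, $\diag(AM)=D(AM)$, together with the fact that the completely dephasing map $D$ is self-adjoint with respect to the trace inner product: $\Tr(D(X)Y)=\Tr(X D(Y))$. The identity $\diag(AM)=\spec(AM)$ from Eq.~\eqref{eq:AM-diag=spec} will also be used freely.

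\emph{Inclusion $\diag((AM)^*)\subseteq(\diag(AM))^*$.} Let $X\in(AM)^*$ be diagonal. For any $Y\in\diag(AM)$ we have $Y\in AM$, so $\Tr XY\ge 0$. Hence $X$, viewed as an element of $D_d$, lies in the dual of $\diag(AM)$ computed inside $D_d$. This direction is immediate.

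\emph{Inclusion $(\diag(AM))^*\subseteq\diag((AM)^*)$.} Let $X\in D_d$ satisfy $\Tr XY\ge0$ for all $Y\in\diag(AM)$. We must show $X\in(AM)^*$, i.e.\ $\Tr XZ\ge 0$ for every $Z\in AM$, not only diagonal $Z$. Since $X$ is diagonal, $X=D(X)$, and so
\[
\Tr XZ=\Tr D(X)Z=\Tr X D(Z).
\]
By Theorem~\ref{thm:AM-Dephasing}, $D(Z)\in D(AM)=\diag(AM)$. Therefore $\Tr X D(Z)\ge0$ by the assumption on $X$, and $X\in(AM)^*\cap D_d=\diag((AM)^*)$.

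Nothing in either step is a genuine obstacle, since the heavy lifting (majorisation together with Hardy--Littlewood--P\'olya) is already packaged in Theorem~\ref{thm:AM-Dephasing}. The only point needing care is the bookkeeping of where each dual is taken. The left-hand side is a dual in $\M_d$ intersected with $D_d$, while the right-hand side is a dual computed within $D_d$. Self-duality of $D_d$ under the trace inner product, together with the self-adjointness of $D$, is exactly what reconciles the two.

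As an optional cross-check, the result should agree with Theorem~\ref{thm:AM-dual}: dualising $\diag(AM)$ within $D_d$ should reproduce the sum of permuted spectra of $M^*$. I would not rely on this for the proof.
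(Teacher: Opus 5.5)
Your proof is correct and follows the paper's argument. The first inclusion is immediate, and the second uses $X=D(X)$ and the self-adjointness of $D$ to rewrite $\Tr XZ$ as $\Tr X D(Z)$. The only difference is that you cite Theorem~\ref{thm:AM-Dephasing} to get $D(Z)\in\diag(AM)$. The paper instead re-runs the Schur/Hardy--Littlewood--P\'olya majorisation step inline, writing $D(VTV^\dagger)$ as a convex combination of permuted spectra.
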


\begin{proof}
$X \in \diag((AM)^*)$ means that $X$ is diagonal and for all $T \in M$ and unitaries $U$, $\Tr XUTU^\dagger \geq 0$. In particular, this includes $UTU^\dagger = \spec(T)$, so $\Tr X(\spec(T)) \geq 0$ for any $T \in M$ and any diagonal arrangement of the spectrum of $T$, so certainly $X \in (\diag(AM))^*$, as $\diag(AM) = \spec(AM)$. 
Conversely, consider a diagonal $X \in (\diag(AM))^*$, i.e. for all $T \in M$, $\Tr X(\spec(T)) \geq 0$. To get the inclusion of $X \in \diag((AM^*))$, we have to show that for all unitaries $V$, $\Tr XVTV^\dagger \geq 0.$ 
Indeed, as $X=D(X)$, 
\[
  \Tr XVTV^\dagger = \Tr D(X)VTV^\dagger = \Tr X D(VTV^\dagger).
\]
As in the previous proof, we use that $ D(VTV^\dagger)$ is majorised by $\spec(T)$, i.e. there exists a distribution $q$ on $S_d$ such that 
$D(VTV^\dagger) = \sum_{\pi} q(\pi) U_{\pi}(\spec(T))U_{\pi}^\dagger$. 
But we already assume that $\Tr X U_{\pi}(\spec(T))U_{\pi}^\dagger \geq 0$, hence also $\Tr X D(VTV^\dagger) \geq 0$.
\end{proof}

\begin{theorem}
\label{thm:AM-extremal}
$\diag(\ex(AM)) = \ex(\diag(AM))$ and indeed it holds that
\[
  \ex(AM) = \bigcup_{U\in\mathcal{U}(n)} U(\ex(\diag(AM)))U^\dagger.
\]
\end{theorem}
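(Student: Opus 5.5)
Throughout, $\ex$ of a cone means its extreme rays, so ``$X$ extreme'' means that $X$ generates an extreme ray. The plan is to prove the two inclusions of $\diag(\ex(AM)) = \ex(\diag(AM))$ separately. The displayed union formula then follows from unitary invariance of $AM$ and Eq.~\eqref{eq:AM-spec}. Indeed, $X\in\ex(AM)$ if and only if $UXU^\dagger\in\ex(AM)$ for every unitary $U$, since conjugation is a linear automorphism of $AM$. Every $X\in AM$ is unitarily conjugate to the diagonal matrix $\spec(X)\in\diag(AM)$. Hence $\ex(AM)$ is the union of the unitary orbits of $\diag(\ex(AM))=\ex(AM)\cap D_d$, which by the first identity equals $\ex(\diag(AM))$.

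\textbf{Inclusion $\diag(\ex(AM))\subseteq\ex(\diag(AM))$.} This direction is easy. Let $X$ be diagonal and extreme in $AM$. If $X=Y+Z$ with $Y,Z\in\diag(AM)\subseteq AM$, then extremality in $AM$ forces $Y$ and $Z$ onto the ray of $X$. So $X$ is extreme in the smaller cone $\diag(AM)$.

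\textbf{Inclusion $\ex(\diag(AM))\subseteq\diag(\ex(AM))$.} This is the main step. Let $X\in\diag(AM)$ be extreme in $\diag(AM)$, and suppose $X=Y+Z$ with $Y,Z\in AM$ (not necessarily diagonal). Apply the dephasing map: $X=D(X)=D(Y)+D(Z)$. By Theorem~\ref{thm:AM-Dephasing}, $D(Y),D(Z)\in\diag(AM)$, so extremality gives $D(Y)=\alpha X$ and $D(Z)=(1-\alpha)X$ for some $\alpha\in[0,1]$. It remains to upgrade this to $Y=\alpha X$.

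\textbf{Upgrading $D(Y)=\alpha X$ to $Y=\alpha X$ (the main obstacle).} I would use majorisation. By Schur, $D(Y)$ is majorised by $\spec(Y)$, so by Hardy--Littlewood--P\'olya
\[
  \alpha X = D(Y) = \sum_\pi q_\pi\, U_\pi \spec(Y) U_\pi^\dagger .
\]
Each term $U_\pi\spec(Y)U_\pi^\dagger$ lies in $\diag(AM)$ by Eq.~\eqref{eq:AM-diag=spec} and permutation invariance. Extremality of $X$ (when $\alpha>0$) therefore forces every such permuted spectrum with $q_\pi>0$ to be a multiple of $X$. Comparing traces shows each is exactly $\alpha X$. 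Hence $\spec(Y)$, suitably ordered, equals $\alpha\spec(X)$, so $\lambda(Y)=\lambda(D(Y))$ as eigenvalue vectors. The equality case of Schur's theorem (a Hermitian matrix whose diagonal has the same spectrum as the matrix is diagonal) then shows $Y$ is diagonal. To prove that equality case, I would use the strict convexity of $\Tr Y^2$: we have $\Tr Y^2=\Tr D(Y)^2$ exactly when all off-diagonal entries vanish. Since $\Tr Y^2=\sum\lambda_i^2$ depends only on the spectrum, equal spectra give equal Hilbert--Schmidt norms, which kills the off-diagonal part. Thus $Y=D(Y)=\alpha X$. If $\alpha=0$, then $Y$ has zero diagonal and zero spectrum, so $Y=0$ by the same norm argument. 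By symmetry the same holds for $Z$, so $X$ is extreme in $AM$.

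The delicate points are the degenerate case $\alpha=0$ and the need for pointedness of $AM$, which is assumed. I would also double-check that the Hardy--Littlewood--P\'olya decomposition applies even when $X$ has repeated eigenvalues; it does, since the decomposition holds for any majorised pair.
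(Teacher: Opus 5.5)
Your proposal is correct and follows essentially the paper's route: dephase both summands, use Theorem~\ref{thm:AM-Dephasing} and extremality in $\diag(AM)$, then apply Schur plus Hardy--Littlewood--P\'olya to write $D(Y)$ as a mixture of permuted spectra, and finish with the equality case of Schur's majorisation. The differences are only in presentation: you argue directly and prove the equality case via the Hilbert--Schmidt norm (the device the paper itself uses in Theorem~\ref{thm:AM-exposed-extremal}), and you handle the degenerate case $\alpha=0$ explicitly via pointedness, whereas the paper argues by contradiction and simply asserts that the majorisation is strict when $X_i\neq D(X_i)$.
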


\begin{proof}
The l.h.s. is clearly contained in the r.h.s., as this is a general fact: every extreme point of $AM$ that happens to be diagonal is automatically an extreme point of the subset $\diag(AM) \subset D_d$. 

For the opposite inclusion, assume that there were an extremal $X \in \diag(AM)$ that is however not extremal in AM. In other words, $X = X_1 + X_2$ with $X_1, X_2 \in AM$ linearly independent. But now,
\[
  X = D(X) = D(X_1) + D(X_2),
\]
so by extremality of $X \in \diag(AM)$, we must have that $D(X_1)$ and 
$D(X_2)$ are linearly dependent. This in turn implies that for at 
least one $i=1,2, X_i \neq D(X_i)$, otherwise $X_1$ and $X_2$ would be 
linearly independent and dependent at the same time. In fact, by 
the above equations, we conclude that for both $i=1,2$, $X_i \neq D(X_i)$.
Now we activate the majorisation insight a third time. We have 
$S_i = \spec(X_i)$ majorises $D(X_i)$, and since $X_i$ is different from 
$D(X_i)$, the majorisation is strict. I.e., there exist distributions 
$q(\pi|i)$ for $i=1,2$ such that 
$D(X_i) = \sum_{\pi} q(\pi|i) U_{\pi}(\spec(X_i))U_{\pi}^\dagger$, 
and the convex combination is nontrivial, meaning there are two permutations $\pi$ and $\tau$ with $q(\pi|i),\ q(\tau|i) > 0$ and also 
$U_{\pi}(\spec(X_i))U_{\pi}^\dagger \neq U_\tau(\spec(X_i))U_{\tau}^\dagger$. Note that 
the latter implies that the two permuted versions of $\spec(X_i)$ are indeed linearly independent. Thus, 
\begin{align*}
    X &= D(X_1) + D(X_2) \\
      &= \sum_{\pi} q(\pi|1) U_{\pi}(\spec(X_1))U_{\pi}^\dagger
       + \sum_{\pi} q(\pi|2) U_{\pi}(\spec(X_2))U_{\pi}^\dagger
\end{align*}
is a decomposition of $X$ into elements from $\diag(AM)$, at least two 
of which are linearly independent, contradicting the assumption of extremality of $X \in \diag(AM)$. So, $X$ must have been extremal in $AM$ all along. 

The union equality follows directly due to the unitary invariance of $\ex(AM)$.
\end{proof}

The same holds for the set of exposed extreme rays, $\exp(AM)$: 

\begin{theorem}
\label{thm:AM-exposed-extremal}
$\diag(\exp(AM)) = \exp(\diag(AM))$, 
and indeed it holds that
\[
  \exp(AM) = \bigcup_{U\in\mathcal{U}(d)} U(\exp(\diag(AM)))U^\dagger.
\]
\end{theorem}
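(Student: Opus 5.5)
We prove the two inclusions of $\diag(\exp(AM)) = \exp(\diag(AM))$ separately, mirroring Theorem~\ref{thm:AM-extremal} but with exposing functionals in place of decompositions. An exposed extreme ray $R$ of a closed convex cone $K$ is one with $R = K\cap \{Y\}^\perp$ for some $Y\in K^*$. The dual-cone identities already established, Theorems~\ref{thm:AM-dual} and \ref{thm:AM-dual-diagonals}, will let us move exposing functionals between $AM$ and $\diag(AM)$. Once the diagonal equality is proved, the union formula follows as before from unitary invariance of $AM$ and of its exposed rays, together with Eq.~\eqref{eq:AM-spec}.

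\emph{Inclusion $\diag(\exp(AM))\subseteq \exp(\diag(AM))$.} Let $X$ be diagonal and span an exposed ray of $AM$, exposed by some $Y\in (AM)^*$. Put $Y' = D(Y)$. For diagonal $Z$ we have $\Tr Y'Z = \Tr Y D(Z) = \Tr YZ$. Hence $Y'\in (\diag(AM))^*$, and
\[
\diag(AM)\cap\{Y'\}^\perp = \diag(AM)\cap\{Y\}^\perp \subseteq \RR_{\geq 0}X .
\]
So $Y'$ exposes $\RR_{\geq0}X$ inside $\diag(AM)$.

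\emph{Inclusion $\exp(\diag(AM))\subseteq \diag(\exp(AM))$.} Let $X\in\diag(AM)$ span a ray exposed in $\diag(AM)$ by a diagonal $Y\in(\diag(AM))^*$. By Theorem~\ref{thm:AM-dual-diagonals}, $Y\in(AM)^*$, so $Y\geq 0$ on $AM$ and $\Tr XY=0$. We must show that every $Z\in AM$ with $\Tr YZ=0$ is a multiple of $X$. First, $\Tr Y D(Z) = \Tr YZ = 0$, and $D(Z)\in\diag(AM)$ by Theorem~\ref{thm:AM-Dephasing}, so $D(Z)=cX$ for some $c\geq0$.

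The main obstacle is to conclude from this that $Z$ itself is diagonal. Use majorisation as in the previous proofs: write $D(Z)=\sum_\pi q_\pi U_\pi\spec(Z)U_\pi^\dagger$. Each term lies in $\diag(AM)$, so $\Tr Y\, U_\pi\spec(Z)U_\pi^\dagger\geq0$, and the terms sum to $0$. Hence each term with $q_\pi>0$ lies in the exposed ray, i.e.\ every such permutation of $\spec(Z)$ is a multiple of $X$. Because permutations preserve the norm, all these permuted spectra are equal to $cX$. Thus $D(Z)$ is itself a permutation of $\spec(Z)$, i.e.\ $\|D(Z)\|_2=\|Z\|_2$. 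This forces the off-diagonal part of $Z$ to vanish, so $Z=D(Z)=cX$.

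For $d$ there is no subtlety with an empty off-diagonal contribution: the identity $\|Z\|_2^2=\|D(Z)\|_2^2+\|Z-D(Z)\|_2^2$ closes the argument. I therefore expect the only delicate point to be verifying that the equality case of the Schur–Horn majorisation yields diagonality, which the Hilbert–Schmidt norm identity handles.
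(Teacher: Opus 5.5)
Your proposal is correct and follows the paper's proof essentially step for step. For the first inclusion you dephase the exposing functional, and for the converse you use Theorems~\ref{thm:AM-dual-diagonals} and \ref{thm:AM-Dephasing} to get $D(Z)=cX$, then Hardy--Littlewood--P\'olya and the Hilbert--Schmidt norm identity to force $Z$ to be diagonal. Your justification that every permuted spectrum with $q_\pi>0$ equals $cX$ is slightly more explicit than the paper's brief appeal to extremality of $X$ at that step: you use nonnegativity of $\Tr Y(\cdot)$ on each term, then permutation invariance of the norm to fix the scalar.
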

\begin{proof}
We first show $\diag(\exp(AM)) \subseteq \exp(\diag(AM))$. Let $X \in \diag(exp(AM))$, then $X$ is a diagonal matrix in $D_d$ and there exists $N \in (AM)^*$  such that $\Tr(NY)\geq 0$ for all $Y\in AM $ with $\Tr(NY) =0$ if and only if $ Y= \mu X$, for some $\mu\geq 0$. However, since $X$ is diagonal, we consider $Y\in \diag(AM)$. As  $Y = D(Y)$, $\Tr(D(N)Y) = \Tr(N D(Y)) = \Tr(NY)\geq 0$ where equality holds for $Y = \mu X$. Since $D(N)$ exposes $X$ in the diagonal subspace, $X \in \exp(\diag(AM))$. 

For the opposite inclusion, assume $X \in\exp(\diag(AM))$. Then there exists a diagonal functional $\Lambda \in (\diag(AM))^*$ that strictly exposes $X$ in $\diag(AM)$. By Theorem \ref{thm:AM-dual-diagonals}, $(\diag(AM))^* = \diag((AM)^*)$, so $\Lambda \in (AM)^*$.
Suppose there exists $Z \in AM$ such that $\Tr(\Lambda Z) = 0$. Since $\Lambda$ is diagonal, $\Tr(\Lambda Z) = \Tr(\Lambda D(Z)) = 0$. By Theorem \ref{thm:AM-Dephasing}, $D(Z) \in \diag(AM)$. As $\Lambda$ exposes $X$ in $\diag(AM)$, it must be that $D(Z) = cX$ for some $c \ge 0$. By majorisation properties, $D(Z)$ is majorised by $\spec(Z)$. And consequently by Hardy-Littlewood-Polya, there exist probabilities $q_\pi \geq 0$ for all permutations $\pi\in S_d$ such that $cX = D(Z) = \sum_{\pi\in S_d} q_\pi U_\pi \spec(Z) U_\pi^\dagger$. However, since $X$ is extremal, $cX =   U_\pi \spec(Z) U_\pi^\dagger$ for all $\pi \in S_d$ with $q_\pi > 0$. Thus, diagonal elements of $Z$ equals its spectrum and so the Frobenius norm satisfies $\sum_i Z_{ii}^2 = \sum_i \lambda_i^2 = \Tr(Z^2)$, where $\lambda_i \in \spec(Z)$.
Expanding $\Tr(Z^2) = \sum_i Z_{ii}^2 + \sum_{i \neq j} |Z_{ij}|^2$ implies $Z_{ij} = 0$ for all $i \neq j$. Therefore, $Z$ is diagonal, meaning that $Z = D(Z) = cX$. Thus, $\Lambda$ exposes $X$ in $AM$, and therefore $X \in \exp(AM)$. 

The union equality follows similarly to Theorem \ref{thm:AM-extremal} due to the unitary invariance of the set $\exp(AM)$ of exposed extreme rays.
\end{proof}

\begin{theorem}
\label{thm:AM-semialgebraic}
If $M \subset \M_d$ is a closed convex semialgebraic cone of Hermitian $d\times d$ matrices, then $AM$ is also closed convex semialgebraic cone. Furthermore, $\diag(AM)$ and $\spec(AM)$ are also semialgebraic.
\end{theorem}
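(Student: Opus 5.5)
The plan is to use the Tarski--Seidenberg theorem: the class of semialgebraic sets is closed under first-order definable operations over the reals, i.e.\ projections, complements and finite Boolean combinations. Throughout, identify $\M_d$ with $\RR^{d^2}$ by taking real and imaginary parts of the entries. The unitary group $\mathcal{U}(d)$ is then a real algebraic subset of $\RR^{2d^2}$, cut out by the polynomial equations $U^\dagger U = \1$.

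We write membership in $AM$ as a first-order formula:
\[
  X \in AM \iff \forall U\ \bigl( U^\dagger U = \1 \Rightarrow U^\dagger X U \in M \bigr).
\]
Two facts make this formula semialgebraic. The map $(U,X)\mapsto U^\dagger X U$ is polynomial, and $M$ is semialgebraic by hypothesis. Hence the set
\[
  \{(U,X) : U\notin\mathcal{U}(d) \text{ or } U^\dagger X U\in M\}
\]
is semialgebraic. The universal quantifier can be rewritten as the complement of the projection of the complement. By Tarski--Seidenberg, $AM$ is therefore semialgebraic.

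Closedness and convexity of $AM$ were already observed after its definition: it is an intersection of the closed convex cones $UMU^\dagger$. This gives the first claim.

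For the second claim:
\begin{itemize}
  \item $\diag(AM) = AM \cap D_d$ is the intersection of $AM$ with a linear subspace. A linear subspace is cut out by linear equations, so $\diag(AM)$ is semialgebraic.
  \item By Eq.~\eqref{eq:AM-diag=spec}, $\spec(AM) = \diag(AM)$, so $\spec(AM)$ is semialgebraic as well.
\end{itemize}
We could also argue for $\spec(AM)$ directly, which avoids that identity. A diagonal matrix $\Lambda$ lies in $\spec(AM)$ iff $\exists\,U\in\mathcal{U}(d)$ with $U\Lambda U^\dagger \in AM$. This is again the projection of a semialgebraic set.

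The only delicate point is making the quantifier elimination rigorous. Specifically, we must check that the complex data (Hermitian matrices, unitaries) is encoded by real polynomial conditions, and that the universal quantifier is handled via complements. Both are standard once the realification is fixed, so no real obstacle is expected.
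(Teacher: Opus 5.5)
Your proposal is correct and follows the same route as the paper's proof. You realify $\M_d$ and $\mathcal{U}(d)$, write membership in $AM$ as a universally quantified first-order formula and apply Tarski--Seidenberg, then obtain $\diag(AM)=AM\cap D_d$ and $\spec(AM)=\diag(AM)$ via Eq.~\eqref{eq:AM-diag=spec}; your explicit complement--projection--complement treatment of the quantifier and the alternative projection argument for $\spec(AM)$ are only refinements of that same idea.
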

\begin{proof} 
The closed convex cone nature of $AM$ follows trivially as discussed earlier. 

Suppose $M$ is semialgebraic. Then the condition $Y \in M$ is defined by a finite boolean combination of real polynomial inequalities. Since $Y$ is a complex Hermitian matrix, it can be decomposed into its real and imaginary components, identifying $\M_d$ with the real vector space $\RR^{d^2}$. Similarly, for any $U\in \mathcal{U}(d)$, the condition $U U^\dagger = \1$ decomposes into a finite set of real polynomial equations defined by the real and imaginary parts of $U$, allowing us to embed $\mathcal{U}(d)$ within the space $\RR^{2d^2}$. By definition of the absolute $M$-set,  
$$X \in AM \iff X\in \M_d \textrm{ and }\forall  U\in \mathcal{U}(d)  \ UXU^\dagger \in M.$$ The matrix multiplication map $\mathcal{U}(d) \times \M_d \to \M_d$ defined by $(U, X) \mapsto U X U^\dagger$ entirely consists of addition and multiplication of these real components, and is therefore a real polynomial mapping. Thus, we can write the absolute $M$-set membership as a first-order logic formula over the real closed field $\RR$ such that
\[
    X \in AM \iff \forall U \in \RR^{2d^2}\, \big( U \in \mathcal{U}(d) \implies U X U^\dagger \in M \big).
\]

By quantifier elimination \cite{tent2012course}, there exists a quantifier-free equivalent of the first-order logic formula seen above. And therefore by the Tarski-Seidenberg principle \cite[Theorem.~1.4.2]{BCR:RAG}, any set definable by a first-order formula over the reals is semialgebraic. Thus, $AM$ is semialgebraic.

It is straightforward to see that $\diag(AM)$ is semialgebraic. Since the space of diagonal matrices $D_d$ is simply a linear subspace of $\M_d$, which is trivially defined by linear equations, $D_d$ is semialgebraic. By definition, $\diag(AM)= AM \cap D_d$ is the intersection of two semialgebraic sets and so, $\diag(AM)$ is semialgebraic. Also, as we have established that $ \diag(AM) = \spec(AM)$, it follows directly. 
\end{proof}



\section{$\mathbf{\text{APPT}_{m,n}}$ is a spectrahedron}
\label{sec:spectrahedron}
In this section, we discuss in detail the precise geometric properties of the sets of spectra of absolute PPT ($\mathrm{APPT}_{m,n}$) and absolute separable states ($\mathrm{ASEP}_{m,n}$). To do so, we first define the sets in a more general framework beyond the non-increasingly ordered spectra.

\subsection{The set of spectra of absolute PPT states}
First, recall that the set of absolute PPT states admits a complete characterization: a mixed state $\rho \in  \M_m \otimes \M_n$ with eigenvalues $\lambda_1\geq \lambda_2 \geq \cdots \geq \lambda_{mn}\geq 0$ is absolute PPT if and only if a finite system of linear matrix inequalities defined with respect to the eigenvalues $\lambda_i$ hold true \cite{hildebrand2007positive}. Although, this formulation is exact in itself, its direct application to characterize higher-dimensional absolute PPT states becomes intractable because in general, the minimum number of such matrix inequalities grows exponentially as the dimensions increase. Furthermore, using the strict eigenvalue ordering obscures the underlying symmetries of the state space. As such, the set of spectra (and equivalently the set of states), comprised of all unordered spectra of similar type, becomes difficult to properly understand. However, we identify that the underlying structure governing these defining linear matrix constraints is the permutation orbit of a single base matrix. With this, we provide a more definitive way to characterize the set of spectra of absolute PPT states and in some cases the set of spectra of absolute separable states. 

To characterize this symmetry, let $\Delta_{N-1} \subset \mathbb{R}^N$ denote the standard probability simplex defined as 
\begin{equation*}
  \Delta_{N-1} 
    = \left\{ \mathbf{x} \in \RR^N \,\Bigg|\, x_i \geq 0 \text{ for all } i = 1,2,\ldots,N, \text{ and } \sum_{i=1}^N x_i = 1 \right\}.
\end{equation*}
We consider the unordered vector of eigenvalues of a state $\rho \in \M_N$ to be the spectrum  $\lambda = (\lambda_1, \lambda_2, \dots, \lambda_N)^T \in \Delta_{N-1}$. Define a permutation bijection $\pi \in S_{mn}$ such that $\pi: \{1, \dots, mn\} \to \{1, \dots, mn\}$ acts on the coordinate indices of the unordered spectrum $\lambda \in \mathrm{APPT}_{m,n}$. Without loss of generality, we take $m\leq n$. For any permutation $\pi\in S_{mn}$, we define the permuted $m\times m$ symmetric matrix $L_\pi(\lambda)$ such that every entry $[L_\pi(\lambda)]_{(i,j)}$ with $1 \le i \le j \le m$ is evaluated as 
\begin{equation}\label{eq:base_matrix}
    [L_{\pi}(\lambda)]_{(i,j)} = 
    \begin{cases} 
        2\lambda_{\pi(p(i,i))} & \text{for } i = j,  \\
        \lambda_{\pi(p(i,j))} - \lambda_{\pi(q(i,j))} & \text{for } i < j.
    \end{cases}
\end{equation}
where the index mappings $q(i,j)$ and $p(i,j)$ are defined as 
 \begin{equation}
   \label{eq:index_functions}
   q(i,j) = \frac{(i-1)(2m-i)}{2} + (j-i) 
   \quad \text{ and } \quad
   p(i,j) = mn - \frac{(i-1)(2m-i+2)}{2} - (j-i).
\end{equation} 
The functions $q(i,j)$ and $p(i,j)$ enumerate the strict upper triangular entries sequentially starting from the index $1$, and the upper triangular entries (including the diagonal) in reverse order from the maximum index $mn$, respectively. Since the number of elements appearing in each row forms a decreasing arithmetic progression, the terms $q(i,j)$ and $p(i,j)$ are exactly the partial sums of the preceding row entries, offset accordingly for a given column index $j$.

This mapping explicitly constructs the matrix 
\begin{equation*}
    L_\pi(\lambda) 
     = \begin{pmatrix}
        2\lambda_{\pi(mn)} & \lambda_{\pi(mn-1)} - \lambda_{\pi(1)} & \cdots & \lambda_{\pi(m(n-1)+1)} - \lambda_{\pi(m-1)} \\
        \lambda_{\pi(mn-1)} - \lambda_{\pi(1)} & \ddots & \cdots & \cdots \\
        \vdots & \vdots & \ddots & \vdots \\
        \lambda_{\pi(m(n-1)+1)} - \lambda_{\pi(m-1)} & \cdots & \cdots & 2\lambda_{\pi\left(mn - \left(\frac{m(m+1)}{2} - 1\right)\right)}
    \end{pmatrix}_{m\times m.}
\end{equation*}
The two sets of indices defined by the functions in Eq.~\eqref{eq:index_functions} are disjoint since $mn\geq m^2$ and their total cardinality is $\binom{m}{2}+\binom{m+1}{2} = m^2$.
As such, any one of the matrices depends on exactly $m^2$ of the spectral coordinates, while $mn-m^2$ are unused within the matrix description.

Hildebrand \cite{hildebrand2007positive} characterized the absolute PPT spectra via a finite number of linear matrix inequalities generated by compatible orderings of eigenvalues of decomposable witnesses. Using the matrix definition in Eq.~\eqref{eq:base_matrix}, we present a symmetric reformulation of that absolute PPT criterion as follows: 

\begin{theorem}
\label{thm:APPT-full}
Suppose $\lambda \in \Delta_{mn-1}$ is the spectrum of a bipartite state and assume $m\leq n$. Then $\lambda \in\mathrm{APPT}_{m,n}$ if and only if  $L_{\pi}(\lambda) \ge 0,$ for all $\pi \in S_{mn}$. 
\end{theorem}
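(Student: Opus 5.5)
The plan is to avoid invoking Hildebrand's ordered criterion directly and instead re-derive the condition from the variational characterisation of the PPT property. The permutation orbit $\{L_\pi(\lambda)\}_{\pi\in S_{mn}}$ should then appear naturally as the set of all ``diagonal placements'' of the spectrum against partially transposed pure states. Throughout, write $\rho_\lambda=\diag(\lambda_1,\dots,\lambda_{mn})$ in a fixed product basis. By Eq.~\eqref{eq:AM-spec}, $\lambda\in\mathrm{APPT}_{m,n}$ if and only if $(U\rho_\lambda U^\dagger)^\Gamma\ge 0$ for all $U\in\mathcal U(mn)$. This is equivalent to
\[
\Tr\bigl(U\rho_\lambda U^\dagger\, W_\psi\bigr)\ge 0 \quad\text{for all } U \text{ and all unit } \psi\in\CC^m\otimes\CC^n, \qquad W_\psi:=(\ket{\psi}\!\bra{\psi})^\Gamma,
\]
since $\Gamma$ is self-adjoint for the trace inner product.

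The first step is to compute the spectrum of $W_\psi$. Local unitaries commute with $\Gamma$ up to a conjugation that we absorb into $U$, so we may take $\psi=\sum_{i=1}^m s_i\ket{ii}$ with Schmidt coefficients $s_i\ge0$ (here $m\le n$ is used). Then $W_\psi=\sum_{i,j}s_is_j\ket{ij}\!\bra{ji}$ has eigenvalues
\begin{itemize}
\item $s_i^2$ for $i=1,\dots,m$, with eigenvectors $\ket{ii}$;
\item $\pm s_is_j$ for $i<j$, with eigenvectors $(\ket{ij}\pm\ket{ji})/\sqrt2$;
\item $0$ with multiplicity $mn-m^2$.
\end{itemize}
The multiplicities are exactly $m$, $\binom m2$, $\binom m2$ and $mn-m^2$, matching the index sets $p(i,i)$, $p(i,j)$, $q(i,j)$ and the unused coordinates in Eq.~\eqref{eq:index_functions}.

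The second step is to reduce the minimisation over $U$ to permutations. Since $\rho_\lambda$ is diagonal, $\Tr(U\rho_\lambda U^\dagger W)=\Tr(\rho_\lambda\, D(U^\dagger W U))$. The majorisation argument (Schur plus Hardy--Littlewood--P\'olya) used in Theorems~\ref{thm:AM-dual}--\ref{thm:AM-extremal} writes $D(U^\dagger WU)$ as a convex combination of permuted diagonal copies of $\spec(W)$. Hence the minimum over $U$ is attained at some placement of the eigenvalues of $W_\psi$ on the coordinates of $\lambda$. Each such placement is encoded by a $\pi\in S_{mn}$: $s_i^2$ sits on $\lambda_{\pi(p(i,i))}$, $+s_is_j$ on $\lambda_{\pi(p(i,j))}$, $-s_is_j$ on $\lambda_{\pi(q(i,j))}$, and the zeros on the rest. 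For this placement,
\[
\Tr(\rho\,W)=\sum_i s_i^2\lambda_{\pi(p(i,i))}+\sum_{i<j}s_is_j\bigl(\lambda_{\pi(p(i,j))}-\lambda_{\pi(q(i,j))}\bigr)=\tfrac12\, s^T L_\pi(\lambda)\, s .
\]
Therefore $\lambda\in\mathrm{APPT}_{m,n}$ if and only if $s^TL_\pi(\lambda)s\ge0$ for all $\pi$ and all $s\in\RR^m_{\ge0}$.

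The remaining step, and the only real subtlety, is passing from copositivity on $\RR^m_{\ge0}$ to genuine positive semidefiniteness. Take an arbitrary real vector $v$ and set $s_i=|v_i|$. In $v^TL_\pi v$, each off-diagonal term with $v_iv_j<0$ equals $s_is_j(\lambda_{\pi(q)}-\lambda_{\pi(p)})$. This is the same expression with $p(i,j)$ and $q(i,j)$ swapped, i.e.\ $v^TL_\pi(\lambda)v=s^TL_{\pi'}(\lambda)s$ for $\pi'=\pi\circ\tau$, where $\tau$ is a product of transpositions. Because all of $S_{mn}$ is ranged over, the two conditions coincide. Conversely, PSD trivially implies nonnegativity on $s\ge0$, which yields the stated equivalence.

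I expect the main care to be needed in the bookkeeping of the placement-to-permutation correspondence, namely checking that every placement arises from some $\pi$ and vice versa. That correspondence is exactly what the index functions $p$ and $q$ were designed to encode.
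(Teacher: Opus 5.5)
Your proof is correct and follows essentially the same route as the paper. Both arguments test PPT against the partially transposed pure-state witnesses with spectrum $\{s_i^2,\pm s_is_j,0\}$, reduce the unitary orbit to permutations of eigenvalue placements, and identify the resulting sum as $\frac12 s^T L_\pi(\lambda) s$. The differences are minor: you prove the reduction to permutations yourself via Schur and Hardy--Littlewood--P\'olya where the paper cites the proof of Hildebrand's Lemma~3, and you absorb signs at the end through the transpositions $p(i,j)\leftrightarrow q(i,j)$, whereas the paper builds the signs into the vector $\mu(\mathbf v)$ from the start.
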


\begin{proof}
Suppose $\lambda$ is the spectrum of a bipartite state. Then $\lambda\in \mathrm{APPT}_{m,n}$ if and only if for all unitaries $U\in \mathcal{U}(mn)$, $(U\diag(\lambda)U^\dagger)^\Gamma \geq 0$. By definition, $(U\diag(\lambda)U^\dagger)^\Gamma \geq 0$ if and only if $\bra{\psi}(U\diag(\lambda)U^\dagger)^\Gamma \ket{\psi} \geq 0$
for $\ket{\psi}\in \CC^m\otimes \CC^n.$ This is equivalent to $$\Tr (U\diag(\lambda)U^\dagger\, (\ket{\psi}\!\bra{\psi})^\Gamma)\geq 0,$$ under the Hilbert-Schmidt inner product. Now suppose $x_1,x_2\ldots ,x_{m} \in \RR_{\geq 0}$ are the Schmidt coefficients of $\ket{\psi}$, then $(\ket{\psi}\!\bra{\psi})^\Gamma$ has eigenvalues 
\begin{equation}
 \label{eq:witness_eigenvals}
     x^2_i \text{ for } 1\leq i\leq m, \qquad \pm x_ix_j \text{ for } 1\leq i < j\leq m,  \qquad 0 \text{ with multiplicity } mn-m^2,
\end{equation}
following \cite{hildebrand2007positive} (see also \cite[Lemma~1]{johnston2018inverse}). Let  $\mathbf{v}=(v_1,\ldots,v_{m})^T\in \RR^{m}$ and set $x_i\coloneq |v_i|$. Define a vector $\mu(\mathbf{v})\in\RR^{mn}$ element-wise as
\begin{equation*}
\mu_k(\mathbf{v}) =
     \begin{cases}
         \phantom{-}v_i^2 &\text{for } k = p(i,i) \text{ and } 1\leq i\leq m, \\
         \phantom{-}v_i v_j &\text{for } k = p(i,j) \text{ and } 1\leq i<j\leq m, \\
         -v_i v_j &\text{for } k = q(i,j) \text{ and } 1\leq i<j\leq m, \\
         \phantom{-}0 &\text{otherwise}.
     \end{cases}
\end{equation*} 
Notice that for $k = p(i,j)$ where $i< j,$ if $v_iv_j \geq 0$, then $ \mu_{p(i,j)}(\mathbf{v}) = |v_iv_j|, $ and if $v_iv_j < 0$, then $ \mu_{p(i,j)}(\mathbf{v}) = -|v_iv_j| $. Similarly, for $k = q(i,j)$,  $ \mu_{q(i,j)}(\mathbf{v}) = -|v_iv_j|$ if 
$v_iv_j \geq 0$ and  $ \mu_{q(i,j)}(\mathbf{v}) = |v_iv_j|,$ if 
$v_iv_j < 0$. Thus, for every $i< j,$ $\{v_iv_j, -v_iv_j\} = \{|v_iv_j|,-|v_iv_j|\}$. As such, 
$\mu(\mathbf{v}) $ defines a fixed ordering of the eigenvalues of $(\ket{\psi}\!\bra{\psi})^\Gamma$.
Thus, for a fixed Schmidt coefficient vector $\mathbf{x}$, it follows that (see the proof of \cite[Lemma~3]{hildebrand2007positive})  $$\Tr(U\diag(\lambda)U^\dagger\, \ket{\psi}\!\bra{\psi}^\Gamma)\geq 0 \iff \sum_{k=1}^{mn}\lambda_{\pi(k)}\mu_k(\mathbf{v})\geq 0,$$
for all permutations $\pi\in S_{mn}$ and $\mu_k(\mathbf{v})\in \RR$. The latter inequality reduces to
\begin{align*}
\sum_{k=1}^{mn}\lambda_{\pi(k)}\mu_k(\mathbf{v}) = \sum_{i=1}^m \lambda_{\pi(p(i,i))}v_i^2+\sum_{1\leq i<j\leq m} (\lambda_{\pi(p(i,j))}-
    \lambda_{\pi(q(i,j))}) v_iv_j= \frac{1}{2}\mathbf{v}^T L_\pi(\lambda)\mathbf{v}\geq 0,
\end{align*} 
and since this holds now for all $\textbf{v}\in\RR^m$, we get $L_\pi(\lambda)\geq 0$.
\end{proof}

In this formulation, the matrix inequalities $L_\pi(\lambda)\geq 0$ are permutation-invariant and $\mathrm{APPT}_{m,n}$ is an $S_{mn}$-invariant convex body defined without any reliance on the ordering constraints on the spectra. Although we can define $\mathrm{APPT}_{m,n}$ over the full symmetric group $S_{mn}$, it is sufficient to restrict the constraints over a subset $\widetilde{S} \subset S_{mn}$ of size $|\widetilde{S}| = {mn\choose m^2}(m^2)! = \frac{(mn)!}{(mn-m^2)!}$ since the matrix structure only requires $m^2$ eigenvalues out of the total $mn$ eigenvalues to satisfy the constraint. As such, any permutation acting only on the remaining $(mn-m^2)$ eigenvalues generates identical matrix constraints.
Thus, even though the set $\widetilde{S}$ may not necessarily be the minimal set needed for the criteria, it simply removes the redundant permutations from unused spectra. Additional redundant permutations could also arise from equivalences within the matrix structure. 

Therefore, we identify the set of spectra of absolute PPT states as the spectrahedron
\begin{equation}
\label{eq:APPT_spectrahedron}
    \mathrm{APPT}_{m,n} = \Delta_{mn-1}\cap\left\{ \lambda \in \RR^{mn} \ \Bigg\vert \  \bigoplus_{\pi \in \widetilde{S}} L_{ \pi}(\lambda) \geq 0 \right\}. 
\end{equation}
Equivalently, 
\begin{equation}
        \label{eq:APPT_spectrahedron_equiv}
        \mathrm{APPT}_{m,n}
    = \left\{ \lambda \in \Delta_{mn-1} \ \Bigg\vert \ \mathcal{L}(\lambda) =\bigoplus_{\pi \in \widetilde{S}} L_{ \pi}(\lambda) \geq 0 \right\} 
    = \Delta_{mn-1}\cap \mathcal{L}^{-1} \left(\bigoplus_{\pi \in \widetilde{S}} \mathcal{S}^m_+ \right),
\end{equation}
where $\mathcal{S}^m_+$ is the set of $m \times m$ positive semidefinite matrices.
Notice that the size of the block matrix $\mathcal{L} (\lambda)$ is
\begin{equation}
  \sum_{\pi\in \widetilde{S}} m 
    = |\widetilde{S}|\cdot  m = \frac{m(mn)!}{(mn-m^2)!}.
\end{equation} 
The spectral sets $ \mathrm{APPT}_{m,n}$ are basic closed semialgebraic convex sets as they are spectrahedra. Since $\mathrm{APPT}_{2,n} =\mathrm{ASEP}_{2,n}$ \cite{johnston2013sepfromspectra}, the same property applies for the set of spectra of absolute separable states whenever $m = 2, n\geq 2$ (see Figure \ref{fig:ASEP2X2} for $\mathrm{ASEP}_{2,2}$). 
\begin{figure}[ht]
    \centering
    \includegraphics[width=0.6\linewidth]{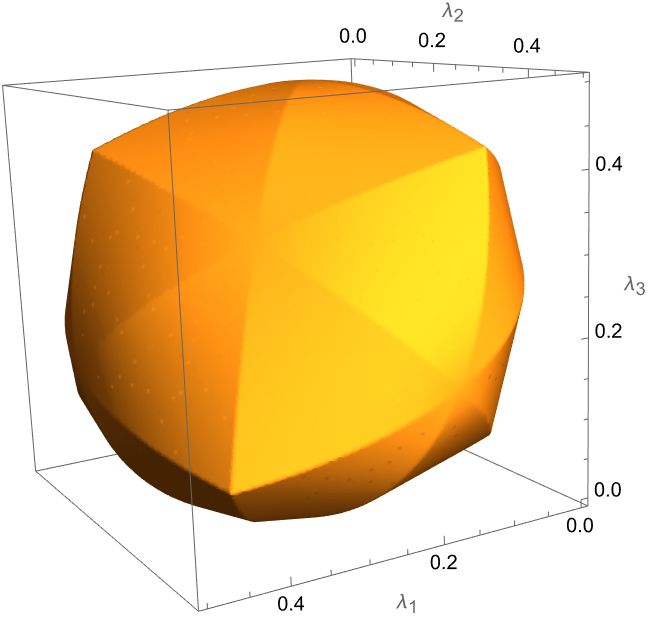}
    \caption{The absolute separable spectrahedron $\mathrm{ASEP}_{2,2}$  }
    \label{fig:ASEP2X2}
\end{figure}
Although we cannot say for certain whether or not this property carries over for $\mathrm{ASEP}_{m,n}$ whenever $m,n \geq 3$, we can deduce the following as seen for absolute $M$-sets in Theorem \ref{thm:AM-semialgebraic}: 

\begin{theorem}
\label{thm:ASEP-semialgebraic}
The set of spectra of absolutely separable states, $\mathrm{ASEP}_{m,n}$, is a convex and compact semialgebraic set.
\end{theorem}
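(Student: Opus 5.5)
The plan is to realise $\mathrm{ASEP}_{m,n}$ as $\spec(AM)$, or equivalently $\diag(AM)$, for $M = \mathrm{SEP}_{m,n}$. Here $\mathrm{SEP}_{m,n}\subset \M_m\otimes\M_n = \M_{mn}$ is the closed convex cone of (unnormalised) separable matrices. We then intersect with the trace-one hyperplane, which gives $\Delta_{mn-1}$ on the diagonal. Once we know that $\mathrm{SEP}_{m,n}$ is a closed convex semialgebraic cone, Theorem~\ref{thm:AM-semialgebraic} yields that $\diag(A\,\mathrm{SEP}_{m,n})$ is semialgebraic. Intersecting with the semialgebraic (indeed polyhedral) set $\{\Tr X = 1\}\cap D_{mn}$ preserves semialgebraicity. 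By Eq.~\eqref{eq:AM-diag=spec}, this intersection is exactly the set of spectra of absolutely separable states, written as diagonal matrices, that is, $\mathrm{ASEP}_{m,n}$.

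\textbf{Step 1: $\mathrm{SEP}_{m,n}$ is a closed convex semialgebraic cone.} Convexity and the cone property are immediate from the definition. The key tool is Carath\'eodory's theorem in the real space $\M_{mn}\cong\RR^{m^2n^2}$. Every separable $X$ is a sum of at most $K=m^2n^2$ (or $K=m^2n^2+1$) terms $\ket{a_k}\!\bra{a_k}\otimes\ket{b_k}\!\bra{b_k}$ with unnormalised vectors $a_k\in\CC^m$, $b_k\in\CC^n$. Hence
\[
X\in\mathrm{SEP}_{m,n}\iff \exists\, a_1,\dots,a_K\in\CC^m,\ b_1,\dots,b_K\in\CC^n:\ X=\sum_{k=1}^K \ket{a_k}\!\bra{a_k}\otimes\ket{b_k}\!\bra{b_k}.
\]
After splitting into real and imaginary parts, this is a first-order formula with polynomial equalities. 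Equivalently, $\mathrm{SEP}_{m,n}$ is the image of a polynomial map from $\RR^{2K(m+n)}$, so Tarski--Seidenberg makes it semialgebraic. Closedness is the step I expect to need the most care. One restricts to the compact set of normalised terms $\sum_k p_k\,\ket{a_k}\!\bra{a_k}\otimes\ket{b_k}\!\bra{b_k}$ with $\|a_k\|=\|b_k\|=1$ and $p\in\Delta_{K-1}$, whose image is the compact trace-one slice. The cone is then $\RR_{\geq0}$ times this compact base, which does not contain $0$, and such a cone is closed. The hypotheses used before Theorem~\ref{thm:AM-semialgebraic} also hold. The identity lies in $\mathrm{SEP}_{m,n}$, it has an open neighbourhood in it (the separable ball around $\1$, Gurvits--Barnum), and the cone is pointed since it lies in the PSD cone.

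\textbf{Step 2: apply Theorem~\ref{thm:AM-semialgebraic} and normalise.} This gives that $A\,\mathrm{SEP}_{m,n}$ and $\diag(A\,\mathrm{SEP}_{m,n})=\spec(A\,\mathrm{SEP}_{m,n})$ are semialgebraic. Identifying $D_{mn}$ with $\RR^{mn}$, we get
\[
\mathrm{ASEP}_{m,n}=\diag(A\,\mathrm{SEP}_{m,n})\cap\{\lambda:\textstyle\sum_i\lambda_i=1\},
\]
which is semialgebraic as a finite intersection of semialgebraic sets. The membership of unordered spectra is consistent, because $A\,\mathrm{SEP}_{m,n}$ is permutation-invariant.

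\textbf{Step 3: convexity and compactness.} Convexity follows because $\diag(A\,\mathrm{SEP}_{m,n})$ is an intersection of the convex cone $A\,\mathrm{SEP}_{m,n}$ with a subspace, and we then cut with a hyperplane. For compactness, the set is closed, being an intersection of closed sets; $A\,\mathrm{SEP}_{m,n}$ is closed as an intersection of the closed cones $U\,\mathrm{SEP}_{m,n}U^\dagger$. It is also contained in the bounded simplex $\Delta_{mn-1}$, since separable matrices are positive semidefinite. Beyond routine checks, the only genuine point is the closedness and Carath\'eodory bound in Step 1.
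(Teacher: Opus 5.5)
Your proposal is correct and takes essentially the same route as the paper. You identify $\mathrm{ASEP}_{m,n}$ with the diagonal part of the absolutely separable set, get convexity and compactness by intersecting with $D_{mn}$ and the trace-one hyperplane, and get semialgebraicity from Theorem~\ref{thm:AM-semialgebraic} applied to the separable cone. The differences are that you prove facts the paper only cites (semialgebraicity of $\mathrm{SEP}_{m,n}$ via Carath\'eodory and Tarski--Seidenberg rather than Fawzi, and compactness directly rather than Ganguly et al.), and you apply Theorem~\ref{thm:AM-semialgebraic} to the cone before normalising, which fits its cone hypothesis more carefully than the paper's application to the set of states.
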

\begin{proof}
Let $\mathrm{AS}_{m,n}\subset \M_m \otimes \M_n$ denote the set of absolute separable states. From \cite{ganguly2014witness}, the set $\mathrm{AS}_{m,n}$ is shown to be a convex and compact subset of the set of separable states.  We identify the spectral set as $\spec(\mathrm{AS}_{m,n}) = \mathrm{ASEP}_{m,n}\subset \Delta_{mn-1} $ such that from Eq.~\eqref{eq:AM-diag=spec}, $\mathrm{ASEP}_{m,n} =\diag(\mathrm{AS}_{m,n}) = \mathrm{AS}_{m,n} \cap D_{mn}$ where $D_{mn}$ is the set of all diagonal real matrices. Since both $\mathrm{AS}_{m,n}$ and $D_{mn}$ are convex, their intersection, $\mathrm{ASEP}_{m,n}$ is  convex. 
For compactness, it follows easily as the set of diagonal matrices $D_{mn}$ is topologically closed in $\M_{mn}$ and $\mathrm{AS}_{m,n}$ is a compact set. Therefore, their intersection $\mathrm{ASEP}_{m,n}$ is compact. 

The set of separable states, as the convex hull of the Segre variety of pure product states, is well-known to be semialgebraic \cite{fawzi2021set}. Thus, from Theorem \ref{thm:AM-semialgebraic}, it follows directly that $\spec(\mathrm{AS}_{m,n}) =\mathrm{ASEP}_{m,n}$ is also semialgebraic. 
\end{proof}

Consequently, the dual cones, $ \mathrm{ASEP}^*_{m,n}$ and $ \mathrm{APPT}^*_{m,n}$, which correspond to sets of spectra of \emph{absolute separability witnesses}
and \emph{absolute $\mathrm{PPT}$ witnesses} \cite{johnston2018inverse,ganguly2014witness}, respectively, are also convex semialgebraic sets. In particular, $\mathrm{APPT}^*_{m,n}$ is a spectrahedral shadow. Indeed, these dual cones equal the convex hull of the set of spectra of block positive matrices and decomposable block positive matrices, respectively \cite[Corollary~1]{johnston2018inverse}. In Figure \ref{fig:duals}, we illustrate the set of spectra of two-qubit decomposable block positive matrices, satisfying the polynomials inequalities from \cite[Theorem~3]{johnston2018inverse} and its the convex hull, corresponding to the polar set $\mathrm{ASEP}^\circ_{2,2}$ with respect to the spectrum of the maximally mixed state as origin. We normalize the spectral tuple $(\omega_1,\omega_2,\omega_3,\omega_4)$ so that $\sum_{i=1}^4 \omega_i = 1$, and may assume $\omega_1, \omega_2, \omega_3\geq 0$. 
 


\begin{figure}[ht]
    \centering
    \begin{subfigure}{0.48\textwidth}
        \includegraphics[width=\textwidth]{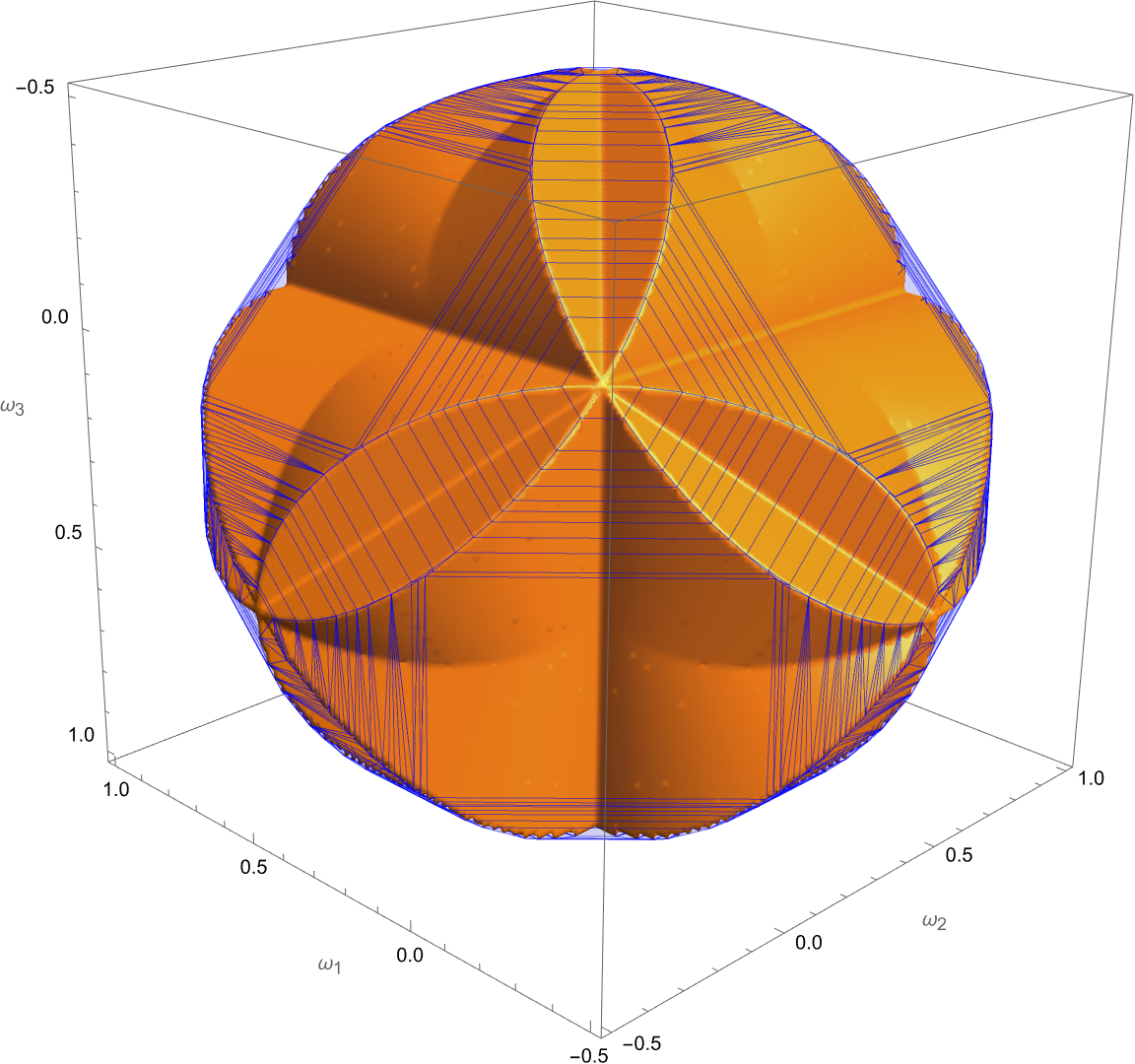}
        \caption{\label{fig:duals_a}}
    \end{subfigure}
    \begin{subfigure}{0.48\textwidth}
        \includegraphics[width=\textwidth]{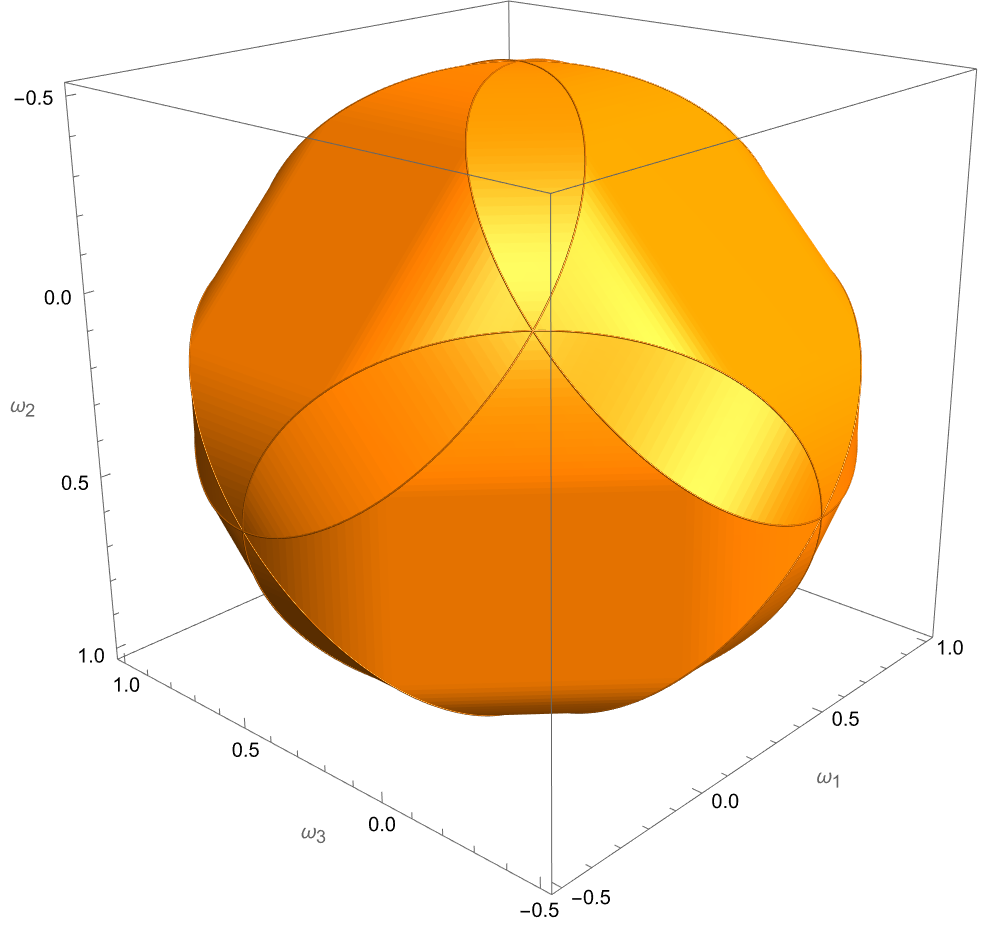}
        \caption{\label{fig:duals_b}}
    \end{subfigure}
    \caption{On the left (Fig.~\ref{fig:duals_a}), the inscribed object (orange) depicts the set of normalized spectra of two-qubit decomposable block-positive matrices \cite[Theorem~3]{johnston2018inverse} and the enclosing shell (purple), is the convex hull of this object.  This convex hull is exactly the normalized polar witness set $\mathrm{ASEP}_{2,2}^\circ$ (Fig.~\ref{fig:duals_b}) of the spectrahedron in Figure~\ref{fig:ASEP2X2}.}
    \label{fig:duals}
\end{figure}


\subsection{The \texorpdfstring{$\mathrm{APPT}_{m,n}$}{APPT\_{m,n}} boundary}\label{subsec:boundary}

Let $\partial \mathrm{APPT}_{m,n}$ denote the topological boundary of the set $\mathrm{APPT}_{m,n}$. Since $\mathrm{APPT}_{m,n}$ is defined over the probability simplex, we consider all topological notions relative to the affine hull of the simplex, $\aff(\Delta_{mn-1})$. Thus, we define the topological boundary $\partial \mathrm{APPT}_{m,n}$ as the set difference
$\partial \mathrm{APPT}_{m,n} = \mathrm{APPT}_{m,n} \setminus \mathrm{int}(\mathrm{APPT}_{m,n})$ since  $\mathrm{APPT}_{m,n}$ is closed. Because $\mathrm{APPT}_{m,n}$ is a spectrahedron and the spectrum of the maximally mixed state is strictly feasible within the set, the interior is given by $$\mathrm{int}(\mathrm{APPT}_{m,n})=
\left\{\lambda\in\operatorname{int}(\Delta_{mn-1}) \Big| L_\pi(\lambda)>0 \text{ for every }\pi\in\widetilde S \right\}.$$  

\begin{theorem}\label{thm:boundary_characterization}Suppose $\lambda$ is the spectrum of an bipartite state in $\M_m\otimes \M_n$.
  The spectrum  $\lambda \in \mathrm{APPT}_{m,n}$ lies on the topological boundary $\partial \mathrm{APPT}_{m,n}$ if and only if there exists at least one permutation $\pi\in \widetilde{S}$ such that $\det(L_\pi(\lambda)) =0$. Equivalently, 
  $$   \partial\mathrm{APPT}_{m,n} =\left\{\lambda\in\mathrm{APPT}_{m,n}\mid\det(\mathcal{L}(\lambda))=0\right\}. $$
\end{theorem}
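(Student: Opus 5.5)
The plan is to prove the two inclusions separately, working throughout in the relative topology of $\aff(\Delta_{mn-1})$. I will use the description of $\mathrm{APPT}_{m,n}$ from Eq.~\eqref{eq:APPT_spectrahedron} as the set of $\lambda\in\Delta_{mn-1}$ with $L_\pi(\lambda)\geq 0$ for all $\pi\in\widetilde S$. I will also use the identity from the proof of Theorem~\ref{thm:APPT-full}:
\[
\tfrac12\mathbf v^T L_\pi(\lambda)\mathbf v=\sum_{k}\lambda_{\pi(k)}\mu_k(\mathbf v).
\]
Since $\det\mathcal L(\lambda)=\prod_{\pi\in\widetilde S}\det L_\pi(\lambda)$ and every $L_\pi(\lambda)\ge0$ on $\mathrm{APPT}_{m,n}$, the claim is equivalent to the following: $\lambda\in\mathrm{APPT}_{m,n}$ is a boundary point iff some $L_\pi(\lambda)$ fails to be positive definite. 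In other words, the task is to verify the stated description of $\mathrm{int}(\mathrm{APPT}_{m,n})$ carefully.

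\emph{Sufficiency of positive definiteness (``$\Leftarrow$'' of the interior description).} Suppose $\lambda\in\mathrm{APPT}_{m,n}$ and $\det L_\pi(\lambda)\neq 0$ for all $\pi$, so $L_\pi(\lambda)>0$ for all $\pi$.
\begin{itemize}
\item First I show $\lambda\in\operatorname{int}(\Delta_{mn-1})$. If some $\lambda_k=0$, choose $\pi\in\widetilde S$ with $\pi(p(1,1))=k$ (possible because $\widetilde S$ covers all injective placements of $m^2$ coordinates). Then $[L_\pi(\lambda)]_{11}=2\lambda_k=0$, so $L_\pi(\lambda)$ is not positive definite, a contradiction.
\item The finitely many maps $\lambda\mapsto L_\pi(\lambda)$ are affine, hence continuous, and the open simplex is open in $\aff(\Delta_{mn-1})$.
\item Therefore a relative neighbourhood of $\lambda$ stays in $\mathrm{APPT}_{m,n}$, and $\lambda$ is an interior point.
\end{itemize}

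\emph{Necessity (``$\Rightarrow$'').} Suppose $\lambda\in\mathrm{APPT}_{m,n}$ and $L_\pi(\lambda)\mathbf v=0$ for some $\pi$ and some $\mathbf v\neq0$. I will exhibit a direction $\delta$ with $\sum_k\delta_k=0$ along which $\lambda$ leaves the set. Consider the linear functional
\[
f(\lambda')=\mathbf v^TL_\pi(\lambda')\mathbf v=2\langle \lambda',\,\mu(\mathbf v)\circ\pi^{-1}\rangle .
\]
It vanishes at $\lambda$ and is nonnegative on $\mathrm{APPT}_{m,n}$. If $f$ is non-constant on $\aff(\Delta_{mn-1})$, there is a direction $\delta$ tangent to the affine hull with $f(\lambda+t\delta)=tf(\delta)<0$ for all small $t>0$. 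Then $\lambda$ is a limit of points outside the set, so $\lambda\in\partial\mathrm{APPT}_{m,n}$.

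\emph{Main obstacle: $f$ is non-constant on $\aff(\Delta_{mn-1})$.} This is equivalent to showing the vector $\mu(\mathbf v)$ is not a constant multiple of $(1,\dots,1)$, because such a multiple is exactly what pairs to a constant on the affine hull. I split into two cases.
\begin{itemize}
\item If $n>m$, then $\mu(\mathbf v)$ has at least one zero entry. It also has $\sum_i v_i^2>0$ spread over its diagonal-type entries, so some entry is nonzero and $\mu(\mathbf v)$ is not constant.
\item If $m=n$, constancy would force $v_i^2=v_iv_j=-v_iv_j$ for all $i<j$. Since $m\ge2$, this gives $v_iv_j=0$ and then $v_i^2=0$ for all $i$, so $\mathbf v=0$, a contradiction.
\end{itemize}
Taken together, the two directions show that $\lambda$ lies on the boundary exactly when some $\det L_\pi(\lambda)=0$. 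This establishes both the stated description of $\mathrm{int}(\mathrm{APPT}_{m,n})$ and the theorem.
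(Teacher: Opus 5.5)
Your proof is correct and follows the paper's route. Both reduce, via $\det\mathcal{L}(\lambda)=\prod_{\pi\in\widetilde{S}}\det L_\pi(\lambda)$, to the interior description $\mathrm{int}(\mathrm{APPT}_{m,n})=\{\lambda\in\operatorname{int}(\Delta_{mn-1}) : L_\pi(\lambda)>0\ \forall\pi\in\widetilde{S}\}$. The paper only asserts this description from strict feasibility of the maximally mixed spectrum, whereas you prove it. That includes a point the paper passes over: a zero coordinate $\lambda_k$ makes some block singular, since $k$ is a diagonal index of some $L_\pi$. Your case split on $\mu(\mathbf{v})$ can be replaced by the one-line strict-feasibility check the paper has in mind, $f(\mathbf{u}_{mn})=\tfrac{2}{mn}\|\mathbf{v}\|^2>0=f(\lambda)$. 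With $\delta=\lambda-\mathbf{u}_{mn}$ this gives $f(\lambda+t\delta)=-t\,f(\mathbf{u}_{mn})<0$ for all $t>0$.
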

\begin{proof}
    Since a feasible spectrum lies in the interior of $\mathrm{APPT}_{m,n}$ if and only if $\mathcal{L }$ is positive definite at that point, it follows that any spectrum $\lambda\in \partial \mathrm{APPT}_{m,n} $ if and only if $\det(\mathcal{L}) = 0$. Since $\mathcal{L}(\lambda)$ is block-diagonal and particularly singular on the boundary, its determinant decomposes as 
\begin{equation}
\label{eq:determinant-decomp}
 \det(\mathcal{L}(\lambda)) = \det\left(\bigoplus_{\pi \in \widetilde{S}} L_{ \pi}(\lambda)\right) = \prod_{\pi \in \widetilde{S}} \det(L_\pi(\lambda)).
\end{equation} Thus, it follows directly that  $\prod_{\pi \in \widetilde{S}} \det(L_\pi(\lambda)) = 0$  if and only if there exists at least one permutation $\pi\in\widetilde{S}$, $\det(L_\pi(\lambda)) =0$.
\end{proof}

The index sets defined by Eq.~\eqref{eq:index_functions} are disjoint, implying that the upper-triangular and diagonal entries of $L_\pi(\lambda)$ have distinct coordinates. Thus, the linear forms $[L_\pi(\lambda)]_{i,j} $ are linearly independent and may be regarded as independent variables, say $x_{i,j}:=[L_\pi(\lambda)]_{i,j} $ for $1\leq i\leq j\leq m$. In this notation, each $L_\pi(\lambda) $ is simply the generic symmetric $m\times m$ matrix $L_\pi(\lambda)=\{x_{i,j}\}_{i,j=1}^m $, i.e., no special structure is imposed in this form. Since the determinant of a generic symmetric matrix is irreducible, and
irreducibility is preserved under an invertible linear change of variables,
$\det (L_\pi(\lambda))$ is irreducible for each $\pi\in\widetilde{S}$. Furthermore, since all matrix entries here are linear forms, the matrix determinant $\det(L_{\pi}(\lambda))$  for each permutation $\pi \in \widetilde{S}$, is a homogeneous polynomial of degree $m$. As such the topological boundary $\partial \mathrm{APPT}_{m,n}$ is contained in the finite union of irreducible hypersurfaces generated by the determinant equations $\det(L_{\pi}(\lambda))=0$ within the probability simplex, after repeated polynomials are eliminated:
\begin{equation}
\label{eq:loci_union}
    \partial \mathrm{APPT}_{m,n} \subseteq \Delta_{mn-1} \cap \left( \bigcup_{\pi \in \widetilde{S}} \mathcal{Z}(\det(L_{\pi}(\lambda))) \right).
\end{equation}
The proofs of the subsequent results in the section can be found in Appendix \ref{app:Appendix_A1}.

\begin{lemma}
\label{lem:polynomial_orbit}
The set of determinant polynomials 
$P = \{ \det(L_\pi(\lambda))\}_{\pi \in S_{mn}}$ of $\mathrm{APPT}_{m,n}$ is the group orbit under $S_{mn}$ given by
\begin{equation*}
P = \mathrm{Orb}_{S_{mn}}(\det(L_{\mathrm{id}}(\lambda)))
\end{equation*}
where $\det(L_{\mathrm{id}}(\lambda))$ is the determinant polynomial under the identity permutation. 
\end{lemma}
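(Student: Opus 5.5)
The plan is to make $S_{mn}$ act on polynomials in $\RR[\lambda_1,\dots,\lambda_{mn}]$ by permuting variables, $(\sigma\cdot f)(\lambda) \coloneq f(\lambda_{\sigma(1)},\dots,\lambda_{\sigma(mn)})$. Equivalently, writing $(\lambda^\sigma)_k = \lambda_{\sigma(k)}$, we have $(\sigma\cdot f)(\lambda) = f(\lambda^\sigma)$. The lemma then follows once we show that every $\det(L_\pi(\lambda))$ arises from $\det(L_{\mathrm{id}}(\lambda))$ by such a variable substitution, and conversely.

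The first step is to read off from the entrywise definition in Eq.~\eqref{eq:base_matrix} that
\[
L_\pi(\lambda) = L_{\mathrm{id}}(\lambda^\pi).
\]
Indeed, the entries of $L_{\mathrm{id}}$ at $\lambda^\pi$ are $2(\lambda^\pi)_{p(i,i)} = 2\lambda_{\pi(p(i,i))}$ on the diagonal and $(\lambda^\pi)_{p(i,j)} - (\lambda^\pi)_{q(i,j)} = \lambda_{\pi(p(i,j))} - \lambda_{\pi(q(i,j))}$ off the diagonal. This is exactly Eq.~\eqref{eq:base_matrix} for $\pi$. Since the determinant is evaluated entrywise, we get
\[
\det(L_\pi(\lambda)) = \det(L_{\mathrm{id}}(\lambda^\pi)) = (\pi\cdot \det L_{\mathrm{id}})(\lambda)
\]
as polynomials in $\lambda$. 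Hence $P \subseteq \mathrm{Orb}_{S_{mn}}(\det L_{\mathrm{id}})$.

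The second step is the reverse inclusion. Any orbit element has the form $\sigma\cdot\det L_{\mathrm{id}}$ for some $\sigma\in S_{mn}$, and by the identity above this equals $\det L_\sigma(\lambda)$, which lies in $P$. So the two sets coincide, since both are indexed by the whole of $S_{mn}$.

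The one point needing care is that the substitution defines a genuine group action, so that "orbit" is meaningful. With $(\sigma\cdot f)(\lambda)=f(\lambda^\sigma)$ we get $(\lambda^\sigma)^\tau = \lambda^{\sigma\tau}$, which makes this a right action. One can either work with it as a right action or use $\sigma^{-1}$ to obtain a left action. Either way the orbit set is unchanged, since $\sigma\mapsto\sigma^{-1}$ is a bijection of $S_{mn}$. Beyond this convention the argument is pure bookkeeping, and I expect no real obstacle.

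As a remark, $P$ is a set, so the many $\pi$ yielding the same polynomial (for instance, those differing only on the $mn-m^2$ unused indices) are collapsed automatically. This is consistent with the earlier reduction to $\widetilde{S}$.
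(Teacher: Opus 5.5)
Your proof is correct and follows the paper's argument: the paper's ring automorphism $\phi_\pi$ is exactly your substitution $\lambda_k\mapsto\lambda_{\pi(k)}$, and the paper likewise checks $\phi_\pi(L_{\mathrm{id}}(\lambda))=L_\pi(\lambda)$ entrywise and commutes $\phi_\pi$ with the determinant. One small slip that does not affect the orbit: $(\sigma\cdot f)(\lambda)=f(\lambda^\sigma)$ is a \emph{left} action on polynomials, since precomposing with the right action $\lambda\mapsto\lambda^\sigma$ on vectors reverses the order and gives $\tau\cdot(\sigma\cdot f)=f(\lambda^{\tau\sigma})=(\tau\sigma)\cdot f$.
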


\begin{lemma}
\label{lem:signed-permutation}
As polynomials, $\det(L_{\pi}(\lambda)) = \det(L_{id}(\lambda))$ for some permutation $\pi \in S_{mn}$, if and only if there exists a signed permutation matrix $A$ such that $L_{\pi}(\lambda) = A L_{id}(\lambda) A^T$.
\end{lemma}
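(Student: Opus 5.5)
The reverse direction is immediate. If $L_\pi(\lambda)=AL_{\mathrm{id}}(\lambda)A^T$ with $A$ a signed permutation matrix, then $\det A=\pm1$. Multiplicativity of the determinant gives $\det L_\pi(\lambda)=(\det A)^2\det L_{\mathrm{id}}(\lambda)=\det L_{\mathrm{id}}(\lambda)$ identically in $\lambda$.

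The forward direction carries the content. I would use the observation preceding Eq.~\eqref{eq:loci_union}: the $\binom{m+1}{2}$ entries $x_{i,j}=[L_{\mathrm{id}}(\lambda)]_{i,j}$ ($i\le j$) are linearly independent linear forms in $\lambda$. The diagonal entries are $2\lambda_{p(i,i)}$ and the off-diagonal entries are $\lambda_{p(i,j)}-\lambda_{q(i,j)}$, and these use pairwise disjoint coordinates. The entries of $L_\pi(\lambda)$ are obtained by relabelling coordinates via $\pi$.

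\emph{Step 1: match the diagonals.} In $\det L(\lambda)$ the monomial $\prod_i L_{ii}$ appears with coefficient $1$. The squares of individual coordinates that occur come only from off-diagonal entries. Each diagonal coordinate $\lambda_{p(i,i)}$ occurs in the determinant only linearly, so the determinant has degree one in it. I would show that the set of coordinates in which $\det L_{\mathrm{id}}$ has degree exactly one is exactly the diagonal set $\{p(i,i)\}$. For an off-diagonal coordinate $\lambda_{p(i,j)}$, expanding the $2\times2$ principal-minor contribution $-x_{ij}^2\prod_{k\ne i,j}x_{kk}$ shows it occurs squared. Equality of polynomials then forces $\pi$ to map diagonal index positions of $L_{\mathrm{id}}$ onto diagonal index positions. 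This defines a permutation $\sigma\in S_m$ with $[L_\pi]_{ii}=[L_{\mathrm{id}}]_{\sigma(i)\sigma(i)}$.

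\emph{Step 2: match the off-diagonals up to sign.} Differentiating with respect to the diagonal variables isolates off-diagonal entries. The coefficient of $\prod_{k\ne i,j}L_{kk}$ in $\det L$ is the $2\times 2$ minor $L_{ii}L_{jj}-L_{ij}^2$. Comparing both sides using Step 1, $[L_\pi]_{ij}^2=[L_{\mathrm{id}}]_{\sigma(i)\sigma(j)}^2$ as polynomials. Hence $[L_\pi]_{ij}=\epsilon_{ij}[L_{\mathrm{id}}]_{\sigma(i)\sigma(j)}$ with $\epsilon_{ij}\in\{\pm1\}$.

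\emph{Step 3: realise the signs by a diagonal sign matrix.} This is the main obstacle. We need $\epsilon_{ij}=s_is_j$ for some $s\in\{\pm1\}^m$, i.e.\ a cocycle condition $\epsilon_{ij}\epsilon_{jk}\epsilon_{ik}=1$ on every triangle. I would extract it from the $3\times3$ principal-minor terms: the coefficient of $\prod_{l\ne i,j,k}L_{ll}$ contains $2L_{ij}L_{jk}L_{ik}$. Matching it with $L_{\mathrm{id}}$ gives $\epsilon_{ij}\epsilon_{jk}\epsilon_{ik}=1$ (this step is vacuous for $m\le2$). Then fix $s_1=1$ and $s_j=\epsilon_{1j}$; the triangle condition gives consistency. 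Taking $A=\mathrm{diag}(s)P_\sigma$ yields $L_\pi(\lambda)=AL_{\mathrm{id}}(\lambda)A^T$ entrywise as polynomial matrices. Care is needed that the coefficient extraction is legitimate: I would treat the $x_{i,j}$ as algebraically independent (valid because the forms are linearly independent) and compare coefficients in these variables after the coordinate identifications.
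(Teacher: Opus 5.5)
Your proof is correct, and it takes a different and more elementary route than the paper.

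\textbf{The paper's route.} It first uses Lemma~\ref{lem:induced_linear_map} to get a well-defined invertible linear map $T_\pi$ on $\mathrm{Sym}_m(\RR)$ with $T_\pi(L_{\mathrm{id}}(\lambda))=L_\pi(\lambda)$ and $\det T_\pi(X)=\det X$. It then invokes the Dieudonn\'e/Cao--Tang classification of linear determinant preservers, which gives $T_\pi(X)=\alpha BXB^T$. Finally it compares $2\lambda_{\pi(p(k,k))}$ with $[\alpha BL_{\mathrm{id}}(\lambda)B^T]_{kk}$ coordinate by coordinate. This forces each row of $B$ to have a single nonzero entry $\pm\alpha^{-1/2}$, so $\sqrt{\alpha}B$ is a signed permutation.

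\textbf{Your route.} You work directly with the polynomial identity.
\begin{enumerate}
\item The degree of $\det L_{\mathrm{id}}$ in each coordinate is invariant under relabelling: it is one on $\{p(i,i)\}$, two on the off-diagonal indices, and zero elsewhere. This pins down $\sigma$.
\item Differentiating in the matched diagonal coordinates extracts the $2\times 2$ principal minors, which give the off-diagonal entries up to sign.
\item The same device on $3\times 3$ principal minors gives the triangle condition $\epsilon_{ij}\epsilon_{jk}\epsilon_{ik}=1$. This makes the signs of the form $s_is_j$.
\end{enumerate}

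\textbf{What each buys.} Your argument is self-contained: it needs neither the external preserver theorem nor Lemma~\ref{lem:induced_linear_map}. Step~1 also directly gives $\pi(J)=J$; in fact $\pi$ preserves the diagonal and off-diagonal index classes separately. That is exactly the fact the proof of Theorem~\ref{thm:kappa-irreducible-components} later relies on. The paper's argument is shorter once the congruence form is granted, and the sign consistency you establish by hand in Step~3 comes for free there.

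\textbf{One point of care.} Comparing coefficients in the $x_{i,j}$ is only legitimate after Step~2. Before that, an entry $\lambda_{\pi(p(i,j))}-\lambda_{\pi(q(i,j))}$ of $L_\pi(\lambda)$ need not be a polynomial in the $x_{i,j}$. So Steps~1--2 should be phrased through degrees and partial derivatives in the $\lambda$-coordinates, as you effectively do. In Step~3 the conclusion then follows because $\RR[\lambda]$ is an integral domain and the triple product of nonzero linear forms is nonzero.
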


We determine a bound on the number of permutations needed to describe the boundary as follows:

\begin{theorem}
\label{thm:kappa-irreducible-components} 
Let $\partial \mathrm{APPT}_{m,n}$ be  the topological boundary of the set of absolute $\mathrm{PPT}$ spectra and $V_{mn}=\{\lambda\in \RR^{mn}\mid \sum_{i=1}^{mn}\lambda_i=1 \}$, the affine hyperplane of normalized spectra. Then there exist permutations $ \pi_1,\ldots,\pi_{\kappa_{m,n}}\in \widetilde{S}$ such that

\begin{equation}\label{eq:boundary_hypersurface}
\partial\mathrm{APPT}_{m,n}=\mathrm{APPT}_{m,n}\cap\left(\bigcup_{i=1}^{\kappa_{m,n}}
    \mathcal{Z}_{V_{mn}} \left(
        \det L_{\pi_i}(\lambda)
    \right)\right),
\end{equation}
where 
$$\kappa_{m,n} = \frac{(mn)!}{(mn-m^2)!\cdot 2^{m-1} \cdot m!}.$$  
Furthermore, each algebraic set $\mathcal{Z}_{V_{mn}} \left(\det L_{\pi_i} \right)$ taken with respect to the affine space $V_{mn}$ 
is a distinct irreducible hypersurface of dimension $(mn-2)$.
\end{theorem}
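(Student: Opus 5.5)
Our plan has three parts: reduce the union in Eq.~\eqref{eq:loci_union} to distinct polynomials using Lemmas~\ref{lem:polynomial_orbit} and~\ref{lem:signed-permutation}, count the resulting orbit, and then establish irreducibility and dimension for each hypersurface restricted to $V_{mn}$. By Theorem~\ref{thm:boundary_characterization} and Eq.~\eqref{eq:loci_union}, $\partial\mathrm{APPT}_{m,n}$ is exactly the set of $\lambda\in\mathrm{APPT}_{m,n}$ with $\det L_\pi(\lambda)=0$ for some $\pi\in\widetilde S$. Coinciding polynomials give the same zero set, so it suffices to take one representative $\pi_i$ from each class of permutations that yield the same determinant polynomial.

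To count the classes, Lemma~\ref{lem:polynomial_orbit} identifies the set of polynomials with the $S_{mn}$-orbit of $\det L_{\mathrm{id}}$. By orbit–stabilizer, $\kappa_{m,n}=|S_{mn}|/|\mathrm{Stab}(\det L_{\mathrm{id}})|$. Lemma~\ref{lem:signed-permutation} describes the stabilizer through the signed permutation matrices $A$ with $A L_{\mathrm{id}}(\lambda)A^T=L_\pi(\lambda)$ for some $\pi$. We then argue as follows.
\begin{enumerate}
\item Any $(mn-m^2)!$ permutations of the unused coordinates lie in the stabilizer.
\item Conjugation by a signed permutation $A=DP$ permutes the diagonal entries $2\lambda_{p(i,i)}$ among themselves. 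It maps an off-diagonal entry $\lambda_{p(i,j)}-\lambda_{q(i,j)}$ to $\pm(\lambda_{p(i',j')}-\lambda_{q(i',j')})$, and a sign flip is realized by swapping the $p$- and $q$-indices. Hence every $A$ is induced by a coordinate permutation $\pi$.
\item $A$ and $-A$ give the same conjugation, so the signed permutations act through a group of order $2^m m!/2=2^{m-1}m!$.
\item This action on coordinates is faithful, because the entries involve distinct variables.
\end{enumerate}
Thus the stabilizer has order $(mn-m^2)!\,2^{m-1}m!$, which gives the stated $\kappa_{m,n}$. The inclusion ``stabilizer $\subseteq$ these'' is exactly the ``only if'' direction of Lemma~\ref{lem:signed-permutation}. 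We also need polynomial equality to be literal equality, not equality up to scalar; for that we compare a monomial such as $\prod_i \lambda_{p(i,i)}$ with coefficient $2^m$, which rules out sign flips of the whole polynomial.

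For irreducibility on $V_{mn}$, we substitute $\lambda_{k_0}=1-\sum_{k\neq k_0}\lambda_k$ for an unused coordinate $k_0$. Such a coordinate exists when $mn>m^2$, i.e.\ $m<n$, so $\det L_\pi$ is unchanged as a polynomial in the remaining variables. Its linear entries stay linearly independent, so by the generic-symmetric-determinant argument above it remains irreducible. When $m=n$ there is no unused coordinate, and this is the main obstacle. The entries $x_{ij}$ are then $m^2$ independent linear forms in $m^2$ variables, and the affine constraint becomes $\sum_i x_{ii}/2+\sum_{i<j}(\lambda_{p(i,j)}+\lambda_{q(i,j)})=1$. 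We handle it by completing the $x_{ij}$ with the $\binom m2$ forms $y_{ij}=\lambda_{p(i,j)}+\lambda_{q(i,j)}$ to an invertible linear change of coordinates. We then solve the constraint for a $y_{ij}$, which requires $m\ge2$. The determinant is a polynomial in the free $x_{ij}$ only, hence irreducible on $V_{mn}$. In both cases the zero set of one nonconstant irreducible polynomial in the $(mn-1)$-dimensional space $V_{mn}$ is a hypersurface of dimension $mn-2$.

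Real irreducibility of the zero set, as opposed to irreducibility of the polynomial, needs a smooth real point. We obtain one by taking $L_\pi$ of rank $m-1$, which is achievable since the entries are free coordinates; this yields real dimension $mn-2$ and Zariski density. Distinctness of the $\kappa_{m,n}$ hypersurfaces follows because distinct irreducible polynomials, not scalar multiples of each other, have distinct real zero sets of codimension one.
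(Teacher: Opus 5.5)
Your proposal is correct, but it departs from the paper in the algebraic half. The first half is the paper's argument: you reduce to one representative per distinct polynomial, count the $S_{mn}$-orbit of $\det L_{\mathrm{id}}$ by orbit--stabilizer, and use Lemma~\ref{lem:signed-permutation} to identify the stabilizer as permutations of the $mn-m^2$ unused coordinates times signed row/column permutations modulo $\pm I$.

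In the algebraic half, the paper does three things. It asserts without proof that $\det L_{\pi_i}$ stays irreducible on $V_{mn}$. It cites Hartshorne, a statement over algebraically closed fields, for the dimension $mn-2$. It gets distinctness from ``equal zero sets of irreducible polynomials are proportional'' plus evaluation at $\mathbf{u}_{mn}$. Over $\RR$, irreducibility alone does not justify the last two inferences: $x^2+y^2$ and $x^2+2y^2$ are irreducible and non-proportional, yet share the zero-dimensional real zero set $\{0\}$.

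Your extra steps fill exactly these gaps. The change of variables $(x_{ij},y_{ij})$ proves irreducibility on $V_{mn}$ even when $m=n$, where no coordinate is free of the constraint. Your rank-$(m-1)$ real point, where the gradient of $\det$ is the nonzero adjugate, makes $(\det L_\pi)$ a real ideal. That yields real dimension $mn-2$, irreducibility of the real zero set, and the proportionality needed for distinctness. So your route is longer but more rigorous than the paper's.

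One small repair is needed. Comparing the coefficient of $\prod_i\lambda_{p(i,i)}$ does not by itself force $c=1$ in $\phi_\pi(h)=c\,h$, because $\phi_\pi(h)$ need not contain that monomial. Instead evaluate at $\mathbf{u}_{mn}$, as the paper does: every $L_\pi(\mathbf{u}_{mn})$ is $\tfrac{2}{mn}$ times the identity, so all orbit elements take the same nonzero value there. Also note that homogeneity lets you pass from non-proportional polynomials to non-proportional restrictions on $V_{mn}$.
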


\begin{remark}
\label{rem:boundary_relation}
The number of distinct irreducible determinant polynomials in the boundary representation \eqref{eq:boundary_hypersurface} satisfies
\begin{equation*}
    \kappa_{m,n} = \binom{mn}{m^2} \kappa_{m,m}.
\end{equation*} 
This demonstrates the choice of $m^2$ eigenvalues from the total $mn$ eigenvalues required by the matrix constraint $L_\pi(\lambda)$.

\end{remark}

\begin{figure}[ht]
    \centering
    \includegraphics[width=0.75\linewidth]{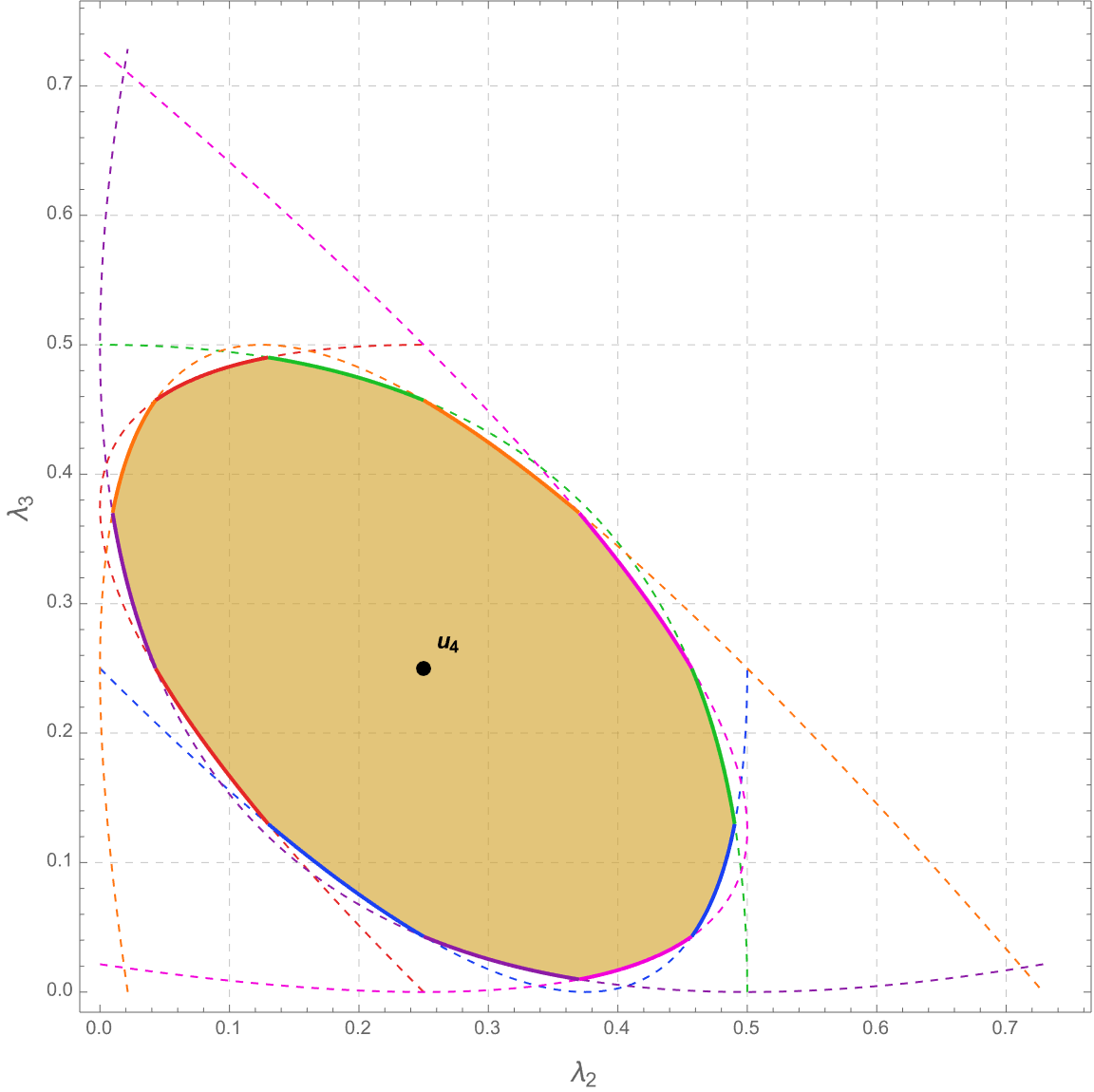}
    \caption{A cross-section of  $\mathrm{ASEP}_{2,2}$ at $\lambda_{\pi(1)} = 0.25$ whose boundary is generated by $\kappa_{2,2} = 6$ irreducible hypersurfaces, each of dimension $2$ (brightly colored). The interior (yellow) includes the center $\mathbf{u}_4 = \frac{1}{4}(1,1)$ corresponding to the spectra of the maximally mixed state. Each  colored curve is a hypersurface $ \mathcal{Z}_{V_{mn}} \left(
        \det L_{\pi_i}
    \right) $ for $i=1,\ldots,6$ in the slice. The feasible regions of the topological boundary are indicated by thick brightly colored parts of the curves  while the infeasible regions outside the set are the dotted colored parts. }
    \label{fig:boundary}
\end{figure}


\section{Faces and extreme points of $\mathbf{\text{APPT}_{m,n}}$ }\label{sec:faces and extreme points}
The faces of spectrahedra are determined by the images, or equivalently kernels, of their defining matrices, a property inherited from the set of positive semidefinite matrices $\mathcal{S}^N_+$ \cite{ramana1995some,cynthiaspectrahedra}. More precisely, there exists a natural inclusion-preserving bijection between the non-empty faces of a spectrahedron and its associated image subspaces \cite[Prop.~2.10]{scheiderer2022extreme}. Taking orthogonal complements gives an equivalent inclusion-reversing correspondence with the associated kernel subspaces. This bijection adapted into the kernel-based description of spectrahedra implies the following: 

\begin{lemma}
\label{lem:minimal_subspace}
Let  $F $ be a non-empty face of a spectrahedron. Suppose $F$ is defined by the common kernel subspace $U = \bigcap_{x \in F} \ker(\mathcal{L}(x))$, its relative interior $\mathrm{relint}(F)$ is given by  
\begin{equation}
    \mathrm{relint}(F) = \{ y \in F \;|\; \ker(\mathcal{L}(y)) = U \}.
\end{equation}
\end{lemma}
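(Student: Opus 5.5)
The plan is to prove the two inclusions separately, using only two facts. The first is that $\mathcal{L}$ is affine on the affine hull of the simplex. The second is the elementary PSD fact: for $A,B\ge 0$ and $t\in(0,1)$, $\ker(tA+(1-t)B)=\ker A\cap\ker B$. This holds because $v^\dagger(tA+(1-t)B)v=0$ forces $v^\dagger Av=v^\dagger Bv=0$, hence $Av=Bv=0$. Throughout I treat the spectrahedron as $\{x:\mathcal{L}(x)\ge 0\}$. The simplex constraints in $\mathrm{APPT}_{m,n}$ can be absorbed into $\mathcal{L}$ as additional $1\times1$ blocks $\lambda_i\ge 0$, so the general statement covers that case.

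\textbf{Step 1: relint points have kernel exactly $U$.} Let $y\in\mathrm{relint}(F)$ and $x\in F$. By the definition of relative interior there is $\epsilon>0$ with $z:=y+\epsilon(y-x)\in F$. Then
\[
y=\tfrac{1}{1+\epsilon}z+\tfrac{\epsilon}{1+\epsilon}x ,
\]
so by affineness $\mathcal{L}(y)$ is a nontrivial convex combination of $\mathcal{L}(z)$ and $\mathcal{L}(x)$, both PSD. The kernel fact gives $\ker\mathcal{L}(y)\subseteq\ker\mathcal{L}(x)$. Since $x\in F$ was arbitrary, $\ker\mathcal{L}(y)\subseteq U$. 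The reverse inclusion $U\subseteq\ker\mathcal{L}(y)$ holds because $y\in F$. Hence $\ker\mathcal{L}(y)=U$.

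\textbf{Step 2: points of $F$ with kernel $U$ lie in the relative interior.} Let $y\in F$ with $\ker\mathcal{L}(y)=U$, and let $x\in F$. I need some $\epsilon>0$ with $y+\epsilon(y-x)\in F$. First, $\mathcal{L}(y+\epsilon(y-x))=\mathcal{L}(y)+\epsilon(\mathcal{L}(y)-\mathcal{L}(x))$, and both $\mathcal{L}(y)$ and $\mathcal{L}(x)$ vanish on $U$, since $U\subseteq\ker\mathcal{L}(x)$ by definition of $U$. These Hermitian matrices therefore act only on $U^\perp$. On $U^\perp$ the compression of $\mathcal{L}(y)$ is positive definite, with smallest eigenvalue $\delta>0$. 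So for $\epsilon\,\|\mathcal{L}(y)-\mathcal{L}(x)\|<\delta$ the perturbed matrix is PSD, and the point $w=y+\epsilon(y-x)$ lies in the spectrahedron $C$. It remains to show $w\in F$. Here I use that $F$ is a face: $y=\tfrac{1}{1+\epsilon}w+\tfrac{\epsilon}{1+\epsilon}x$ with $w,x\in C$ and $y\in F$, so the face property forces $w\in F$. This is exactly the relint condition.

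\textbf{Main obstacle.} The delicate point is Step 2, specifically making sure the perturbed point stays in $C$ and then in $F$ rather than merely in the PSD cone. Two observations handle it. Feasibility in $C$ comes from the uniform spectral gap on $U^\perp$, which works because every $\mathcal{L}(x)$ with $x\in F$ annihilates $U$. Membership in $F$ then follows from the face property applied to the segment through $y$. If the simplex constraints are not absorbed into $\mathcal{L}$, I would add one more remark. Any simplex constraint $\lambda_i=0$ active at $y$ is also active on all of $F$: it is active at $y$, $y$ is relatively interior, and faces of the simplex are faces. The perturbation along $y-x$ preserves these active constraints, and inactive ones remain strict for small $\epsilon$.
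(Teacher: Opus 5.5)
Your proof is correct, and it takes a different route from the paper. The paper does not argue directly. It cites the image-space characterization $\mathrm{relint}(F)=\{y\in F : \mathrm{im}\,\mathcal{L}(y)=W\}$, with $W=\sum_{x\in F}\mathrm{im}\,\mathcal{L}(x)$, from Scheiderer (Cor.~2.11). It then takes orthogonal complements: $\ker\mathcal{L}(x)=\mathrm{im}\,\mathcal{L}(x)^\perp$ for symmetric matrices, so $U=W^\perp$ and $\mathrm{im}\,\mathcal{L}(y)=W\iff\ker\mathcal{L}(y)=U$.

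You instead prove the statement from first principles, in effect reproving the cited corollary in kernel form. Step~1 uses only affineness of $\mathcal{L}$ and the identity $\ker(tA+(1-t)B)=\ker A\cap\ker B$ for positive semidefinite $A,B$. Step~2 uses the spectral gap of $\mathcal{L}(y)$ on $U^\perp$ to make $y+\epsilon(y-x)$ feasible, and then the face property to place it in $F$. The paper's version is shorter but rests entirely on the external reference. Yours is self-contained, works with the paper's own definition of relative interior, and shows where each hypothesis enters. The inclusion $\subseteq$ holds for any convex subset of the spectrahedron; the face property is needed only for $\supseteq$.

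One caution about your closing remark. Absorbing the nonnegativity constraints into $\mathcal{L}$ is necessary, not optional. If extra inequalities are kept outside the LMI and $U$ is computed from $\mathcal{L}$ alone, the statement can fail. Take the disk $\left(\begin{smallmatrix}1+x_1&x_2\\x_2&1-x_1\end{smallmatrix}\right)\ge 0$ cut by $x_2\ge 0$, with $F$ the whole half-disk. Then $U=\{0\}=\ker\mathcal{L}(0)$, yet the origin lies on the edge $x_2=0$ and is not relatively interior.

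Your suggested patch is also circular: in Step~2 you may not invoke that $y$ is relatively interior, since that is what you are proving. For $\mathrm{APPT}_{m,n}$ this causes no trouble. The condition $\lambda\ge 0$ is already implied by the diagonal entries $2\lambda_{\pi(p(i,i))}$ of the blocks. Moreover, $\lambda_k=0$ forces a standard basis vector into $\ker L_\pi(\lambda)$ whenever $\pi$ puts $k$ on the diagonal. Only the hyperplane $\sum_i\lambda_i=1$ remains, and it merely restricts the affine domain. So drop that remark and keep the absorbed formulation you actually use.
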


\begin{proof}
Let $W =\sum_{x \in F} \mathrm{im}(\mathcal{L}(x))  $ be the image subspace of $F$. By \cite[Cor.~2.11]{scheiderer2022extreme}, 
$$\mathrm{relint}(F) = \{ y \in F \;|\; \mathrm{im}(\mathcal{L}(y)) = W \}.$$ 
Since $\mathcal{L}(x) $ is a real symmetric matrix, $\ker(\mathcal{L}(x)) = \mathrm{im}(\mathcal{L}(x))^\perp$. Taking the orthogonal complement of the image subspace, we see that $U =  \bigcap_{x \in F} \ker(\mathcal{L}(x)) =\bigcap_{x \in F} \mathrm{im}(\mathcal{L}(x))^\perp =\left( \sum_{x \in F} \mathrm{im}(\mathcal{L}(x)) \right)^\perp= W^\perp.$ Consequently, 
$$\mathrm{im}(\mathcal{L}(y)) = W \iff \ker(\mathcal{L}(y)) = U.$$ 
Thus, $\mathrm{relint}(F) = \{ y \in F \;|\; \ker(\mathcal{L}(y)) = U \}$.
\end{proof}

With this, we can define the faces of the set of spectra of absolute PPT states $\mathrm{APPT}_{m,n}$ as follows: 

\begin{theorem} \label{thm:faces_of_APPT}
Let $F_\mathcal{U}$ denote a face of $\mathrm{APPT}_{m,n}$ and let $\mathcal{U} = \{U_\pi\}_{\pi \in \widetilde{S}}$ be its associated collection of common kernel subspaces, where $U_\pi = \ker(L_\pi(\lambda_0)) \subseteq \mathbb{R}^m$ for $\lambda_0 \in \mathrm{relint}(F_\mathcal{U})$. Then the face $F_\mathcal{U} \subseteq \mathrm{APPT}_{m,n}$ is uniquely characterized as 
\begin{equation}\label{eq:exhaustive_face}
    F_{\mathcal{U}} = \left\{ \lambda \in \mathrm{APPT}_{m,n} \;\middle|\; U_{\pi} \subseteq \ker(L_{\pi}(\lambda)), \quad \forall\ \pi \in \widetilde{S} \right\}.
\end{equation}
\end{theorem}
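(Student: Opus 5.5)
We prove the two inclusions separately, writing $G$ for the right-hand side of Eq.~\eqref{eq:exhaustive_face}. Throughout we use that $\mathcal{L}(\lambda)=\bigoplus_{\pi\in\widetilde S}L_\pi(\lambda)$ is block diagonal. Hence $\ker\mathcal{L}(\lambda)=\bigoplus_\pi \ker L_\pi(\lambda)$, and the common kernel $U=\bigcap_{x\in F_\mathcal{U}}\ker\mathcal{L}(x)$ of Lemma~\ref{lem:minimal_subspace} splits as $\bigoplus_\pi U_\pi$. By Lemma~\ref{lem:minimal_subspace}, any $\lambda_0\in\relint(F_\mathcal{U})$ satisfies $\ker\mathcal{L}(\lambda_0)=U$, so the blocks $U_\pi=\ker L_\pi(\lambda_0)$ do not depend on the choice of $\lambda_0$. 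This makes $\mathcal{U}$ well defined.

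For the inclusion $F_\mathcal{U}\subseteq G$, take $\lambda\in F_\mathcal{U}$. Since $U$ is the intersection over all points of the face, $U\subseteq\ker\mathcal{L}(\lambda)$. Projecting onto each block gives $U_\pi\subseteq\ker L_\pi(\lambda)$ for every $\pi$.

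For the converse inclusion $G\subseteq F_\mathcal{U}$, take $\lambda\in G$, so $\lambda\in\mathrm{APPT}_{m,n}$ and $U_\pi\subseteq\ker L_\pi(\lambda)$ for all $\pi$. Fix $\lambda_0\in\relint(F_\mathcal{U})$ and set $\lambda_\epsilon=\lambda_0+\epsilon(\lambda_0-\lambda)$. Since $\lambda$ and $\lambda_0$ both lie in $\Delta_{mn-1}$, the point $\lambda_\epsilon$ is normalized. Its simplex coordinates stay nonnegative for small $\epsilon$ whenever the corresponding coordinates of $\lambda_0$ are positive.

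Positivity of the matrix part follows blockwise from affinity of $L_\pi$: $L_\pi(\lambda_\epsilon)=(1+\epsilon)L_\pi(\lambda_0)-\epsilon L_\pi(\lambda)$. Both terms vanish on $U_\pi$, so both map $U_\pi^\perp$ into itself. On $U_\pi^\perp$ the operator $L_\pi(\lambda_0)$ is positive definite, so it dominates $\epsilon L_\pi(\lambda)$ there for $\epsilon$ small. Hence $L_\pi(\lambda_\epsilon)\ge0$. Since there are finitely many blocks, a single $\epsilon>0$ works for all of them, and $\lambda_\epsilon\in\mathrm{APPT}_{m,n}$. Then $\lambda_0=\frac{1}{1+\epsilon}\lambda_\epsilon+\frac{\epsilon}{1+\epsilon}\lambda$ is a proper convex combination of points of $\mathrm{APPT}_{m,n}$ lying in the face $F_\mathcal{U}$. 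By the definition of a face, $\lambda\in F_\mathcal{U}$.

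Uniqueness then follows because $F_\mathcal{U}$ is recovered from $\mathcal{U}$ by Eq.~\eqref{eq:exhaustive_face}. Faces with the same kernel collection therefore coincide, which is the inclusion-reversing bijection of \cite[Prop.~2.10]{scheiderer2022extreme}.

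The main obstacle is the simplex constraints $\lambda_i\ge0$. These are not encoded in $\mathcal{L}$, so a face of $\mathrm{APPT}_{m,n}$ may a priori also be cut out by some coordinates $\lambda_i=0$. The argument above needs $\lambda_\epsilon\ge0$, which fails if $(\lambda_0)_i=0$ but $\lambda_i>0$. To handle this, I would first show that positivity is implied by the matrix constraints. Taking $m=1$ blocks, or the diagonal entries $2\lambda_{\pi(p(i,i))}$, over all permutations $\pi$ shows that $L_\pi\ge0$ for all $\pi$ already forces every $\lambda_k\ge0$. More precisely, coordinates with $\lambda_k=0$ then appear as zero diagonal entries of some block, so the vanishing is recorded in the kernels $U_\pi$. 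With that, $G$ inherits $\lambda_i=0$ wherever $\lambda_0$ has it, and the perturbation argument goes through. If this absorption failed, I would augment $\mathcal{L}$ by the diagonal block $\diag(\lambda)$, regarding $\mathrm{APPT}_{m,n}$ as a single spectrahedron, and rerun the same two-inclusion argument.
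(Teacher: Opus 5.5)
Your proof is correct, but it follows a different route from the paper. The paper gives no standalone proof of this theorem. It reads the statement off the general face--kernel correspondence for spectrahedra: Scheiderer's Prop.~2.10 and Cor.~2.11, packaged as Lemma~\ref{lem:minimal_subspace}, together with Vinzant's Lemma~4.5. Later it writes $F_\mathcal{U}=\Delta_{mn-1}\cap\mathcal{L}^{-1}(\hat F_\mathcal{U})$, where $\hat F_\mathcal{U}$ is a face of $\bigoplus_\pi\mathcal{S}^m_+$, and lets the block-diagonal form of $\mathcal{L}$ split the kernel condition over $\pi\in\widetilde S$.

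You instead prove that correspondence directly in this instance:
\begin{itemize}
\item The inclusion $F_\mathcal{U}\subseteq G$ follows from Lemma~\ref{lem:minimal_subspace} plus the blockwise splitting of kernels.
\item The substantive inclusion $G\subseteq F_\mathcal{U}$ comes from extending past $\lambda_0$ to $\lambda_\epsilon=\lambda_0+\epsilon(\lambda_0-\lambda)$. This point stays feasible because each $L_\pi(\lambda_\epsilon)$ vanishes on $U_\pi$ and is a small perturbation of the positive definite restriction of $L_\pi(\lambda_0)$ to $U_\pi^\perp$.
\item That second inclusion uses only $U_\pi=\ker L_\pi(\lambda_0)$, not $\lambda_0\in\relint(F_\mathcal{U})$. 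Relative interiority is needed only for the first inclusion.
\end{itemize}

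The main gain of your route is that it makes explicit a point the paper leaves implicit. $\mathrm{APPT}_{m,n}$ equals $\Delta_{mn-1}\cap\{\mathcal{L}\ge 0\}$, and applying spectrahedral face theory to $\mathcal{L}$ alone is legitimate only because the simplex inequalities are redundant. Each coordinate $\lambda_k$ appears as the diagonal entry $2\lambda_{\pi(p(1,1))}$ of some block with $\pi\in\widetilde S$. (This is what your ``$m=1$ blocks'' should say: $1\times 1$ principal submatrices.) So $\mathcal{L}(\lambda)\ge 0$ already forces $\lambda\ge 0$. Moreover, a vanishing coordinate of $\lambda_0$ forces a zero row in the corresponding block and is therefore recorded in some $U_\pi$. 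Consequently your fallback of adjoining $\diag(\lambda)$ to $\mathcal{L}$ is never needed.

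The paper's citation route is shorter, and it directly supplies the PSD-face description it then uses for the dimension count in Proposition~\ref{prop:face_dimensions}.
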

More precisely, the kernel constraints associated with each face can be characterized as follows:

\begin{theorem}
\label{thm:face_matrix}
Let $F_\mathcal{U}$ be a proper face of $\mathrm{APPT}_{m,n}$, $\mathcal{U} = \{U_\pi\}_{\pi \in \widetilde{S}}$ its associated collection of kernel subspaces and choose a basis $\mathcal{B}_\pi$ of $U_\pi$. Define the set of active permutations as $\mathcal{K} \coloneq \{\pi \in \widetilde{S} \mid 0<\dim(U_\pi) <m  \}$.
For each permutation $\pi \in\mathcal{K}$ and each basis vector $\mathbf{u}^{(\pi)} = (u_1^{(\pi)}, \dots, u_m^{(\pi)})^T \in \mathcal{B}_\pi$, we can also define the $(m \times mn)$ matrix  $W_{\pi, \mathbf{u}^{(\pi)}}$ element-wise as
\begin{equation}
\label{eq:submatrices_of_C}
        [W_{\pi, \mathbf{u}^{(\pi)}}]_{(i,k)} = 
        \begin{cases} 
            2u_i^{(\pi)} & \text{if } k = \pi(p(i,i)) \\ 
            u_j^{(\pi)} & \text{if } \exists\; j\neq i \text{ such that }   k = \pi(p(i\wedge j, i\vee j))  \\ 
            -u_j^{(\pi)} & \text{if } \exists\;j\neq i \text{ such that } k = \pi(q(i\wedge j, i\vee j)) \\ 
            0 & \text{otherwise}
        \end{cases}
\end{equation}
where  $\min(i,j) = i\wedge j$, $\max(i,j) = i\vee j$ and the index functions $p(i,j), q(i,j)$ are defined as in Eq.~\ref{eq:index_functions} for $1 \le i \le j \le m$.
Then for the $\left(m \sum_{\pi \in \mathcal{K}} \dim(U_\pi) \times mn \right)$ row block matrix $C \coloneq \begin{bmatrix} W_{\pi, \mathbf{u}^{(\pi)}} \end{bmatrix}_{\substack{\pi \in \mathcal{K} \\ \mathbf{u}^{(\pi)} \in \mathcal{B}_\pi}}$ obtained by vertically stacking $W_{\pi, \mathbf{u}^{(\pi)}} $, the face $F_{\mathcal{U}}$ is characterized as 
\begin{equation}
    F_{\mathcal{U}} = \left\{ \lambda \in \mathrm{APPT}_{m,n} \;\middle|\;C\cdot \lambda = 0 \right\}.
\end{equation}
\end{theorem}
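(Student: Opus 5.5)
The plan is to reduce the statement to Theorem~\ref{thm:faces_of_APPT} and show that the kernel inclusions $U_\pi \subseteq \ker(L_\pi(\lambda))$, for $\lambda\in\mathrm{APPT}_{m,n}$, are exactly the linear equations $C\lambda=0$. By Theorem~\ref{thm:faces_of_APPT},
\[
F_{\mathcal U}=\{\lambda\in\mathrm{APPT}_{m,n} \mid U_\pi\subseteq\ker L_\pi(\lambda)\ \forall \pi\in\widetilde S\}.
\]
So it suffices to prove, for each $\pi$, the equivalence
\[
U_\pi\subseteq\ker L_\pi(\lambda) \iff W_{\pi,\mathbf u}\lambda=0 \quad \text{for all } \mathbf u\in\mathcal B_\pi,
\]
and then to check that only the permutations in $\mathcal K$ contribute nontrivial constraints.

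\textbf{Step 1: the key identity $W_{\pi,\mathbf u}\lambda = L_\pi(\lambda)\mathbf u$ for every $\mathbf u\in\RR^m$.} Since $L_\pi(\lambda)$ is linear in $\lambda$, I will expand row $i$ of the product $L_\pi(\lambda)\mathbf u$:
\[
\bigl(L_\pi(\lambda)\mathbf u\bigr)_i = 2\lambda_{\pi(p(i,i))}u_i+\sum_{j\neq i}\bigl(\lambda_{\pi(p(i\wedge j,i\vee j))}-\lambda_{\pi(q(i\wedge j,i\vee j))}\bigr)u_j .
\]
This uses the definition \eqref{eq:base_matrix} together with the symmetry of $L_\pi$, which is why $i\wedge j$ and $i\vee j$ appear. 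Reading off the coefficient of each $\lambda_k$ gives precisely the entry $[W_{\pi,\mathbf u}]_{(i,k)}$ in \eqref{eq:submatrices_of_C}.

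For the cases in \eqref{eq:submatrices_of_C} to be well defined, each $k$ must match at most one case. This holds because the index sets $\{p(i,j)\}$ and $\{q(i,j)\}$ are disjoint (as noted after \eqref{eq:index_functions}), each is injective in the pair $(i,j)$, and $\pi$ is a bijection. For a fixed row $i$, distinct $j$ therefore give distinct $k$, and no coefficient collisions occur. I expect this bookkeeping to be the main, though routine, obstacle.

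\textbf{Step 2: the kernel inclusion equivalence.} Since $U_\pi=\mathrm{span}\,\mathcal B_\pi$ and the kernel is a subspace, $U_\pi\subseteq\ker L_\pi(\lambda)$ holds if and only if $L_\pi(\lambda)\mathbf u=0$ for every $\mathbf u\in\mathcal B_\pi$. By Step 1 this is equivalent to $W_{\pi,\mathbf u}\lambda=0$ for all $\mathbf u\in\mathcal B_\pi$.

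\textbf{Step 3: only active permutations matter.} If $\dim U_\pi=0$, the constraint is vacuous. If $\dim U_\pi=m$, then $L_\pi(\lambda_0)=0$ for a relative-interior point $\lambda_0$. For such $\pi$ I argue that the constraint is automatically satisfied on the face. Since $\lambda_0\in\mathrm{relint}(F_{\mathcal U})$, every $\lambda\in F_{\mathcal U}$ satisfies $\lambda_0+\epsilon(\lambda_0-\lambda)\in F_{\mathcal U}$ for some $\epsilon>0$. Positive semidefiniteness of $L_\pi$ at both $\lambda$ and this extended point, combined with $L_\pi(\lambda_0)=0$ and linearity, forces $L_\pi(\lambda)=0$. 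This is the same mechanism underlying Lemma~\ref{lem:minimal_subspace}.

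Alternatively, one could simply append the rows $W_{\pi,\mathbf u}$ for these full-kernel $\pi$, since they are consistent with the face. Either way, the constraints coming from $\pi\notin\mathcal K$ add nothing beyond membership in $F_{\mathcal U}$. Hence, for $\lambda\in\mathrm{APPT}_{m,n}$, the conditions of Theorem~\ref{thm:faces_of_APPT} reduce to $C\lambda=0$ with $C$ the stack of the $W_{\pi,\mathbf u}$ over $\pi\in\mathcal K$ and $\mathbf u\in\mathcal B_\pi$, which is the claim.
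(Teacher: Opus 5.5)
Your Steps 1 and 2 are the paper's argument: expand $L_\pi(\lambda)\mathbf{u}$ row by row, read off the coefficient of each $\lambda_k$ to get $W_{\pi,\mathbf u}$, and reduce $U_\pi\subseteq\ker L_\pi(\lambda)$ to the basis vectors. The gap is in Step 3, in the case $\dim U_\pi=m$.

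Your relative-interior argument shows that every $\lambda\in F_{\mathcal U}$ satisfies $L_\pi(\lambda)=0$. That only serves the easy inclusion $F_{\mathcal U}\subseteq\{\lambda\in\mathrm{APPT}_{m,n}: C\lambda=0\}$, which already follows from Theorem~\ref{thm:faces_of_APPT}. For the reverse inclusion you start from an arbitrary $\lambda\in\mathrm{APPT}_{m,n}$ with $C\lambda=0$, where $C$ has rows only for $\pi\in\mathcal K$. You must then show $L_\pi(\lambda)=0$ for every full-kernel $\pi$. You cannot invoke membership in $F_{\mathcal U}$ there, since that is what is being proved, and $C\lambda=0$ says nothing about $L_\pi$ for $\pi\notin\mathcal K$. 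So the sentence ``the constraints coming from $\pi\notin\mathcal K$ add nothing beyond membership in $F_{\mathcal U}$'' is circular for this direction. Your fallback of appending the corresponding rows changes $C$, so it proves a different statement from the one given.

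The missing idea, which the paper uses, is that $\dim U_\pi=m$ cannot occur for a non-empty face at all. If $U_\pi=\RR^m$, then $L_\pi(\lambda_0)=0$. Hence all $m\ge 2$ diagonal entries $2(\lambda_0)_{\pi(p(i,i))}$ vanish, and $\lambda_0$ has at least two zero eigenvalues.

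A spectrum in $\mathrm{APPT}_{m,n}$ has at most one zero eigenvalue. The paper cites this, but it also follows directly from Theorem~\ref{thm:APPT-full}. Choose $\sigma\in S_{mn}$ with $\lambda_{\sigma(p(1,1))}=\lambda_{\sigma(q(1,2))}=0$ and $\lambda_{\sigma(p(1,2))}>0$. Then $L_\sigma(\lambda)$ has a zero diagonal entry and a nonzero off-diagonal entry in the same row, so it is not positive semidefinite.

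Hence such an $F_{\mathcal U}$ would be empty, contradicting properness. Every $\pi\notin\mathcal K$ therefore has $U_\pi=\{0\}$, so its constraint is vacuous, and your Steps 1--2 give the claimed equality exactly.
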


\begin{proof}
Following Theorem \ref{thm:faces_of_APPT}, $\lambda \in \mathrm{APPT}_{m,n}$ belongs to the face $F_{\mathcal{U}}$ if and only if  $U_\pi \subseteq \ker(L_\pi(\lambda))$ for every permutation $\pi \in \widetilde{S}$.
We will consider this condition over active and inactive permutations. 

Let  $\mathcal{K} = \{\pi \in \widetilde{S} \mid 0<\dim(U_\pi) <m  \}$ denote the set of active permutations of the face $F_{\mathcal{U}}$. Then, for any inactive permutation $\pi \notin \mathcal{K}$, by definition, $\dim(U_\pi) = 0$ or $\dim(U_\pi) = m.$ Notice that if $\dim(U_\pi) = 0$, the subspace is trivial, $U_\pi = \{0\}$ and the kernel constraint $\{0\}\subseteq \ker(L_\pi(\lambda))$ is trivially satisfied for all $\lambda\in\mathrm{APPT}_{m,n}$. Therefore, no constraints are enforced. In fact, when $\mathcal{K}= \emptyset$  and $U_\pi = \{0\}$ for all $\pi \in \widetilde{S}$, every $\lambda\in\mathrm{APPT}_{m,n}$ belongs in the face $F_{\mathcal{U}}$ so that we obtain the trivial face $F_{\mathcal{U}} = \mathrm{APPT}_{m,n}$. On the other hand, if $\dim(U_\pi) = m,$ we have that $U_\pi = \mathbb{R}^m$. This forces $L_\pi(\lambda) = 0$ so that all diagonal entries $\lambda_{\pi(p(i,i))} =0$ for all $1 \le i \le m$. But $\lambda \in \text{APPT}_{m,n}$ strictly has at most one eigenvalue being equal to zero \cite[Prop.~1]{johnston2014separability} (see also \cite[Prop.~7.3]{jivulescu2015positive}), thus $\lambda\notin \text{APPT}_{m,n}$. Therefore, if there exists $\pi \in \widetilde{S}$ such that $\dim(U_\pi) = m$, the face is the empty set, $F_{\mathcal{U}} = \emptyset.$ Thus, the facial characterization of proper faces reduces to the active permutations such that
\begin{equation}\label{eq:face_equivalence}
    \lambda \in F_{\mathcal{U}} \iff U_\pi \subseteq \ker(L_\pi(\lambda)), \quad \forall \pi \in \mathcal{K} \text{ and } U_\pi = \ker(L_\pi(\lambda_0)), \; \lambda_0 \in \mathrm{relint}(F_\mathcal{U}).
\end{equation}

Now consider any permutation $\pi \in \mathcal{K}$ such that each minimal kernel subspace $U_\pi$, is determined by a chosen basis $\mathcal{B}_\pi$. Then for $U_\pi \subseteq \ker(L_\pi(\lambda))$, we must have that \begin{equation} 
    L_\pi(\lambda) \mathbf{u}^{(\pi)} = 0 , \quad \forall \mathbf{u}^{(\pi)} \in \mathcal{B}_\pi, \; \forall \pi \in \mathcal{K}.
\end{equation}
This reduces to a system of linear equations or equivalently a vector with each component being $ [L_\pi(\lambda) \mathbf{u}^{(\pi)}]_i =0$ for $1\leq i\leq m.$ We can decompose the $i$-th component of the vector as 
\begin{align*}
    [L_\pi(\lambda) \mathbf{u}^{(\pi)}]_i &=\sum_{j=1}^{m}[L_\pi(\lambda)]_{(i,j)} u^{(\pi)}_j\\
    &= \underbrace{\sum_{j=1}^{i-1} [L_\pi(\lambda)]_{(i,j)} u^{(\pi)}_j}_{j<i}+   \underbrace{[L_\pi(\lambda)]_{(i,i)}u^{(\pi)}_i}_{j=i}  +\underbrace{\sum_{j=i+1}^m [L_\pi(\lambda)]_{(i,j)} u^{(\pi)}_j}_{j>i}.
\end{align*}
Since $L_\pi(\lambda)$ is symmetric, $[L_\pi(\lambda)]_{(i,j)} = [L_\pi(\lambda)]_{(j,i)}$. Thus, the $i$-th component is given by
\begin{align}\label{eq:expanded_sum}
[L_\pi(\lambda) \mathbf{u}^{(\pi)}]_i&= [L_\pi(\lambda)]_{(i,i)} u^{(\pi)}_i + \sum_{j=i+1}^m [L_\pi(\lambda)]_{(i,j)} u^{(\pi)}_j + \sum_{j=1}^{i-1} [L_\pi(\lambda)]_{(j,i)} u^{(\pi)}_j\nonumber \\
&=  2\lambda_{\pi(p(i,i))} u^{(\pi)}_i + \sum_{j=i+1}^m \left( \lambda_{\pi(p(i,j))} - \lambda_{\pi(q(i,j))} \right)u^{(\pi)}_j + \sum_{j=1}^{i-1} \left( \lambda_{\pi(p(j,i))} - \lambda_{\pi(q(j,i))} \right) u^{(\pi)}_j.
\end{align}
For all $1\leq i\le m$, we can express this system as the matrix equation $W_{\pi,\mathbf{u}^{(\pi)}}\lambda = 0$ where $W_{\pi,\mathbf{u}^{(\pi)}}$ is a sparse $m \times mn$ matrix such that the $i$-th row of the matrix equation is
\begin{equation}
    [W_{\pi, \mathbf{u}^{(\pi)}} \lambda]_i = \sum_{k=1}^{mn} [W_{\pi, \mathbf{u}^{(\pi)}}]_{(i,k)} \lambda_k = [L_\pi(\lambda) \mathbf{u}^{(\pi)}]_i.
\end{equation}
By inspecting the coefficients of the eigenvalues $\lambda_k$ in Eq.~\eqref{eq:expanded_sum}, we can derive the entries of $W_{\pi,\mathbf{u}^{(\pi)}}$ in terms of the row index $i$ and the column index $k$.  Observe that for index $j=i$, the coefficient of $\lambda_{\pi(p(i,i))}$ is $2u^{(\pi)}_i$. 
Thus, if $k = \pi(p(i,i))$, $[W_{\pi, \mathbf{u}^{(\pi)}}]_{(i,k)} = 2u^{(\pi)}_i$. Similarly, if there exists for indices $j>i$, $\lambda_{\pi(p(i,j))}$ has coefficient $u^{(\pi)}_j$, and the eigenvalue $\lambda_{\pi(q(i,j))}$ has coefficient $-u^{(\pi)}_j$. Thus, since the elements $[W_{\pi, \mathbf{u}^{(\pi)}}]_{(i,k)}$ are defined independent of index $j$,  
\begin{equation*}
  [W_{\pi,\mathbf{u}^{(\pi)}}]_{(i,k)} 
    = \begin{cases} 
      \phantom{-}u^{(\pi)}_j &\text{if } \exists \; j > i \text{ s.t. } k = \pi(p(i,j)), \\
      -u^{(\pi)}_j &\text{if }\exists \; j > i \text{ s.t. }  k = \pi(q(i,j)).
\end{cases}
\end{equation*} 
On the other hand, for indices $j<i$, $\lambda_{\pi(p(j,i))}$ has coefficient $u^{(\pi)}_j$, and $\lambda_{\pi(q(j,i))}$ has coefficient $-u^{(\pi)}_j$. Hence, 
\begin{equation*}
  [W_{\pi,\mathbf{u}^{(\pi)}}]_{(i,k)}
      = \begin{cases}
      \phantom{-}u^{(\pi)}_j &\text{if }\exists \; j < i \text{ s.t. }  k = \pi(p(j,i)), \\
-u^{(\pi)}_j &\text{if }\exists \; j < i \text{ s.t. }  k = \pi(q(j,i)).
\end{cases} 
\end{equation*} 
Notice that for both the cases where $j<i$ and $j>i$, the index maps $p$ and $q$ take $\min(i,j)$ as their first entry and $\max(i,j)$ as the second entry. Thus, altogether, we see that the matrix $W_{\pi,\mathbf{u}^{(\pi)}}$ has entries 
\begin{equation}
    [W_{\pi, \mathbf{u}^{(\pi)}}]_{(i,k)} = 
    \begin{cases} 
        \phantom{.}2u^{(\pi)}_i & \text{if } k = \pi(p(i,i)) \\ 
        \phantom{-}u^{(\pi)}_j & \text{if }\exists \; j\neq i \text{ such that } k = \pi(p(i\wedge j, i\vee j))  \\ 
        -u^{(\pi)}_j & \text{if }\exists \; j\neq i \text{ such that } k = \pi(q(i\wedge j, i\vee j))  \\ 
        \phantom{-;}0 & \text{otherwise}. 
    \end{cases}
\end{equation}
    This construction therefore allows that each permutation generates the system $W_{\pi,\mathbf{u}^{(\pi)}}\lambda = L_\pi(\lambda) \mathbf{u}^{(\pi)} = 0.$ Thus, for all active permutations $\pi\in \mathcal{K}$ and every corresponding basis vector $\mathbf{u}^{(\pi)} \in \mathcal{B}_\pi$, we can vertically concatenate the matrices $W_{\pi, \mathbf{u}^{(\pi)}}$, generating a larger system of linear equations such that for the $\left( m \sum_{\pi \in \mathcal{K}} \dim(U_\pi) \times mn \right)$-matrix $C$, 
\begin{equation*}
    C=  \begin{bmatrix} W_{\pi, \mathbf{u}^{(\pi)}} \end{bmatrix}_{\substack{\pi \in \mathcal{K} \\ \mathbf{u}^{(\pi)} \in \mathcal{B}_\pi}}\end{equation*}
where each  $W_{\pi, \mathbf{u}^{(\pi)}}$ is a row-block of $C$. This large matrix characterizes exactly all the minimal kernel constraints that uniquely define the proper face $F_{\mathcal{U}}$. 
\end{proof}

By the relation established for the general set of spectra $\mathrm{APPT}_{m,n}$ in Eq.~\eqref{eq:APPT_spectrahedron}, we can equally define an equivalent isomorphism for faces $F_\mathcal{U}\subseteq \mathrm{APPT}_{m,n}$ such that 
\begin{equation}\label{eq:face_isomophism}
F_\mathcal{U}  \cong \mathcal{L}\left(\mathrm{aff}(\Delta_{mn-1})\right) \cap \hat{F}_\mathcal{U}
\end{equation}
 where $\hat{F}_\mathcal{U} =\bigoplus_{\pi \in \widetilde{S}} \{A\in\mathcal{S}^m_+: U_\pi \subseteq\ker(A)\}$ is a face of the positive semidefinite cone associated with minimal kernel subspaces $U_\pi$ \cite[Lemma~4.5]{cynthiaspectrahedra}. In fact, this isomorphism is established because $\mathcal{L}$ is an injective affine mapping onto its image. Thus, by applying the pullback through its pre-image $\mathcal{L}^{-1}$, we have that 
 \begin{equation*}F_\mathcal{U} =\Delta_{mn-1} \cap \mathcal{L}^{-1}(\hat{F}_\mathcal{U}).\end{equation*}

\begin{proposition}
\label{prop:face_dimensions}
Let $F_{\mathcal{U}} \subseteq \mathrm{APPT}_{m,n}$ be a face defined by kernel subspaces $\mathcal{U} = \{U_\pi\}_{\pi \in \widetilde{S}}$, and let  $C $ be its associated block constraint matrix. Then 
\begin{equation}
    \dim(F_{\mathcal{U}}) = (mn - 1) - \mathrm{rank}(C)
\end{equation}  
where $\mathrm{rank}(C) \in\{0\} \cup \{m,m+1,\ldots,mn-1\}$, is the number of linear independent equations defining the face.
\end{proposition}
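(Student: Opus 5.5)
The plan is to identify the affine hull of $F_{\mathcal{U}}$ with the solution set of the linear system $\{C\lambda=0,\ \mathbf{1}^T\lambda=1\}$. The dimension formula then reduces to a rank count, and the range of $\mathrm{rank}(C)$ comes from the structure of the blocks $W_{\pi,\mathbf{u}^{(\pi)}}$.

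\emph{Step 1: the affine hull.} By Theorem~\ref{thm:face_matrix}, $F_{\mathcal{U}}=\mathrm{APPT}_{m,n}\cap A$, where $A=\{\lambda\in\RR^{mn}: C\lambda=0,\ \mathbf{1}^T\lambda=1\}$. So $\aff(F_{\mathcal{U}})\subseteq A$. For the reverse inclusion, fix $\lambda_0\in\relint(F_{\mathcal{U}})$ and a direction $\delta$ with $C\delta=0$ and $\mathbf{1}^T\delta=0$; I will show $\lambda_0+\epsilon\delta\in F_{\mathcal{U}}$ for small $\epsilon>0$.
\begin{itemize}
\item For each $\pi$, the matrix $L_\pi(\lambda_0+\epsilon\delta)$ still annihilates $U_\pi$, since $C\delta=0$ encodes $L_\pi(\delta)\mathbf{u}^{(\pi)}=0$ for every basis vector.
\item By Lemma~\ref{lem:minimal_subspace}, $\ker L_\pi(\lambda_0)=U_\pi$. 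Hence $L_\pi(\lambda_0)$ is positive definite on $U_\pi^\perp$, and this survives small perturbations. This covers inactive permutations with $U_\pi=\{0\}$ as well.
\item Because $L_\pi(\lambda)$ is symmetric and kills $U_\pi$, it maps $U_\pi^\perp$ into itself. So positivity on $U_\pi^\perp$ together with vanishing on $U_\pi$ gives $L_\pi(\lambda_0+\epsilon\delta)\geq0$.
\item Finitely many $\pi$ are involved, and positivity of coordinates is only needed to first order. Here I would use that $\lambda_0$ lies in the interior of the simplex, or else note that the coordinate constraints $\lambda_k\geq 0$ are implied by the LMIs, since the $2\lambda_k$ appear as diagonal entries for suitable $\pi$.
\end{itemize}
Hence $\lambda_0+\epsilon\delta\in\mathrm{APPT}_{m,n}\cap A=F_{\mathcal{U}}$, so $\dim F_{\mathcal{U}}=\dim A$. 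I expect this perturbation step to be the main technical point, in particular the careful treatment of the simplex boundary.

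\emph{Step 2: the rank count.} The system defining $A$ is consistent, since $\lambda_0\in A$. The row $\mathbf{1}^T$ is not in the row space of $C$: otherwise $\mathbf{1}^T\lambda_0=0$, contradicting $\mathbf{1}^T\lambda_0=1$. Therefore
\[
\dim A = mn-\mathrm{rank}\begin{pmatrix}C\\ \mathbf{1}^T\end{pmatrix} = mn-1-\mathrm{rank}(C).
\]

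\emph{Step 3: the possible values of $\mathrm{rank}(C)$.}
\begin{itemize}
\item If $\mathcal{K}=\emptyset$, then $C$ is empty, so $\mathrm{rank}(C)=0$ and $F_{\mathcal{U}}=\mathrm{APPT}_{m,n}$.
\item Otherwise I will show each block $W_{\pi,\mathbf{u}}$ with $\mathbf{u}\neq0$ has rank exactly $m$, which forces $\mathrm{rank}(C)\geq m$. Write $W_{\pi,\mathbf{u}}\lambda=L_\pi(\lambda)\mathbf{u}$ as the composition $\lambda\mapsto L_\pi(\lambda)\mapsto L_\pi(\lambda)\mathbf{u}$.
\item The first map is onto the symmetric $m\times m$ matrices. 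This is because its $m^2$ entries involve pairwise distinct coordinates of $\lambda$, as noted before Lemma~\ref{lem:polynomial_orbit}.
\item The second map $S\mapsto S\mathbf{u}$ is onto $\RR^m$. Given $w$, take
\[
S=\frac{w\mathbf{u}^T+\mathbf{u}w^T}{\|\mathbf{u}\|^2}-\frac{(\mathbf{u}^Tw)\,\mathbf{u}\mathbf{u}^T}{\|\mathbf{u}\|^4},
\]
which satisfies $S\mathbf{u}=w$.
\item For the upper bound, $F_{\mathcal{U}}$ is non-empty, so $\dim F_{\mathcal{U}}\geq0$ and hence $\mathrm{rank}(C)\leq mn-1$.
\end{itemize}
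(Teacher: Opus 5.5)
Your proof is correct and has the same skeleton as the paper's. Both identify $\aff(F_{\mathcal U})$ with $\ker C\cap\aff(\Delta_{mn-1})$ and subtract one for the normalization. Both then bound $\mathrm{rank}(C)$ below by the rank $m$ of a single block $W_{\pi,\mathbf u}$, and above by non-emptiness of the face.

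\textbf{Affine hull.} The difference lies in how the two sub-steps are justified. The paper invokes \cite[Prop.~2.14]{scheiderer2022extreme} together with the face isomorphism, and then passes from the cone $\hat F_{\mathcal U}$ to the subspace $V=\ker C$ without further argument. Your perturbation argument at $\lambda_0\in\relint(F_{\mathcal U})$ proves this identification directly. Your remark that $\mathbf 1^T\notin\operatorname{row}(C)$, because $\mathbf 1^T\lambda_0=1$, makes the ``minus one'' rigorous; the paper only notes that the hyperplane avoids the origin.

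Two remarks on this step:
\begin{itemize}
\item Drop your first option. $\lambda_0$ need not lie in the interior of the simplex; for instance $\frac13(1,1,1,0)\in\mathrm{APPT}_{2,2}$. Your second option suffices: every $\lambda_k$ appears as a diagonal entry $2\lambda_k$ of some $L_\pi$, so the LMIs already force $\lambda\geq 0$. Hence $\mathrm{APPT}_{m,n}=\aff(\Delta_{mn-1})\cap\{\lambda:\mathcal L(\lambda)\geq 0\}$, and nothing further is needed.
\item Your first bullet implicitly uses that no $U_\pi$ equals $\RR^m$, since such a $\pi$ would not be encoded in $C$. This holds because the face is non-empty, as shown in the proof of Theorem~\ref{thm:face_matrix}.
\end{itemize}

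\textbf{Rank of a block.} For $\mathrm{rank}(W_{\pi,\mathbf u})=m$, the paper inspects columns directly. The column $\pi(p(s,s))$ with $u_s\neq 0$ forces $c_s=0$, and then the column $\pi(p(s\wedge t,s\vee t))$ forces each $c_t=0$. You instead factor $\lambda\mapsto L_\pi(\lambda)\mapsto L_\pi(\lambda)\mathbf u$ through the symmetric matrices and exhibit a symmetric $S$ with $S\mathbf u=w$. This rests on the same disjointness of the index sets, but it is more conceptual and shows at once that the rank is $m$ for every nonzero $\mathbf u$.
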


\begin{proof}  
By \cite[Prop.~2.14]{scheiderer2022extreme} and the isomorphism of faces as seen in Eq.~\eqref{eq:face_isomophism}, the face $F_{\mathcal{U}} \subseteq \mathrm{APPT}_{m,n}$ has equal dimension to its affine space 
$$\dim(F_{\mathcal{U}}) = \dim \left(\mathrm{aff}(\Delta_{mn-1}) \cap \mathcal{L}^{-1}\left(\bigoplus_{\pi \in \widetilde{S}} \{A\in\mathcal{S}^m_+: U_\pi \subseteq\ker(A)\}\right)\right).$$

To evaluate this, let $\hat{F}_\mathcal{U} = \mathcal{L}^{-1}\left(\bigoplus_{\pi \in \widetilde{S}} \{A\in\mathcal{S}^m_+: U_\pi \subseteq\ker(A)\}\right)$ be the convex cone such that the associated linear subspace $V \coloneq \{\lambda\in\RR^{mn}\mid U_\pi \subseteq \ker(L_\pi(\lambda)) \textrm{ for all } \pi\in \widetilde{S}\}$. Then, by Theorem \ref{thm:face_matrix}, this is equivalent to $C\cdot \lambda = 0$, where $ C=  \begin{bmatrix} W_{\pi, \mathbf{u}^{(\pi)}} \end{bmatrix}_{\substack{\pi \in \mathcal{K} \\ \mathbf{u}^{(\pi)} \in \mathcal{B}_\pi}}$ is the block constraint matrix generated by the active constraints $\pi \in \mathcal{K}$. Thus, the linear subspace $V$ is exactly the kernel of $C$, $V = \ker(C)$. 

Now let $\delta = \mathrm{rank}(C)$ denote the total number of linearly independent equations generated by the active constraints $\pi \in \mathcal{K}$. Then, we find that the dimension of the pre-image set $V$ is 
$$\dim(V) = \dim(\mathbb{R}^{mn}) - \mathrm{rank}(C) = mn - \delta.$$ Since the affine span of the non-empty face is defined exactly as the intersection of the $(mn-\delta)$-dimensional linear subspace $V$ passing through the origin and the affine hyperplane $\textrm{aff}(\Delta_{mn-1})$ defined by $\sum_i \lambda_i =1$, (which does not include the origin), the dimension of the affine span reduces the dimension of $V$ by $1.$ Therefore, 
  \begin{equation*}
        \dim(F_{\mathcal{U}}) = \dim(V) - 1 = (mn - 1) - \delta.
    \end{equation*}

To determine the possible range of values for $\delta$, we consider how the rank of the block matrix $C$ changes with respect to the active permutations in $\mathcal{K}$. 

First, notice that if $\mathcal{K} = \emptyset$, no constraints are imposed and therefore $\delta = \textrm{rank}(C) = 0$. This corresponds to the face of dimension $\dim(F_{\mathcal{U}})= mn-1$, the trivial face $\mathrm{APPT}_{m,n}$.
Now assume the face $F_{\mathcal{U}}$ is a proper face and there is at least one active permutation $\pi\in \mathcal{K}$ where $\dim(U_\pi)\geq1$. Then there exists at least one non-zero basis vector $\mathbf{u}^{(\pi)} \in \mathcal{B}_\pi$ which generates the $m \times mn$ block matrix $W_{\pi, \mathbf{u}^{(\pi)}}$ as a submatrix of $C$. Suppose there exists  $\mathbf{c} = (c_1,...,c_m) \in \RR^m$ such that for each $i$-th row $\sum_{i=1}^{m}c_i [W_{\pi, \mathbf{u}^{(\pi)}}]_i =0^T$.
Then if $c_i = 0$ for all $1\leq i\leq m$, then the rows $[W_{\pi, \mathbf{u}^{(\pi)}}]_i$ are linearly independent. Since the basis vector $\mathbf{u}^{(\pi)}$ is non-zero, there exists at least one row index $s$ such that $u^{(\pi)}_s\neq  0.$
By construction, at the column index $k = \pi(p(s,s))$, only that $s$-th row contains the non-zero entry $2u^{(\pi)}_s$. Thus, $c_s(2u^{(\pi)}_s) =0$ implying that $c_s =0$. 
Now consider any other row with index $t$ where $s\neq t$. At the column index $k = \pi(p(s\wedge t, s\vee t))$, only rows $s$ and $t$ contain non-zero entries $u^{(\pi)}_t$ and $u^{(\pi)}_s$, respectively. Thus, $\sum_{i=1}^{m}c_i [W_{\pi, \mathbf{u}^{(\pi)}}]_i = c_su^{(\pi)}_t + c_t u^{(\pi)}_s =0 $. Since it is established that $c_s = 0,$ and $u_s\neq 0$, it must be that $c_t = 0$. Thus, since $c_i = 0$ for all $1\le i\le m$, the rows of $W_{\pi, \mathbf{u}^{(\pi)}}$ are linearly independent and so $\textrm{rank}(W_{\pi, \mathbf{u}^{(\pi)}}) = m$.  
Therefore, the matrix $ C=  \begin{bmatrix} W_{\pi, \mathbf{u}^{(\pi)}} \end{bmatrix}_{\substack{\pi \in \mathcal{K} \\ \mathbf{u}^{(\pi)} \in \mathcal{B}_\pi}}$ must have total rank $\delta = \mathrm{rank}(C)\geq m.$

Finally since the minimum possible dimension of a non-empty proper face is $0$, we must have that the $\dim(F_{\mathcal{U}}) = (mn-1)-\delta\geq 0$, hence $\delta =\mathrm{rank}(C)\leq mn-1$. Thus, altogether, the rank of $C$ must have 
$$\delta \in \{0\} \cup \{m, m+1, \ldots, mn-1\},$$ with $\delta = 0$ corresponding to the full set $\mathrm{APPT}_{m,n}$.
\end{proof}

\begin{corollary}
\label{cor:face_relation}
Let $F_{\mathcal{U}}^{(m,m)} $ be a face of $\mathrm{APPT}_{m,m}$ defined by the kernel subspaces $\mathcal{U}$, and let $F_{\mathcal{U}}^{(m,n)} $ (where $m < n$) be the corresponding face in $\mathrm{APPT}_{m,n}$ defined by the same kernel subspaces and a fixed active set of $m^2$ coordinates. If, up to permutations of columns,  $C^{(m,n)} = [C^{(m,m)}\;\; 0]$, then 
\begin{equation*}
    \dim(F_{\mathcal{U}}^{(m,n)}) = \dim(F_{\mathcal{U}}^{(m,m)}) + m(n-m).
\end{equation*}
\end{corollary}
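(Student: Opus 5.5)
The plan is to apply the dimension formula of Proposition~\ref{prop:face_dimensions} to both faces and compare the ranks of their constraint matrices. In $\mathrm{APPT}_{m,m}$ the ambient simplex is $\Delta_{m^2-1}$, so
\[
\dim(F_{\mathcal{U}}^{(m,m)}) = (m^2-1) - \mathrm{rank}(C^{(m,m)}).
\]
In $\mathrm{APPT}_{m,n}$ it is $\Delta_{mn-1}$, so
\[
\dim(F_{\mathcal{U}}^{(m,n)}) = (mn-1) - \mathrm{rank}(C^{(m,n)}).
\]
The whole statement then reduces to showing
\[
\mathrm{rank}(C^{(m,n)}) = \mathrm{rank}(C^{(m,m)}),
\]
because subtracting the two formulas gives $mn - m^2 = m(n-m)$.

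For the rank equality I use the hypothesis directly. After a column permutation, $C^{(m,n)} = [C^{(m,m)}\;\;0]$, where the zero block has $mn-m^2$ columns. Permuting columns and appending zero columns both leave the rank unchanged.

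To make this meaningful, I check that the hypothesis is natural. Each block $W_{\pi,\mathbf{u}^{(\pi)}}$ in Eq.~\eqref{eq:submatrices_of_C} has nonzero entries only in the columns $\pi(p(i,j))$ and $\pi(q(i,j))$. These are exactly the $m^2$ coordinates used by $L_\pi$. So if every active permutation in $\mathcal{K}$ uses the same fixed set of $m^2$ coordinates, identified with the coordinates of the $(m,m)$ problem, then the remaining $m(n-m)$ columns of $C^{(m,n)}$ vanish identically. The rows are the same as in $C^{(m,m)}$ once the same kernel bases $\mathcal{B}_\pi$ are chosen.

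The delicate point, and the one I expect to be the main obstacle, is that Proposition~\ref{prop:face_dimensions} presupposes each object is a genuine non-empty face with these kernel data. One must make sure the proposition can legitimately be applied on the $(m,n)$ side. I would argue as follows.
\begin{itemize}
\item The proof of Proposition~\ref{prop:face_dimensions} computes $\dim F_{\mathcal{U}}$ as $\dim(\ker C \cap \aff(\Delta))$.
\item This is valid once $F_{\mathcal{U}}$ contains a relative-interior point whose kernels equal $\mathcal{U}$, so that the face is full-dimensional in $\ker C \cap \aff(\Delta)$. The statement presents $F_{\mathcal{U}}^{(m,n)}$ as a face defined by $\mathcal{U}$, so I take that as given.
\item Then $\ker C^{(m,n)} = \ker C^{(m,m)} \oplus \RR^{m(n-m)}$ in the permuted coordinates. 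Intersecting with the hyperplane $\sum_k \lambda_k = 1$ drops the dimension by one in each case, which yields the claimed additive $m(n-m)$.
\end{itemize}
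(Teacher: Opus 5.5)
Your proposal is correct and follows the paper's proof. Both arguments apply Proposition~\ref{prop:face_dimensions} to the two faces and use the block form $C^{(m,n)} = [C^{(m,m)}\;\;0]$ (up to column permutation) to get $\mathrm{rank}(C^{(m,n)}) = \mathrm{rank}(C^{(m,m)})$, so the dimensions differ by $mn-m^2 = m(n-m)$. Your additional remark makes explicit something the paper leaves implicit in ``the corresponding face'': $F_{\mathcal{U}}^{(m,n)}$ must be a non-empty face whose relative interior realizes exactly the kernel data $\mathcal{U}$, which is what justifies applying the proposition on the $(m,n)$ side.
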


\begin{proof}
Recall that each block operator $L_\pi(\lambda)$ requires  exactly $m^2$ distinct components of the spectrum $\lambda$ due to the index functions defined in Eq.~\eqref{eq:index_functions}. Since both faces are characterized by the same kernel subspaces $\mathcal{U}$, the constraint matrix $C^{(m,n)}$ of the face $F_{\mathcal{U}}^{(m,n)} $ is identical to 
$C^{(m,m)}$ for its non-zero entries but also includes $mn-m^2$ columns of zeros. Due to this, $\mathrm{rank}(C^{(m,n)}) = \mathrm{rank}(C^{(m,m)})$ and it follows directly from Proposition \ref{prop:face_dimensions} that
\begin{align*}
    \dim(F_{\mathcal{U}}^{(m,n)}) 
        &= (mn - 1) - \mathrm{rank}(C^{(m,n)}) \\
        &= (mn - m^2) + (m^2 - 1) - \mathrm{rank}(C^{(m,m)}) \\
        &= m(n-m) + \dim(F_{\mathcal{U}}^{(m,m)}),
\end{align*}
and we are done.
\end{proof}

\begin{corollary}
\label{cor:maximal_face_single_line}
Suppose there exists an active permutation $\sigma\in\widetilde{S}$ such that for  $\lambda_0\in\relint(\mathrm{APPT}_{m,n})$ and a nonzero vector $\mathbf{u} \in\RR^m $, $\ker(L_\sigma(\lambda_0))=\mathrm{span}\{\mathbf{u}\} $ and $L_\pi(\lambda)>0$ for all permutations $\pi\in\widetilde{S}\setminus\{\sigma\}$. Then,
\begin{equation}\label{eq:max_face}
    F^{\max}=\left\{\lambda\in\mathrm{APPT}_{m,n}:L_\sigma(\lambda)\mathbf{u}=0 \right\},
\end{equation}
is a maximal proper face of $\mathrm{APPT}_{m,n}$.
\end{corollary}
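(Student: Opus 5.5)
The plan is to read the hypothesis as saying that $\lambda_0$ lies in the relative interior of the set $F^{\max}$, not of all of $\mathrm{APPT}_{m,n}$. That is, $\lambda_0\in\mathrm{APPT}_{m,n}$ is a boundary point whose only singular block is $L_\sigma(\lambda_0)$, with one-dimensional kernel $\mathrm{span}\{\mathbf{u}\}$. This matches the condition $L_\pi(\lambda_0)>0$ for $\pi\neq\sigma$.

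I would then apply Theorem~\ref{thm:faces_of_APPT} to the smallest face $F$ of $\mathrm{APPT}_{m,n}$ containing $\lambda_0$. Since $\lambda_0\in\mathrm{relint}(F)$, Lemma~\ref{lem:minimal_subspace} gives the kernel collection of $F$ as $U_\sigma=\mathrm{span}\{\mathbf{u}\}$ and $U_\pi=\{0\}$ for every $\pi\neq\sigma$. Substituting into Eq.~\eqref{eq:exhaustive_face}, all constraints with $\pi\neq\sigma$ are vacuous. This leaves exactly $F=\{\lambda\in\mathrm{APPT}_{m,n}: L_\sigma(\lambda)\mathbf{u}=0\}=F^{\max}$, so $F^{\max}$ is a face. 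Equivalently, by Theorem~\ref{thm:face_matrix}, the active set is $\mathcal{K}=\{\sigma\}$ and the constraint matrix is the single block $C=W_{\sigma,\mathbf{u}}$.

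Next I would show $F^{\max}$ is proper and compute its dimension. It is nonempty because it contains $\lambda_0$. It is not all of $\mathrm{APPT}_{m,n}$ because the maximally mixed spectrum $\mathbf{u}_{mn}=\frac1{mn}(1,\dots,1)$ gives $L_\sigma(\mathbf{u}_{mn})=\frac{2}{mn}\1>0$, so $\mathbf{u}_{mn}\notin F^{\max}$. The proof of Proposition~\ref{prop:face_dimensions} shows $\mathrm{rank}(W_{\sigma,\mathbf{u}})=m$ for any nonzero $\mathbf{u}$. Hence $\dim F^{\max}=mn-m-1$, which by Proposition~\ref{prop:face_dimensions} is the largest dimension any proper face can have.

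For maximality, let $G$ be a face with $F^{\max}\subsetneq G$. Then $\mathrm{relint}(G)$ contains a point $\lambda_1$, and $G$ has a kernel collection $\{U'_\pi\}$ with $U'_\pi=\ker L_\pi(\lambda_1)$. Because $\lambda_0\in G$, the inclusion $U'_\pi\subseteq\ker L_\pi(\lambda_0)$ holds. This forces $U'_\pi=\{0\}$ for $\pi\neq\sigma$ and $U'_\sigma\subseteq\mathrm{span}\{\mathbf{u}\}$. If $U'_\sigma=\mathrm{span}\{\mathbf{u}\}$, then Eq.~\eqref{eq:exhaustive_face} gives $G=F^{\max}$, a contradiction. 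So $U'_\sigma=\{0\}$, all kernels are trivial, and $G=\mathrm{APPT}_{m,n}$.

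Alternatively, the dimension count gives maximality directly. Any face strictly containing $F^{\max}$ has strictly larger dimension, and no proper face exceeds dimension $mn-m-1$. The main obstacle I expect is justifying the kernel inclusion for faces, namely that $\lambda_0\in G$ implies $\ker L_\pi(\lambda_1)\subseteq\ker L_\pi(\lambda_0)$ for $\lambda_1\in\mathrm{relint}(G)$. For this I would use the standard PSD fact: if $\lambda_1\pm\epsilon(\lambda_1-\lambda_0)\in G$, then $L_\pi(\lambda_0)\le c\,L_\pi(\lambda_1)$ for some $c>0$, which forces the kernel inclusion.
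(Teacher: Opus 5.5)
Your proposal is correct and is essentially the paper's proof. The paper also treats $\lambda_0$ as a relative-interior point of $F^{\max}$ rather than of $\mathrm{APPT}_{m,n}$, and its maximality argument is exactly your comparison of the kernels at a relative-interior point of a face $E\supseteq F^{\max}$ with those at $\lambda_0$, followed by the dichotomy that the $\sigma$-kernel of $E$ is either $\mathrm{span}\{\mathbf{u}\}$ or $\{0\}$. Your PSD justification of that kernel inclusion, the properness check at the maximally mixed spectrum, and the dimension-count alternative via Proposition~\ref{prop:face_dimensions} are sound additions the paper leaves implicit. One caveat, which the paper also glosses over: blocks $L_\pi=AL_\sigma A^T$ congruent to $L_\sigma$ under signed permutations are automatically singular at $\lambda_0$, so the hypothesis must be read modulo them; this changes nothing, since $L_\pi(\lambda)A\mathbf{u}=AL_\sigma(\lambda)\mathbf{u}$.
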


\begin{proof}
The expression in Eq.~\eqref{eq:max_face} follows directly from the proof of Theorem \ref{thm:face_matrix}. To prove maximally, suppose that $E$ is a face satisfying 
\begin{equation}\label{eq:face_inclusion}
    F^{\max} \subseteq E \subseteq \mathrm{APPT}_{m,n}
\end{equation} and define the associated  collection of common-kernel subspaces for all $\pi\in \widetilde{S}$ such that
$$\mathcal{U} \coloneq \bigcap_{\lambda_0\in \relint(F^{\max})} \ker (L_\pi(\lambda_0))\qquad \textrm{ and } \quad \mathcal{V}\coloneq \bigcap_{\lambda_0\in \relint(E)} \ker (L_\pi(\lambda_0)).$$ Let $U_\pi\coloneq\ker (L_\pi(\lambda_0))$ for $\lambda_0\in \relint(F^{\max})$ and $V_\pi\coloneq\ker (L_\pi(\lambda_0))$ for $\lambda_0\in \relint(E)$. By the inclusion in Eq.~\eqref{eq:face_inclusion}, we know that $V_\pi\subseteq U_\pi $ for all $\pi\in\widetilde{S}$. By assumption, for the active permutation $\sigma\in\widetilde{S},$ $U_\sigma = \mathrm{span}\{\mathbf{u}\} $ and $U_{\pi^*} = \{0\} $ for all $\pi^*\in\widetilde{S}\setminus\{\sigma\}$. It follows that   $V_\sigma\subseteq \mathrm{span}\{\mathbf{u}\}$ and $V_{\pi^*} = \{0\}$ for $\pi^*\in\widetilde{S}\setminus\{\sigma\}$.
Since $\mathrm{span}\{\mathbb{u}\}$ is one-dimensional, either $V_\sigma= \mathrm{span}\{\mathbf{u}\} $ or $V_\sigma=\{0\}$. In the first case, since $V_\sigma = U_\sigma = \mathrm{span}\{\mathbf{u}\} $ and $V_{\pi^*} = U_{\pi^*} =\{0\} $ for $\pi^*\in\widetilde{S}\setminus\{\sigma\}$, it follows that for all $\pi\in\widetilde{S}$, $V_{\pi} = U_{\pi} $. Hence, the faces $F^{\max}$ and $E$ have the same common-kernel subspace, i.e., $\mathcal{U}=\mathcal{V}$, implying that $F^{\max} = E$. In the second case where $V_\sigma=\{0\}$, $V_\pi=\{0\}$ for all $\pi\in\widetilde{S}.$ Thus, the face $E$ has the trivial set as its associated common-kernel subspace implying  $E = \mathrm{APPT}_{m,n}$. Therefore, no proper face lies between $F^{\max}$ and $\mathrm{APPT}_{m,n}$. And hence, $F^{\max}$ is a maximal proper face. 
\end{proof}

Thus, a maximal face is obtained by allowing exactly one matrix constraint $L_\sigma(\lambda)$ (which is singular) to acquire a one-dimensional kernel while all other matrix constraints which are not congruent to $L_\sigma(\lambda)$ remain strictly positive definite. Hence, the corresponding kernel conditions provides exactly $m$ linearly independent equations, and no additional equations are generated locally. As such, we observe the following:

\begin{corollary}
\label{cor:maximal_faces}
The maximal proper faces $F^{\max} \subset\mathrm{APPT}_{m,n}$ have dimension $mn - m - 1$.
\end{corollary}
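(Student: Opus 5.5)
The plan is to combine Proposition~\ref{prop:face_dimensions} with the maximality statement of Corollary~\ref{cor:maximal_face_single_line}. By Proposition~\ref{prop:face_dimensions}, every face satisfies $\dim(F_{\mathcal{U}}) = (mn-1)-\mathrm{rank}(C)$, and every proper face has $\mathrm{rank}(C)\ge m$. Hence every proper face has dimension at most $mn-m-1$. It therefore suffices to show two things: that maximal proper faces realise this bound, i.e.\ have $\mathrm{rank}(C)=m$ exactly, and that no maximal proper face has strictly larger rank.

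\textbf{Step 1: a face of dimension $mn-m-1$.} Take the configuration of Corollary~\ref{cor:maximal_face_single_line}: a single active permutation $\sigma$ with $\ker L_\sigma(\lambda_0)=\mathrm{span}\{\mathbf{u}\}$, while all $L_\pi(\lambda_0)$ not congruent to $L_\sigma$ are positive definite.
\begin{itemize}
\item By Lemma~\ref{lem:signed-permutation}, permutations with $L_\pi = A L_\sigma A^T$ for a signed permutation $A$ have kernels $A\,\mathrm{span}\{\mathbf{u}\}$. These produce exactly the same linear equations $L_\sigma(\lambda)\mathbf{u}=0$, only reordered and with signs, so they add nothing to the row space of $C$.
\item Consequently $C$ has the same row space as the single block $W_{\sigma,\mathbf{u}}$. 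The proof of Proposition~\ref{prop:face_dimensions} already shows this block has rank $m$, so $\dim F^{\max}=mn-m-1$.
\end{itemize}
To know such a point exists, I would use Theorem~\ref{thm:kappa-irreducible-components}: each hypersurface $\mathcal{Z}_{V_{mn}}(\det L_{\sigma})$ is irreducible and contributes an $(mn-2)$-dimensional piece to the boundary. On a relatively open dense subset of that piece, $L_\sigma$ has corank exactly one, since corank $\ge2$ occurs only in codimension $\ge 3$ among symmetric matrices. Also on that subset the other, distinct irreducible determinants are nonzero. This is where the hypotheses of Corollary~\ref{cor:maximal_face_single_line} are verified.

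\textbf{Step 2: every maximal proper face has this dimension.} Let $F$ be a maximal proper face and $\lambda_0\in\relint F$. Some $\pi$ is active, with $U_\pi\neq 0$. The plan is to exhibit a face $G\supseteq F$ of the Step~1 type; maximality then forces $F=G$, so $\dim F=mn-m-1$.
\begin{itemize}
\item Use that faces of a spectrahedron are exposed. Pick $\mathbf{u}\in U_\pi$ and consider the exposed face cut out by the supporting functional $\lambda\mapsto \mathbf{u}^T L_\pi(\lambda)\mathbf{u}\ge 0$. Its zero set is $\{\lambda: L_\pi(\lambda)\mathbf{u}=0\}\cap\mathrm{APPT}_{m,n}$, which is a face containing $F$ and equal to the face in Eq.~\eqref{eq:max_face}.
\item This face is proper, since the maximally mixed spectrum is not in it. By maximality, it equals $F$.
\item By Theorem~\ref{thm:face_matrix}, this face is $\{\lambda : W_{\pi,\mathbf{u}}\lambda=0\}\cap \mathrm{APPT}_{m,n}$. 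It therefore has dimension at least $mn-1-m$, and at most that by Proposition~\ref{prop:face_dimensions}.
\end{itemize}

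\textbf{Main obstacle.} The hardest step is the lower bound in the last item: showing that the face has full dimension $mn-m-1$, rather than the other constraints $L_{\pi'}\ge0$ cutting it down. Concretely, one needs a point of the affine subspace $\{W_{\pi,\mathbf{u}}\lambda=0,\ \sum\lambda_i=1\}$ at which all non-congruent blocks are positive definite and $L_\pi$ has corank one. I would try a perturbation argument first: take a generic point on the feasible part of $\mathcal{Z}(\det L_\pi)$ from Step~1, and note that the kernel vector $\mathbf{u}$ varies continuously there. If that fails, I would argue that the equations $L_\pi(\lambda)\mathbf{u}=0$ define a linear subspace whose intersection with $\mathrm{int}$ of the other constraints is relatively open and nonempty.
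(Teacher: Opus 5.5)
The gap is in Step~2, and it cannot be closed, because the statement read literally (every maximal proper face has dimension $mn-m-1$) is false. What is provable is the upper bound plus your Step~1.

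\textbf{What works.} Your opening upper bound is correct: every proper face has $\mathrm{rank}(C)\ge m$, hence dimension at most $mn-m-1$. Step~1 is in substance the paper's entire proof. The paper starts from a face of the form in Corollary~\ref{cor:maximal_face_single_line} (one active $\sigma$, kernel $\mathrm{span}\{\mathbf{u}\}$, everything else positive definite) and reads off $\mathrm{rank}(C)=\mathrm{rank}(W_{\sigma,\mathbf{u}})=m$. Your use of Lemma~\ref{lem:signed-permutation} fills a point the paper skips: a congruent block $L_\pi=AL_\sigma A^T$ gives $W_{\pi,A\mathbf{u}}=AW_{\sigma,\mathbf{u}}$, so it adds no new rows.

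\textbf{Where Step~2 fails.} You pass from $\{\lambda\in\mathrm{APPT}_{m,n}: W_{\pi,\mathbf{u}}\lambda=0\}$ to ``dimension at least $mn-1-m$''. Intersecting with an $(mn-m-1)$-dimensional affine space only bounds the dimension from above. No perturbation argument can supply the lower bound, because of the following counterexample.

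Let $\lambda^\star=\frac{1}{mn-1}(1,\ldots,1,0)$, the paper's $\mathbf{u}_{mn-1}$. Its blocks behave as follows.
\begin{itemize}
\item If index $mn$ is unused by $\pi$, then $L_\pi(\lambda^\star)$ is $\frac{2}{mn-1}$ times the identity.
\item If $mn$ sits in an off-diagonal slot, the block has diagonal $\frac{2}{mn-1}$ and a single off-diagonal pair $\pm\frac{1}{mn-1}$, so it is positive definite.
\item If $\pi(p(i,i))=mn$, all off-diagonal entries vanish and $\ker L_\pi(\lambda^\star)=\mathrm{span}\{\mathbf{e}_i\}$.
\end{itemize}
So $\lambda^\star\in\mathrm{APPT}_{m,n}$.

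Now let $F\ni\lambda^\star$ be any proper face and $y\in\relint F$. Since $y$ is a boundary point, some block has $\ker L_\pi(y)\neq0$. By Theorem~\ref{thm:faces_of_APPT}, $\ker L_\pi(y)\subseteq\ker L_\pi(\lambda^\star)=\mathrm{span}\{\mathbf{e}_i\}$. Hence $L_\pi(y)\mathbf{e}_i=0$, and in particular $y_{mn}=0$.

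A zero eigenvalue $y_k=0$ in $\mathrm{APPT}_{m,n}$ forces all other eigenvalues to be equal. To see this, pick $\pi$ with $\pi(p(1,1))=k$, $\pi(p(1,2))=a$, $\pi(q(1,2))=b$. The zero diagonal entry of the positive semidefinite $L_\pi(y)$ kills its first row, so $y_a=y_b$.

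Thus $y=\lambda^\star$, so $F=\{\lambda^\star\}$. This singleton is a maximal proper face of dimension $0<mn-m-1$. For $m=n=2$ it is the point $\frac13(1,1,1,0)$ of $\mathrm{ASEP}_{2,2}$, whose only supporting hyperplane is $\lambda_4=0$.

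\textbf{What to state instead.} The paper's proof carries the same hidden assumption: that every maximal proper face is of the single-line type of Corollary~\ref{cor:maximal_face_single_line}. Drop Step~2. State instead that every proper face has dimension at most $mn-m-1$, and that the faces of Corollary~\ref{cor:maximal_face_single_line} attain this value.
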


\begin{proof}
Suppose $\sigma\in\widetilde{S}$ is the active permutation and $\mathbf{u}$ is a nonzero vector associated with the maximal face $F^{\max}\subseteq \mathrm{APPT}_{m,n} $. Then, by Theorem~\ref{thm:faces_of_APPT} the corresponding matrix $W_{\sigma,\mathbf{u}}$ constitutes a row block of the matrix $C$. As shown in the proof of Proposition~\ref{prop:face_dimensions}, $m = \mathrm{rank}(W_{\sigma,\mathbf{u}} ).$ For all other inactive permutations $\pi\in\widetilde{S}\setminus\{\sigma\}$, their corresponding matrix blocks satisfy $L_\pi(\lambda)>0$ and their associated kernel subspaces with respect to $\relint(\mathrm{APPT}_{m,n})$ is trivial (see Eq.~\eqref{eq:face_equivalence}). Thus, they do not contribute to the rank of the total matrix $C_{F^{\max}}$ associated with the maximal face $F^{\max}$.  As such, 
$$\mathrm{rank}(C_{F^{\max}}) = \mathrm{rank}(W_{\sigma,\mathbf{u}} )= m.$$
Thus, it follows from Proposition~\ref{prop:face_dimensions} that 
$\dim(F^{\max} ) = (mn-1) - \mathrm{rank}(C_{F^{\max}}) = mn-m-1$.
  
\end{proof}

\begin{corollary}
\label{cor:maximal_faces_2xn}
Every maximal proper face $F^{\max}$ of $\mathrm{ASEP}_{2,n}$ is defined by the system
\begin{equation} 
    F^{\max} = \left\{ \lambda \in \mathrm{ASEP}_{2,n} \;\middle|\; \begin{aligned}
        \lambda_{\sigma(2n-2)} &= \alpha^2 \lambda_{\sigma(2n)}\\
        \lambda_{\sigma(2n-1)} - \lambda_{\sigma(1)} &= -2\alpha\lambda_{\sigma(2n)}
    \end{aligned} \right\},
\end{equation}
where $\alpha = \frac{u_1}{u_2} \in \RR$ and $u_2\neq0$ for a permutation $\sigma \in\widetilde{S}$ and a nonzero vector  $\mathbf{u}=(u_1, u_2)^T$.
\end{corollary}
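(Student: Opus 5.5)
The plan is to specialize Corollary~\ref{cor:maximal_face_single_line} and the construction of $W_{\sigma,\mathbf{u}}$ in Theorem~\ref{thm:face_matrix} to $m=2$. We then translate to $\mathrm{ASEP}_{2,n}$ via the identity $\mathrm{APPT}_{2,n}=\mathrm{ASEP}_{2,n}$ of \cite{johnston2013sepfromspectra}. We begin by writing out $L_\sigma(\lambda)$ explicitly for $m=2$. From Eq.~\eqref{eq:index_functions}, the diagonal indices are $p(1,1)=2n$ and $p(2,2)=2n-2$, the off-diagonal index is $p(1,2)=2n-1$, and $q(1,2)=1$. Hence
\[
L_\sigma(\lambda)=\begin{pmatrix} 2\lambda_{\sigma(2n)} & \lambda_{\sigma(2n-1)}-\lambda_{\sigma(1)}\\ \lambda_{\sigma(2n-1)}-\lambda_{\sigma(1)} & 2\lambda_{\sigma(2n-2)}\end{pmatrix}.
\]

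Next we identify which kernel vectors can occur. By Corollary~\ref{cor:maximal_faces}, a maximal proper face has codimension $m=2$, and its constraint matrix $C$ has rank $2$. We argue that every maximal proper face arises as in Corollary~\ref{cor:maximal_face_single_line}. A maximal proper face $F$ has some active permutation $\sigma$ with $\dim U_\sigma=1$, say $U_\sigma=\mathrm{span}\{\mathbf{u}\}$. The set $\{\lambda: L_\sigma(\lambda)\mathbf{u}=0\}\cap \mathrm{APPT}_{2,n}$ is a face containing $F$ (by Theorem~\ref{thm:faces_of_APPT}), and it is proper, so by maximality it equals $F$. This step needs a little care, since other permutations could also be active. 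However, their extra constraints would only shrink the set, and maximality forces equality with the single-constraint face.

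We must also rule out $u_2=0$. If $\mathbf{u}=(1,0)$, then $L_\sigma(\lambda)\mathbf{u}=0$ forces $\lambda_{\sigma(2n)}=0$ and $\lambda_{\sigma(2n-1)}=\lambda_{\sigma(1)}$. This is still a legitimate condition, since \cite[Prop.~1]{johnston2014separability} allows one zero eigenvalue. I expect this to be the main obstacle. Our options are to relabel via the signed-permutation symmetry of Lemma~\ref{lem:signed-permutation}, which swaps the roles of the two coordinates, or to show this case is absorbed by a different $\sigma$ with $u_2\neq0$. The first thing to try is the congruence by the swap matrix, combined with a permutation exchanging $\sigma(2n)$ and $\sigma(2n-2)$. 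Under it $L_\sigma$ becomes $L_{\sigma'}$ with kernel vector $(0,1)$, so $u_2\ne0$ and $\alpha=0$.

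Finally, with $u_2\neq0$ we normalize $\mathbf{u}=u_2(\alpha,1)$ where $\alpha=u_1/u_2$, and expand $L_\sigma(\lambda)\mathbf{u}=0$, which is exactly $W_{\sigma,\mathbf{u}}\lambda=0$:
\[
2\alpha\lambda_{\sigma(2n)}+\lambda_{\sigma(2n-1)}-\lambda_{\sigma(1)}=0,\qquad \alpha(\lambda_{\sigma(2n-1)}-\lambda_{\sigma(1)})+2\lambda_{\sigma(2n-2)}=0.
\]
The first equation is the second line of the claimed system. Substituting it into the second equation gives $-2\alpha^2\lambda_{\sigma(2n)}+2\lambda_{\sigma(2n-2)}=0$, which is the first line. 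Conversely, the two displayed equations imply $L_\sigma(\lambda)\mathbf{u}=0$, so the systems are equivalent. Combined with Eq.~\eqref{eq:max_face}, this gives the stated description of $F^{\max}$.
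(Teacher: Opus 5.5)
Your proposal is correct and follows the same route as the paper: you specialize $W_{\sigma,\mathbf{u}}\lambda=L_\sigma(\lambda)\mathbf{u}=0$ to $m=2$, using $p(1,1)=2n$, $p(1,2)=2n-1$, $p(2,2)=2n-2$ and $q(1,2)=1$, and then solve the two resulting linear equations after dividing by $u_2$. You also correctly fill in two points the paper takes for granted: first, that a maximal proper face equals the single-constraint face $\{\lambda:\mathbf{u}^T L_\sigma(\lambda)\mathbf{u}=0\}$, which is a proper exposed face containing it; second, that the case $u_2=0$ reduces to $\alpha=0$ by the swap congruence, i.e.\ composing $\sigma$ with the transposition of $2n-2$ and $2n$.
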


\begin{proof}
Since the maximal face $F^{\max}$ is defined by a single permutation $\sigma \in \widetilde{S}$ and $\mathbf{u}=(u_1, u_2)^T$, Theorem \ref{thm:face_matrix} tells us that $F^{\max}$ is defined by the system of equations
$$C \lambda =  W_{\sigma, \mathbf{u}} \lambda = 0.$$ 
Notice that the $i-$th row of the vector $W_{\sigma, \mathbf{u}} \lambda$ reduces to 
\begin{align*}
 0 &=[W_{\sigma, \mathbf{u}} \lambda]_i =\sum_{k=1}^{2n} [W_{\sigma, \mathbf{u}} ]_{(i,k)}\lambda_k,  \\
 0 &=2\lambda_{\sigma(p(i,i))} u_i + \sum_{j=i+1}^m \left( \lambda_{\sigma(p(i,j))} - \lambda_{\sigma(q(i,j))} \right)u_j + \sum_{j=1}^{i-1} \left( \lambda_{\sigma(p(j,i))} - \lambda_{\sigma(q(j,i))} \right) u_j.
\end{align*}
Thus, we obtain the explicit system of equations
\begin{align}
        2u_1\lambda_{\sigma(2n)} + u_2(\lambda_{\sigma(2n-1)} - \lambda_{\sigma(1)}) &= 0, \label{eq:2x2_row1} \\
        u_1(\lambda_{\sigma(2n-1)} - \lambda_{\sigma(1)}) + 2u_2\lambda_{\pi(2n-2)} &= 0, \label{eq:2x2_row2}
\end{align}
which resolves to  
$$\lambda_{\sigma(2n-1)} - \lambda_{\sigma(1)} = -2\left(\frac{u_1}{u_2}\right)\lambda_{\sigma(2n)} \quad \textrm{ and } \quad \lambda_{\sigma(2n-2)} = \left(\frac{u_1}{u_2}\right)^2 \lambda_{\sigma(2n)}.$$
Taking $\alpha =\frac{u_1}{u_2},$ we obtain the desired result.
\end{proof}

\begin{figure}[ht]
    \centering
    \includegraphics[width=0.4\textwidth]{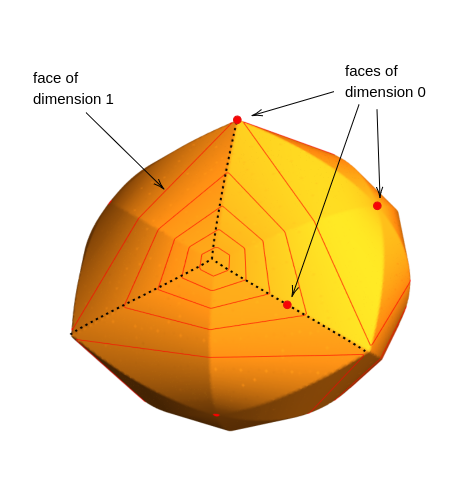}
        \caption{Some extreme points and  maximal faces of $\mathrm{ASEP}_{2,2}$ defined by $\alpha\in \RR$  (Corollary \ref{cor:maximal_faces_2xn}).}
    \label{fig:faces}
\end{figure}

\begin{theorem}
\label{thm:mxn_face_system}
Let $F^{\max}$ be a maximal proper face of $\mathrm{APPT}_{m,n}$ characterized by an active permutation $\sigma \in \widetilde{S}$ and a nonzero vector $\mathbf{u} \in \mathbb{R}^m$. Assuming without loss of generality that $u_m \neq 0$. Then $F^{\max}$ is defined by the system \begin{equation}
    2 \alpha_i \lambda_{\sigma(p(i,i))} + \sum_{j=1}^{i-1} \alpha_j \left(\lambda_{\sigma(p(j,i))} - \lambda_{\sigma(q(j,i))}\right) + \sum_{j=i+1}^m \alpha_j \left(\lambda_{\sigma(p(i,j))} - \lambda_{\sigma(q(i,j))}\right) = 0
\end{equation}
for  $1 \leq i\leq m$ and $\alpha_k = \frac{u_k}{u_m}\in \RR$ for $1\leq k \leq m-1$, with $\alpha_m = 1$.

\end{theorem}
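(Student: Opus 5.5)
The plan is to reduce the statement to the single-block kernel condition that characterizes a maximal face, and then normalize the kernel vector. By Corollary~\ref{cor:maximal_face_single_line} and the discussion following it, a maximal proper face $F^{\max}$ is determined by one active permutation $\sigma\in\widetilde{S}$ with $U_\sigma=\ker(L_\sigma(\lambda_0))=\mathrm{span}\{\mathbf{u}\}$ for $\lambda_0\in\relint(F^{\max})$. All other $U_\pi$ are trivial. Hence, by Theorem~\ref{thm:face_matrix}, the constraint matrix $C$ reduces to the single block $W_{\sigma,\mathbf{u}}$, and
\[
F^{\max}=\{\lambda\in\mathrm{APPT}_{m,n}\mid W_{\sigma,\mathbf{u}}\lambda=0\}=\{\lambda\in\mathrm{APPT}_{m,n}\mid L_\sigma(\lambda)\mathbf{u}=0\}.
\]

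Next I would write out the $i$-th component of $L_\sigma(\lambda)\mathbf{u}=0$ explicitly, exactly as in Eq.~\eqref{eq:expanded_sum} in the proof of Theorem~\ref{thm:face_matrix}. The diagonal entry contributes $2\lambda_{\sigma(p(i,i))}u_i$. The entries with $j>i$ contribute $(\lambda_{\sigma(p(i,j))}-\lambda_{\sigma(q(i,j))})u_j$. For $j<i$, symmetry of $L_\sigma$ gives $(\lambda_{\sigma(p(j,i))}-\lambda_{\sigma(q(j,i))})u_j$. This yields $m$ linear equations, homogeneous in $\mathbf{u}$.

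The remaining step is the normalization. Only the line $\mathrm{span}\{\mathbf{u}\}$ matters: replacing $\mathbf{u}$ by $c\mathbf{u}$ with $c\neq 0$ leaves the kernel subspace and each equation's solution set unchanged. Since $\mathbf{u}\neq 0$, some coordinate is nonzero. After relabelling the rows and columns of $L_\sigma$ by a simultaneous permutation, which is absorbed into a different choice of $\sigma$ (compare Lemma~\ref{lem:signed-permutation}), we may take $u_m\neq 0$. Dividing each equation by $u_m$ and setting $\alpha_k=u_k/u_m$, so that $\alpha_m=1$, gives precisely the displayed system. This reproduces Corollary~\ref{cor:maximal_faces_2xn} when $m=2$.

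The main obstacle is justifying the "without loss of generality $u_m\neq0$" claim rigorously. A permutation of the indices $1,\dots,m$ acting by conjugation $PL_\sigma P^T$ must again be of the form $L_{\sigma'}$ for some $\sigma'\in\widetilde{S}$. This fails in general, because the antisymmetric pairing $p$ versus $q$ of off-diagonal entries is tied to the order $i<j$: swapping $i$ and $j$ exchanges which entry carries $p(i,j)$ and which carries $q(i,j)$. I would first try to check that this is compensated by a sign flip on a coordinate (a signed permutation), with $\sigma'$ swapping the roles of $p$ and $q$ at the affected positions. If that fails, I would state the result directly for any index $r$ with $u_r\neq 0$, normalizing $\alpha_k=u_k/u_r$. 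The derived equations are identical in form, so the theorem as written covers the case $u_m\neq0$ and the other cases are analogous.
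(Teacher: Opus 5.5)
Your proposal is correct and is essentially the paper's proof: reduce $C$ to the single block $W_{\sigma,\mathbf{u}}$, write out the rows of $L_\sigma(\lambda)\mathbf{u}=0$ as in Eq.~\eqref{eq:expanded_sum}, and divide by $u_m$. The paper simply takes $u_m\neq 0$ as part of the hypothesis and does not justify it further.

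Your worry about the relabelling is unfounded. Because $L_\sigma(\lambda)$ is symmetric, the difference $\lambda_{\sigma(p)}-\lambda_{\sigma(q)}$ sits identically at $(i,j)$ and $(j,i)$. Hence $\Pi_\tau L_\sigma(\lambda)\Pi_\tau^T = L_{\sigma'}(\lambda)$ for a suitable $\sigma'$, with no sign flip needed; this is exactly the induced permutation $\pi_\tau$ built in the appendix proof of Theorem~\ref{thm:kappa-irreducible-components}. Its kernel is spanned by $\Pi_\tau\mathbf{u}$, whose last entry can be made nonzero, and your fallback of normalizing by any $u_r\neq 0$ also works verbatim.
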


\begin{proof}
This follows similarly as in the proof of Corollary \ref{cor:maximal_faces_2xn}. Each row of the system $C \lambda =  W_{\sigma, \mathbf{u}} \lambda = 0$ reduces to 
\begin{align*}
[W_{\sigma, \mathbf{u}} \lambda]_i
=2\lambda_{\sigma(p(i,i))} u_i +\sum_{j=1}^{i-1} \left( \lambda_{\sigma(p(j,i))} - \lambda_{\sigma(q(j,i))} \right) u_j+  \sum_{j=i+1}^m \left( \lambda_{\sigma(p(i,j))} - \lambda_{\sigma(q(i,j))} \right)u_j  =0.
    \end{align*}
Since $u_m\neq 0,$ we divide through by the factor so that the system becomes 
\begin{align*}
    &2 \left( \frac{u_i}{u_m}\right) \lambda_{\sigma(p(i,i))} + \sum_{j=1}^{i-1} \left( \frac{u_j}{u_m}\right)\left(\lambda_{\sigma(p(j,i))} - \lambda_{\sigma(q(j,i))}\right)+ \sum_{j=i+1}^{(m-1)} \left( \frac{u_j}{u_m}\right) \left(\lambda_{\sigma(p(i,j))} - \lambda_{\sigma(q(i,j))}\right) \\
    &+\left(\lambda_{\sigma(p(i,m))} - \lambda_{\sigma(q(i,m))}\right)= 0.
\end{align*}
Taking $\alpha_k = \frac{u_k}{u_m}$ for $1\leq k \leq m-1$, and $\alpha_m = 1$, the result follows directly.
\end{proof}


\begin{corollary}
\label{cor:exp-face}
Every face of the set of spectra of absolute PPT states, $\mathrm{APPT}_{m,n}$, is exposed.
\end{corollary}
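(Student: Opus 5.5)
The plan is to use the spectrahedral description of $\mathrm{APPT}_{m,n}$ in Eq.~\eqref{eq:APPT_spectrahedron_equiv}, so that $\mathrm{APPT}_{m,n}$ becomes a spectrahedron in the affine hull $\aff(\Delta_{mn-1})$. The simplex constraints $\lambda_i\geq 0$ can be absorbed as $1\times 1$ diagonal blocks of a single linear pencil $\widehat{\mathcal{L}}(\lambda)=\mathcal{L}(\lambda)\oplus\diag(\lambda_1,\ldots,\lambda_{mn})$. Then $\mathrm{APPT}_{m,n}=\{\lambda\in\aff(\Delta_{mn-1}) : \widehat{\mathcal{L}}(\lambda)\geq 0\}$. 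In fact, since each $L_\pi(\lambda)\geq0$ already forces $\lambda_{\pi(p(i,i))}\geq 0$ and every index occurs as such a diagonal index for some $\pi\in\widetilde S$, the extra diagonal block is redundant. Either way, we obtain a spectrahedron in the standard sense of the Preliminaries.

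Next we invoke the general fact recalled in the Preliminaries \cite{ramana1995some,cynthiaspectrahedra}: every face of a spectrahedron is exposed. To make this concrete and consistent with Theorem~\ref{thm:faces_of_APPT}, let $F=F_{\mathcal{U}}$ be a proper face with kernel collection $\mathcal{U}=\{U_\pi\}$. For each $\pi\in\mathcal K$, let $P_\pi$ be the orthogonal projector onto $U_\pi$, and define the linear functional
\[
\ell(\lambda)=\sum_{\pi\in\mathcal K}\Tr\bigl(P_\pi L_\pi(\lambda)\bigr),
\]
which is linear in $\lambda$ because each $L_\pi$ is linear. For $\lambda\in\mathrm{APPT}_{m,n}$ every term is nonnegative, since $L_\pi(\lambda)\geq 0$. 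Hence $H=\{\ell=0\}$ is a supporting hyperplane. Moreover, $\ell(\lambda)=0$ if and only if $P_\pi L_\pi(\lambda)P_\pi=0$ for all $\pi\in\mathcal K$. Since $L_\pi(\lambda)$ is positive semidefinite, this is equivalent to $L_\pi(\lambda)\mathbf u=0$ for all $\mathbf u\in U_\pi$, that is, $U_\pi\subseteq\ker L_\pi(\lambda)$. By Theorem~\ref{thm:faces_of_APPT} (equivalently Theorem~\ref{thm:face_matrix}), this gives $\mathrm{APPT}_{m,n}\cap H=F_{\mathcal U}$.

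The main point requiring care is that the kernel description of Theorem~\ref{thm:faces_of_APPT} is exhaustive: every face arises as some $F_{\mathcal U}$, with $\mathcal U$ read off from a relative interior point via Lemma~\ref{lem:minimal_subspace}. Here I rely on the face–subspace bijection of \cite[Prop.~2.10]{scheiderer2022extreme} already cited.

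Two further details need checking. First, $\ell$ is a nonzero functional on $\aff(\Delta_{mn-1})$ for a proper face. This holds because the maximally mixed spectrum gives $L_\pi>0$, so $\ell>0$ there. Second, the trivial faces $\emptyset$ and $\mathrm{APPT}_{m,n}$ are handled by convention. The empty face is exposed, for example, by a hyperplane missing the compact set.
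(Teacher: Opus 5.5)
Your proof is correct and follows the paper's route: the paper's entire argument is that $\mathrm{APPT}_{m,n}$ is a spectrahedron (Eq.~\eqref{eq:APPT_spectrahedron_equiv}) and that every face of a spectrahedron is exposed. Your functional $\ell(\lambda)=\sum_{\pi\in\mathcal K}\Tr\bigl(P_\pi L_\pi(\lambda)\bigr)$ is the standard proof of that cited fact, made explicit through Theorem~\ref{thm:faces_of_APPT}, and your remarks on the redundancy of the simplex constraints and on $\ell>0$ at the maximally mixed spectrum fill in details the paper leaves implicit.
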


\begin{proof}
This follows directly as $\text{APPT}_{m,n}$ is a spectrahedron.
\end{proof}

\begin{figure}[htbp]
    \centering
    \textbf{Schematic generation of faces of the $\mathrm{ASEP}_{2,n}$ spectrahedron} \\[1em]
    \resizebox{\textwidth}{!}{%
    \begin{tikzpicture}[
        node distance=1.5cm and 0.5cm,
        box/.style={draw, rectangle, minimum width=2.8cm, minimum height=1cm, align=center, font=\small},
        arrow/.style={-{Latex[length=2mm, width=2mm]}, thick}
    ]
        \node[box] (interior) { $\text{ASEP}_{2,n}$ \\ $d = 2n-1$ \\ $\delta = 0$ };
        
        \node[box, below=1.5cm of interior] (l1) { \textbf{Maximal Proper Face} \\ $d = 2n-3$ \\ $\delta = 2$ };
        
        \node[box, below left=2cm and 1cm of l1] (l2_left) { \textbf{Shared Diagonals} \\ $d = 2n-4$ \\ $\delta = 3$ };
        \node[box, below right=2cm and 1cm of l1] (l2_right) { \textbf{Disjoint/Off-Diag} \\ $d = 2n-5$ \\ $\delta = 4$ };
        
        \node[box, below left=2cm and 0.2cm of l2_left] (l3_1) { $d = 2n-5$ \\ $\delta = 4$ };
        \node[box, below right=2cm and 0.05cm of l2_left] (l3_2) { $d = 2n-6$ \\ $\delta = 5$ };
        \node[box, below left=2cm and 0.05cm of l2_right] (l3_3) { $d = 2n-6$ \\ $\delta = 5$ };
        \node[box, below right=2cm and 0.2cm of l2_right] (l3_4) { $d = 2n-7$ \\ $\delta = 6$ };

        \node[box, below=10cm of l1] (vertices) { \textbf{Extreme points} \\ $d = 0$ \\ $\delta = 2n-1$ };

        \draw[arrow] (interior) -- node[right, font=\footnotesize, align=center] { Add $\pi_1$ \\ $\Delta \delta = +2$} (l1);
        
        \draw[arrow] (l1) -- node[above left, font=\footnotesize, align=center] { Add $\pi_2$ \\ $\Delta \delta = +1$} (l2_left);
        \draw[arrow] (l1) -- node[above right, font=\footnotesize, align=center] { Add $\pi_2$ \\ $\Delta \delta = +2$} (l2_right);
        
        \draw[arrow] (l2_left) -- node[left, font=\footnotesize, align=center] { Add $\pi_3$ \\ $\Delta \delta = +1$} (l3_1);
        \draw[arrow] (l2_left) -- node[right, font=\footnotesize, align=center] { $\Delta \delta = +2$} (l3_2);
        \draw[arrow] (l2_right) -- node[left, font=\footnotesize, align=center] { Add $\pi_3$ \\ $\Delta \delta = +1$} (l3_3);
        \draw[arrow] (l2_right) -- node[right, font=\footnotesize, align=center] { $\Delta \delta = +2$} (l3_4);
        
        \draw[arrow, dashed] (l3_1) -- (vertices);
        \draw[arrow, dashed] (l3_2) -- (vertices);
        \draw[arrow, dashed] (l3_3) -- (vertices);
        \draw[arrow, dashed] (l3_4) -- (vertices);
    \end{tikzpicture}%
    }
    \caption{Schematic illustration of how faces of $\mathrm{ASEP}_{2,n}$ arise as active kernel constraints are added. Here $\delta=\mathrm{rank}(C)$ denotes the number of linearly independent equations defining the face, and $d=\dim(F_{\mathcal{U}})=(2n-1)-\delta$. Each added active permutation, $\pi_i$ may increase the rank by $1$ or $2$, thereby lowering the face dimension accordingly. Dashed arrows indicate the continuation of this process down to extreme points.}
\end{figure}

\subsection{Extreme points}
In this section, we provide a characterization of the boundary and extreme points of $\mathrm{APPT}_{m,n}$. 

\begin{corollary}\label{cor:boundary_characterization}
    Let $\lambda \in \mathrm{APPT}_{m,n}$ and let $C_\lambda$ be the kernel constraint matrix generated by its active kernel subspaces $U_\pi = \ker(L_\pi(\lambda))$. Then $\lambda$ lies on the topological boundary $\partial\mathrm{APPT}_{m,n}$ if and only if $ m \leq \mathrm{rank}(C_\lambda) $.
\end{corollary}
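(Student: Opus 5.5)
The plan is to combine Theorem~\ref{thm:boundary_characterization} with the rank dichotomy established in the proof of Proposition~\ref{prop:face_dimensions}. Fix $\lambda\in\mathrm{APPT}_{m,n}$ and form $C_\lambda$ exactly as in Theorem~\ref{thm:face_matrix}. Use the kernels $U_\pi=\ker(L_\pi(\lambda))$ and the active set $\mathcal{K}_\lambda=\{\pi\in\widetilde S\mid 0<\dim U_\pi<m\}$. The point $\lambda$ lies in the relative interior of the face it generates by Lemma~\ref{lem:minimal_subspace}, so this is the constraint matrix of that minimal face. The case $\dim U_\pi=m$ cannot occur: it would force $L_\pi(\lambda)=0$ and hence at least $m\ge2$ vanishing eigenvalues, which the argument in the proof of Theorem~\ref{thm:face_matrix} excludes.

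For the forward direction, suppose $\lambda\in\partial\mathrm{APPT}_{m,n}$. By Theorem~\ref{thm:boundary_characterization} there is some $\pi\in\widetilde S$ with $\det L_\pi(\lambda)=0$, so $\dim U_\pi\ge1$, and by the remark above $\dim U_\pi<m$. Hence $\pi\in\mathcal{K}_\lambda$. Choosing any nonzero $\mathbf{u}\in U_\pi$, the block $W_{\pi,\mathbf{u}}$ is a row block of $C_\lambda$. The argument in the proof of Proposition~\ref{prop:face_dimensions} gives $\mathrm{rank}(W_{\pi,\mathbf{u}})=m$: the column $\pi(p(s,s))$ forces $c_s=0$ for some $s$ with $u_s\ne0$, and the columns $\pi(p(s\wedge t,s\vee t))$ then propagate $c_t=0$. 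Therefore $\mathrm{rank}(C_\lambda)\ge m$.

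For the converse, suppose $\mathrm{rank}(C_\lambda)\ge m>0$. Then $C_\lambda$ is nonzero, so $\mathcal{K}_\lambda\neq\emptyset$. This gives some $\pi$ with $\ker L_\pi(\lambda)\ne\{0\}$, hence $\det L_\pi(\lambda)=0$, and Theorem~\ref{thm:boundary_characterization} yields $\lambda\in\partial\mathrm{APPT}_{m,n}$.

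The subtlety I expect is the topological bookkeeping, which I would address in a remark. The boundary is taken relative to $\aff(\Delta_{mn-1})$, and the interior description before Theorem~\ref{thm:boundary_characterization} also requires $\lambda\in\mathrm{int}(\Delta_{mn-1})$. So I must check that a boundary point with some $\lambda_k=0$ still has a singular block. I would handle this by choosing $\pi$ that places the zero coordinate at a diagonal position $p(i,i)$. Then $L_\pi(\lambda)$ is PSD with a zero diagonal entry, which forces the whole $i$-th row and column to vanish and hence $\det L_\pi(\lambda)=0$. This reduces to the case already treated.
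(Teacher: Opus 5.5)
Your proof is correct. The forward direction is the paper's argument, with two details the paper leaves implicit made explicit: you exclude $\dim U_\pi=m$, and you carry out the rank-$m$ computation for $W_{\pi,\mathbf{u}}$.

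For the converse you take a different route. You note that $\mathrm{rank}(C_\lambda)\ge m$ just means $\mathcal{K}_\lambda\neq\emptyset$, so some block $L_\pi(\lambda)$ is singular, and you then apply the other direction of Theorem~\ref{thm:boundary_characterization}. The paper instead takes the minimal face $F_{\mathcal U}$ with $\lambda\in\relint(F_{\mathcal U})$. It uses Proposition~\ref{prop:face_dimensions} to get $\dim F_{\mathcal U}\le mn-m-1<mn-1$, so $F_{\mathcal U}$ is a proper face and $\lambda$ lies on the boundary.

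Your route is more elementary. It needs only the singular-block criterion and the rank dichotomy $\mathrm{rank}(C_\lambda)\in\{0\}\cup\{m,\dots,mn-1\}$. It also makes transparent that the corollary is essentially a restatement of Theorem~\ref{thm:boundary_characterization}. It does not depend on the face-dimension formula, which itself rests on the face isomorphism and the cited results on spectrahedral faces. The paper's route instead ties the statement into the facial-dimension machinery. As a by-product it gives the bound $\dim F_{\mathcal U}\le mn-m-1$ for the minimal face of every boundary point.

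Your closing remark about spectra with a vanishing coordinate is a worthwhile addition. The paper's interior description requires $\lambda\in\mathrm{int}(\Delta_{mn-1})$, but its proof of Theorem~\ref{thm:boundary_characterization} never explains why a point of $\mathrm{APPT}_{m,n}$ on the relative boundary of the simplex must have a singular block. Your argument supplies exactly this: place the zero coordinate at a diagonal slot $p(i,i)$ and use that a positive semidefinite matrix with a zero diagonal entry has that whole row and column equal to zero.
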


\begin{proof}
    Suppose $\lambda \in \partial\mathrm{APPT}_{m,n}$. By Theorem \ref{thm:boundary_characterization}, at least one of the constraint matrices is singular for the boundary and therefore satisfies $\det(L_{\pi_0}(\lambda) )= 0$ for some permutation $\pi_0 \in \widetilde{S}$. This implies that there exists a non-zero vector $\mathbf{u} \in U_{\pi_0}$ such that $L_{\pi_0}(\lambda) \mathbf{u} = 0 $. Thus, the corresponding kernel subspace is non-trivial such that $\dim(U_{\pi_0}) \ge 1$. As such, there always exists an active permutation so that $\mathcal{K}\neq \emptyset$ and $ m \leq \mathrm{rank}(C_\lambda).$
    
    Conversely, suppose $\mathrm{rank}(C_\lambda) \ge m$. Let $F_\mathcal{U}$ be the minimal face of $\mathrm{APPT}_{m,n}$ such that $\lambda\in \textrm{relint}(F_\mathcal{U})$. By Proposition \ref{prop:face_dimensions}, we know that 
    \begin{equation*}
        \dim(F_\mathcal{U}) = (mn - 1) - \mathrm{rank}(C_\lambda)  
        \leq (mn-1) -m   < mn-1. \end{equation*}
       Thus, $F_\mathcal{U}$ must be a proper face and since $\lambda\in \textrm{relint}(F_\mathcal{U})$, it implies that $\lambda \in \partial\mathrm{APPT}_{m,n}$. 
    \end{proof}

\begin{corollary}\label{cor:extreme_points}
For any spectrum $\lambda \in \mathrm{APPT}_{m,n}$, let $C_\lambda$ be the block constraint matrix generated by its active kernel subspaces $U_\pi = \ker(L_\pi(\lambda))$. Then $\lambda\in\mathrm{APPT}_{m,n}$ is an extreme point if and only if $ \mathrm{rank}(C_\lambda) = mn - 1.$
   
\end{corollary}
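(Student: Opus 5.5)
The plan is to reduce the statement to Proposition~\ref{prop:face_dimensions} applied to the minimal face containing $\lambda$. Every point $\lambda$ of the compact convex set $\mathrm{APPT}_{m,n}$ lies in the relative interior of a unique face, namely the smallest face $F_\lambda$ containing it. By Lemma~\ref{lem:minimal_subspace} and Theorem~\ref{thm:faces_of_APPT}, this face is $F_{\mathcal{U}}$ with the kernel collection $\mathcal{U}=\{U_\pi\}_{\pi\in\widetilde S}$, $U_\pi=\ker(L_\pi(\lambda))$. Since $\lambda\in\relint(F_\lambda)$, the common kernel over $F_\lambda$ is exactly the kernel at $\lambda$. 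Hence the block constraint matrix associated with $F_\lambda$ in Theorem~\ref{thm:face_matrix} is precisely $C_\lambda$, built from bases of the $U_\pi$ for the active $\pi$.

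The second step uses the fact that $\lambda$ is extreme if and only if $F_\lambda=\{\lambda\}$, i.e.\ $\dim F_\lambda=0$. For one direction: if $\lambda$ is extreme, then $\{\lambda\}$ is a face containing $\lambda$, so the minimal face is $\{\lambda\}$. For the other: if $F_\lambda=\{\lambda\}$ and $\lambda=\alpha x+(1-\alpha)y$ with $x,y\in\mathrm{APPT}_{m,n}$, then the face property forces $x,y\in F_\lambda$, so $x=y=\lambda$. Proposition~\ref{prop:face_dimensions} then gives $\dim F_\lambda=(mn-1)-\mathrm{rank}(C_\lambda)$. Setting this to zero yields exactly $\mathrm{rank}(C_\lambda)=mn-1$.

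Two side cases need to be handled explicitly. First, if no permutation is active, then $C_\lambda$ is empty, its rank is $0\neq mn-1$, and $\lambda$ is interior, so not extreme; this is consistent with the statement. Second, a block with $\dim U_\pi=m$ cannot occur for feasible $\lambda$, as shown in the proof of Theorem~\ref{thm:face_matrix}, so the only blocks are those for the active set $\mathcal{K}$.

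The main obstacle is justifying the dimension formula rigorously for the minimal face. The issue is that $\aff(F_\lambda)$ must equal $\ker(C_\lambda)\cap\aff(\Delta_{mn-1})$, not merely be contained in it. I would argue this directly. Take any direction $d\in\ker(C_\lambda)$ with $\sum_i d_i=0$. For each $\pi$, the matrix $L_\pi(\lambda\pm\epsilon d)$ still annihilates $U_\pi$, since $L_\pi$ is linear and $L_\pi(d)U_\pi=0$ follows from $C_\lambda d=0$. On $U_\pi^\perp$, $L_\pi(\lambda)$ is positive definite, and so is $L_\pi(\lambda\pm\epsilon d)$ for small $\epsilon$, because $L_\pi(d)$ maps $U_\pi$ to zero and is symmetric, so it also preserves $U_\pi^\perp$. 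Finally, $\lambda\pm\epsilon d$ stays in the simplex: any coordinate with $\lambda_k=0$ must also satisfy $d_k=0$. This should follow because a zero eigenvalue sits in a diagonal entry of some $L_\pi$, which is then singular with a kernel vector forcing the relevant linear constraint. I would check this point carefully, possibly using the fact that at most one eigenvalue of an $\mathrm{APPT}$ spectrum vanishes. With that, the segment lies in $F_\lambda$, which gives the required equality of affine hulls and completes the equivalence.
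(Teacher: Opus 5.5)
Your proposal is correct and follows the paper's route. The paper's proof is only the one-line appeal to Proposition~\ref{prop:face_dimensions}, applied to the minimal face containing $\lambda$. Your perturbation argument supplies a self-contained justification of the affine-hull equality $\aff(F_\lambda)=\ker(C_\lambda)\cap\aff(\Delta_{mn-1})$, which that proposition takes from the general theory of spectrahedra.

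The nonnegativity point you flag is automatic. Every coordinate $\lambda_k$ appears as a diagonal entry $2\lambda_k$ of some $L_\pi$ with $\pi\in\widetilde S$, so positive semidefiniteness of all the $L_\pi(\lambda\pm\epsilon d)$ already keeps $\lambda\pm\epsilon d$ in the simplex.
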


\begin{proof}
It follows directly from Proposition \ref{prop:face_dimensions}.
\end{proof}

The following last result of the present section stands apart from the flow of our investigation, in that it treats only the subset of $\mathrm{APPT}_{m,n}$ of probability vectors in non-increasing order, in other words the convex set 
\[
  \mathrm{APPT}_{m,n}^\downarrow 
  \coloneq \mathrm{APPT}_{m,n} \cap \Delta_{mn-1}^{\downarrow},
\]
with the ordered probability simplex $\Delta_{mn-1}^{\downarrow} \coloneq \{ \lambda \in \Delta_{mn-1}: \lambda_1 \ge \lambda_2 \ge \dots \ge \lambda_{mn} \}$. This is the set of Hildebrand's original characterization \cite{hildebrand2007positive}, and it allows us to relate the extreme points of $\mathrm{APPT}_{m,n}^\downarrow$ with those of $\mathrm{APPT}_{m,m}^\downarrow$. For this purpose, define the projection map $\tau:\RR^{mn}\rightarrow\RR^{m^2}$ acting as
\[
  \tau(\lambda) 
    = \left( \lambda_1,\ldots,\lambda_{\binom{m}{2}},
       \lambda_{mn+1-\binom{m+1}{2}},\ldots,\lambda_{mn} \right),
\]
\emph{i.e.}, it retains the first $\binom{m}{2}$ and the last $\binom{m+1}{2}$ coordinates. Evidently, $\tau$ maps $\Delta_{mn-1}^\downarrow$ to $\Delta_{m^2-1}^\downarrow$, up to scaling. 

\begin{theorem}
\label{thm:extreme_mn-ordered}
The projection $\tau$ maps $\mathrm{APPT}_{m,n}^\downarrow$ to $\mathrm{APPT}_{m,m}^\downarrow$ (up to scaling). 
In fact, for any extreme point $\lambda$ of $\mathrm{APPT}_{m,n}^\downarrow$, it holds necessarily that 
\begin{enumerate}
\item\label{item:ext_item_1} the rescaled $\lambda' \propto \tau(\lambda)$ is an extreme point of $\mathrm{APPT}_{m,m}^\downarrow$, and

\item\label{item:ext_item_2} for all $\binom{m}{2} < i \leq mn-\binom{m+1}{2}$, 
$\lambda_i \in \left\{ \lambda_{\binom{m}{2}}, \lambda_{mn+1 - \binom{m+1}{2}} \right\}$.
\end{enumerate}
\end{theorem}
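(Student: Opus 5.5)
The plan is to first show that, for non-increasingly ordered spectra, membership in $\mathrm{APPT}_{m,n}$ depends only on the coordinates kept by $\tau$. Write $a=\binom{m}{2}$ and $b=mn+1-\binom{m+1}{2}$.

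\emph{Step 1: reduction to $\tau(\lambda)$.} Take $\lambda\in\Delta^\downarrow_{mn-1}$. By the proof of Theorem~\ref{thm:APPT-full}, $\lambda\in\mathrm{APPT}_{m,n}$ if and only if $\sum_k \lambda_{\pi(k)}\mu_k(\mathbf v)\ge 0$ for all $\pi$ and $\mathbf v$. For fixed $\mathbf v$, the rearrangement inequality says the minimum over $\pi$ is attained by pairing the largest $\lambda$'s with the smallest $\mu$'s. By Eq.~\eqref{eq:witness_eigenvals}, the witness eigenvalues split into three groups:
\begin{itemize}
\item $a$ non-positive values $-x_ix_j$, which pair with $\lambda_1,\dots,\lambda_a$;
\item $mn-m^2$ zeros, which pair with the middle coordinates;
\item $\binom{m+1}{2}$ non-negative values $x_i^2$ and $x_ix_j$, which pair with $\lambda_b,\dots,\lambda_{mn}$.
\end{itemize}
Hence the binding constraints (Hildebrand's ordered LMIs) involve only $\tau(\lambda)$. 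They are exactly the constraints defining $\mathrm{APPT}^\downarrow_{m,m}$, applied to $\tau(\lambda)\in\RR^{m^2}$. Because these LMIs are homogeneous, we get
\[
\lambda\in\mathrm{APPT}^\downarrow_{m,n}\iff \lambda\in\Delta^\downarrow_{mn-1}\ \text{and}\ \tau(\lambda)/\|\tau(\lambda)\|_1\in\mathrm{APPT}^\downarrow_{m,m}.
\]
This gives the first claim. It also shows that the middle coordinates are constrained only by $\lambda_a\ge\lambda_i\ge\lambda_b$ and by normalisation. Making the rearrangement argument rigorous is the one delicate point: the zero eigenvalues have multiplicity $mn-m^2$ and must be placed exactly in the middle block, and the ordering inside each sign group is then handled by Hildebrand's compatible orderings.

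\emph{Step 2: item~\ref{item:ext_item_2}.} Suppose some middle coordinate satisfies $\lambda_b<\lambda_i<\lambda_a$. Let $B$ be the maximal block of middle indices sharing the value $\lambda_i$, and let $e_B$ be its indicator vector. For small $\epsilon>0$ set
\[
\mu_\pm=\frac{\lambda\pm\epsilon e_B}{1\pm\epsilon|B|}.
\]
For $\epsilon$ small, both $\mu_\pm$ remain ordered. Their $\tau$-parts are positive multiples of $\tau(\lambda)$, so by Step~1 both lie in $\mathrm{APPT}^\downarrow_{m,n}$. Since $\lambda$ is the midpoint of the unnormalised vectors $\lambda\pm\epsilon e_B$, it is a nontrivial convex combination of $\mu_+$ and $\mu_-$, with weights proportional to $1\pm\epsilon|B|$. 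This contradicts extremality of $\lambda$.

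\emph{Step 3: item~\ref{item:ext_item_1}.} Now every middle entry equals either $\lambda_a$ or $\lambda_b$. Suppose $\lambda'\propto\tau(\lambda)$ is not extreme, so $\tau(\lambda)=\tfrac12(x+y)$ with $x,y$ non-proportional points of the cone over $\mathrm{APPT}^\downarrow_{m,m}$.
\begin{itemize}
\item Lift $x$ to $X\in\RR^{mn}$: keep $x$ on the $\tau$-coordinates, and put $x_a$ (the entry of $x$ at position $a$) wherever $\lambda_i=\lambda_a$ and $x_b$ wherever $\lambda_i=\lambda_b$. Lift $y$ to $Y$ the same way.
\item Then $X$ and $Y$ are ordered, and $\tfrac12(X+Y)=\lambda$.
\item By Step~1 the normalisations of $X$ and $Y$ lie in $\mathrm{APPT}^\downarrow_{m,n}$, and they are distinct because $x$ and $y$ are not proportional.
\end{itemize}
As in Step~2, this writes $\lambda$ as a nontrivial convex combination, contradicting extremality. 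If $\tau(\lambda)$ itself has a tie that crosses the block boundary, the lifts remain ordered because the middle values are copied from the endpoints, so no extra care is needed.

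The main obstacle is Step~1, namely justifying that the middle coordinates drop out entirely. I would rely directly on Hildebrand's ordered characterisation \cite{hildebrand2007positive}, in which the LMIs visibly use only $\lambda_1,\dots,\lambda_a$ and $\lambda_b,\dots,\lambda_{mn}$, rather than rederiving it.
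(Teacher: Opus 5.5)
Your proposal is correct and follows essentially the same route as the paper. In both, membership of an ordered $\lambda$ depends only on $\tau(\lambda)$, by Hildebrand's characterisation and homogeneity; a middle block strictly between $\lambda_{\binom{m}{2}}$ and $\lambda_{mn+1-\binom{m+1}{2}}$ splits $\lambda$ (you use a symmetric $\pm\epsilon e_B$ perturbation, the paper pushes the block to the two neighbouring values); and a decomposition of $\tau(\lambda)$ lifts by copying endpoint values into the middle block. Your Step 1 is, if anything, more self-contained than the paper's citation: for fixed $\mathbf v$, the rearrangement inequality applied to the sorted witness eigenvalues already resolves the tie issue you flagged as delicate.
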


\begin{proof}
The membership of $\lambda' = \frac{1}{r} \tau(\lambda)$ in $\mathrm{APPT}_{m,m}^\downarrow$ follows from Hildebrand's characterisation of $\mathrm{APPT}_{m,n}^\downarrow$: his complete set of matrix inequalities is constructed precisely from the largest $\binom{m}{2}$ and the smallest $\binom{m+1}{2}$ eigenvalues. 

Now, assume that $\lambda$ is an extreme point of $\mathrm{APPT}_{m,n}^\downarrow$, but by way of contradiction that property 1 or property 2 fails. 
We start with the latter:  

\medskip
\noindent\emph{Case 2:} If there is an index $\binom{m}{2} < i \leq mn-\binom{m+1}{2}$ with $\lambda_{\binom{m}{2}} > \lambda_i > \lambda_{mn+1 - \binom{m+1}{2}}$, then we can construct two distinct $\lambda_1,\lambda_2 \in \mathrm{APPT}_{m,n}^\downarrow$ such that $\lambda=q\lambda^{(1)}+(1-q)\lambda^{(2)}$, $0<q<1$.
For this purpose, choose the smallest such $i$, and also the smallest $j>i$ such that $\lambda_j < \lambda_i$, \emph{i.e.} $\binom{m}{2} < i < j \leq mn-\binom{m+1}{2}$ and 
\[
  \lambda_{\binom{m}{2}} = \lambda_{\binom{m}{2}+1} = \ldots = \lambda_{i-1} > \lambda_i = \lambda_{i+1} = \ldots =\lambda_{j-1} > \lambda_j.
\]
Now define two (non-normalized) tuples $\widetilde{\lambda}^{(1)}$ and $\widetilde{\lambda}^{(2)}$, as follows:
\begin{align*}
  \widetilde{\lambda}^{(1)}_t 
    \coloneq \begin{cases}
               \lambda_t & \text{if } t < i \text{ or } t \geq j, \\
               \lambda_{\binom{m}{2}} & \text{if } i \leq t < j,
             \end{cases} \quad\phantom{.}\quad
  \widetilde{\lambda}^{(2)}_t 
    \coloneq \begin{cases}
               \lambda_t & \text{if } t < i \text{ or } t \geq j, \\
               \lambda_{j} & \text{if } i \leq t < j.
             \end{cases}
\end{align*}
Clearly, both tuples are nonincreasing, and $\lambda = p\widetilde{\lambda}^{(1)} + (1-p)\widetilde{\lambda}^{(2)}$ for a suitable $0<p<1$. Normalising the vectors, so that $\widetilde{\lambda}^{(u)} = s^{(u)}\lambda^{(u)}$  with distinct $\lambda^{(u)}\in\Delta_{mn-1}^\downarrow$, and $s^{(u)} > 0$ ($u=1,2$), we observe that both  $\lambda^{(u)} \in\mathrm{APPT}_{m.n}$, because the Hildebrand conditions are inherited from $\lambda$ as they are homogeneous. Furthermore, 
\[
  \lambda = p\widetilde{\lambda}^{(1)} + (1-p)\widetilde{\lambda}^{(2)} 
    = p s^{(1)} \lambda^{(1)} + (1-p) s^{(2)}\lambda^{(2)} 
    =: q \lambda^{(1)} + (1-q) \lambda^{(2)}, 
\]
the latter necessarily a convex combination due to normalisation. This contradicts the extremality of $\lambda$ and so our hypothesis must have been false. 

\medskip
\noindent\emph{Case 1:} If $\lambda'$ should not be extremal, this means that $\lambda' = p\lambda^{(1)\prime}+(1-p)\lambda^{(2)\prime}$ with distinct $\lambda^{(1)\prime},\lambda^{(2)\prime}\in\mathrm{APPT}_{m,m}^\downarrow$ and $0<p<1$. Then we can construct two distinct $\lambda_1,\lambda_2 \in \mathrm{APPT}_{m,n}^\downarrow$ such that $\lambda=q\lambda^{(1)}+(1-q)\lambda^{(2)}$, $0<q<1$, as follows: to start, we may assume that property 2 of the claim holds for $\lambda$, so that there exists an $i$ with
\[
  \lambda_t = \begin{cases}
                \lambda_{\binom{m}{2}} & \text{if } \binom{m}{2} < t \leq i, \\
                \lambda_{mn+1-\binom{m+1}{2}} & \text{if } i < t \leq mn-\binom{m+1}{2}.
              \end{cases}
\]
Again, we can construct two (non-normalized) tuples $\widetilde{\lambda}^{(1)}$ and $\widetilde{\lambda}^{(2)}$, by letting
\begin{align*}
  \widetilde{\lambda}^{(1)}_t 
    \coloneq \begin{cases}
               r\lambda^{(1)\prime}_t & \text{if } t \leq \binom{m}{2}  \text{ or } t > mn-\binom{m+1}{2}, \\
               r\lambda^{(1)\prime}_{\binom{m}{2}} & \text{if } \binom{m}{2} < t \leq i, \\
                r\lambda^{(1)\prime}_{mn+1-\binom{m+1}{2}} & \text{if } i < t \leq mn-\binom{m+1}{2},
             \end{cases} \\
  \widetilde{\lambda}^{(2)}_t 
    \coloneq \begin{cases}
               r\lambda^{(2)\prime}_t & \text{if } t \leq \binom{m}{2}  \text{ or } t > mn-\binom{m+1}{2}, \\
               r\lambda^{(2)\prime}_{\binom{m}{2}} & \text{if } \binom{m}{2} < t \leq i, \\
                r\lambda^{(2)\prime}_{mn+1-\binom{m+1}{2}} & \text{if } i < t \leq mn-\binom{m+1}{2}.
             \end{cases}
\end{align*}
These definitions ensure that $\lambda = p\widetilde{\lambda}^{(1)} + (1-p)\widetilde{\lambda}^{(2)}$. Introducing normalisations, $\widetilde{\lambda}^{(u)} = s^{(u)}\lambda^{(u)}$ with $s^{(u)}>0$ ($i=1,2$), we get as before,
\[
  \lambda = p\widetilde{\lambda}^{(1)} + (1-p)\widetilde{\lambda}^{(2)} 
    = p s^{(1)} \lambda^{(1)} + (1-p) s^{(2)}\lambda^{(2)} 
    =: q \lambda^{(1)} + (1-q) \lambda^{(2)}, 
\]
the convex combination enforced by the normalisations. This shows that $\lambda$ is not extremal, in contradiction to our assumption, and so our hypothesis must have been false. 
\end{proof}

\medskip
The reverse direction in the above theorem seems to be true, too, at least it holds for $m=2$ \cite[Theorem.~10]{SongChen2025} and $m=3$ \cite[Theorem.~18]{SongChen2025} by direct arguments, but we don't have a proof in generality yet. 

Note, furthermore, that the relationship between the extreme points of $\mathrm{APPT}_{m,n}$ and those of $\mathrm{APPT}_{m,n}^\downarrow$ is not clear a priori. Certainly, any extreme point of $\mathrm{APPT}_{m,n}$ that happens to lie in $\Delta_{mn-1}^\downarrow$ is extremal in $\mathrm{APPT}_{m,n}^\downarrow$. However, in general an intersection can create new extreme points, such as in fact the uniform distribution $\mathbf {u} = \left(\frac1{mn},\ldots,\frac1{mn}\right)$. Inspection of the known solutions in low dimension however suggests that apart from the latter, all other extreme points of $\mathrm{APPT}_{m,n}^\downarrow$ are in fact extremal in $\mathrm{APPT}_{m,n}$.

\begin{corollary}
The set of extreme points $\ex(\mathrm{APPT}_{m,n})$ coincides with the set of exposed points  $\exp(\mathrm{APPT}_{m,n})$.
\end{corollary}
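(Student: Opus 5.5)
The claim is $\ex(\mathrm{APPT}_{m,n}) = \exp(\mathrm{APPT}_{m,n})$. One inclusion is general: every exposed point of a convex set is extreme. Indeed, if $\{\lambda\} = \mathrm{APPT}_{m,n}\cap H$ for a supporting hyperplane $H=\{\ell=\mu\}$ with $\ell\le\mu$ on the set, and $\lambda = \alpha x+(1-\alpha)y$ with $x,y\in\mathrm{APPT}_{m,n}$ and $\alpha\in(0,1)$, then $\mu=\ell(\lambda)=\alpha\ell(x)+(1-\alpha)\ell(y)\le\mu$. This forces $\ell(x)=\ell(y)=\mu$, so $x,y\in H$ and hence $x=y=\lambda$. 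So $\exp(\mathrm{APPT}_{m,n})\subseteq\ex(\mathrm{APPT}_{m,n})$.

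For the reverse inclusion I will use Corollary~\ref{cor:exp-face}: since $\mathrm{APPT}_{m,n}$ is a spectrahedron by Eq.~\eqref{eq:APPT_spectrahedron}, all its faces are exposed. An extreme point $\lambda$ is by definition a singleton face $\{\lambda\}$, and it is proper because $\dim\mathrm{APPT}_{m,n}=mn-1>0$. The uniform spectrum is an interior point, so the set is not a single point. Hence $\{\lambda\}=\mathrm{APPT}_{m,n}\cap H$ for some supporting hyperplane $H$, which means $\lambda\in\exp(\mathrm{APPT}_{m,n})$.

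To make the exposing functional explicit rather than only citing the general fact, I would construct it as follows. Take the kernel subspaces $U_\pi=\ker L_\pi(\lambda)$ from Theorem~\ref{thm:faces_of_APPT}. Let $P_\pi$ be the orthogonal projection onto $U_\pi$, and set
\[
\ell(x)=\sum_{\pi\in\widetilde S}\Tr\!\big(P_\pi L_\pi(x)\big),
\]
which is a linear function of $x$. For $x\in\mathrm{APPT}_{m,n}$ we have $\ell(x)\ge 0$, since each $L_\pi(x)\ge0$ and each $P_\pi\ge0$. Equality $\ell(x)=0$ holds if and only if $P_\pi L_\pi(x)P_\pi=0$ for every $\pi$. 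Because $L_\pi(x)\ge 0$, this is equivalent to $L_\pi(x)P_\pi=0$, i.e.\ $U_\pi\subseteq\ker L_\pi(x)$ for all $\pi$. By Eq.~\eqref{eq:exhaustive_face}, this set of $x$ is exactly the face $F_{\mathcal U}$, here $\{\lambda\}$. The hyperplane $\{\ell=0\}$ therefore exposes $\lambda$, once we check it is a genuine hyperplane in $\aff(\Delta_{mn-1})$.

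The main obstacle is precisely that check: $\ell$ must not be constant on $\aff(\Delta_{mn-1})$. It is not identically zero on the simplex, because $\ell(\mathbf u)>0$ at the uniform point $\mathbf u$, where every $L_\pi$ is positive definite. Since $\ell(\lambda)=0<\ell(\mathbf u)$, $\ell$ is nonconstant on $\aff(\Delta_{mn-1})$, and so it defines a proper supporting hyperplane. This completes both inclusions.
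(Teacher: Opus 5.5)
Your proof is correct and follows the paper's route. An extreme point is a proper singleton face, Corollary~\ref{cor:exp-face} (every face of a spectrahedron is exposed) exposes it, and the reverse inclusion is the general fact you verify. Your explicit functional $\ell(x)=\sum_{\pi\in\widetilde S}\Tr\big(P_\pi L_\pi(x)\big)$ is the standard argument behind that corollary made concrete, and it checks out; note that $\ell(\mathbf u)>0$ tacitly uses that some $P_\pi\neq 0$, which holds because otherwise $\{\ell=0\}\cap\mathrm{APPT}_{m,n}$ would be the whole non-singleton set.
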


This follows from Corollary \ref{cor:exp-face} and might provide an avenue to disprove the conjectured equality of absolute PPT and absolute separability. In particular, if one can show that there exists at least one non-exposed extreme point of $\mathrm{ASEP}_{m,n}$, then this necessarily means that $\mathrm{ASEP}_{m,n} \subsetneq \mathrm{APPT}_{m,n}$.



\section{Maximum purity, minimum von Neumann entropy and volume}
\label{sec:purity-etc}
In this section, we shift our attention to the quantitative properties of the set of spectra of absolute PPT $\mathrm{APPT}_{m,n}$, and in some cases $\mathrm{ASEP}_{m,n}$. In particular,  by benchmarking the set of spectra against the separable ball and the inscribed polytope, we provide tight lower and upper bounds for the maximum purity and minimum von Neumann entropy of a quantum state with spectra in $\mathrm{APPT}_{m,n}$, respectively. We start by identifying this inscribed absolute PPT polytope and the maximal separable ball. 

\begin{definition}
Given $m,n\geq 2$ and letting $\Delta_{mn-1}$ denote the probability simplex on $mn$ points, we define the following two sets:
\begin{enumerate}
    \item The \emph{separable ball} $\mathrm{BALL}_{m,n}$: this is the set of spectra satisfying the strict purity bound, given by  
    $$\mathrm{BALL}_{m,n} \coloneq\left\{\lambda \in\Delta_{mn-1}\;\middle|\; \sum_{i=1}^{mn} \lambda_i^2 \le \frac{1}{mn-1} \right\}.$$
    \item The \emph{inscribed polytope} $\mathcal{P}_{m,n}$: this is a polyhedral subset derived via the Gershgorin circle theorem \cite{horn2012matrix, jivulescu2015positive} to our constraint block matrices $L_\pi(\lambda)$, defined globally as  
    \begin{equation}
    \label{eq:polytope_inequalities}
             \mathcal{P}_{m,n} \coloneq \left\{  \lambda \in \Delta_{mn-1} \;\middle|\; \forall \pi \in \widetilde{S} \quad 2\lambda_{\pi(p(i,i))} \ge \sum_{\substack{j=1 \\ j \neq i}}^m \left| \lambda_{\pi(p(i \wedge j, i\vee j))} - \lambda_{\pi(q(i \wedge j, i\vee j))} \right| \right\},
    \end{equation} 
    where $\min(i,j) = i\wedge j$ and $\max(i,j) = i\vee j$, and the functions $p(i,j)$ and $q(i,j)$ for $1 \le i \le j \le m$  are defined as in Eq.~\eqref{eq:index_functions}.       
\end{enumerate}
\end{definition}

Since the purity of a quantum state is invariant under all eigenbasis permutations, the separable $\mathrm{BALL}_{m,n}$ is well-defined without reference to the permutations. Indeed, its been proven that $\mathrm{BALL}_{m,n}\subset \mathrm{ASEP}_{m,n}$ (see \cite{gurvits2002largest}) which implies $\mathrm{BALL}_{m,n}\subset \mathrm{APPT}_{m,n}$. 

By assuming the decreasing order of eigenvalues, the global condition for the inscribed polytope above can be reduced to a simple linear constraint. Thus, the system of inequalities \eqref{eq:polytope_inequalities} generated by all permutations $\pi \in \widetilde{S}$ collapses to this simple constraint, recovering the criteria observed in \cite{jivulescu2015positive, XiongSze2026}, given by the following: 

\begin{theorem}
\label{thm:sufficient_condition}
Given a mixed state $\rho \in \M_m \otimes \M_n$ with a decreasingly ordered eigenvalue spectrum $\lambda_1 \ge \lambda_2 \ge \dots \ge \lambda_{mn}\geq 0$, if the following linear inequality holds:
\begin{equation}
    2\lambda_{mn} + \sum_{k=1}^{m-1} \lambda_{mn-k} \ge \sum_{k=1}^{m-1} \lambda_k \label{eq:sufficient_appt}
\end{equation}
then the spectrum lies within $\mathcal{P}_{m,n}$, and $\rho$ is absolutely PPT.
\end{theorem}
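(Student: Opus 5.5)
The plan is to show that \eqref{eq:sufficient_appt} implies every Gershgorin inequality in \eqref{eq:polytope_inequalities}, for every $\pi\in\widetilde{S}$ and every row $i$. Membership in $\mathcal{P}_{m,n}$ then gives diagonal dominance of each $L_\pi(\lambda)$, which has nonnegative diagonal. By Gershgorin's theorem, or directly because a symmetric diagonally dominant matrix with nonnegative diagonal is positive semidefinite, we get $L_\pi(\lambda)\ge 0$ for all $\pi$. Theorem~\ref{thm:APPT-full} then yields $\lambda\in\mathrm{APPT}_{m,n}$, i.e.\ $\rho$ is absolutely PPT.

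Fix $\pi$ and a row $i$. The row-$i$ inequality involves $m$ distinct coordinates of $\lambda$ on the ``positive'' side: the diagonal entry $a=\lambda_{\pi(p(i,i))}$ and the $m-1$ entries $b_j=\lambda_{\pi(p(i\wedge j,i\vee j))}$. It also involves $m-1$ further distinct coordinates $c_j=\lambda_{\pi(q(i\wedge j,i\vee j))}$. All $2m-1$ indices are distinct, because the images of $p$ and $q$ are disjoint and injective. We must show $2a\ge\sum_{j\ne i}|b_j-c_j|$.

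The key step is to bound the worst case. Each term satisfies $|b_j-c_j|\le \max(b_j,c_j)-\min(b_j,c_j)$. So $\sum_j|b_j-c_j|\le \sum_j \max(b_j,c_j)-\sum_j\min(b_j,c_j)$, a difference of sums over two disjoint $(m-1)$-subsets of the $2m-2$ values $\{b_j,c_j\}$. This is at most the sum of the $m-1$ largest eigenvalues among all $mn$ coordinates, excluding index $\pi(p(i,i))$, minus the sum of the $m-1$ smallest such. Since the spectrum is ordered, the first sum is at most $\sum_{k=1}^{m-1}\lambda_k$. One checks that the second sum, together with the diagonal term, is controlled by the smallest entries, $2a+\sum\min\ge 2\lambda_{mn}+\sum_{k=1}^{m-1}\lambda_{mn-k}$.

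This last comparison is the delicate point, since $a$ appears with coefficient $2$ and must not be double-counted among the minima. I would handle it by a rearrangement/exchange argument. The multiset $\{a\}\cup\{\min(b_j,c_j)\}$ consists of $m$ distinct coordinates, so its sum is at least $\lambda_{mn}+\cdots+\lambda_{mn-m+1}$. Moreover, $a$ is at least $\lambda_{mn}$. Hence $2a+\sum_j\min(b_j,c_j)\ge a+\sum_{k=0}^{m-1}\lambda_{mn-k}\ge 2\lambda_{mn}+\sum_{k=1}^{m-1}\lambda_{mn-k}$.

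Combining these bounds gives $2a-\sum_j|b_j-c_j|\ge 2\lambda_{mn}+\sum_{k=1}^{m-1}\lambda_{mn-k}-\sum_{k=1}^{m-1}\lambda_k\ge0$. The one caveat is that the maxima bound must also exclude overlaps with $a$; this is harmless because dropping an index can only decrease the top-$(m-1)$ sum. This holds for all $\pi$ and $i$, which finishes the argument.
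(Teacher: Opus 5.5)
Your proof is correct and follows the paper's overall strategy: show that every row of every $L_\pi(\lambda)$ is diagonally dominant using a worst-case rearrangement estimate, then apply Gershgorin and Theorem~\ref{thm:APPT-full}. The key estimate is where you differ, and your version is the sound one.

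The paper bounds the two sides of a row separately, using $2\lambda_{\pi(p(i,i))}\ge 2\lambda_{mn}$ and $\sum_{j\neq i}|b_j-c_j|\le\sum_{k=1}^{m-1}(\lambda_k-\lambda_{mn-k})$. The second bound silently assumes that $\lambda_{mn}$ is not one of the off-diagonal coordinates, and it fails otherwise. For example, take $m=n=2$ and $\lambda=(0.3,0.25,0.25,0.2)$, which satisfies \eqref{eq:sufficient_appt}. Choose $\pi$ so that the first row of $L_\pi(\lambda)$ is $(2\lambda_2,\ \lambda_1-\lambda_4)$. Then $|\lambda_1-\lambda_4|=0.1>0.05=\lambda_1-\lambda_3$, although the row inequality $2\lambda_2\ge|\lambda_1-\lambda_4|$ itself still holds.

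Your joint bound $2a+\sum_j\min(b_j,c_j)\ge a+\sum_{k=0}^{m-1}\lambda_{mn-k}\ge 2\lambda_{mn}+\sum_{k=1}^{m-1}\lambda_{mn-k}$ covers both placements of the smallest eigenvalue at once. It is exactly the coupling needed.

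It also shows the hypothesis is sharp. Put $\lambda_{mn}$ on the diagonal and pair $\lambda_1,\dots,\lambda_{m-1}$ against $\lambda_{mn-1},\dots,\lambda_{mn-m+1}$; the row inequality then reduces to \eqref{eq:sufficient_appt} itself. Hence, for ordered spectra, membership in $\mathcal{P}_{m,n}$ is equivalent to \eqref{eq:sufficient_appt}, which the paper later uses without proof in its purity computation.

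Your final caveat about overlaps with $a$ is unnecessary: any $m-1$ distinct coordinates sum to at most $\sum_{k=1}^{m-1}\lambda_k$.
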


\begin{proof}
Assume $\rho$ has spectra $\lambda$ which is decreasingly ordered as $\lambda_1 \ge \lambda_2 \ge \dots \ge \lambda_{mn}$, and assume 
\begin{equation*}
    2\lambda_{mn} + \sum_{k=1}^{m-1} \lambda_{mn-k} \ge \sum_{k=1}^{m-1} \lambda_k.
\end{equation*}    
To prove $\mathcal{P}_{m,n} \subseteq \mathrm{APPT}_{m,n}$, we must show that for any arbitrary permutation $\pi \in \widetilde{S}$, the active constraint block $L_\pi(\lambda)$ is positive semidefinite. By the Gershgorin circle theorem \cite{horn2012matrix}, $L_\pi(\lambda)$ is positive semidefinite if it is diagonally dominant. This means 
for every row $1 \le i \le m$,
\begin{equation}\label{eq:gershgorin_goal}         2\lambda_{\pi(p(i,i))} \ge \sum_{\substack{j=1 \\ j \neq i}}^m \left| \lambda_{\pi(p(i\wedge j, i\vee j))} - \lambda_{\pi(q(i\wedge j, i\vee j))} \right|.
    \end{equation}
Since $\lambda$ is ordered decreasingly, $\lambda_{mn}$ is the minimum eigenvalue. Thus, the diagonal entry for any arbitrary permutation $\pi$ satisfies $ 2\lambda_{\pi(p(i,i))} \ge 2\lambda_{mn}.$
Similarly, the right-hand side of Eq.~\eqref{eq:gershgorin_goal} requires summing $m-1$ absolute differences of the spectrum. The maximum possible value for this sum occurs when the $m-1$  largest eigenvalues are paired against the $m-1$ smallest eigenvalues. And since the spectrum is ordered, this absolute maximum is exactly $\sum_{k=1}^{m-1} ( \lambda_k - \lambda_{mn-k} )$.  This implies that
\begin{equation}\label{eq:offdiagonal_bound}
        \sum_{k=1}^{m-1} \left( \lambda_k - \lambda_{mn-k} \right) \ge \sum_{\substack{j=1 \\ j \neq i}}^m \left| \lambda_{\pi(p(i\wedge j, i\vee j))} - \lambda_{\pi(q(i\wedge j, i\vee j))} \right|.
    \end{equation}
Thus, by transitivity, it follows that 
\begin{equation*}
        2\lambda_{\pi(p(i,i))} \ge 2\lambda_{mn} \ge \sum_{k=1}^{m-1} \left( \lambda_k - \lambda_{mn-k} \right) \ge \sum_{\substack{j=1 \\ j \neq i}}^m \left| \lambda_{\pi(p(i \wedge j, i\vee j))} - \lambda_{\pi(q(i\wedge j, i \vee j))} \right|.
\end{equation*} 
Thus, the single ordered premise enforces a strict diagonal dominance across every row of every possible constraint block matrix $L_\pi(\lambda)$, so $\lambda \in \mathrm{APPT}_{m,n}$.   
\end{proof}

Thus, the polytope $\mathcal{P}_{m,n}$ is geometrically equivalent to the strictest sufficient linear condition for the general matrix inequalities $L_\pi(\lambda) \geq 0$ ($\pi\in \widetilde{S}$). Although each set $\mathcal{P}_{m,n}$ and $\mathrm{BALL}_{m,n}$ is inscribed in the set of spectra $\textrm{APPT}_{m,n}$, the polytope is actually contained the separable ball and vice versa for all $m,n \geq 2$. Of course, their union $\mathcal{P}_{m,n}\cup\mathrm{BALL}_{m,n}$, and in fact the convex hull of it, is contained in the set of spectra $\textrm{APPT}_{m,n}$.

\subsection{Maximum purity}

\begin{theorem}
\label{thm:lower_purity_bound}
Let $\mathcal P_{m,n}\subseteq \mathrm{APPT}_{m,n}$ be the inscribed absolute PPT polytope with $2\leq m\leq n$. The maximum purity of a quantum state $\rho \in \mathcal{M}_m \otimes \mathcal{M}_n$ with spectrum $\lambda \in \mathcal P_{m,n}$ is given by 
\begin{equation}
    \max_{\lambda  \in \mathcal P_{m,n}} \sum_{i=1}^{mn}\lambda_i^2=\max\left\{\frac{mn+8}{(mn+2)^2},\frac{4mt + mn(m-1)^2}{(2t + m^2n-mn)^2}
    \right\},
\end{equation}
where $t = \left\lceil \frac{(m-1)n}{2} \right\rceil$.
\end{theorem}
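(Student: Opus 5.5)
The plan is to reduce the maximisation to a finite, essentially one‑parameter family of candidate spectra, using two facts: purity $\lambda\mapsto\sum_i\lambda_i^2$ is convex, and $\mathcal P_{m,n}$ is an $S_{mn}$-invariant polytope. Both the objective and $\mathcal P_{m,n}$ are permutation invariant, so we may restrict to non-increasingly ordered $\lambda$. A convex function on a polytope attains its maximum at a vertex, so it suffices to identify the vertices of $\mathcal P_{m,n}^\downarrow\coloneq\mathcal P_{m,n}\cap\Delta^\downarrow_{mn-1}$ and evaluate the purity there.

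\textbf{Step 1: $\mathcal P^\downarrow_{m,n}$ is cut out by a single linear inequality.} For ordered $\lambda$ I first show that
\[
\mathcal P^\downarrow_{m,n}=\Big\{\lambda\in\Delta^\downarrow_{mn-1}\;:\;2\lambda_{mn}+\textstyle\sum_{k=1}^{m-1}\lambda_{mn-k}\ \ge\ \sum_{k=1}^{m-1}\lambda_k\Big\}.
\]
The inclusion ``$\supseteq$'' is exactly the chain of inequalities in the proof of Theorem~\ref{thm:sufficient_condition}. For ``$\subseteq$'' I exhibit one permutation $\pi\in\widetilde S$ and one row $i$ for which the Gershgorin inequality in \eqref{eq:polytope_inequalities} reads exactly as displayed above. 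Concretely, $\pi$ should place $\lambda_{mn}$ at position $p(i,i)$. In the $m-1$ off‑diagonal slots of row $i$ it should pair $\lambda_k$ (at a $q$-index) against $\lambda_{mn-k}$ (at a $p$-index), for $k=1,\dots,m-1$. Such a $\pi$ exists because the $m^2$ indices $p(\cdot,\cdot)$ and $q(\cdot,\cdot)$ are distinct, and the $2m-1$ eigenvalues involved are distinct positions of $\lambda$. So the ordered cross‑section is the ordered simplex intersected with one half‑space.

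\textbf{Step 2: enumerate the vertices.} The ordered simplex $\Delta^\downarrow_{mn-1}$ is itself a simplex, with vertices $v_k=\frac1k(1,\dots,1,0,\dots,0)$ (first $k$ entries nonzero), $k=1,\dots,mn$. Intersecting a simplex with a half‑space produces two kinds of vertices:
\begin{itemize}
\item vertices $v_k$ that satisfy the inequality, which is only $v_{mn}$, the uniform spectrum with purity $1/(mn)$;
\item points where an edge $[v_k,v_l]$, $k<l$, meets the hyperplane.
\end{itemize}
Points on $[v_k,v_l]$ are two‑level vectors: value $a$ on the first $k$ entries, $b\le a$ on entries $k+1,\dots,l$, and zero afterwards. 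I then split into cases.

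If $l<mn$, then $\lambda_{mn}=0$ and the equality forces a balance among the $m-1$ top and bottom entries. Tracking which of the indices $1,\dots,m-1$ and $mn-m+1,\dots,mn-1$ fall in each level, the feasible optimum should be $k=1$, $l=mn-1$ with $a=3b$. This gives $b=1/(mn+2)$ and purity $(mn+8)/(mn+2)^2$. (Sanity check at $m=n=2$: this value is $1/3$.)

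If $l=mn$, the vector has $k$ entries $a$ and $mn-k$ entries $b>0$. Equality holds with the bottom $m$ entries all equal to $b$ and the top $m-1$ all equal to $a$, provided $m-1\le k\le mn-m$. This gives $a/b=(m+1)/(m-1)$, independent of $k$, and a purity depending on $k$ only through normalisation.

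\textbf{Step 3: optimise over $k$ and handle the remaining cases.} Writing the purity of the $l=mn$ family as a function of the real variable $k$, I maximise it. I expect it to be unimodal, with the real optimiser at $k=(m-1)n/2$. Rounding to the integer $t=\lceil (m-1)n/2\rceil$, using a symmetry or convexity argument to pick the ceiling, should produce $\frac{4mt+mn(m-1)^2}{(2t+m^2n-mn)^2}$. Comparing with the $l<mn$ family then gives the stated maximum of the two expressions.

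\textbf{Main obstacle.} I expect the hardest part to be the careful case analysis in Step 2. For edges $[v_k,v_l]$ with $k$ or $l$ close to $m-1$ or $mn-m$, the top‑$(m-1)$ and bottom‑$m$ index sets straddle the two levels, and the ratio $a/b$ changes. I plan to dispatch these cases by showing that each such vertex has purity dominated by one of the two families above, most likely through a direct comparison of the resulting rational functions.
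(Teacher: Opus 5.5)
Your strategy is the paper's own. You restrict to ordered spectra, use convexity of purity to reduce to vertices of $\mathcal P_{m,n}\cap\Delta^\downarrow_{mn-1}$, treat this as the ordered simplex cut by the half-space $h(\lambda)=2\lambda_{mn}+\sum_{k=1}^{m-1}\lambda_{mn-k}-\sum_{k=1}^{m-1}\lambda_k\ge 0$, and optimise over two-level vectors. Your Step~1 is a useful addition: the paper proves only $\supseteq$ (Theorem~\ref{thm:sufficient_condition}) and asserts equality, whereas your single permutation and Gershgorin row also gives $\subseteq$.

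However, Step~2 contains a genuine error. You claim $v_{mn}$ is the only simplex vertex satisfying the inequality, but $h(v_{mn-1})=0$. Since $h(v_k)<0$ for all $k\le mn-2$, $h$ is $\le 0$ along every edge $[v_k,v_l]$ with $l<mn$, and vanishes there only at $v_{mn-1}$. Directly: for ordered $\lambda$ with $\lambda_{mn}=0$, the conditions $h(\lambda)\ge0$ and $\lambda_k\ge\lambda_{mn-k}$ force $\lambda_k=\lambda_{mn-k}$ for $k\le m-1$, hence $\lambda_1=\dots=\lambda_{mn-1}$. So your ``$l<mn$'' family is just $\{v_{mn-1}\}$, with purity $1/(mn-1)$. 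The point you place there, $(3b,b,\dots,b,0)$, has $h=(m-1)b-(m+1)b=-2b<0$, and normalisation would require $b=1/(mn+1)$, not $1/(mn+2)$. It is not in $\mathcal P_{m,n}$, so your argument does not show that $(mn+8)/(mn+2)^2$ is attained.

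The vertex that does attain this value is $\frac{1}{mn+2}(3,1,\dots,1)=[v_1,v_{mn}]\cap\{h=0\}$. It lies in your $l=mn$ family at $k=1$; for $m\ge 3$ this is exactly the ``straddling'' regime $k<m-1$ you planned to dismiss as dominated. The correct enumeration, as in the paper, has vertices $v_{mn}$, $v_{mn-1}$ and $\mathbf w_k=[v_k,v_{mn}]\cap\{h=0\}$ for $1\le k\le mn-2$. These fall into three regimes:
\begin{enumerate}
\item for $k\le m-1$: $a/b=(k+2)/k$, with purity $\frac{mn+4+4/k}{(mn+2)^2}$, largest at $k=1$;
\item for $m-1\le k\le mn-m$: $a/b=(m+1)/(m-1)$;
\item for $k\ge mn-m+1$: $a/b=\frac{mn+1-k}{mn-1-k}$, with purity below $1/(mn-1)$.
\end{enumerate}
The final comparison must also include $v_{mn-1}$, using $1/(mn-1)\le (mn+8)/(mn+2)^2$, which holds for $mn\ge 4$.

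A smaller point on Step~3: the purity $\frac{4mk+mn(m-1)^2}{(2k+mn(m-1))^2}$ is not symmetric in $k$ about $(m-1)n/2$. A symmetry argument therefore has to be run in $y=1/(2k+mn(m-1))$. There the purity equals $2my-mn(m^2-1)y^2$, a parabola symmetric about $y^*=1/(n(m^2-1))$, which shows the ceiling is the better integer. The paper instead computes the difference of the two candidate values directly.
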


\begin{proof}
Since both $\mathcal{P}_{m,n}$ and the purity function are permutation invariant with respect to the eigenvalues, it is enough to maximize over the ordered simplex  $\Delta_{mn-1}^{\downarrow} \coloneq \{ \lambda \in \Delta_{mn-1}: \lambda_1 \ge \lambda_2 \ge \dots \ge \lambda_{mn} \}$. Since the vertices of the ordered simplex are the vectors  $\mathbf{u}_i \coloneq (
    \underbrace{\dfrac{1}{i},\ldots,\dfrac{1}{i}}_{i},
    0,\ldots,0 )$ for $i =1, \ldots, mn$, any point $\lambda\in\Delta_{mn-1}^{\downarrow}$ can be written as $\lambda = f(\mathbf{x}) \coloneq\sum_{i=1}^{mn} x_i\mathbf{u}_i$ where $\mathbf{x} = (x_1,\ldots,x_{mn})\in\Delta_{mn-1}$. 
Each component of the vertices are $[\mathbf{u}_i]_k = \frac{1}{i}$ for $i\geq k$ and $[\mathbf{u}_i]_k = 0$ if $i<k,$ thus, the $k$-th component of $\lambda$ is given by  \begin{equation}\label{eq:inverse_map}
    \lambda_k = \sum_{i=1}^{mn} x_i [\mathbf{u}_i]_k =\sum_{i=k}^{mn} \frac{x_i}{i},\qquad \textrm{ for } k = 1,\ldots,mn.\end{equation}
It follows that  $$\lambda_k -\lambda_{k+1} =\sum_{i=k}^{mn} \frac{x_i}{i} -\sum_{i=k+1}^{mn} \frac{x_i}{i} = \frac{x_k}{k}.$$ Thus, by defining $x_k \coloneq k(\lambda_k - \lambda_{k+1})$ and  $ \lambda_{mn+1}\coloneq0$, we obtain the barycentric coordinates of $\lambda$ with respect to the vertices $\mathbf{u}_i$ of the ordered simplex. Hence, $f$ defines a bijection from the standard simplex in $x$-coordinates onto the ordered simplex.
Observe that since $\lambda_k \geq \lambda_{k+1}$ in the ordered simplex, each $x_k \geq 0$. Additionally, 
\begin{equation*}
    \sum_{k=1}^{mn}x_k = \sum_{k=1}^{mn}k(\lambda_k-\lambda_{k+1})= \sum_{k=1}^{mn}\lambda_k = 1,   
\end{equation*} 
and so $\mathbf{x}\in\Delta_{mn-1}.$ Define the functional $h(\lambda)\coloneq 2\lambda_{mn} + \sum_{k=1}^{m-1} \lambda_{mn-k}- \sum_{k=1}^{m-1} \lambda_k\geq 0$ as seen in Eq.~\eqref{eq:sufficient_appt} so that $\mathcal{P}_{m,n}\cap\Delta_{mn-1}^{\downarrow}  = \{\lambda \in\Delta_{mn-1}^{\downarrow} : h(\lambda) \geq 0\}$. Substituting the components of $\lambda$ from Eq.~\eqref{eq:inverse_map}, we have
\begin{align}
 0 \leq h(\lambda) &=2\lambda_{mn} + \sum_{j=m(n-1)+1}^{mn-1} \lambda_{j}- \sum_{k=1}^{m-1} \lambda_k \nonumber\\
 &= 2\left(\sum_{i=mn}^{mn} \frac{x_i}{i}\right) + \sum_{j=m(n-1)+1}^{mn-1} \left(\sum_{i=j}^{mn} \frac{x_i}{i}\right) - \sum_{k=1}^{m-1} \left(\sum_{i=k}^{mn} \frac{x_i}{i}\right)\nonumber \\
&= 2\frac{x_{mn}}{mn} + \sum_{i=m(n-1)+1}^{mn} \frac{x_i}{i} \left( \sum_{j=m(n-1)+1}^{\min(i, mn-1)} 1 \right) - \sum_{i=1}^{mn} \frac{x_i}{i} \left( \sum_{k=1}^{\min(i, m-1)} 1 \right)  \nonumber \\
 &= 2\frac{x_{mn}}{mn} + \sum_{i=m(n-1)+1}^{mn} \frac{x_i}{i} \Big( \min(i, mn-1) - m(n-1) \Big) - \sum_{i=1}^{mn} \frac{x_i}{i} \Big( \min(i, m-1) \Big)\nonumber \\
    &=\left(\frac{2}{mn}\right)x_{mn} + \sum_{i=m(n-1)+1}^{mn-1}\left(\frac{i - mn + 1} {i}\right)x_{i}  + \sum_{i=m}^{m(n-1)}\left(-\frac{m-1}{i}\right)x_{i}  + \sum_{i=1}^{m-1}(-1)x_i \nonumber \\
    &= \sum_{i =1}^{mn} h(\mathbf{u}_i) x_i.
\end{align}
Thus, 
$\mathcal{P}_{m,n}\cap\Delta_{mn-1}^{\downarrow} $  is exactly the set 
\begin{equation}
\label{eq:polytope_new}
\mathcal{P}_{m,n}\cap\Delta_{mn-1}^{\downarrow}  = f\left( \left\{\mathbf{x} \in \Delta_{mn-1} :\sum_{i=1}^{mn} h(\mathbf{u}_i) x_i \ge 0\right\}\right)
\end{equation} where  \begin{equation}
h(\mathbf{u}_i) = 
\begin{cases} 
\frac{2}{mn} & \text{if } i = mn, \\
0 & \text{if } i = mn-1, \\
\frac{i - mn + 1}{i} & \text{if } m(n-1)+1 \le i \le mn-2, \\
-\frac{m-1}{i} & \text{if } m \le i \le m(n-1),\\
-1 & \text{if } 1 \le i \le m-1.
\end{cases}\end{equation}

Notice that the coefficients $h(\mathbf{u}_i)\geq 0$ or $i = mn-1$ and $i = mn, $ implying that the vertices $\mathbf{u}_{mn}, \mathbf{u}_{mn-1}\in\mathcal{P}_{m,n}\cap\Delta_{mn-1}^{\downarrow}$ with $\mathbf{u}_{mn-1} $ lying exactly on the intersecting hyperplane.  On the other hand, $h(\mathbf{u}_i) < 0$, for $1\leq i\leq mn-2$ and therefore the associated ordered simplex vertices $\mathbf{u}_i\notin \mathcal{P}_{m,n}\cap\Delta_{mn-1}^{\downarrow}$. This means that the vertices of $\mathcal{P}_{m,n}\cap\Delta_{mn-1}^{\downarrow}$ are exactly 
$$  \mathbf{u}_{mn-1} = \left(\underbrace{\frac{1}{mn-1},\ldots,\frac{1}{mn-1}}_{mn-1},0 \right),\qquad \mathbf{u}_{mn} = \left(\frac{1}{mn}, \ldots,\frac{1}{mn}\right)$$ 
and new vertices generated via the intersection of the ordered simplex with the halfspace which we define by 
$$\mathbf{w}_{i}\coloneq[\mathbf{u}_{i},\mathbf{u}_{mn}]\cap \left\{\mathbf{\lambda}\in \Delta^\downarrow_{mn-1}:h(\lambda)=0\right\}, \qquad i=1,\ldots,mn-2.$$ 
Since purity is a convex function, its maximum over the polytope is attained at one of these vertices. The purity of the vertices $\mathbf{u}_{mn-1}$ and $\mathbf{u}_{mn}$ is computed to be \begin{equation}\label{eq:purity_mn&mn-1}
    \|\mathbf{u}_{mn-1}\|_2^2=\frac{1}{mn-1},\qquad \|\mathbf{u}_{mn}\|_2^2=\frac{1}{mn}\end{equation}
such that $\|\mathbf{u}_{mn}\|_2^2 <\|\mathbf{u}_{mn-1}\|_2^2 $ for all $m\leq n.$

For $1 \leq i\leq mn-2,$ the new vertices can written as 
$\mathbf{w}_{i} = \alpha_i\mathbf{u}_{i} + (1-\alpha_i)\mathbf{u}_{mn}, \textrm{ where } \alpha_i\in (0,1).$
However, since we require that $h(\mathbf{w}_{i}) = 0,$ we have that 
\begin{equation*}
    \alpha_i h(\mathbf{u}_{i}) + (1-\alpha_i)h(\mathbf{u}_{mn}) = 0,
    \text{ implying }
    \alpha_{i} = \frac{2}{2-(mn\cdot h(\mathbf{u}_{i}))}.
\end{equation*}
Thus, the new vertices take the form $\mathbf{w}_{i} =\left(
    \underbrace{a_i,\ldots,a_i}_{i},
    \underbrace{b_i,\ldots,b_i}_{mn-i}
    \right),$
where $$ a_i =\frac{\alpha_i}{i}+\frac{(1-\alpha_i)}{mn},\qquad b_i =\frac{(1-\alpha_i)}{mn}.$$ More precisely, we find these new vertices to be
\begin{equation}
\label{eq:polytope_vertices_wi}
    \mathbf{w}_{i} =\begin{cases} 
    a_i = \frac{mn-i+1}{(mn-i-1)(mn-2)+2mn-2}, \; b_i = \frac{mn-i-1}{(mn-i-1)(mn-2)+2mn-2}  & \text{ for }  m(n-1)+1 \le i \le mn-2, \\
a_i = \frac{m+1}{2i+mn(m-1)}, \; \qquad \qquad \quad  b_i = \frac{m-1}{2i+mn(m-1)} & \qquad m \le i \le m(n-1), \\
a_i = \frac{i+2}{i(2+mn)},\qquad \qquad \qquad\quad b_i =\frac{1}{(2+mn)} & \qquad 1 \le i \le m-1.
\end{cases}\end{equation}
Computing the purity within these three range of values, it follows that 
\begin{equation}
\|\mathbf{w}_{i}\|_2^2=
\begin{cases} \displaystyle\frac{1}{mn-1} -\frac{mn( mn -i-1)^2}{(mn-1)(( mn -i-1)(mn-2)+2mn-2)^2} & \text{ for }  m(n-1)+1 \le i \le mn-2, \\[1ex]
    \displaystyle\frac{4mi + mn(m-1)^2}{(2i + m^2n -mn)^2}& \qquad m \le i \le m(n-1), \\[1ex]
    \displaystyle\frac{mn+4+\frac{4}{i}}{(2+mn)^2} & \qquad 1 \le i \le m-1.
\end{cases}\end{equation}
Notice that for $1\leq i \leq m-1$, the maximum purity is attained at $i = 1$. Hence, \begin{equation}\label{eq:max_purity1}
    \max_{1\leq i \leq m-1} \|\mathbf{w}_{i}\|_2^2= \frac{mn+8}{(mn+2)^2}.\end{equation}
When $mn =4$, notice that $\|\mathbf{u}_{3}\|_2^2 =\|\mathbf{w}_{1}\|_2^2 = \frac{1}{3}.$ However, for $mn >4$, it follows that
\begin{align*}
    \max_{1\leq i \leq m-1} \|\mathbf{w}_{i}\|_2^2 - \|\mathbf{u}_{mn-1}\|_2^2 =\frac{mn +8}{(mn+2)^2} -\frac{1}{mn-1} =\frac{3(mn-4)}{(mn+2)^2 (mn-1)}\geq 0.
\end{align*}
 
Now for $m \leq i\leq mn-m$, since the purity $\|\mathbf{w}_{i}\|_2^2$ is a rational function, we can fix $m,n$ and consider $i$ as a continuous real variable to find the stationary point. Taking the first derivative of the purity with respect to $i$, we have $$ (\|\mathbf{w}_{i}\|_2^2)' = \frac{4m(n(m-1)-2i)}{(mn(m-1)+2i)^3}. $$ At the stationary point, $(\|\mathbf{w}_{i}\|_2^2)' = 0$ implying that $i = \frac{n(m-1)}{2}$. However, the purity function is increasing for $i < \frac{n(m-1)}{2}$  and decreasing for $i > \frac{n(m-1)}{2}$. Additionally, the stationary point $i \notin \mathbb{Z}$ whenever $n(m-1)$ is odd.  Thus, to account for the discrete structure of $i$, if $n(m-1)$, we can choose $r_0 = \frac{n(m-1)-1}{2}$ such that $r_0 = \left\lfloor \frac{(m-1)n}{2} \right\rfloor$ and $r_0+1 = \left\lceil \frac{(m-1)n}{2} \right\rceil$. Notice that 
$$\|\mathbf{w}_{r_0+1}\|_2^2 -\|\mathbf{w}_{r_0}\|_2^2 = \frac{4m}{(m^2n-n+1)^2 (m^2n-n-1)^2} >0.$$ Therefore, its integer maximum is attained at 
$$ \left\lceil \frac{(m-1)n}{2} \right\rceil.$$
For all $m(n-1)+1 \le i \le mn-2$, it is evident that the  purity $ \|\mathbf{w}_{i}\|_2^2 \leq \frac{1}{mn-1} =  \|\mathbf{u}_{mn-1}\|_2^2$. Therefore, no vertex in this third range maximizes the purity of the set. 

Thus, altogether, neither $\mathbf{u}_{mn}, \mathbf{u}_{mn-1} $ nor $\mathbf{w}_{i} $ for $m(n-1)+1 \le i \le mn-2$ maximizes the purity of the set. Therefore, the maximum purity of the polytope is attained at either $\mathbf{w}_{1}$ or $\mathbf{w}_{\left\lceil \frac{(m-1)n}{2} \right\rceil} $. Thus, 
\begin{align*}
    \max_{\lambda\in \mathcal{P}_{m,n}}\sum_{i=1}^{mn}\lambda_i^2&= \max \{\|\mathbf{w}_{1}\|_2^2, \|\mathbf{w}_{\left\lceil \frac{(m-1)n}{2} \right\rceil}\|_2^2 \} \\
    &= \max\left\{\frac{mn+8}{(mn+2)^2},\frac{4mt + mn(m-1)^2}{(2t + m^2n -mn)^2}
    \right\}\end{align*}
where $ t =\left\lceil \frac{(m-1)n}{2}\right\rceil$ as desired.

\end{proof}

More precisely, we observe the following:
\begin{corollary}\label{cor:precise_polytope_max_purity}
    Let $\mathcal P_{m,n}\subseteq \mathrm{APPT}_{m,n}$ be the inscribed absolute PPT polytope with $2\leq m\leq n,\; n>2$. The maximum achievable purity of a quantum state $\rho \in \mathcal{M}_m \otimes \mathcal{M}_n$ with spectrum $\lambda \in \mathcal P_{m,n}$ is given by 
\begin{equation}
    \max_{\lambda\in\mathcal P_{m,n}}
    \sum_{i=1}^{mn}\lambda_i^2=
    \begin{cases}
    \displaystyle
    \frac{mn+8}{(mn+2)^2},&\textrm{if }  n<n^*, \\[2ex]
    \displaystyle
    \frac{m}{(m^2-1)n},&\quad n\geq n^* \text{ and } \big(m \text{ is odd or } n \text{ is even}\big), \\[2ex]
    \displaystyle
    \frac{mn(m^2-1)+2m}{
    \big((m^2-1)n+1\big)^2},
    &\quad n\geq n^*,\ m \text{ is even and } n \text{ is odd }
    \end{cases}\end{equation} where $$n^* = \frac{2(m^2-2) + 2\sqrt{(m^2-1)(m^2-4)}}{m}.$$
Furthermore, the spectrum of the maximal purity state with $\lambda_1 \geq\cdots\geq \lambda_{mn}\geq 0$ is given by 
\begin{equation}\label{eq:polyope_max_purity_spectra}
    \begin{cases} 
      \lambda_1 =\frac{3}{mn+2} , \quad \lambda_j = \frac{1}{mn+2} &\text{for } j = 2,\ldots, mn,\  \qquad \qquad \qquad \qquad \quad \;\text{ if } n < n^* \\
      \lambda_i =a , \quad \lambda_j =b &\text{for } i = 1,\ldots, t \ \text{ and } \ j = t + 1, \ldots, mn, \ \text{ if } n \geq n^*
    \end{cases}
\end{equation}
where the block size $t = \left\lceil \frac{(m-1)n}{2} \right\rceil$, and the eigenvalues $a$ and $b$ evaluate exactly to
\begin{equation}
    a = \frac{m+1}{2t + mn(m-1)} \quad \text{and} \quad b = \frac{m-1}{2t + mn(m-1)}.
\end{equation}

\end{corollary}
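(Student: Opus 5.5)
The proof starts from Theorem~\ref{thm:lower_purity_bound}. There the maximum purity over $\mathcal P_{m,n}$ is already reduced to the larger of two vertex values: $\|\mathbf w_1\|_2^2=\frac{mn+8}{(mn+2)^2}$ and $\|\mathbf w_t\|_2^2=\frac{4mt+mn(m-1)^2}{(2t+m^2n-mn)^2}$, with $t=\lceil (m-1)n/2\rceil$. What remains is to (i) simplify the second value according to the parity of $(m-1)n$, (ii) decide which of the two values is larger as a function of $n$, and (iii) read off the maximizing spectra from the vertex formulas \eqref{eq:polytope_vertices_wi}.

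\emph{Step 1: simplify the second value.} If $m$ is odd or $n$ is even, then $(m-1)n$ is even and $t=(m-1)n/2$. The denominator becomes $2t+mn(m-1)=(m-1)n(m+1)=(m^2-1)n$ and the numerator becomes $2m(m-1)n+mn(m-1)^2=mn(m^2-1)$. Hence $\|\mathbf w_t\|_2^2=\frac{m}{(m^2-1)n}$. If $m$ is even and $n$ is odd, then $t=\frac{(m-1)n+1}{2}$. Now the denominator is $(m^2-1)n+1$ and the numerator is $mn(m^2-1)+2m$, which gives the third case of the statement.

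\emph{Step 2: locate the threshold.} Compare $\frac{mn+8}{(mn+2)^2}$ with $\frac{m}{(m^2-1)n}$ by cross-multiplication. The difference $(mn+8)(m^2-1)n-m(mn+2)^2$ expands to
\[
-\bigl(mn^2-4(m^2-2)n+4m\bigr).
\]
So $\|\mathbf w_1\|_2^2$ is the larger value exactly when $n$ lies strictly between the two roots of $mn^2-4(m^2-2)n+4m$. These roots are
\[
\frac{2(m^2-2)\pm 2\sqrt{(m^2-1)(m^2-4)}}{m},
\]
using $(m^2-2)^2-m^2=(m^2-1)(m^2-4)$. Their product is $4$, so the smaller root is $4/n^*\le 2\le m\le n$ and never matters. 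Therefore the comparison reduces to $n<n^*$ versus $n\ge n^*$.

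For $m=2$ the two roots coincide at $n^*=2$. This is why the statement assumes $n>2$: at $m=n=2$ the two values tie, together with $\mathbf u_3$, as already noted in the proof of Theorem~\ref{thm:lower_purity_bound}.

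\emph{Step 3: the odd case (main obstacle).} When $m$ is even and $n$ is odd, the second value is $\frac{mn(m^2-1)+2m}{((m^2-1)n+1)^2}$. This is slightly smaller than $\frac{m}{(m^2-1)n}$, so the exact crossover with $\frac{mn+8}{(mn+2)^2}$ moves a little to the right of $n^*$. I would repeat the cross-multiplication with this exact expression. The resulting quadratic in $n$ has its upper root within $O(1/m^3)$ of $n^*$. I would then check that no odd integer $n$ lies between that root and $n^*$, using the explicit form of $n^*$ for even $m$ and showing that $n^*$ is bounded away from odd integers by more than this shift. If that separation argument does not close, the fallback is to state the threshold for this parity class with the exactly computed root.

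\emph{Step 4: the maximizing spectra.} By Theorem~\ref{thm:lower_purity_bound} the maximizer is $\mathbf w_1$ when $n<n^*$ and $\mathbf w_t$ when $n\ge n^*$. Substituting $i=1$ into \eqref{eq:polytope_vertices_wi} gives $a_1=\frac{3}{mn+2}$ and $b_1=\frac{1}{mn+2}$. Substituting $i=t$, which lies in the middle range $m\le t\le m(n-1)$, gives $a=\frac{m+1}{2t+mn(m-1)}$ and $b=\frac{m-1}{2t+mn(m-1)}$. Listing these values in non-increasing order yields \eqref{eq:polyope_max_purity_spectra}.
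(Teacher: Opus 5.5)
Your Steps 1, 2 and 4 follow the paper's argument exactly. You reduce to $\|\mathbf w_1\|_2^2$ versus $\|\mathbf w_t\|_2^2$ via Theorem~\ref{thm:lower_purity_bound}, split by the parity of $(m-1)n$, and compare through $q(n):=mn^2-4(m^2-2)n+4m$. Your Step 2 has the correct orientation; the paper's Case 1 display states both inequalities the wrong way round, and writes $(m^2-1)^2$ where $(m^2-1)n$ is meant. Your product-of-roots remark disposing of $n_-$ is also something the paper omits.

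The gap is Step 3. For $m$ even and $n$ odd you only describe a check you would carry out, and you keep open a fallback that would change the statement. As written, the third case is therefore not proved; the paper is no better here and only says ``one can check'' that the same $n^*$ works. Also, cross-multiplying with the exact odd-case value gives a cubic in $n$, not a quadratic: the $n^3$ terms leave $m(m^2-1)n^3$. A cleaner route is the identity
\[
\frac{m}{(m^2-1)n}-\frac{mn(m^2-1)+2m}{((m^2-1)n+1)^2}=\frac{m}{(m^2-1)n\,((m^2-1)n+1)^2}.
\]
It shows that the third-case value is $\ge \frac{mn+8}{(mn+2)^2}$ if and only if $q(n)\ge m\left(\frac{mn+2}{(m^2-1)n+1}\right)^2$. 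The right-hand side is $<\frac{m}{(m-1)^2}\le 2$ for all $n\ge 2$. For $n<n^*$ nothing needs checking, since the odd-case value lies below $\frac{m}{(m^2-1)n}<\frac{mn+8}{(mn+2)^2}$.

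For $n\ge n^*$ the separation argument does close, as follows.
\begin{itemize}
\item Convexity of $q$ gives $q(n)\ge q'(n^*)(n-n^*)=4\sqrt{(m^2-1)(m^2-4)}\,(n-n^*)$.
\item Hence for $m\ge 4$ it suffices that the first odd integer $n_o\ge n^*$ satisfies $n_o-n^*\ge \frac{1}{2\sqrt{(m^2-1)(m^2-4)}}$, a quantity below $0.04$.
\item For $m=2$ one has directly $q(n)=2(n-2)^2\ge 2$ for $n\ge 3$.
\item For even $m\ge 10$ one has $4m-1<n^*<4m$; the left inequality is equivalent to $4m^2-37m+8>0$. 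Hence $n_o=4m+1$ and the margin exceeds $1$.
\item For $m=4,6,8$ one finds $n^*\approx 13.71,\ 22.49,\ 30.87$, with $n_o=15,23,31$. The margins are $1.29,\ 0.51,\ 0.13$.
\end{itemize}
The true shift of the crossover is only of order $m^{-3}$. So the threshold $n^*$ is correct for this parity class as well, and your fallback is unnecessary.
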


\begin{proof}
    Following the proof of Theorem \ref{thm:lower_purity_bound}, the maximal purity of a state with spectrum $\lambda\in \mathcal{P}_{m,n}$ is given by 
    \begin{equation}
    \max_{\lambda  \in \mathcal P_{m,n}} \sum_{i=1}^{mn}\lambda_i^2=\max\left\{\frac{mn+8}{(mn+2)^2},\frac{4mt + mn(m-1)^2}{(2t + m^2n -mn)^2}
    \right\}\end{equation}  where $ t =
    \left\lceil \frac{(m-1)n}{2}\right\rceil$ and is attained at the vertices 
    \begin{align}\label{eq:polytope_max_purity_spectrum}
     \mathbf{w}_1 &=  \left( \frac{3}{(2+mn)},\frac{1}{(2+mn)},\ldots,  \frac{1}{(2+mn)} \right)\\   
     \mathbf{w}_t &= \left(
    \underbrace{\frac{m+1}{2t + mn(m-1)} ,\ldots,\frac{m+1}{2t + mn(m-1)} }_{t},
    \underbrace{\frac{m-1}{2t + mn(m-1)} ,\ldots,\frac{m-1}{2t + mn(m-1)} )}_{mn-t}
    \right)
    \end{align} respectively. In light of this, the second purity function $\|\mathbf{w_t}\|_2^2$ depends strictly on the parity of $m$ and $n$ by definition of the value of $t$. As such, let us consider the different parity cases of $m$ and $n$. 

\emph{Case 1:} Suppose $mn>4$ where $m$ is odd, or  $n$ is even. Here, $(m-1)n$ is strictly even implying that $t = \frac{(m-1)n}{2} \in \mathbb{Z}^+.$ By substitution, we have
\begin{equation}
    \|\mathbf{w}_{\frac{n(m-1)}{2}}\|_2^2 = \frac{2mn(m-1) + mn(m-1)^2}{\left[ n(m-1) + mn(m-1) \right]^2} = \frac{m}{n(m^2-1)}.
\end{equation}
Therefore, in this case, the max purity is given by           $$\max_{\lambda\in \mathcal{P}_{m,n}}\sum_{i=1}^{mn}\lambda_i^2 = \max\left\{\frac{mn+8}{(mn+2)^2},\frac{m}{(m^2  -1)^2}\right\}.$$ To determine the point of transition between these values, we equate both such that 
\begin{align}\label{eq:quadratic_eq}
    \frac{mn+8}{(mn+2)^2}&=\frac{m}{(m^2  -1)^2}\nonumber\\
    n(m^2-1)(mn+8) &= m(m^2n^2 + 4mn + 4)\nonumber\\
      mn^2 - 4(m^2-2)n + 4m &= 0
\end{align}
The roots of this equation include 
$$ n_\pm = \frac{2(m^2-2)\pm 2\sqrt{(m^2-1)(m^2-4)} }{m}. $$ However, we only require the root $n_+$ since $n_-$ is not relevant for $n>2.$ Since the quadratic equation in Eq.~\eqref{eq:quadratic_eq} has a positive leading coefficient with $m\geq2,$ we observe that for $n^*\coloneq n_+ $
\begin{align*}
    \frac{mn+8}{(mn+2)^2}<\frac{m}{(m^2  -1)^2}\qquad &\textrm{ for } n<n^*\\
    \frac{mn+8}{(mn+2)^2}\geq\frac{m}{(m^2  -1)^2}\qquad &\textrm{ for } n\geq n^*.
\end{align*}

\emph{Case 2:} Suppose $m$ is even and $n$ is odd. Then, $(m-1)n$ is odd, implying that $t = \left\lceil \frac{(m-1)n}{2} \right\rceil = \frac{(m-1)n + 1}{2}\in \mathbb{Z}^+$. It follows that 
\begin{align*}
     \|\mathbf{w}_{\frac{(m-1)n + 1}{2}} \|_2^2  =\frac{2mn(m-1) + 2m + mn(m-1)^2}{\left[ n(m-1) + 1 + mn(m-1) \right]^2} 
= \frac{mn(m^2-1) + 2m}{(n(m^2-1) + 1)^2}.    \end{align*}
Similarly, the max purity in this parity case is given by           $$\max_{\lambda\in \mathcal{P}_{m,n}}\sum_{i=1}^{mn}\lambda_i^2 = \max\left\{\frac{mn+8}{(mn+2)^2},\frac{2mn(m-1) + 2m + mn(m-1)^2}{\left[ n(m-1) + 1 + mn(m-1) \right]^2}\right\}.$$
In fact, one can check that the same transition value $n^*$ allows for 
\begin{align*}
    \frac{mn+8}{(mn+2)^2}\leq \frac{2mn(m-1) + 2m + mn(m-1)^2}{\left[ n(m-1) + 1 + mn(m-1) \right]^2} \qquad &\textrm{ for } n\geq n^*\\
    \frac{mn+8}{(mn+2)^2}> \frac{2mn(m-1) + 2m + mn(m-1)^2}{\left[ n(m-1) + 1 + mn(m-1) \right]^2} \qquad &\textrm{ for } n< n^*.
\end{align*}
Thus, combining the two cases, the max purity of the polytope is given by 
\begin{equation}
    \max_{\lambda\in\mathcal P_{m,n}}
    \sum_{i=1}^{mn}\lambda_i^2=
    \begin{cases}
    \displaystyle
    \frac{mn+8}{(mn+2)^2},&\textrm{if }  n<n^*, \\[2ex]
    \displaystyle
    \frac{m}{(m^2-1)n},&\quad n\geq n^* \text{ and } m \text{ is odd or } n \text{ is even}, \\[2ex]
    \displaystyle
    \frac{mn(m^2-1)+2m}{
    \big((m^2-1)n+1\big)^2},
    &\quad n\geq n^*,\ m \text{ is even and } n \text{ is odd }
    \end{cases}\end{equation}
    which correspond to the spectra as seen in Eq.~\eqref{eq:polytope_max_purity_spectrum} 
with $ t = \left\lceil \frac{(m-1)n}{2}\right\rceil$, as desired. 
\end{proof}

\begin{remark}
    For the polytope $\mathcal{P}_{2,2}$, the maximal purity of $\frac{1}{3}$ is attained by a state with spectrum $\lambda = \frac{1}{3}(1,1,1,0)$ or $\lambda = \frac{1}{6}(3,1,1,1) $, coinciding with the maximal purity of the separable ball $\mathrm{BALL}_{2,2}.$  
\end{remark}
We can leverage this inscribed polytope to derive a tighter lower bound for the maximal purity of $\mathrm{APPT}_{m,n}$ as follows:

\begin{proposition}
For every bipartite system $\M_m \otimes \M_n$ where $mn>4$,
\begin{equation}
        \max_{\lambda\in \mathrm{APPT}_{m,n}}
    \sum_{i=1}^{mn}\lambda_i^2\geq    \max_{\lambda\in\mathcal P_{m,n}}
    \sum_{i=1}^{mn}\lambda_i^2> \max_{\lambda\in\mathrm{BALL}_{m,n}}
    \sum_{i=1}^{mn}\lambda_i^2
\end{equation}
\end{proposition}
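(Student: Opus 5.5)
The first inequality comes from inclusion, and the second reduces to a single explicit purity comparison.

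\emph{First inequality.} By Theorem~\ref{thm:sufficient_condition}, every $\lambda\in\mathcal{P}_{m,n}$ makes each block $L_\pi(\lambda)$ diagonally dominant with nonnegative diagonal, hence positive semidefinite. So $\mathcal{P}_{m,n}\subseteq\mathrm{APPT}_{m,n}$, and a maximum over a larger set is at least as large.

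\emph{Second inequality.} The maximum purity over $\mathrm{BALL}_{m,n}$ is exactly $\frac{1}{mn-1}$. It is attained, for example, at $\mathbf{u}_{mn-1}=\frac{1}{mn-1}(1,\ldots,1,0)$, which lies in the simplex and saturates the defining bound. It therefore suffices to exhibit one point of $\mathcal{P}_{m,n}$ with purity strictly larger than $\frac{1}{mn-1}$.

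I take the vertex $\mathbf{w}_1=\frac{1}{mn+2}(3,1,\ldots,1)$ from the proof of Theorem~\ref{thm:lower_purity_bound}. Its membership in $\mathcal{P}_{m,n}$ can be checked directly via Eq.~\eqref{eq:sufficient_appt}: with $\lambda_1=3/(mn+2)$ and all other entries $1/(mn+2)$, the left side is $(m+1)/(mn+2)$ and the right side is $(m+2)/(mn+2)$ when $m\ge2$. This direct check fails, so I recheck the vertex formula. Eq.~\eqref{eq:polytope_vertices_wi} gives $a_1=\frac{3}{mn+2}$ and $b_1=\frac{1}{mn+2}$, so for $m\ge 3$ the indices $1,\ldots,m-1$ all contribute large values.

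For this reason I will instead rely on the proof of Theorem~\ref{thm:lower_purity_bound}, where $h(\mathbf{w}_1)=0$ by construction of $\alpha_1$ along the segment $[\mathbf{u}_1,\mathbf{u}_{mn}]$. That proof already computes
\[
\|\mathbf{w}_1\|_2^2-\|\mathbf{u}_{mn-1}\|_2^2=\frac{3(mn-4)}{(mn+2)^2(mn-1)},
\]
which is strictly positive exactly when $mn>4$. Since $\mathbf{w}_1\in\mathcal{P}_{m,n}$, this gives
\[
\max_{\mathcal{P}_{m,n}}\sum_i\lambda_i^2\ \ge\ \|\mathbf{w}_1\|_2^2\ >\ \frac{1}{mn-1}.
\]
The hypothesis $mn>4$ is used precisely here. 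At $mn=4$ the two values coincide, consistent with the Remark on $\mathcal{P}_{2,2}$.

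\emph{Main obstacle.} The main difficulty is certifying that the chosen witness point genuinely lies in $\mathcal{P}_{m,n}$, since the explicit vertex coordinates in Eq.~\eqref{eq:polytope_vertices_wi} must be consistent with $h=0$.

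If the $\mathbf{w}_1$ formula proves unreliable for $m\ge3$, I will fall back on recomputing $\alpha_1=\frac{2}{2-mn\,h(\mathbf{u}_1)}=\frac{2}{2+mn}$. This reproduces $a_1=\frac{3}{mn+2}$ and $b_1=\frac{1}{mn+2}$. I will then verify the general Gershgorin inequalities \eqref{eq:polytope_inequalities} directly for this point; that is, I verify $\mathbf{w}_1\in\mathrm{APPT}_{m,n}$ directly rather than through the ordered linear condition. For this point every off-diagonal difference is either $0$ or $\pm\frac{2}{mn+2}$, and only one entry is large. Hence each row's off-diagonal sum is at most $\frac{2}{mn+2}$, while each diagonal entry is at least $\frac{2}{mn+2}$, so diagonal dominance holds.

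This direct check works for all $m$. It places $\mathbf{w}_1$ in $\mathcal{P}_{m,n}$ as defined by Eq.~\eqref{eq:polytope_inequalities}, which is all the argument requires.
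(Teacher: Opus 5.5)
Your proof is correct and follows the paper's route. The first inequality comes from $\mathcal{P}_{m,n}\subseteq\mathrm{APPT}_{m,n}$, and the second from $\|\mathbf{w}_1\|_2^2=\frac{mn+8}{(mn+2)^2}$, which exceeds $\frac{1}{mn-1}$ by $\frac{3(mn-4)}{(mn+2)^2(mn-1)}$. Your alarm about $\mathbf{w}_1$ is an arithmetic slip, however. The right side of Eq.~\eqref{eq:sufficient_appt} is $\lambda_1+\cdots+\lambda_{m-1}=\frac{3+(m-2)}{mn+2}=\frac{m+1}{mn+2}$, since only $\lambda_1$ is large, so the ordered condition holds with equality and $h(\mathbf{w}_1)=0$. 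Your direct Gershgorin check is valid but unnecessary.
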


\begin{proof} 
Since $\mathcal{P}_{m,n} \subseteq \mathrm{APPT}_{m,n},$ the first inequality follows directly. By definition, 
$\max_{\lambda\in\mathrm{BALL}_{m,n}}
    \sum_{i=1}^{mn}\lambda_i^2 = \frac{1}{mn-1}
$ and by Theorem \ref{thm:lower_purity_bound}, we know that $  \max_{\lambda \in \mathcal{P}_{m,n}} \sum_{i=1}^{mn}\lambda_i^2 \geq \frac{mn+8}{(mn+2)^2}. $ Observe that 
\begin{align*}
    \frac{mn+8}{(mn+2)^2} - \frac{1}{mn-1} = \frac{3(mn - 4)}{(mn+2)^2(mn-1)} >0
\end{align*}
since the denominator $(mn+2)^2(mn-1)> 0$ and the numerator $3(mn-4)> 0$ for all $mn>4$. Therefore 
$$ \max_{\lambda \in \mathcal{P}_{m,n}} \sum_{i=1}^{mn}\lambda_i^2 > \frac{1}{mn-1} = \max_{\lambda\in\mathrm{BALL}_{m,n}}
    \sum_{i=1}^{mn}\lambda_i^2. $$
\end{proof}

\begin{conjecture}\label{conj:max_purity} Let $\mathcal P_{m,n}\subseteq \mathrm{APPT}_{m,n}$ be the inscribed absolute PPT polytope with $2\leq m\leq n,\; n>2$. Then 
\begin{equation*}
      \max_{\lambda\in \mathrm{APPT}_{m,n}}
    \sum_{i=1}^{mn}\lambda_i^2=  \max_{\lambda\in\mathcal P_{m,n}}
    \sum_{i=1}^{mn}\lambda_i^2
\end{equation*}
and occurs at the spectra given by Eq.~\eqref{eq:polyope_max_purity_spectra}.
\end{conjecture}

To support the conjecture, we numerically compute the approximate maximal purity of $\mathrm{APPT}_{m,n}$ for $m=2,3,4$, using a multistart numerical optimization with PYTHON's SLSQP optimizer (see Figure \ref{fig:max-purity-phase_transitions}). By starting from different random points inside the set (while avoiding the vertices of the polytope) we recover the maximal purity for the polytope such that no feasible $\mathrm{APPT}_{m,n}$ spectrum of greater purity was found. Here, we use the minimal number of linear matrix inequalities introduced by  \cite{hildebrand2007positive} (see also \cite{XiongSze2026}) for our computations. 
However, since the number of matrix inequalities defining the absolute PPT set grows exponentially with the dimension, a numerical verification becomes increasingly complex for higher dimensions.

\begin{figure}[ht]
    \centering
    \includegraphics[width=1\linewidth]{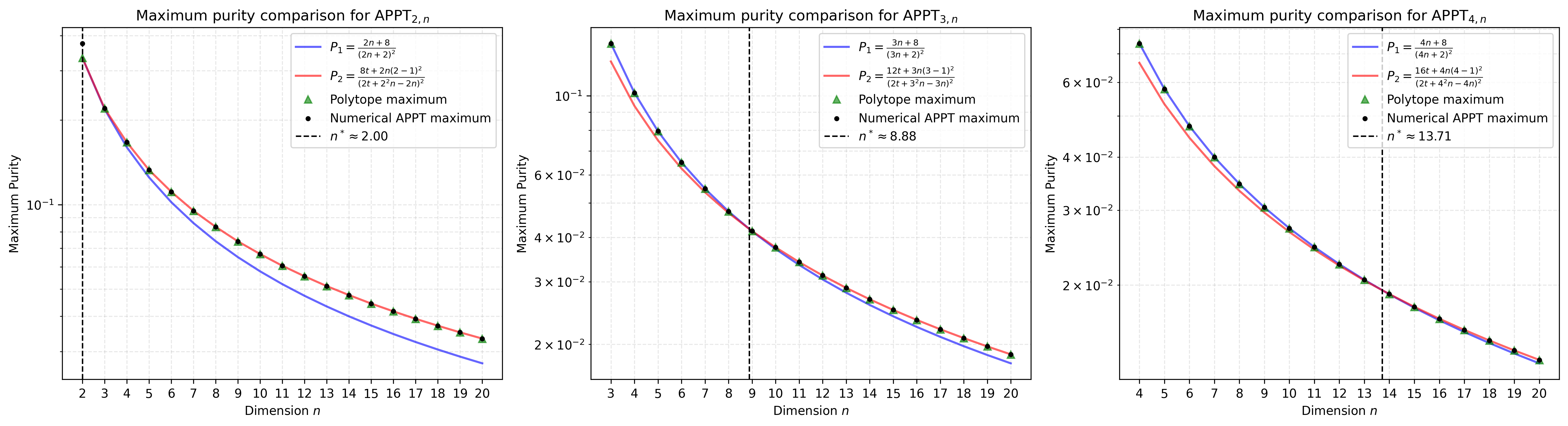}
    \caption{Plots of estimated maximum purity of $\mathrm{APPT}_{m,n}$ compared to exact maximum purity of the inscribed polytope $\mathcal{P}_{m,n}$ for $m =2,3,4.$  The maximum purity of both sets coincide and  as indicated in Corollary \ref{cor:precise_polytope_max_purity}, this value transitions from $P_1 = \frac{mn+8}{(mn+2)^2}$ to $P_2 = \frac{4mt +mn(m-1)^2}{(2t+m^2n-mn)^2}$ after the transition value $n^*$.}
    \label{fig:max-purity-phase_transitions}
\end{figure}

Our numerical results therefore suggest that although  $\mathcal{P}_{m,n}\subseteq \mathrm{APPT}_{m,n}$, the two sets appear to have the same maximal Euclidean radius from the maximally mixed spectrum (at least for $m=2,3,4$, with $mn=4$ as an exception). Indeed, notice that for $\mathbf{u}_{mn} = \left(\frac{1}{mn},\ldots,\frac{1}{mn} \right)$ we have the Euclidean distance
\begin{align*}
   \|\lambda-\mathbf{u}_{mn} \|_2^2 &= \sum_{i=1}^{mn} \left(\lambda_i-\frac{1}{mn}\right)^2\\
   &= \sum_{i=1}^{mn} \left(\lambda_i^2-\frac{2}{mn}\lambda_i + \frac{1}{(mn)^2}\right).
\end{align*}
Since $\sum_{i=1}^{mn}\lambda_i=1$ and $\sum_{i=1}^{mn}\frac{1}{(mn)^2} = \frac{mn}{(mn)^2} = \frac{1}{mn}  $, 
we have
\begin{equation}\label{eq:ball_eqn}
   \sum_{i=1}^{mn} \lambda_i^2 = \frac{1}{mn} + \|\lambda-\mathbf{u}_{mn} \|_2^2. 
\end{equation}

Thus, the evidence for equal maximal purity of $\mathcal{P}_{m,n}$ and $ \mathrm{APPT}_{m,n}$ is equivalent to the two sets having the same largest Euclidean radius around $\mathbf{u}_{mn}.$
Geometrically, this suggests that although $\mathrm{APPT}_{m,n}$ has a curved protruded boundary beyond the polytope boundary, these protrusions do not exceed the Euclidean sphere centered at $\mathbf{u}_{mn}$ with radius reached by the purity maximizers of the polytope.
%
%
The maximum purity for $\mathrm{ASEP}_{2,n}$ has also been discussed in \cite[Cor.~11]{SongChen2025} and \cite{phi2025maximum}. 

\begin{corollary}[{Song/Chen~\cite{SongChen2025}}]
\label{cor:Chen_2x2_entropy}
In $\mathrm{ASEP}_{2,2}$, the maximal purity of $\frac{3}{8}$ is attained precisely by the states with eigenvalues $\frac{1}{8+4\sqrt{2}} (3 + 2\sqrt{2}, 3 + 2\sqrt{2}, 1, 1)$.
\end{corollary}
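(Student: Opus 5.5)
The plan is to reduce to the ordered chamber and then use Hildebrand's explicit criterion for $m=n=2$. The ingredients are $\mathrm{ASEP}_{2,2}=\mathrm{APPT}_{2,2}$ and the permutation invariance of purity. For $\lambda_1\ge\lambda_2\ge\lambda_3\ge\lambda_4\ge 0$, the family $L_\pi(\lambda)\ge 0$ of Theorem~\ref{thm:APPT-full} collapses to the single $2\times 2$ inequality
\[
\begin{pmatrix} 2\lambda_4 & \lambda_3-\lambda_1\\ \lambda_3-\lambda_1 & 2\lambda_2\end{pmatrix}\ge 0
\quad\Longleftrightarrow\quad
(\lambda_1-\lambda_3)^2\le 4\lambda_2\lambda_4 .
\]
So the problem becomes: maximize $\sum_i\lambda_i^2$ over the ordered simplex subject to $g(\lambda):=4\lambda_2\lambda_4-(\lambda_1-\lambda_3)^2\ge 0$.

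Purity is strictly convex, so the maximum is attained on $\partial\mathrm{APPT}_{2,2}^\downarrow$, and in fact only at extreme points. I will run a KKT analysis, organised by which ordering constraints are active.

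\emph{Case (i): $g$ inactive.} Then the point is a vertex of the ordered simplex lying inside the set. By the proof of Theorem~\ref{thm:lower_purity_bound}, only $\mathbf u_3$ and $\mathbf u_4$ qualify, with purity at most $1/3$.

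\emph{Case (ii): $g=0$ and all ordering inequalities strict.} Here I would apply Lagrange multipliers with the normalisation. The stationarity equations are
\[
2\lambda_1=\nu-2\mu(\lambda_1-\lambda_3),\quad
2\lambda_2=\nu+4\mu\lambda_4,\quad
2\lambda_3=\nu+2\mu(\lambda_1-\lambda_3),\quad
2\lambda_4=\nu+4\mu\lambda_2 .
\]
Adding the first and third gives $\lambda_1+\lambda_3=\nu$. The second and fourth give $(1+2\mu)(\lambda_2-\lambda_4)=0$, hence $\mu=-\tfrac12$ since $\lambda_2>\lambda_4$. Substituting $\mu=-\tfrac12$ back, the first and third equations reduce to the same identity, so the surviving conditions are $\lambda_2+\lambda_4=\nu=\lambda_1+\lambda_3$, whence $\lambda_1+\lambda_3=\lambda_2+\lambda_4=\tfrac12$. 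However, the KKT sign condition for a maximum with active constraint $g\ge 0$ requires $\mu\ge 0$. So $\mu=-\tfrac12$ is incompatible, and there is no interior maximizer on the curved boundary.

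\emph{Case (iii): $g=0$ with some equalities among the $\lambda_i$.} I would treat the active patterns $\lambda_1=\lambda_2$, $\lambda_2=\lambda_3$, $\lambda_3=\lambda_4$ and their combinations. Each reduces to a one- or two-parameter problem that can be solved by hand.

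The decisive pattern is $\lambda_1=\lambda_2=a$, $\lambda_3=\lambda_4=b$ with $2a+2b=1$. There the constraint reads $(a-b)^2\le 4ab$, i.e.\ $a/b\le 3+2\sqrt2$. The purity $2(a^2+b^2)$ is increasing in $a/b$, so the maximum sits at $a/b=3+2\sqrt2$. This gives exactly the spectrum $\frac{1}{8+4\sqrt2}(3+2\sqrt2,3+2\sqrt2,1,1)$, and a short simplification yields purity $\frac{36+24\sqrt2}{96+64\sqrt2}=\frac38$.

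The remaining patterns should give strictly smaller values; I expect every competitor to be at most $1/3$, as for $\frac16(3,1,1,1)$ and $\frac13(1,1,1,0)$. For example, $\lambda_2=\lambda_3$ forces $(\lambda_1-\lambda_2)^2\le4\lambda_2\lambda_4$, and a one-variable check handles it.

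The main obstacle is making Case (iii) exhaustive and rigorous, in particular checking that the mixed patterns are not maximizers. The cleanest route is probably a direct two-parameter maximisation along the curve $\lambda_1=\lambda_3+2\sqrt{\lambda_2\lambda_4}$ with one equality imposed. Uniqueness, up to permutation of the spectrum, then follows because strict convexity leaves only the point found in Case (iii) as a maximizer.
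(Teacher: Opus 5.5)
The paper gives no proof of its own here, since the statement is quoted from Song--Chen, so your argument has to stand alone. As written it has two genuine gaps. The first is Case (ii). With your convention $\nabla f=\nu\mathbf{1}+\mu\nabla g$, a maximizer on the active constraint $g\ge 0$ needs $\nabla f$ to point out of the feasible set, i.e.\ $\mu\le 0$. So $\mu=-\tfrac12$ has exactly the admissible sign, and nothing is ruled out. There is also an algebra slip: with $\mu=-\tfrac12$ the second equation reads $2\lambda_2=\nu-2\lambda_4$, so $\lambda_2+\lambda_4=\nu/2$. The stationary set is therefore $\lambda_1+\lambda_3=\tfrac23$, $\lambda_2+\lambda_4=\tfrac13$, $g=0$, not $\tfrac12,\tfrac12$. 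This set genuinely meets the open chamber: it is an arc from $\tfrac16(3,1,1,1)$ to $\tfrac13(1,1,1,0)$ along which the purity is constantly $\tfrac13$. Case (ii) must therefore be closed differently. For instance, observe that $\{g=0\}$ is ruled by the segments $\lambda_2=c\lambda_4$, $\lambda_1-\lambda_3=2\sqrt{c}\,\lambda_4$, so none of its points in the open chamber is extreme.

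The second gap is Case (iii), which carries the whole maximization but is only sketched. Moreover, your expectation that every competitor has purity at most $\tfrac13$ is false. On the boundary curve inside the face $\lambda_1=\lambda_2$ (or $\lambda_3=\lambda_4$) the purity takes every value in $(\tfrac13,\tfrac38)$. In the face $\lambda_2=\lambda_3$, the boundary point $\frac{1}{12+8\sqrt2}(5+4\sqrt2,3+2\sqrt2,3+2\sqrt2,1)$ has purity $\tfrac74-\sqrt2>\tfrac13$. What you must show is \emph{strictly below} $\tfrac38$ away from your point, and that is also what the uniqueness claim needs.

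You can bypass the case split entirely. Set $A=\lambda_1+\lambda_3$ and $B=\lambda_2+\lambda_4=1-A$, and use $(\lambda_1-\lambda_3)^2\le 4\lambda_2\lambda_4$:
\[
\sum_i\lambda_i^2=\tfrac12\bigl[A^2+B^2+(\lambda_1-\lambda_3)^2+(\lambda_2-\lambda_4)^2\bigr]\le\tfrac12\bigl(A^2+2B^2\bigr).
\]
The ordering gives $A\ge\tfrac12$. Also $\lambda_1-\lambda_3\le 2\sqrt{\lambda_2\lambda_4}\le B$ and $\lambda_3\le\lambda_2\le B$, so $A\le 3B$, i.e.\ $A\le\tfrac34$. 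The convex function $\tfrac12\bigl(A^2+2(1-A)^2\bigr)$ equals $\tfrac38$ at $A=\tfrac12$ and $\tfrac{11}{32}$ at $A=\tfrac34$. Hence the purity is at most $\tfrac38$, with equality only if $A=\tfrac12$ and $g=0$. The condition $A=\tfrac12$ forces $\lambda_1=\lambda_2$ and $\lambda_3=\lambda_4$, and together with $g=0$ this gives exactly your point.
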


\begin{remark}
The set $\mathrm{ASEP}_{2,2}$ is exceptional among the qubit-qudit system as its maximum purity of $\frac{3}{8}$ does not coincide with the maximum purity of the inscribed polytope $\mathcal{P}_{2,2}$. Recall the exact condition for $\mathrm{ASEP}_{2,n}$ is $\lambda_1-\lambda_{2n-1}\leq2\sqrt{\lambda_{2n-2}\lambda_{2n}} $. The  eigenvalues $\lambda_{2n-2}$ and $\lambda_{2n}$ dictate how large the difference $\lambda_1-\lambda_{2n-1} $ can be. Thus, for $n=2$, even for small $\lambda_{2n}=\lambda_4$, the eigenvalue $\lambda_{2n-2} = \lambda_2$ is the second largest and can therefore keep the product $\lambda_{2n-2}\lambda_{2n}$ sufficiently large to still satisfy the boundary condition. Meaning that, the simultaneous increase of the two largest eigenvalues increases the purity on the boundary. Thus, allowing for the maximum to be attained at eigenvalues $\frac{1}{8+4\sqrt{2}} (3 + 2\sqrt{2}, 3 + 2\sqrt{2}, 1, 1).$ For $n>2$, both $\lambda_{2n-2}$ and $\lambda_{2n}$ are among the smallest eigenvalues and can no longer sustain increasing the purity while maintaining the boundary condition. 

In contrast, the polytope only requires $\lambda_1-\lambda_{2n-1}\leq 2\lambda_{2n}$, using only the smallest eigenvalue $\lambda_{4}$ and ignoring the compensation provided by the larger $\lambda_2$ since $\sqrt{\lambda_{2}\lambda_{4}} \geq\lambda_{4}$.
\end{remark}

\subsection{Minimum von Neumann entropy}
In the same way, we find the minimal von Neumann entropy of the polytope $\mathcal{P}_{m,n}$ as follows:

\begin{theorem}\label{thm:polytope_vonNeumann}
    Let $\mathcal P_{m,n}\subseteq \mathrm{APPT}_{m,n}$ be the inscribed absolute PPT polytope with $2\leq m\leq n$.  For $m \leq k\leq mn-m,$
 define $$E(k) = \log_2(mn(m-1)+2k)-\frac{k(m+1)\log_2(m+1)+(mn-k)(m-1)\log_2(m-1)}{(mn(m-1)+2k)},$$ its continuous stationary point by 
$k^* \coloneq \frac{mn(m-1)}{4}\left((m+1)\ln\left(\frac{m+1}{m-1}\right)-2\right) $ and \\
$\zeta_{m,n}  \coloneq\min\{mn-m, \max\{m,k^*\}\}$.

Then for any $t\in \arg\min_{k\in \{\lfloor\zeta_{m,n}\rfloor, \lceil\zeta_{m,n}\rceil \}} E(k)$, the minimal von Neumann entropy of a quantum state $\rho \in \mathcal{M}_m \otimes \mathcal{M}_n$ with spectrum $\lambda \in \mathcal P_{m,n}$ is given by 
    \begin{equation}\label{eq:min_entropy_values}
    \min_{\lambda  \in \mathcal P_{m,n}}S(\lambda)= \min\left\{\log_2(mn-1), \log_2(mn+2)-\frac{3}{mn+2}\log_2 3, E(t)
    \right\}, 
    \end{equation}   
    where  
    $S(\lambda)=-\sum_{i=1}^{mn}\lambda_i\log_2\lambda_i$. 
    Furthermore, the minimum is attained at the spectra
    \begin{equation}
    \label{eq:min_entropy_vertices}
    \lambda = 
    \begin{cases} 
     \lambda_i= \frac{1}{mn-1},\qquad \lambda_{mn} = 0 & \text{for } 1 \leq i \leq mn-1,   \\
      \lambda_1= \frac{3}{2+mn},\qquad \lambda_j = \frac{1}{2+mn} & \text{for } 2 \leq j \leq mn, \\
      \lambda_i= \frac{m+1}{2t+mn(m-1)},\; \lambda_j = \frac{m-1}{2t+mn(m-1)} & \text{for } 1 \leq i \leq t, \; t+1\leq j\leq mn, \\
   \end{cases}
   \end{equation} 
   corresponding to the entropy values in Eq. \eqref{eq:min_entropy_values} respectively.
\end{theorem}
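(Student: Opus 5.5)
The plan is to reuse the vertex description of $\mathcal{P}_{m,n}\cap\Delta^{\downarrow}_{mn-1}$ from the proof of Theorem~\ref{thm:lower_purity_bound}. Both $\mathcal{P}_{m,n}$ and the von Neumann entropy $S$ are invariant under permutations of the eigenvalues, so it suffices to minimise $S$ over the ordered polytope. Eq.~\eqref{eq:polytope_new} shows that this polytope is the image under $f$ of a simplex cut by one halfspace. Its vertices are therefore exactly
\[
\mathbf{u}_{mn-1},\qquad \mathbf{u}_{mn},\qquad \mathbf{w}_i \quad (1\le i\le mn-2),
\]
with the $\mathbf{w}_i$ given explicitly in Eq.~\eqref{eq:polytope_vertices_wi}. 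Since $S$ is concave, its minimum over a polytope is attained at a vertex. The problem reduces to comparing finitely many explicit entropy values.

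I will handle the vertices in groups.

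\begin{itemize}
\item $S(\mathbf{u}_{mn})=\log_2(mn)$, which exceeds $S(\mathbf{u}_{mn-1})=\log_2(mn-1)$, so $\mathbf{u}_{mn}$ can be discarded.
\item For $1\le i\le m-1$, each $\mathbf{w}_i$ has $i$ entries $\frac{i+2}{i(mn+2)}$ and $mn-i$ entries $\frac{1}{mn+2}$. I will show that $S(\mathbf{w}_i)$ is increasing in $i$, either by treating $i$ as continuous and checking the sign of the derivative, or via majorisation: $\mathbf{w}_1$ majorises $\mathbf{w}_i$ because its mass is concentrated on one entry. Then $\mathbf{w}_1$ is the minimiser in this range, with $S(\mathbf{w}_1)=\log_2(mn+2)-\frac{3}{mn+2}\log_2 3$.
\item For $m(n-1)+1\le i\le mn-2$, the vertex $\mathbf{w}_i$ lies on the segment $[\mathbf{u}_i,\mathbf{u}_{mn}]$ on the far side of the hyperplane. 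I will argue that $\mathbf{u}_{mn-1}$ majorises $\mathbf{w}_i$: the partial sums of $\mathbf{w}_i$ are dominated by those of the flat vector on $mn-1$ entries, since $a_i$ is close to $b_i$ and $b_i>0$. Schur-concavity of $S$ then gives $S(\mathbf{w}_i)\ge\log_2(mn-1)$. This mirrors the purity comparison in the proof of Theorem~\ref{thm:lower_purity_bound}.
\end{itemize}

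The main range is $m\le i\le m(n-1)$. Here $\mathbf{w}_k$ has $k$ entries $\frac{m+1}{D}$ and $mn-k$ entries $\frac{m-1}{D}$, where $D=mn(m-1)+2k$. A direct computation gives $S(\mathbf{w}_k)=E(k)$. I will extend $E$ to real $k$ and differentiate, using $\frac{d}{dk}\log_2 D=\frac{2}{D\ln 2}$. The stationarity equation should reduce to
\[
2D=\ldots
\]
which is linear in $k$ after clearing denominators and yields exactly $k^*$. I then need to show that $E$ is unimodal on this interval with a minimum, not a maximum, at $k^*$. I would try to verify this through the sign of $E'$ on either side of $k^*$. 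The numerator of $E'$, after multiplying by $D^2\ln 2$, should be affine in $k$, which would settle unimodality cleanly.

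Granting unimodality, the minimiser over the integers in $[m,mn-m]$ is the clamped value $\zeta_{m,n}$ rounded down or up, whichever gives the smaller $E$. This is the $t$ in the statement. The range bound $mn-m$ versus $m(n-1)$ is the same number, so the clamping is consistent.

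Finally, I take the minimum of the three candidates $\log_2(mn-1)$, $S(\mathbf{w}_1)$ and $E(t)$, and read off the corresponding spectra to obtain Eq.~\eqref{eq:min_entropy_vertices}. The expected obstacle is the sign analysis of $E'$, which is needed to confirm that $k^*$ is a minimum and that no interior integer beats the two neighbours of $\zeta_{m,n}$. A second, smaller difficulty is the majorisation check in the upper range. If that check fails, I would fall back on direct comparison using the explicit formulas.
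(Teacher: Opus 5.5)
Your plan agrees with the paper everywhere except the upper-range vertices, where the majorisation step fails. The shared parts are: reduction to the ordered simplex, concavity of $S$ to restrict to the vertices $\mathbf{u}_{mn-1},\mathbf{u}_{mn},\mathbf{w}_i$, and monotonicity for $1\le i\le m-1$. Your majorisation argument for that lower range is valid: the partial sums of $\mathbf{w}_i$ are $k(i+2)/(i(mn+2))\le (k+2)/(mn+2)$ for $k\le i$, and they coincide with those of $\mathbf{w}_1$ afterwards. In the middle range the numerator of $E'$ is indeed affine, namely $4k-mn(m-1)\bigl[(m+1)\ln\frac{m+1}{m-1}-2\bigr]$. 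Its slope is positive, which gives exactly the unimodality with a minimum that you need.

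The step that fails is the upper range $m(n-1)+1\le i\le mn-2$, which is nonempty once $m\ge 3$. Write $s=mn-i-1\in\{1,\dots,m-2\}$. Then the large entry of the vertex is $a_i=\frac{s+2}{s(mn-2)+2(mn-1)}$, and $(mn-1)a_i>1$ because $s(mn-1)>s(mn-2)$. So the first partial sum of $\mathbf{w}_i$ already exceeds that of $\mathbf{u}_{mn-1}$. For example, with $m=n=3$ and $i=7$ one gets $a_7=3/23>1/8$. Conversely, $\mathbf{w}_i$ has full support while $\mathbf{u}_{mn-1}$ has a zero entry, so the $(mn-1)$-st partial sums compare the other way. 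The two vectors are therefore incomparable in the majorisation order, and Schur-concavity gives you nothing. Closeness of $a_i$ to $b_i$ is not what matters; what matters is $a_i$ versus $1/(mn-1)$.

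Your fallback of ``direct comparison using the explicit formulas'' is left unspecified. The missing idea, which is what the paper uses, is the Jensen bound $S(\lambda)\ge -\log_2\sum_i\lambda_i^2$: Shannon entropy dominates collision entropy, by concavity of $\log$ with weights $\lambda_i$. Combine it with the purity identity from the proof of Theorem~\ref{thm:lower_purity_bound}, $\|\mathbf{w}_i\|_2^2=\frac{1}{mn-1}-\frac{mn\,s^2}{(mn-1)(s(mn-2)+2mn-2)^2}<\frac{1}{mn-1}$. This gives $S(\mathbf{w}_i)>\log_2(mn-1)$ at once. With that substitution the rest of your argument goes through.
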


\begin{proof}
Since both $\mathcal{P}_{m,n}$ and the von Neumann entropy are permutation invariant with respect to the eigenvalues, it is sufficient to minimize over the ordered simplex  $\Delta_{mn-1}^{\downarrow} \coloneq \{ \lambda \in \Delta_{mn-1}: \lambda_1 \ge \lambda_2 \ge \dots \ge \lambda_{mn} \}$. Recall from proof of Theorem \ref{thm:lower_purity_bound} that  the vertices of  $\mathcal{P}_{m,n}\cap\Delta_{mn-1}^{\downarrow}$ are 
$$  \mathbf{u}_{mn-1} = \left(\underbrace{\frac{1}{mn-1},\ldots,\frac{1}{mn-1}}_{mn-1},0 \right),\quad \mathbf{u}_{mn} = \left(\frac{1}{mn}, \ldots,\frac{1}{mn}\right) $$ 
and new vertices generated via the intersection defined by 
$$\mathbf{w}_{i}\coloneq[\mathbf{u}_{i},\mathbf{u}_{mn}]\cap \left\{\mathbf{\lambda}\in \Delta^\downarrow_{mn-1}:h(\lambda)=0\right\}, \qquad i=1,\ldots,mn-2$$ 
where $h(\lambda) =2\lambda_{mn} + \sum_{j=m(n-1)+1}^{mn-1} \lambda_{j}- \sum_{k=1}^{m-1} \lambda_k.  $ Since the entropy is a concave function, its minimum over the polytope must occur at a vertex. The entropies of $\mathbf{u}_{mn-1} $ and $\mathbf{u}_{mn} $ are given by 
\begin{equation*}
   S(\mathbf{u}_{mn-1} ) = \log_2(mn-1)\qquad  \textrm{and } \qquad S(\mathbf{u}_{mn} ) = \log_2(mn).
\end{equation*}
Since $mn>mn-1$, $ S(\mathbf{u}_{mn} )>S(\mathbf{u}_{mn-1} )$, the vertex $\mathbf{u}_{mn} $ cannot minimize the entropy. 

Now we consider the entropies of three different vertices whenever $1\leq i \leq mn-2$. As seen from Eq. \ref{eq:polytope_vertices_wi}, these vertices take the form $$\mathbf{w}_i=\left(\underbrace{a_i,\ldots,a_i}_{i},\underbrace{b_i,\ldots,b_i}_{mn-i}\right), $$ with
\begin{equation}
    \begin{cases} 
    a_i = \frac{mn-i+1}{(mn-i-1)(mn-2)+2mn-2}, \; b_i = \frac{mn-i-1}{(mn-i-1)(mn-2)+2mn-2}  & \text{ for }  m(n-1)+1 \le i \le mn-2 \\
a_i = \frac{m+1}{2i+mn(m-1)}, \; \qquad \qquad \quad  b_i = \frac{m-1}{2i+mn(m-1)} & \qquad m \le i \le m(n-1) \\
a_i = \frac{i+2}{i(2+mn)},\qquad \qquad \qquad\quad b_i =\frac{1}{(2+mn)} & \qquad 1 \le i \le m-1.
\end{cases}\end{equation}
For ease of computation, we occasionally use the natural log via the relation $\log_2 i = \frac{\ln i}{\ln 2}$ so that the entropy is $S(\lambda ) = -\frac{1}{\ln 2}\sum_i \lambda_i\ln \lambda_i $. 

For $1 \le i \le m-1,  $, we find the entropy at $\mathbf{w}_{i}$ to be 
\begin{align}
    S(\mathbf{w}_{i} ) &=-ia_i\log_2a_i-(mn-i)b_i\log_2b_i\nonumber\\
    &=-i\left(\frac{i+2}{i(2+mn)}   \right)\log_2\left(\frac{i+2}{i(2+mn)}   \right)-(mn-i)\left(\frac{1}{(2+mn)}   \right)\log_2 \left(\frac{1}{(2+mn)}   \right)\nonumber\\
    &=\log_2(mn+2) -\frac{i+2}{mn+2}
    \log_2\left(\frac{i+2}{i}\right)
\end{align}
To determine the minimum in this range, let $\phi(i)\coloneq (i+2)\ln\left(\frac{i+2}{i}\right)$ and fix $m,n$.  Then its derivative $\phi(i)' =\ln\left(1+\frac{2}{i}\right)-\frac{2}{i}<0 $ since $\ln (1+t)<t$ for any $t>0$. As such, $\phi(i)$ is a strictly decreasing function. Consequently, $S(\mathbf{w}_{i} )$ is a strictly increasing function and its minimum is attained at the smallest value, $i= 1.$ Thus, 
\begin{equation}
    \min_{1 \le i \le m-1 } S(\mathbf{w}_{i} ) = S(\mathbf{w}_{1} ) =\log_2(mn+2) -\frac{3}{mn+2}
    \log_2 3.
\end{equation}

For $m \le i \le m(n-1) $, the entropy is given by 
\begin{align*}
    S(\mathbf{w}_{i} )   &=-ia_i\log_2a_i-(mn-i)b_i\log_2b_i\nonumber\\
    &=-i \left(\frac{m+1}{2i+mn(m-1)}\right)\log_2 \left(\frac{m+1}{2i+mn(m-1)}\right)-(mn-i)\left(\frac{m-1}{2i+mn(m-1)}\right)\\
    &\qquad\log_2 \left(\frac{m-1}{2i+mn(m-1)}\right)\nonumber\\
    &=\log_2(mn(m-1)+2i)-\frac{i(m+1)\log_2(m+1)+(mn-i)(m-1)\log_2(m-1)
    }{(mn(m-1)+2i)}\nonumber\\
    &\coloneq E(i)
\end{align*}
Let $A \coloneq m+1, \; B\coloneq m-1,\; C_i\coloneq iA\ln A + (mn-i)B\ln B $ and $D_i =(mn(m-1)+2i) $. Then the entropy is given $E(i) =\frac{1}{\ln 2} \left(\ln D_i - \frac{C_i}{D_i}\right)$.
By fixing $m,n$ and taking $i$ as a real variable, we find the derivative to be 
\begin{align*}
    E(i)' &=   \frac{1}{\ln 2} \left(\frac{2}{D_i} - \frac{C_i' D_i-2C_i}{D_i^2}\right)\nonumber\\
          &= \frac{(2-C'_i)D_i+2C_i}{D_i^2\ln2}
\end{align*}
Since $C_i' = A\ln A - B\ln B$, by substitution we have the derivative to be, 
\begin{equation*}
    E(i)' = \frac{4i-mn(m-1)\left[(m+1)\ln\left(\frac{m+1}{m-1}\right)-2\right]}{\left(mn(m-1)+2i\right)^2\ln 2}.
\end{equation*}
At the optimal value of $i$, we take $ E(i)' = 0. $ Since the denominator is positive, we find this optimum to be 
$$i^* = \frac{mn(m-1)}{4}\left[(m+1)\ln\left(\frac{m+1}{m-1}\right)-2\right].  $$
Since the numerator of $E(i)' $ is linear in $i$, it follows that $E(i)'<0 $ for $i<i^*$ and $E(i)'>0 $ for $i>i^*$ and therefore implying that, $E(i)$ is strictly decreasing before $i^*$ and increasing afterwards. However, since the minimizer must lie in the range $[m,mn-m]$, we can define 
$$\zeta_{m,n}  \coloneq\begin{cases}
    m &\,\text{for } i^*<m,\\
    i^* &\; m \leq i^* \leq mn-m,\\
    mn-m &\;  i^* > mn-m.
\end{cases} $$
Additionally, we also require that the minimizer must be an integer. Therefore, if $\zeta_{m,n} \in \mathbb{Z}$, it follows directly that the minimizer $t=\zeta_{m,n}.$ On the other hand, if $\zeta_{m,n} \notin \mathbb{Z}$, it cannot be a feasible index for the vertex. However, since $E(i)$ is strictly decreasing before $\zeta_{m,n}$ and strictly increasing after, only the two adjacent integers $\lfloor\zeta_{m,n}\rfloor$ and $\lceil\zeta_{m,n}\rceil$ can minimize $E(i)$. Therefore, the minimizer here is $t = \arg\min_{i\in \{\lfloor\zeta_{m,n}\rfloor, \lceil\zeta_{m,n}\rceil \}} E(i)$. Thus, for $m\leq i \leq mn-m $, 
\begin{equation}
    \min_{m\leq i \leq mn-m}  S(\mathbf{w}_{i} ) = \log_2(mn(m-1)+2t)-\frac{t(m+1)\log_2(m+1)+(mn-t)(m-1)\log_2(m-1)
    }{(mn(m-1)+2t)}
\end{equation}
where $t \in \arg\min_{i\in \{\lfloor\zeta_{m,n}\rfloor, \lceil\zeta_{m,n}\rceil \}} E(i) $.

Now let $g(x) =-\log_2 x$. Since $g(\lambda_i)$ is convex and $\sum_i\lambda_i= 1$, by Jensen's inequality
\begin{align*}
  S(\lambda) =\sum_{i=1}^{mn}\lambda_i g(\lambda_i) \geq g\left(\sum_{i=1}^{mn}\lambda_i \lambda_i \right)  = -\log_2 \left(\sum_{i=1}^{mn}\lambda_i^2  \right)
\end{align*}
Then for $m(n-1)+1 \le i \le mn-2  $, we have seen from the proof of Theorem \ref{thm:lower_purity_bound} that 
 $ \|\mathbf{w}_i\|_2^2< \frac{1}{mn-1}$. Since $g(x)$ is decreasing,
 $$-\log_2 \left(\|\mathbf{w}_i\|_2^2  \right) > -\log_2\left(\frac{1}{mn-1}\right).$$
Thus it follows that, $S(\mathbf{w}_i) \geq -\log_2 \left(\|\mathbf{w}_i\|_2^2  \right) > \log_2\left(mn-1\right) = S(\mathbf{u}_{mn-1}).$
Therefore, the vertex $\mathbf{w}_i$ for $m(n-1)+1 \le i \le mn-2  $ cannot minimize the von Neumann entropy. Combining all these range of values of $i$, the minimum von Neumann entropy may be attained at any of the vertices 
$\mathbf{u}_{mn-1},\; \mathbf{w}_1 \text{ and } \mathbf{w}_t$ with 
$S(\mathbf{u}_{mn-1} ) = \log_2(mn-1) $, $ S(\mathbf{w}_1 )= \log_2(mn+2)-\frac{3}{mn+2}\log_2 3  $ and $S(\mathbf{w}_t )=  E(t)$ respectively. Therefore, 
\begin{equation}
    \min_{\lambda  \in \mathcal P_{m,n}}S(\lambda)= \min\left\{\log_2(mn-1), \log_2(mn+2)-\frac{3}{mn+2}\log_2 3, E(t)
    \right\}.
\end{equation}
\end{proof}

\begin{example}
    Consider the polytope $\mathcal{P}_{m,n}$ such that $m=2$ and $n =2.$ By Theorem \ref{thm:polytope_vonNeumann}, we can find that the range of values for $k$ is  $[mn-m,m]=[2,2]$. Thus,  $k^*=3\ln3-2\approx 1.295837$ and $\zeta_{2,2}=2$. Since  $k\in\mathbb{Z}$, $t =2$ and $E(2) = 3-\frac{2\log_2 3}{4}$. Therefore, the minimal von Neumann entropy over the polytope $\mathcal{P}_{2,2}$ is given by
    \begin{align*}
        \min_{\lambda  \in \mathcal P_{2,2}}S(\lambda) &= \min\left\{\log_2 3, \log_2 6-\frac{1}{2}\log_2 3, 3-\frac{1}{2}\log_2 3
    \right\} \\
    &\approx \min\{1.585,1.792, 1.811 \}\\
    &=1.585.
    \end{align*}
    Thus, $\min_{\lambda  \in \mathcal P_{2,2}}S(\lambda) = \log_2 3$ and occurs at the spectrum $\lambda = \frac{1}{3}\left(1,1,1,0\right)$. 
\end{example}

\noindent This minimal von Neumann entropy and its associated spectra of $\mathcal{P}_{2,2}$ coincides with those of $\mathrm{ASEP}_{2,2}$ (see \cite[Cor.~11]{SongChen2025}). In general, it follows that:

\begin{proposition}
For every bipartite system $\M_m \otimes \M_n$ where $2\leq m\leq n$,
\begin{equation}
        \min_{\lambda\in \mathrm{APPT}_{m,n}}
    S(\lambda)\leq    \min_{\lambda\in\mathcal P_{m,n}}
    S(\lambda) \leq\min_{\lambda\in\mathrm{BALL}_{m,n}}
    S(\lambda)
\end{equation}
\end{proposition}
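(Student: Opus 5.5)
The first inequality needs only set inclusion. By Theorem~\ref{thm:sufficient_condition}, every $\lambda\in\mathcal P_{m,n}$ has all blocks $L_\pi(\lambda)$ diagonally dominant, hence $\mathcal P_{m,n}\subseteq \mathrm{APPT}_{m,n}$. A minimum over a larger set can only be smaller, so $\min_{\mathrm{APPT}_{m,n}}S\le \min_{\mathcal P_{m,n}}S$. Both minima exist because $S$ is continuous and both sets are compact: $\mathrm{APPT}_{m,n}$ is a spectrahedron inside the simplex, and $\mathcal P_{m,n}$ is a polytope.

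For the second inequality I will not argue by inclusion, since neither of $\mathcal P_{m,n}$ and $\mathrm{BALL}_{m,n}$ need contain the other. Instead I compare the two minimal values directly, using Theorem~\ref{thm:polytope_vonNeumann}. That theorem shows that $\mathbf u_{mn-1}=\frac1{mn-1}(1,\dots,1,0)$ is a vertex of $\mathcal P_{m,n}\cap\Delta^\downarrow_{mn-1}$, and gives
\[
\min_{\mathcal P_{m,n}}S\le S(\mathbf u_{mn-1})=\log_2(mn-1).
\]
It then suffices to prove $\min_{\mathrm{BALL}_{m,n}}S\ge\log_2(mn-1)$; in fact equality holds, since $\|\mathbf u_{mn-1}\|_2^2=\frac1{mn-1}$ puts $\mathbf u_{mn-1}$ in the ball.

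For the lower bound on the ball I use the Jensen step already used in the proof of Theorem~\ref{thm:polytope_vonNeumann}. That step gives $S(\lambda)\ge -\log_2\sum_i\lambda_i^2$, which is the Rényi-2 (collision) entropy. On $\mathrm{BALL}_{m,n}$ we have $\sum_i\lambda_i^2\le \frac1{mn-1}$, so $S(\lambda)\ge\log_2(mn-1)$ on the whole ball. Chaining the two bounds gives
\[
\min_{\mathcal P_{m,n}}S\le\log_2(mn-1)=\min_{\mathrm{BALL}_{m,n}}S.
\]

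The main point needing care is the vertex claim: that $\mathbf u_{mn-1}$ really lies in $\mathcal P_{m,n}$ for all $2\le m\le n$, that is, $h(\mathbf u_{mn-1})=0\ge 0$. I will verify this directly from Eq.~\eqref{eq:sufficient_appt}. The right-hand side is $\frac{m-1}{mn-1}$. On the left-hand side, $\lambda_{mn}=0$ and the $m-1$ entries $\lambda_{mn-k}$ each equal $\frac1{mn-1}$, so the left-hand side is also $\frac{m-1}{mn-1}$, and equality holds. This avoids relying on the full case analysis of the entropy theorem. The remaining steps are routine.
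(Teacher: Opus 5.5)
Your proof is correct and is essentially the argument the paper leaves implicit (the proposition is stated there without a separate proof). The first inequality comes from $\mathcal P_{m,n}\subseteq\mathrm{APPT}_{m,n}$, i.e.\ Gershgorin applied to the defining inequalities of $\mathcal P_{m,n}$ together with Theorem~\ref{thm:APPT-full}; the second comes from comparing values rather than sets, since $\mathbf u_{mn-1}\in\mathcal P_{m,n}$ has entropy $\log_2(mn-1)$ (cf.\ Theorem~\ref{thm:polytope_vonNeumann}), while the Jensen/collision-entropy step from that theorem's proof gives $\min_{\mathrm{BALL}_{m,n}}S=\log_2(mn-1)$, and your direct check that $h(\mathbf u_{mn-1})=0$ usefully avoids relying on the full vertex enumeration.
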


 \begin{figure}[ht]
    \centering
    \includegraphics[width=0.8\linewidth]{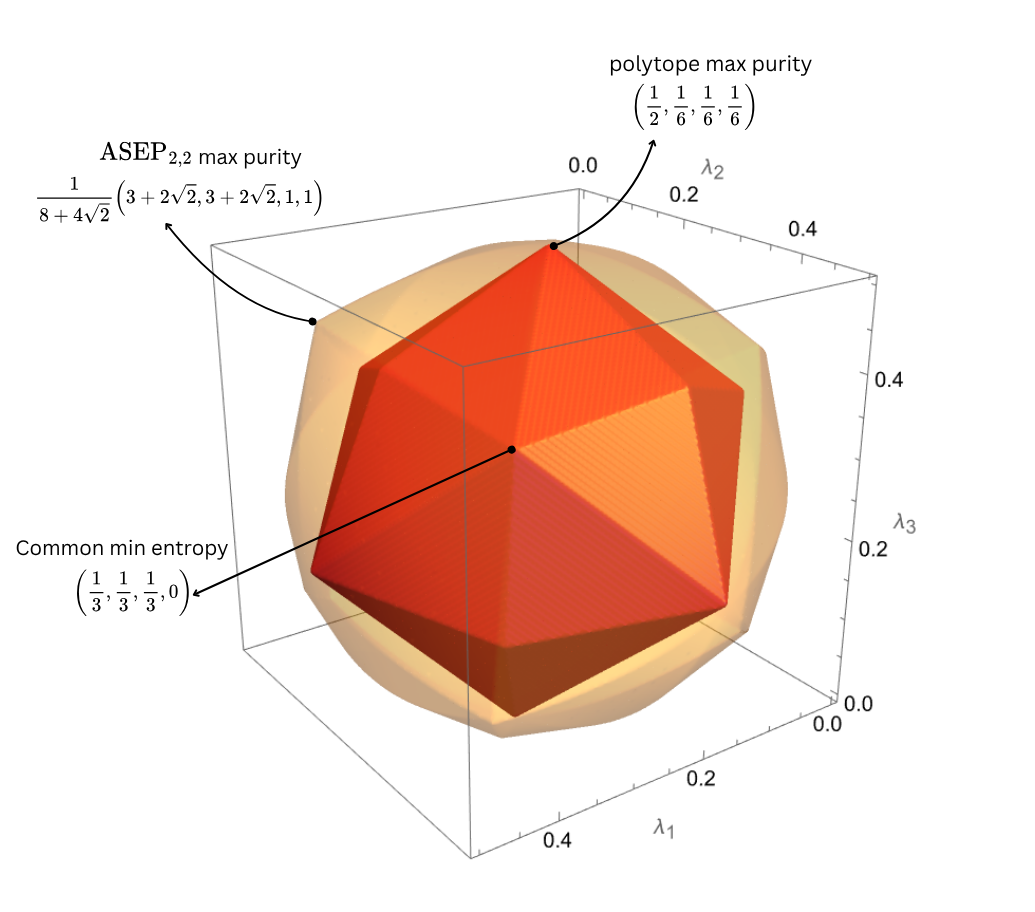}
    \caption{The polytope $\mathcal{P}_{2,2}$ inscribed in $\text{ASEP}_{2,2}$ with their respective purity maximizers and entropy minimizers.}
    \label{fig:placeholder}
\end{figure}

To properly distinguish the entropy of the polytope with $\mathrm{APPT}_{m,n}$, we consider $m=2,3,4$ and numerically approximate the minimal von Neumann entropy of  $\mathrm{APPT}_{m,n}$  for $n\leq 20$ as seen in Figure \ref{fig:min-entropy-phase_transitions}. In general, there exists several points of transition of the minimal entropy between the continuous branch functions, $S_1 = \log_2 (mn-1), \; S_2= \log_2(mn+2) - \frac{3}{mn+2}\log_2 3$ and the discrete value function $E(t)$ defined in Eq.~\eqref{eq:min_entropy_values}. These natural transition values are obtained by equating the continuous functions and finding the smallest subsystem of dimension $n$ at which the minimum occurs. However, in Figure \ref{fig:min-entropy-phase_transitions}, we only indicate the final discrete transition value $n^*_S$ at which the eventual minimizing branch becomes dominant as $n$ increases. 

\begin{figure}[ht]
    \centering
    \includegraphics[width=1\linewidth]{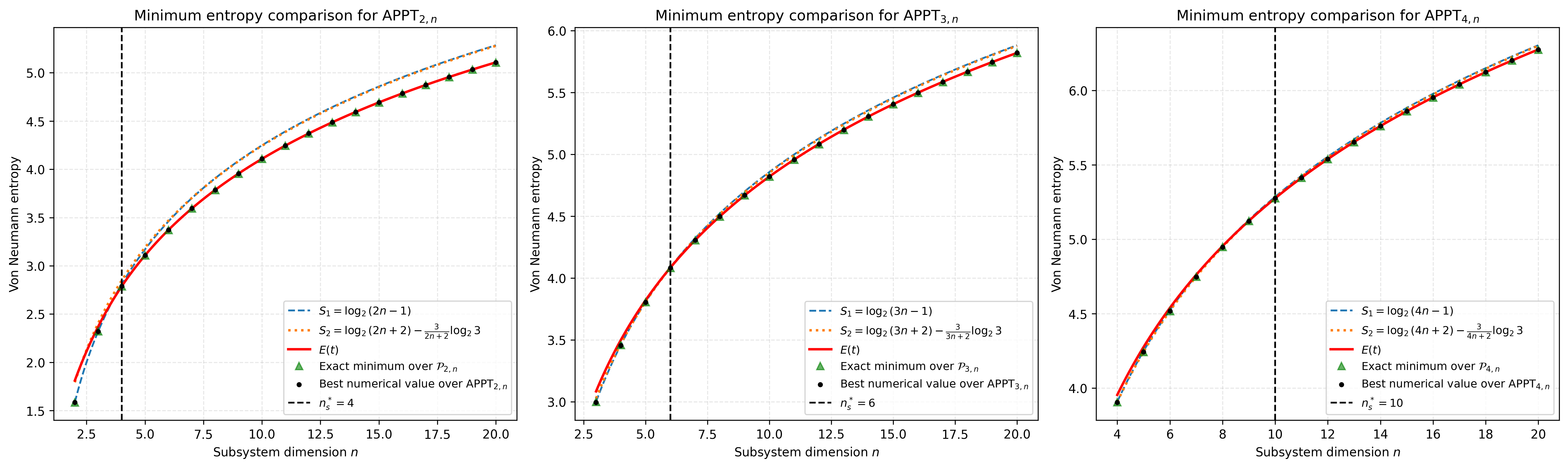}
    \caption{The minimal von Neumann entropy of $\mathrm{APPT}_{m,n}$ and $\mathcal{P}_{m,n}$ where $m=2,3,4$. For each $m$, the minimum of $\mathrm{APPT}_{m,n}$ eventually coincides with the polytope after the discrete transition value $n^*_S$.} 
    \label{fig:min-entropy-phase_transitions}
\end{figure}

\noindent In spite of this, we find that the best numerical  minimal entropy of $\mathrm{APPT}_{m,n}$ may be strictly less than that of the polytope for some dimensions. To see this further, define the entropy gap $\Delta S_{\min} \coloneq \min_{\lambda\in \mathcal{P}_{m,n}} S(\lambda)-   \min_{\lambda\in\mathrm{APPT}_{m,n}}S(\lambda) $ so that up to numerical error, $\Delta S_{\min} \geq 0 $ since $\mathcal{P}_{m,n}\subseteq \mathrm{APPT}_{m,n}$. If $\Delta S_{\min} =0$, then the entropies of the two sets coincide and if $\Delta S_{\min} >0$, the entropy of $\mathrm{APPT}_{m,n}$ is strictly less than $\mathcal{P}_{m,n}$. This is demonstrated in Figure \ref{fig:entropy-gap}. For example, notice that for $m=2$, the numerical gap vanishes for the tested $n\leq 20$ except $n=3$. For $m=4$, $\Delta S_{\min} >0$ for certan low-dimensional $n$ and collapses to zero after the transition value $n^*_S =10.$ This suggests that although the minimal von Neumann entropy of $\mathrm{APPT}_{m,n}$ may be lower than the polytope in some lower dimensional subsystems of $n$, it eventually coincides with that of the polytope after the final discrete transition value $n^*_S$. 

\begin{figure}[ht]
    \centering
    \includegraphics[width=1\linewidth]{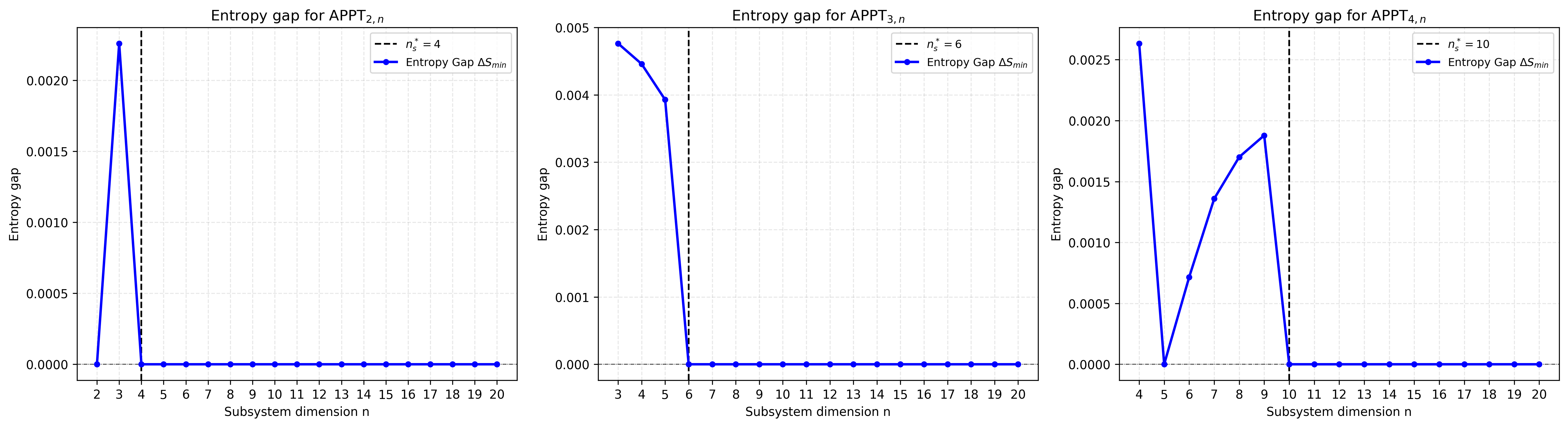}
    \caption{The minimal von Neumann entropy gap of $\mathrm{APPT}_{m,n}$ and $\mathcal{P}_{m,n}$.} 
    \label{fig:entropy-gap}
\end{figure}

\noindent Geometrically, the entropy gap measures the extent to which the curved boundary of $\mathrm{APPT}_{m,n}$ provides entropy-minimizing spectra beyond the polytope. For low dimensions where $n<n^*_S,$ the numerical minimizers sometimes lie on this curved boundary outside the polytope producing a positive gap. And for $n \geq n^*_S,$ the observed minimizers return to the vertices of the polytope, and consequently the gap vanishes in the tested range. 



\subsection{Volume analysis}
In this section, we investigate the spectral relative volume of the absolute PPT set in relation to the separable ball $\mathrm{APPT}_{m,n}$ and the inscribed polytope $\mathcal{P}_{m,n}$.

\begin{theorem}[{Lasserre~\cite{Lasserre2015}}] 
\label{thm:Lasserre}
Let $\Delta_{N-1} \subset \mathbb{R}^N$ be the probability simplex. Let $\mathbf{a} = (a_1, \dots, a_N) \in \mathbf{S}^{N-1}$, the unit sphere, $a_0 \coloneq 0$, and assume that $a_i \neq a_j$ for any pair $(i, j)$ with $i \neq j$. Then the $(N-1)$-dimensional volume of the simplex section $\Theta(\mathbf{a}, t) \coloneq \{\mathbf{x} \in \Delta_{N-1} : \mathbf{a}^T \mathbf{x} \le t\}$ is given by
\begin{equation}
\mathrm{vol}(\Theta(\mathbf{a}, t)) = \frac{\sqrt{N}}{(N-1)!} \sum_{i=1}^N \frac{(t - a_i)^{N-1}_+}{\prod_{j \neq i} (a_j - a_i)}
\end{equation}
where $(z)_+ \coloneq \max\{0, z\}$. 

\end{theorem}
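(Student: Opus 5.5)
The plan is to compute the Laplace transform of $t\mapsto \mathrm{vol}(\Theta(\mathbf{a},t))$ explicitly and then invert it term by term. First, fix a parametrization. The projection $\mathbf{x}\mapsto(x_1,\ldots,x_{N-1})$ maps $\Delta_{N-1}$ bijectively onto the full-dimensional simplex
\[
\Sigma_{N-1}=\Big\{\mathbf{y}\in\RR^{N-1}_{\ge0} : \textstyle\sum_i y_i\le1\Big\}.
\]
Its inverse is affine, and it rescales $(N-1)$-dimensional volume by the constant factor $\sqrt{N}$, since the Gram determinant of the inverse map is $N$. This factor is where the prefactor $\sqrt{N}$ comes from. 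From now on every volume is an ordinary Lebesgue integral over $\Sigma_{N-1}$, with the linear functional $\mathbf{a}^T\mathbf{x}=a_N+\sum_{i<N}(a_i-a_N)y_i$.

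The key identity is the Hermite--Genocchi formula. For pairwise distinct $c_1,\ldots,c_N\in\CC$,
\begin{equation*}
\int_{\Sigma_{N-1}} e^{\,c_N+\sum_{i<N}(c_i-c_N)y_i}\,d\mathbf{y}=\sum_{i=1}^N\frac{e^{c_i}}{\prod_{j\neq i}(c_i-c_j)}.
\end{equation*}
I will prove it by induction on $N$. The case $N=1$ is trivial. For the induction step, integrate out one coordinate, say $y_{N-1}$, over $[0,1-\sum_{i<N-1}y_i]$. This produces a difference of two exponentials of the same form in one fewer variable, divided by $c_{N-1}-c_N$. Applying the induction hypothesis to each piece and recombining the partial fractions gives the claim. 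This step is the main obstacle, because the partial-fraction bookkeeping must be done carefully. If the direct induction becomes messy, my fallback is to recognize the right-hand side as the divided difference $[c_1,\ldots,c_N]\exp$ and use its standard recursion.

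Next, apply the identity with $c_i=-z a_i$ for real $z>0$; these are distinct because the $a_i$ are. Write $g(t)=\mathrm{vol}(\Theta(\mathbf{a},t))$. This function vanishes for $t<\min_i a_i$ and is nondecreasing, and $g'(t)\,dt$ is $\sqrt{N}$ times the push-forward of Lebesgue measure on $\Sigma_{N-1}$ under $\mathbf{a}^T\mathbf{x}$. Integrating by parts therefore gives
\begin{equation*}
\int_{\RR} e^{-zt}g(t)\,dt=\frac{1}{z}\int_{\RR}e^{-zt}\,dg(t)=\frac{\sqrt{N}}{z}\sum_{i=1}^N\frac{e^{-za_i}}{\prod_{j\ne i}(-z)(a_i-a_j)}=\sqrt{N}\sum_{i=1}^N\frac{e^{-za_i}}{z^{N}\prod_{j\neq i}(a_j-a_i)}.
\end{equation*}
The integrals are two-sided, but they converge because $g$ vanishes for $t<\min_i a_i$ and is bounded. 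Equivalently, one can shift all $a_i$ and $t$ by a common constant so that everything is positive; both sides of the target formula are invariant under such a shift. Finally, for each $i$ the function $(t-a_i)_+^{N-1}/(N-1)!$ has Laplace transform exactly $e^{-za_i}/z^N$. Uniqueness of the Laplace transform for continuous functions of exponential order then identifies $g(t)$ with $\frac{\sqrt{N}}{(N-1)!}\sum_i (t-a_i)_+^{N-1}/\prod_{j\ne i}(a_j-a_i)$, which is the claimed formula.

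The normalization $\mathbf{a}\in\mathbf{S}^{N-1}$ and the convention $a_0=0$ play no role in the argument; only pairwise distinctness of the $a_i$ is used, to justify the partial fractions.
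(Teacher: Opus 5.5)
Your argument is correct and self-contained; the paper, by contrast, gives no proof of this statement. It imports the result from \cite{Lasserre2015} and uses it as a black box, so the comparison is between that citation and your proof.

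Your steps all check out:
\begin{itemize}
\item The chart $\mathbf{x}\mapsto(x_1,\ldots,x_{N-1})$ produces the prefactor $\sqrt{N}$, because $\det(I_{N-1}+\mathbf{1}\mathbf{1}^T)=N$.
\item After the induction hypothesis, integrating out $y_{N-1}$ gives exactly $\bigl([c_1,\ldots,c_{N-1}]\exp-[c_1,\ldots,c_{N-2},c_N]\exp\bigr)/(c_{N-1}-c_N)$. So the step you flag as the main obstacle is just the divided-difference recursion.
\item Fubini, or equivalently your integration by parts, gives $\int_{\RR}e^{-zt}g(t)\,dt=\sqrt{N}\sum_i e^{-za_i}/\bigl(z^N\prod_{j\neq i}(a_j-a_i)\bigr)$.
\item Lerch's uniqueness theorem applies, since both sides are continuous, of polynomial growth, and vanish for $t<\min_i a_i$.
\end{itemize}
This Laplace-transform-and-invert scheme is the standard derivation, and in substance the technique of the cited source.

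What your version buys over the bare citation:
\begin{itemize}
\item As you note yourself, the hypotheses $\mathbf{a}\in\mathbf{S}^{N-1}$ and $a_0=0$ in the paper's restatement are vestigial. They are leftovers from Lasserre's formulation on the full-dimensional simplex $\{\mathbf{y}\geq 0:\sum_i y_i\leq 1\}$, whose vertex $\mathbf{0}$ carries the node $a_0$. Only pairwise distinctness is actually used.
\item It makes the factor $\sqrt{N}$ transparent.
\item It handles coinciding coefficients in one case. The left side is continuous in $\mathbf{a}$ for non-constant $\mathbf{a}$. The right side extends continuously to any $\mathbf{a}$ whose coinciding entries all exceed $t$, because those terms vanish identically nearby. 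So your result immediately gives the formula in that case, with the vanishing terms dropped.
\end{itemize}
That last case is exactly how the paper applies the formula in Proposition~\ref{prop:poly_volume}: there $t=0$ and all repeated $a_i=-h(\mathbf{u}_i)$ are positive. The paper appeals instead to Lasserre's separate treatment of identical weights.
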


 As a consequence of the above theorem, we can define the $(N-1)$-dimensional relative volume of the simplex section to be $$\mathrm{vol}_{\mathrm{rel}}(\Theta(\mathbf{a}, t))\coloneq\frac{\mathrm{vol}(\Theta(\mathbf{a}, t))  }{\mathrm{vol}(\Delta_{N-1})   }.
$$ Thus, since $\mathrm{vol}(\Delta_{N-1}) =\frac{\sqrt{N}}{(N-1)!}  \ $, it follows that  whenever $a_i \neq a_j$,
With this, we find the relative volume of the polytope to be given as follows:
$$\mathrm{vol}_{\mathrm{rel}}(\Theta(\mathbf{a}, t))= \sum_{i=1}^N \frac{(t - a_i)^{N-1}_+}{\prod_{j \neq i} (a_j - a_i)}$$ for any such simplex section.

\begin{proposition}[Volume of the inscribed polytope]
\label{prop:poly_volume}
Let $\mathcal{P}_{m,n} \subset \Delta_{mn-1}$ be the inscribed absolute PPT polytope with $2\leq m\leq n$.  Then the relative volume of $\mathcal{P}_{m,n}$  is given by

\begin{equation}
\mathrm{vol}_{\mathrm{rel}}(\mathcal{P}_{m,n}) =\left(\frac{2}{mn+2}\right)^{m-1} \left(\prod_{i=m}^{mn-m}\frac{2i}{mn(m-1)+2i}\right) \left( \prod_{j=1}^{m-2}\frac{2(mn-j-1)}{mnj + 2(mn-j-1)}\right).
\end{equation}
Equivalently, this expression simplifies to 
\begin{equation}
\mathrm{vol}_{\mathrm{rel}}(\mathcal{P}_{m,n}) =\frac{4^{m-1} \Gamma(mn)}{ (m^2n^2 - 4)^{m-1} \Gamma(m)} \cdot \frac{\Gamma\left(\frac{mn(m-1) + 2m}{2}\right)}{\Gamma\left(\frac{mn(m+1) - 2m + 2}{2}\right)} \cdot \frac{\Gamma\left(\frac{2mn-2}{mn-2}\right)}{\Gamma\left(m + \frac{mn}{mn-2}\right)}
\end{equation}
where $\Gamma(\cdot)$ denotes the Gamma function satisfying $\Gamma(z)=(z-1)!$ for $z\in \mathbb{Z}_{\geq 0}$.
\end{proposition}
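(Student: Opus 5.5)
The plan is to reduce everything to the ordered chamber and to reuse the barycentric parametrisation from the proof of Theorem~\ref{thm:lower_purity_bound}.

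\emph{Reduction to one chamber.} Both $\mathcal{P}_{m,n}$ and $\Delta_{mn-1}$ are invariant under $S_{mn}$. The chambers $U_\pi(\Delta^\downarrow_{mn-1})$ for $\pi\in S_{mn}$ cover the simplex and overlap only on measure-zero sets. Each chamber therefore has relative volume $1/(mn)!$ and contains an isometric copy of $\mathcal{P}_{m,n}\cap\Delta^\downarrow_{mn-1}$, so
\[
\mathrm{vol}_{\mathrm{rel}}(\mathcal{P}_{m,n}) = \frac{\mathrm{vol}(\mathcal{P}_{m,n}\cap\Delta^\downarrow_{mn-1})}{\mathrm{vol}(\Delta^\downarrow_{mn-1})}.
\]
Next I use the map $f(\mathbf{x})=\sum_i x_i\mathbf{u}_i$, an affine bijection from $\Delta_{mn-1}$ onto $\Delta^\downarrow_{mn-1}$. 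Affine maps preserve volume ratios, so by Eq.~\eqref{eq:polytope_new} the quantity above equals the relative volume in $\Delta_{mn-1}$ of the simplex section $\{\mathbf{x}\in\Delta_{mn-1}:\sum_i h(\mathbf{u}_i)x_i\ge 0\}$. Lasserre's formula (Theorem~\ref{thm:Lasserre}) cannot be applied directly, because the coefficients $h(\mathbf{u}_i)$ are not pairwise distinct (many equal $-1$). A direct geometric argument avoids this.

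\emph{The section is a simplex.} From the table of $h(\mathbf{u}_i)$ in the proof of Theorem~\ref{thm:lower_purity_bound}, the sign pattern is:
\begin{itemize}
\item $h(\mathbf{u}_{mn})>0$, so $\mathbf{u}_{mn}$ is the only vertex strictly inside the halfspace;
\item $h(\mathbf{u}_{mn-1})=0$;
\item $h(\mathbf{u}_i)<0$ for all $i\le mn-2$.
\end{itemize}
Hence the region $\{h\ge0\}\cap\Delta^\downarrow_{mn-1}$ is the simplex
\[
\mathrm{conv}\{\mathbf{u}_{mn},\mathbf{u}_{mn-1},\mathbf{w}_1,\dots,\mathbf{w}_{mn-2}\},
\qquad \mathbf{w}_i=\alpha_i\mathbf{u}_i+(1-\alpha_i)\mathbf{u}_{mn}.
\]
To justify this, note that the halfspace cuts every edge $[\mathbf{u}_i,\mathbf{u}_{mn}]$ at $\mathbf{w}_i$ and contains no other vertex. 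Any point of the region can then be written as a convex combination of these $mn$ points: in barycentric coordinates the cut is linear along rays emanating from $\mathbf{u}_{mn}$.

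Since this simplex shares the apex $\mathbf{u}_{mn}$ with $\Delta^\downarrow_{mn-1}$ and each edge from the apex is scaled by $\alpha_i$ (with $\alpha_{mn-1}=1$), the determinant formula for simplex volume gives
\[
\mathrm{vol}_{\mathrm{rel}}(\mathcal{P}_{m,n})=\prod_{i=1}^{mn-2}\alpha_i,
\qquad \alpha_i=\frac{2}{2-mn\,h(\mathbf{u}_i)}.
\]

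\emph{Evaluating the product.} I split into the three index ranges.
\begin{itemize}
\item For $1\le i\le m-1$ we have $h(\mathbf{u}_i)=-1$, so each factor is $\frac{2}{mn+2}$. This gives $\left(\frac{2}{mn+2}\right)^{m-1}$.
\item For $m\le i\le m(n-1)$ we have $h(\mathbf{u}_i)=-(m-1)/i$, so $\alpha_i=\frac{2i}{2i+mn(m-1)}$.
\item For $m(n-1)+1\le i\le mn-2$, reindex by $j=mn-1-i$, so $1\le j\le m-2$. Then $h(\mathbf{u}_i)=-\frac{j}{mn-1-j}$, and $\alpha_i=\frac{2(mn-j-1)}{mnj+2(mn-j-1)}$.
\end{itemize}
Multiplying the three ranges gives the stated product formula.

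\emph{Gamma-function form.} This is routine. The middle product is $\prod_i \frac{i}{i+c}$ with $c=mn(m-1)/2$, which equals a ratio of Gamma functions at the endpoints. The last product can be rewritten with a shifted variable, since $\frac{2(mn-1-j)}{mnj+2(mn-1-j)}=\frac{mn-1-j}{(mn-2)\,(j+\tfrac{2mn-2}{mn-2}-1)}\cdot\frac{2}{1}$ up to rearrangement. Simplifying these gives the stated expression.

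\emph{Main obstacle.} I expect the main difficulty to be bookkeeping rather than conceptual: matching the upper limits of the index ranges (the statement writes $mn-m$ where the range is $m(n-1)$, and these agree) and checking the Gamma identities, especially the last factor. The simplex-structure step is where rigour is needed, and I would justify it with the ray argument above.
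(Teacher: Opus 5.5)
Your argument is correct and reaches the same product $\prod_{i=1}^{mn-2}\alpha_i$, with $\alpha_i=\frac{2}{2-mn\,h(\mathbf{u}_i)}$, that the paper obtains. The key volume step, however, is done differently.

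\textbf{How the two routes differ.} The paper uses the same chamber reduction and the same pullback through $f$. It then evaluates the section $\{\mathbf{x}\in\Delta_{mn-1}:\sum_i h(\mathbf{u}_i)x_i\ge 0\}$ with Lasserre's closed formula, appealing to its extension to repeated weights. It notes that only the node $h(\mathbf{u}_{mn})=\frac{2}{mn}$ contributes, and that the factor from $h(\mathbf{u}_{mn-1})=0$ cancels one power of $\frac{2}{mn}$. You instead observe that the halfspace $\{h\ge 0\}$ has only $\mathbf{u}_{mn}$ in its interior and $\mathbf{u}_{mn-1}$ on its boundary. So the section is the simplex with apex $\mathbf{u}_{mn}$ whose edges are shrunk by the factors $\alpha_i$, and you read the volume off a determinant.

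\textbf{What each route buys.} Your route is self-contained and avoids the repeated-weights version of Lasserre's formula. That version has a confluent form involving derivatives at the repeated nodes, and the paper only invokes it by citation. The paper's route, in turn, is shorter given the citation and also applies to sections that cut off more than one vertex.

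\textbf{Making the ray argument explicit.} Write $\lambda=f(\mathbf{x})$ and use $\frac{1-\alpha_i}{\alpha_i}=-\frac{h(\mathbf{u}_i)}{h(\mathbf{u}_{mn})}$. Then one checks that
\[
\lambda=\sum_{i=1}^{mn-2}\frac{x_i}{\alpha_i}\,\mathbf{w}_i+x_{mn-1}\,\mathbf{u}_{mn-1}+\frac{h(\lambda)}{h(\mathbf{u}_{mn})}\,\mathbf{u}_{mn},
\]
with coefficients summing to $1$. Hence $\lambda$ is a convex combination of these $mn$ points exactly when $h(\lambda)\ge 0$.

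\textbf{Error in the Gamma-function step.} This is the one place where your sketch is wrong. Your factorisation of the last product is off by one. With $\gamma=\frac{2mn-2}{mn-2}$, one has $mnj+2(mn-1-j)=(mn-2)(j+\gamma)$, not $(mn-2)(j+\gamma-1)$. With your shift, the denominator product becomes $\Gamma(\gamma+m-2)/\Gamma(\gamma)$ instead of $\Gamma(\gamma+m-1)/\Gamma(\gamma+1)$. You would then recover neither the $\Gamma(mn)$ nor the $\Gamma\bigl(m+\frac{mn}{mn-2}\bigr)$ of the stated formula.

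\textbf{The correct computation.} With the correct shift, the last product equals
\[
\Bigl(\frac{2}{mn-2}\Bigr)^{m-2}\frac{\Gamma(mn-1)\,\Gamma(1+\gamma)}{\Gamma(mn-m+1)\,\Gamma(m-1+\gamma)}.
\]
With $\eta=\frac{mn(m-1)}{2}$, the middle product equals
\[
\frac{\Gamma(mn-m+1)\,\Gamma(m+\eta)}{\Gamma(m)\,\Gamma(mn-m+1+\eta)}.
\]
The factor $\Gamma(mn-m+1)$ cancels between the two. Next, $\Gamma(1+\gamma)=\frac{2(mn-1)}{mn-2}\Gamma(\gamma)$ turns $\bigl(\frac{2}{mn-2}\bigr)^{m-2}\Gamma(mn-1)$ into $\bigl(\frac{2}{mn-2}\bigr)^{m-1}\Gamma(mn)$. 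This combines with $\bigl(\frac{2}{mn+2}\bigr)^{m-1}$ into $\frac{4^{m-1}}{(m^2n^2-4)^{m-1}}$. Finally, $m-1+\gamma=m+\frac{mn}{mn-2}$.

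This is exactly the paper's computation, and it confirms the stated closed form. You should carry it out rather than call it routine, because the specific constants in the statement come from it.
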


\begin{proof}
Following the proof of Theorem \ref{thm:lower_purity_bound}, we identified that for the ordered simplex  $\Delta_{mn-1}^\downarrow = \{\lambda\in \Delta_{mn-1}: \lambda_1\geq \cdots \lambda_{mn}\geq 0\}$ and $f(\mathbf{x}) = \sum_{i=1}^{mn}x_i\mathbf{u}_i$ where $\mathbf{x}\in \Delta_{mn-1}$ and
$\mathbf{u}_i \coloneq\dfrac{1}{i} (
    \underbrace{1,\ldots,1}_{i},
    0,\ldots,0 )$, the ordered section of the polytope given by 
    $\mathcal{P}_{m,n}\cap\Delta_{mn-1}^{\downarrow} $  is exactly the set 
$\mathcal{P}_{m,n}\cap\Delta_{mn-1}^{\downarrow}  = f\left( Y\right)$
 where $Y = \left\{\mathbf{x} \in \Delta_{mn-1} :\sum_{i=1}^{mn} h(\mathbf{u}_i) x_i \ge 0\right\}   $ and \begin{equation}
h(\mathbf{u}_i) = 
\begin{cases} 
\frac{2}{mn} & \text{if } i = mn \\
0 & \text{if } i = mn-1\\
\frac{i - mn + 1}{i} & \text{if } m(n-1)+1 \le i \le mn-2 \\
-\frac{m-1}{i} & \text{if } m \le i \le m(n-1) \\
-1 & \text{if } 1 \le i \le m-1.
\end{cases}\end{equation}
Following Theorem \ref{thm:Lasserre}, even though the coefficients $a_i = -h(\mathbf{u}_i)$ are not pairwise distinct in some instances as required, Lassere's identical weights condition applies (see \cite[Sec.~2.1]{Lasserre2015}). Here, notice that all identical coefficients occur at instances where $h(\mathbf{u}_i)< 0$. Since $t=0$, the numerator terms $(t-a_i)_+= (h(\mathbf{u}_i))_+ = 0$ for all $h(\mathbf{u}_i)\leq 0.$ Thus, for each identical-weight corrected term associated with an identical coefficient equals zero.  The unique strictly positive coefficient that contributes to the sum is exactly $h(\mathbf{u}_{mn}) = \frac{2}{mn}.$ Thus, the relative volume of the simplex section $Y$ is given by 

\begin{align*}
\mathrm{vol}_{\mathrm{rel}}(Y) &= \frac{(\frac{2}{mn})^{mn-1}}{\prod_{j=1}^{mn-1} (\frac{2}{mn} - h(\mathbf{u}_j))} 
\end{align*}
 However, since $h(\mathbf{u}_{mn-1}) =0,$ the relative volume reduces to 
 $$ \mathrm{vol}_{\mathrm{rel}}(Y) =\frac{(\frac{2}{mn})^{mn-2}}{\prod_{j=1}^{mn-2} (\frac{2}{mn} - h(\mathbf{u}_j))}. $$
Since the map $f$ is affine bijective from the probability simplex $\Delta_{mn-1}$ to the ordered simplex $\Delta_{mn-1}^\downarrow$, and $f(Y) =\mathcal{P}_{m,n}\cap\Delta_{mn-1}^{\downarrow}   $, the relative volume ratio is preserved such that 
$$\mathrm{vol}_{\mathrm{rel}}(Y) = \frac{\mathrm{vol}(Y)  }{\mathrm{vol}(\Delta_{mn-1} )}= \frac{\mathrm{vol}(\mathcal{P}_{m,n}\cap\Delta_{mn-1}^{\downarrow}   )}{\mathrm{vol}(\Delta_{mn-1}^\downarrow )}.$$
Additionally, since the polytope is permutation invariant and the full probability simplex can be partitioned into $(mn)!$ congruent ordered sections, up to measure zero boundaries, it follows that 
$$ \mathrm{vol}(\Delta_{mn-1}) = (mn)!\cdot \mathrm{vol}(\Delta_{mn-1}^\downarrow) 
\quad \textrm{ and } \quad
\mathrm{vol}(\mathcal{P}_{m,n}) = (mn)!\cdot    \mathrm{vol}(\mathcal{P}_{m,n}\cap\Delta_{mn-1}^{\downarrow}).$$ 
Thus, the relative volume of each ordered section remains equal to the relative volume of the polytope as 
$$\frac{\mathrm{vol}(\mathcal{P}_{m,n}\cap\Delta_{mn-1}^{\downarrow}   )}{\mathrm{vol}(\Delta_{mn-1}^\downarrow) }   = \frac{\mathrm{vol}(\mathcal{P}_{m,n})}{\mathrm{vol}(\Delta_{mn-1})} = \mathrm{vol}_{\mathrm{rel}}(\mathcal{P}_{m,n}). $$
The relative volume of the polytope is therefore given by 
\begin{align*}
    \mathrm{vol}_{\mathrm{rel}}(\mathcal{P}_{m,n}) &=\frac{(\frac{2}{mn})^{mn-2}}{\prod_{j=1}^{mn-2} (\frac{2}{mn} - h(\mathbf{u}_j))}\\
    &= \prod_{j=1}^{mn-2}\frac{\frac{2}{mn}}{ (\frac{2}{mn} - h(\mathbf{u}_j)) }\\
    &=\prod_{j=1}^{m-1}\left(\frac{\frac{2}{mn}}{ (\frac{2}{mn} +1) }\right) \cdot \prod_{j=m}^{mn-m}\left(\frac{\frac{2}{mn}}{ (\frac{2}{mn} + \frac{(m-1)}{j}) }\right)\cdot \prod_{j=mn-m+1}^{mn-2}\left(\frac{\frac{2}{mn}}{ (\frac{2}{mn} + \frac{(mn-j-1)}{j}) }\right)\\
    &= \left(\frac{2}{ mn+2 }\right)^{m-1}\cdot \prod_{j=m}^{mn-m}\left(\frac{2j}{ mn(m-1)+2j }\right)\cdot \prod_{k=1}^{m-2}\left(\frac{2(mn-k-1)}{ (mnk +2(mn-k-1)) }\right)
\end{align*}
where we take $k = mn-j-1$ so that since $mn-m+1\leq j\leq mn-2$, it follows that $m-2\geq k \geq 1 $.

To simplify this expression even further, we write the products in Gamma-function form.  Using the Gamma identity $\Gamma(z+1) = z\Gamma(z)$ where $\Gamma(z) = (z-1)!$ and $z\in \mathbb{Z}^+$, we can iteratively show the identities 
\begin{equation}\label{eq:gamma_idn}
    \prod_{j=A}^{B} (j+S) = \frac{\Gamma(B+1+S)}{\Gamma(A+S)} \quad \textrm{and} \quad \prod_{j=A}^{B}j = \frac{(B)!}{(A-1)!}= \frac{\Gamma(B+1)}{\Gamma(A)}.\end{equation}
With this, let $\eta = \frac{mn(m-1)}{2}$ and $ \gamma = \frac{2mn-2}{mn-2}$. Then the relative volume of the polytope simplifies to 
\begin{align*}
    \mathrm{vol}_{\mathrm{rel}}(\mathcal{P}_{m,n})&= \left(\frac{2}{ mn+2 }\right)^{m-1}\cdot \left(  \frac{\prod_{j=m}^{mn-m}  (j)}{ \prod_{j=m}^{mn-m}  (\eta +j) }\right)\cdot \left( \frac{\prod_{k=1}^{m-2} [  2(mn-k-1)]}{ \prod_{k=1}^{m-2}  ( mn-2)(k+\gamma) }\right)\\
    &= \left(\frac{2}{ mn+2 }\right)^{m-1}\cdot \frac{\Gamma(mn-m+1)\Gamma(m+\eta)}{\Gamma(m)\Gamma(mn-m+1+\eta)}\cdot \frac{(2^{m-2})\prod_{j=mn-m+1}^{mn-2}(j)  }{(mn-2)^{m-2} \prod_{k=1}^{m-2}(k+\gamma) }\\
    &= \left(\frac{2}{ mn+2 }\right)^{m-1}\cdot \frac{\Gamma(mn-m+1)\Gamma(m+\eta)}{\Gamma(m)\Gamma(mn-m+1+\eta)} \cdot \left( \frac{2}{mn-2}\right)^{m-2}\cdot \frac{\Gamma(mn-1)\Gamma(1+\gamma)}{\Gamma(mn-m+1)\Gamma(\gamma +m-1)}.
\end{align*}
By substituting $\Gamma(1+\gamma) =  \gamma\Gamma(\gamma)$, $\eta = \frac{mn(m-1)}{2}$ and $ \gamma = \frac{2mn-2}{mn-2}$, the relative volume equation simplifies to 
\begin{align*}
      \mathrm{vol}_{\mathrm{rel}}(\mathcal{P}_{m,n})&= \frac{4^{m-1}(mn-1)\Gamma(mn-1)}{ (mn+2)^{m-1}  (mn-2)^{m-1} \Gamma(m) }\cdot \frac{\Gamma\left(\frac{mn(m-1) + 2m}{2}\right)}{\Gamma\left(\frac{mn(m+1) - 2m + 2}{2}\right)} \cdot \frac{\Gamma\left(\frac{2mn-2}{mn-2}\right)}{\Gamma\left(m + \frac{mn}{mn-2}\right)} \\
      &= \frac{4^{m-1} \Gamma(mn)}{ ((mn)^2 - 4)^{m-1} \Gamma(m)} \cdot \frac{\Gamma\left(\frac{mn(m-1) + 2m}{2}\right)}{\Gamma\left(\frac{mn(m+1) - 2m + 2}{2}\right)} \cdot \frac{\Gamma\left(\frac{2mn-2}{mn-2}\right)}{\Gamma\left(m + \frac{mn}{mn-2}\right)},
\end{align*}
as desired. 

    \end{proof}

\begin{proposition}[{Volume of separable ball, \.Zyczkowski~\emph{et~al.}~\cite{zyczkowski1998volume}}]
\label{prop:ball_volume}
Let $\mathrm{BALL}_{m,n} \subset \Delta_{mn-1}$ be the separable ball with $2\leq m\leq n$.  Then the relative volume of $\mathrm{BALL}_{m,n}$  is 
\begin{equation}
\mathrm{vol}_{\mathrm{rel}}(\mathrm{BALL}_{m,n}) = \frac{\pi^{(mn-1)/2}  (mn-1)!}{\Gamma(\frac{mn-1}{2} + 1) (mn)^{mn/2}  (mn-1)^{(mn-1)/2}}.
\end{equation}

\end{proposition}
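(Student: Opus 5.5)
The plan is to recognize $\mathrm{BALL}_{m,n}$ as the intersection of the hyperplane $\aff(\Delta_{mn-1})$ with a Euclidean ball centred at $\mathbf{u}_{mn}$, and then divide its $(mn-1)$-dimensional volume by $\mathrm{vol}(\Delta_{mn-1}) = \sqrt{mn}/(mn-1)!$. Write $N=mn$.

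By the identity \eqref{eq:ball_eqn}, the purity constraint $\sum_i\lambda_i^2\le \frac{1}{N-1}$ is equivalent to
\[
\|\lambda-\mathbf{u}_N\|_2^2\le \frac{1}{N-1}-\frac{1}{N}=\frac{1}{N(N-1)}.
\]
So, within $\aff(\Delta_{N-1})$, the set is the $(N-1)$-dimensional ball of radius $r=(N(N-1))^{-1/2}$ centred at $\mathbf{u}_N$, intersected with $\Delta_{N-1}$.

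The key step is to show that this ball already lies inside the simplex, so that the intersection with $\Delta_{N-1}$ is vacuous. The inradius of $\Delta_{N-1}$ inside its affine hull is the distance from $\mathbf{u}_N$ to a facet $\{\lambda_i=0\}$. This equals the distance from $\mathbf{u}_N$ to the barycentre of the facet, $\frac{1}{N-1}(1,\ldots,1,0)$, because that segment is orthogonal to the facet within the hyperplane. The computation gives
\[
\Big(\frac{N-1}{N^2(N-1)^2}+\frac{1}{N^2}\Big)^{1/2}=(N(N-1))^{-1/2}=r,
\]
so the ball is exactly the inscribed ball and touches each facet. This is the classical observation of Gurvits--Barnum / \.Zyczkowski et al.

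The volume is then that of an $(N-1)$-ball,
\[
\frac{\pi^{(N-1)/2}}{\Gamma(\frac{N-1}{2}+1)}\,r^{N-1}=\frac{\pi^{(N-1)/2}}{\Gamma(\frac{N+1}{2})\,N^{(N-1)/2}(N-1)^{(N-1)/2}}.
\]
Dividing by $\sqrt N/(N-1)!$ turns $N^{(N-1)/2}\sqrt N$ into $N^{N/2}$, which gives exactly the stated formula.

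The only delicate point is the containment $\mathrm{BALL}_{m,n}\subseteq\Delta_{mn-1}$, i.e. that the nonnegativity constraints are never active. An equivalent direct check is Cauchy--Schwarz: if some $\lambda_i<0$, then the remaining $N-1$ entries sum to more than $1$, so $\sum_j\lambda_j^2>\frac{1}{N-1}$, contradicting the purity constraint. With this, the remaining steps are routine algebra.
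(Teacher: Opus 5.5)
Your proposal is correct and follows the paper's proof: you rewrite the purity bound via Eq.~\eqref{eq:ball_eqn} as a Euclidean ball of radius $(mn(mn-1))^{-1/2}$ about $\mathbf{u}_{mn}$ in $\aff(\Delta_{mn-1})$, then divide the $(mn-1)$-ball volume by $\mathrm{vol}(\Delta_{mn-1})=\sqrt{mn}/(mn-1)!$. Your explicit check that this ball is exactly the inscribed ball of the simplex, so the nonnegativity constraints are inactive, is a step the paper leaves implicit, and it is a worthwhile addition.
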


\begin{proof}Let $\mathbf{u}= (\frac{1}{mn}, \ldots, \frac{1}{mn})$ be the spectrum of the maximally mixed state. As seen from
Eq.~\eqref{eq:ball_eqn}, for every $\lambda\in \Delta_{mn-1}$, the purity of a state can be written in terms of the Euclidean distance as 
$$ \sum_{i=1}^{mn} \lambda_i^2 = \frac{1}{mn} + \|\lambda-\mathbf{u}_{mn} \|_2^2.$$ This implies that by definition, for every $\lambda\in \mathrm{BALL}_{m,n}$, 
\begin{align*}
     \|\lambda-\mathbf{u}_{mn} \|_2^2  \leq \frac{1}{mn(mn-1)}.
\end{align*}
Thus, $\mathrm{BALL}_{m,n}$ is the $(mn-1)$-Euclidean ball centered at $\mathbf{u}$ with radius $ \frac{1}{\sqrt{mn(mn-1)}}$. Therefore, the Euclidean volume of this separable ball is \begin{equation}
       \mathrm{vol}(\mathrm{BALL}_{m,n}) = \frac{\pi^{(mn-1)/2}   }{\Gamma(\frac{mn-1}{2}+1)} \left( \frac{1}{\sqrt{mn(mn-1)}}\right)^{mn-1}. 
\end{equation}
Dividing by the Euclidean volume of the  probability simplex $\text{vol}(\Delta_{mn-1})= \frac{\sqrt{mn}}{(mn-1)!}$, the relative volume of the separable ball is therefore given by
\begin{equation}
       \mathrm{vol}_{\mathrm{rel}}(\mathrm{BALL}_{m,n})   = \frac{\mathrm{vol}(\mathrm{BALL}_{m,n})}{\mathrm{vol}(\Delta_{mn-1})} = \frac{\pi^{(mn-1)/2}  (mn-1)! }{\Gamma(\frac{mn-1}{2} + 1)  (mn)^{mn/2}  (mn-1)^{(mn-1)/2}},
\end{equation}  
which concludes the proof.
\end{proof}

\begin{theorem}
Let $\mathrm{vol}_{\mathrm{rel}}(\mathrm{APPT}_{m,n})$ denote the relative volume of $\mathrm{APPT}_{m,n}$. Then 
\begin{equation}
\mathrm{vol}_{\mathrm{rel}}(\mathrm{APPT}_{m,n}) \ge \max \{ \mathrm{vol}_{\mathrm{rel}}(\mathrm{BALL}_{m,n}), \; \mathrm{vol}_{\mathrm{rel}}(\mathcal{P}_{m,n})  \}.
\end{equation}
\end{theorem}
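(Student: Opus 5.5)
The plan is to use monotonicity of Lebesgue measure under inclusion. All three sets are measurable subsets of the affine hull $\aff(\Delta_{mn-1})$, and their relative volumes are computed against the same normalising constant $\mathrm{vol}(\Delta_{mn-1})$. So it suffices to show that both $\mathrm{BALL}_{m,n}$ and $\mathcal{P}_{m,n}$ are contained in $\mathrm{APPT}_{m,n}$. Measurability is automatic: $\mathrm{APPT}_{m,n}$ is a compact spectrahedron by Eq.~\eqref{eq:APPT_spectrahedron}, $\mathrm{BALL}_{m,n}$ is a closed Euclidean ball intersected with the simplex, and $\mathcal{P}_{m,n}$ is a polytope. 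All three are compact convex sets, hence Borel.

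The first inclusion, $\mathrm{BALL}_{m,n}\subseteq\mathrm{APPT}_{m,n}$, I will take from the literature already quoted in the paper. Gurvits--Barnum \cite{gurvits2002largest} show that every state with purity at most $\frac{1}{mn-1}$ is separable. This condition is unitarily invariant, so such states are absolutely separable, which gives $\mathrm{BALL}_{m,n}\subseteq\mathrm{ASEP}_{m,n}$. Since separable states are PPT, we get $\mathrm{ASEP}_{m,n}\subseteq\mathrm{APPT}_{m,n}$.

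The second inclusion, $\mathcal{P}_{m,n}\subseteq\mathrm{APPT}_{m,n}$, follows from the definition \eqref{eq:polytope_inequalities} together with Theorem~\ref{thm:APPT-full}. For $\lambda\in\mathcal{P}_{m,n}$ and any $\pi\in\widetilde{S}$, each row of $L_\pi(\lambda)$ has nonnegative diagonal $2\lambda_{\pi(p(i,i))}$ that dominates the sum of the absolute values of its off-diagonal entries. By the Gershgorin circle theorem every eigenvalue of the real symmetric matrix $L_\pi(\lambda)$ is then nonnegative, so $L_\pi(\lambda)\ge 0$. The permutations outside $\widetilde{S}$ give matrices identical to those for permutations in $\widetilde{S}$, so $L_\pi(\lambda)\ge0$ for all $\pi\in S_{mn}$, and Theorem~\ref{thm:APPT-full} gives $\lambda\in\mathrm{APPT}_{m,n}$. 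This is the same argument as in Theorem~\ref{thm:sufficient_condition}, but run directly from the permutation-wise definition rather than the ordered reduction.

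Finally, I apply monotonicity of the $(mn-1)$-dimensional Hausdorff (equivalently Lebesgue) measure on $\aff(\Delta_{mn-1})$. This gives $\mathrm{vol}(\mathrm{APPT}_{m,n})\ge\mathrm{vol}(\mathrm{BALL}_{m,n})$ and $\mathrm{vol}(\mathrm{APPT}_{m,n})\ge\mathrm{vol}(\mathcal{P}_{m,n})$. Dividing by $\mathrm{vol}(\Delta_{mn-1})$ and taking the maximum yields the claim. The explicit values of the two lower bounds are available from Propositions~\ref{prop:poly_volume} and \ref{prop:ball_volume} if one wants to state which of them dominates. The only genuinely non-elementary step is the ball inclusion, which rests on the cited separable-ball theorem rather than on anything proved here. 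One could also derive it directly: show that purity at most $\frac{1}{mn-1}$ forces $L_\pi(\lambda)\ge0$. However, the citation suffices, and I would not reprove it.
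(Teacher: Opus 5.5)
Your proof is correct and follows the paper's route. The paper's argument is the one-line observation that $\mathrm{BALL}_{m,n}$ (via Gurvits--Barnum and $\mathrm{ASEP}_{m,n}\subseteq\mathrm{APPT}_{m,n}$) and $\mathcal{P}_{m,n}$ (via the Gershgorin diagonal-dominance argument) are both subsets of $\mathrm{APPT}_{m,n}$, so monotonicity of volume gives the bound; you spell out these inclusions and the measurability in more detail than the paper does.
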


This follows directly as $\mathrm{BALL}_{m,n}$ and $\mathcal{P}_{m,n}$ are subsets of $\mathrm{APPT}_{m,n}$.  We implement a spectral Monte Carlo method to numerically approximate the relative volume of $\mathrm{APPT}_{m,n}$ following ideas of similar type discussed by Fok and Crevier \cite{fok1989volume} by adapting to the probability simplex. We observe in Figure \ref{fig:relative_volume} that the relative volume of $\mathrm{APPT}_{m,n}$ for $m=2,3,4$ starts of within close range to the relative volume of the separable ball. As $n$ increases,  the volume follows along a similar decay rate as the inscribed polytope. Indeed, as $mn\to \infty,  $ $\mathrm{vol}({\mathrm{BALL}_{m,n}})$ decays at a super-exponentially rate $O(mn^{-mn})$, while $\mathrm{vol}({\mathcal{P}_{m,n}})$ decays at a strictly exponential rate $O(e^{-\gamma mn})$. Thus, in the limit of high dimensions, eventually, $e^{-\gamma mn} \gg mn^{-mn}$. As such, there must exist some local dimension $n^*$ such that $\mathrm{vol}_{\mathrm{rel}}(\mathcal{P}_{m,n}) \geq \mathrm{vol}_{\mathrm{rel}}({\mathrm{BALL}_{m,n}})$
for all $n > n^*$.

\begin{figure}[ht]
    \centering
    \includegraphics[width=1\linewidth]{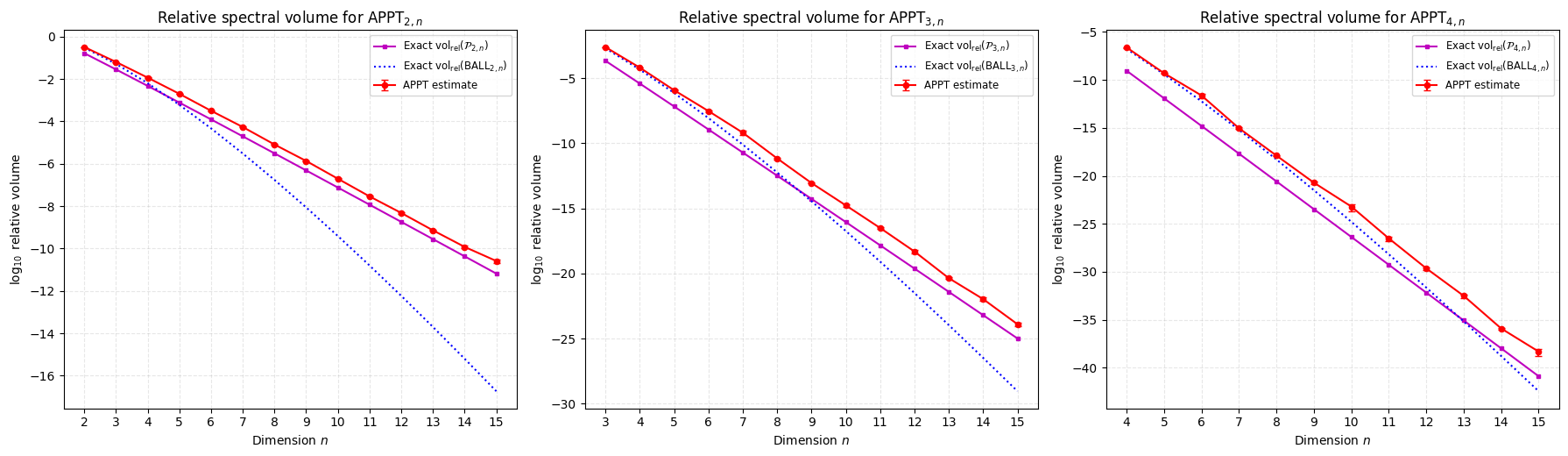}
    \caption{The estimated relative volume of $\text{APPT}_{m,n}$ for $m=2,3,4$ (red) in comparison with the exact polytope relative volume (purple) and exact separable ball volume (dotted blue).} 
    \label{fig:relative_volume}
\end{figure}

Our two-qubit relative spectral volume estimate of approximately $0.32618$ agrees with   the spectral volume of $ 0.32723006$ discussed in \cite{SongChen2025} within reasonable Monte Carlo error. 

\begin{figure}[ht]
    \centering
    \includegraphics[width=1\linewidth]{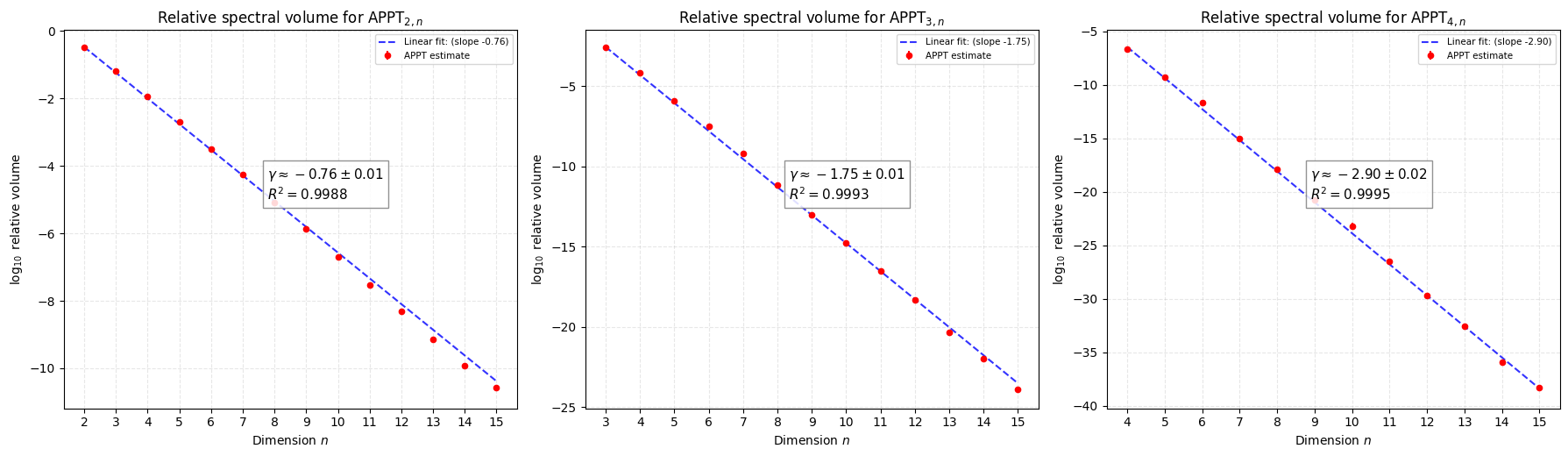}
    \caption{ Log-linear regression of the relative volume for $\mathrm{APPT}_{m,n}$ for $m=2,3,4$ whenever $n\leq 15$ respectively.}
    \label{fig:log-linear_volume_regression}
\end{figure}

Our estimation reveals a strict log-linear decay of the relative spectral volume of $\mathrm{APPT}_{m,n}$ for $m=2,3,4$ and $ n\leq15$ as seen in Figure \ref{fig:log-linear_volume_regression}. To determine the asymptotic decay rate of the relative volume as $n$ increases, we perform a linear regression analysis to the log-volume estimates such that we fit 
\begin{equation}
    \log_{10} \mathrm{vol}_{\mathrm{rel}}(\text{APPT}_{m,n}) \approx   C_m + \gamma_m n, \text{ i.e. } \mathrm{vol}_{\mathrm{rel}}(\text{APPT}_{m,n}) \approx 10^{C_m}10^{\gamma_m \cdot n},
\end{equation} 
where $\gamma_m$ is the decay slope and $C_m<0$. 
This fit suggests that the relative volume of $\mathrm{APPT}_{m,n}$ appears to decay an exponentially in $n$ and much faster as $m$ increases. This indicates that for every unit increase in the local dimension $n$, the relative volume of $\mathrm{APPT}_{m,n}$ decreases by an approximate factor of $10^{-\gamma_m}$. Indeed, notice the shrink factor is 
$$\frac{\mathrm{vol}_{\mathrm{rel}}(\text{APPT}_{m,n}) }{\mathrm{vol}_{\mathrm{rel}}(\text{APPT}_{m,n+1}) } \approx  \frac{10^{C_m}10^{\gamma_m \cdot n} }{10^{C_m}10^{\gamma_m \cdot n+1} } = 10^{-\gamma_m}.$$
For instance, for $m=2$, we find the decay slope of $\mathrm{APPT}_{2,n}$ to be $\gamma_2\approx -0.7625$ which implies the approximate shrink factor per unit $n$ is $10^{0.7625} \approx 5.79$. Meaning that  $\mathrm{vol}_{\mathrm{rel}}(\text{APPT}_{2,n+1}) \approx \frac{1}{5.79}\mathrm{vol}_{\mathrm{rel}}(\text{APPT}_{2,n}) $ for all $n$ within our tested range. As seen in Table
 \ref{tab:volume_decay_comparison}, the relative volume of the polytope appears to capture the same rate of exponential decay as $\mathrm{APPT}_{m,n}$ on the log-linear scale as $n$ increases. This asymptotic behaviour suggests that although $\mathrm{APPT}_{m,n}$ may be larger than the polytope by a multiplicative factor, the exponential rate of change remains unaffected over the tested range. In comparison, the inscribed ball predicts a much faster decay rate typical of smooth convex bodies. 

 Thus, our estimates suggest that the polytope $\mathcal{P}_{m,n}$ captures a significant portion of the relative volume of  $\mathrm{APPT}_{m,n}$ than the separable ball particularly as $n$ gets large. 

\begin{table}[ht]
\centering
\begin{tabular}{c| c c c}

\hline
$m$
& $\mathrm{APPT}_{m,n}$ ($\gamma_m$)
& $\mathcal{P}_{m,n}$ 
& $\mathrm{BALL}_{m,n}$  \\
\hline
$2$ & $-0.7625$ & $-0.8020$ & $-1.2655$ \\
$3$ & $-1.7454$ & $-1.7793$ & $-2.2143$ \\
$4$ & $-2.9033$ & $-2.8975$ & $-3.2628$ \\
\hline
\end{tabular}
\caption{Approximate decay slopes for the relative volumes of
$\mathrm{APPT}_{m,n}$, the inscribed polytope $\mathcal P_{m,n}$, and the
separable ball, $\mathrm{BALL}_{m,n}$ when $m=2,3,4$ and $n\leq 15$.}
\label{tab:volume_decay_comparison}
\end{table}


\section{Conclusion}
\label{sec:conclusion}
The present investigation of the structure of the sets of absolutely separable and absolutely PPT states has revealed facets of the intricate geometry of their associated convex sets of spectra, as subsets of the appropriate probability simplex. We have introduced a permutation-symmetric reformulation of the absolute PPT criterion which allows us to identify the set of absolute PPT spectra as a spectrahedron. This representation removes the need to impose an a priori ordering of the spectrum of states and
makes the full permutation symmetry of $\mathrm{APPT}_{m,n}$ explicit. The spectrahedral description provides a unified approach to the geometry of the set of spectra of absolute separable and absolute PPT states. In particular, we express the topological boundary of $\mathrm{APPT}_{m,n}$ as the feasible part of the union of determinantal hypersurfaces and determine the number of irreducible determinant polynomials describing it. We have also developed a kernel-based description of the facial structure of the set, allowing us to characterize its boundary, faces and extreme points via a linear constraint matrix. With this, one can determine all possible facial dimensions of $\mathrm{APPT}_{m,n}$ and identify them by the rank of a single unified linear matrix constraint. Since the extreme points of a spectrahedron are all exposed extreme points, it provides a possible geometric route towards distinguishing absolute separability from absolute PPT whenever $m>2$. Since $\mathrm{ASEP}_{m,n} \subseteq \mathrm{APPT}_{m,n}$, the existence of an extreme point of $\mathrm{ASEP}_{m,n} $ which is not exposed in the set would imply that  $\mathrm{ASEP}_{m,n} \subsetneq \mathrm{APPT}_{m,n}$. Certainly, identifying an extreme point of $\mathrm{APPT}_{m,n} $ which is not absolute separable would immediately separate the two sets. 

We have also provided rigorous bounds on the maximal purity and minimal von Neumann entropy for absolute PPT by way of an inscribed polytope. 
Our numerical results indicate that these bounds are tight in several of the tested dimensions. 
Thus, another matter of interest is to resolve Conjecture \ref{conj:max_purity} in the positive, and to show that the polytope is contained in $\mathrm{ASEP}_{m,n}$: because then the maximum purity over all three sets must coincide. Further, we have obtained an exact value for the volume of the polytope giving us a rigorous lower bound on the volume of absolute PPT and conditional on the containment  of the polytope in $\mathrm{ASEP}_{m,n}$, on absolute separability. Our spectral Monte Carlo estimates for $m=2,3,4$ show that the relative volume of $\mathrm{APPT}_{m,n} $ decreases sharply with $n$. Over the tested range, the fitted 
decay rates of $\mathrm{APPT}_{m,n} $ and the polytope are `close', whereas the
separable ball relative volume decays substantially faster. This suggests that, perhaps, the polytope captures a significant part of the spectral volume of $\mathrm{APPT}_{m,n} $ than the separable ball as the total dimension increases. Thus, our geometric descriptions might provide a viable road towards resolving the absolute separability problem.


\section*{Acknowledgments}
The authors thank Anna Sanpera, Karol {\.Z}yczkowski, Albert Rico, Jofre Abellanet-Vidal, Ilaria Svampa and Sayantan Chakraborty for insightful discussions; furthermore Marmaduke Temple and William Cooper for sound advice regarding the separability-entanglement frontier on various occasions.
JA, NBTK and AW are supported by the Spanish MICIN (project PID2022-141283NB-I00) with the support of FEDER funds and by the Alexander von Humboldt Foundation.
JA and AW were or are furthermore supported by the Spanish MICIN with funding from European Union NextGenerationEU (PRTR-C17.I1) and the Generalitat de Catalunya. 
NBTK was or is furthermore supported by ESA (EISI project 2021-01250-ESA) and by ICREA Academia.
AW was or is furthermore supported by the European Commission QuantERA grant ExTRaQT (Spanish MICIN project PCI2022-132965), by the Spanish MTDFP through the QUANTUM ENIA project: Quantum Spain and funded by the European Union NextGenerationEU within the framework of the ``Digital Spain 2026 Agenda''.



\printbibliography[heading=bibintoc]

@book{boyd2004convex,
  title={Convex Optimization},
  author={Boyd, Stephen and Vandenberghe, Lieven},
  year={2004},
  publisher={Cambridge University Press}
}

@article{ganguly2014witness,
  title={Witness of mixed separable states useful for entanglement creation},
  author={Ganguly, Nirman and Chatterjee, Jyotishman and Majumdar, Archan S.},
  journal={Physical Review A},
  volume={89},
  number={5},
  year={2014},
eid  = {052304},
  numpages = {5},
  month = {05},
  publisher = {American Physical Society},
  doi = {10.1103/PhysRevA.89.052304}
}

@thesis{netzerspectrahedra,
    author = {Netzer, Tim},
    title  = {Spectrahedra and Their Shadows},
    year   = {2012},
    school = {Universit{\"a}t Leipzig},
    type   = {Habilitation thesis},
    url    = {https://www.mathematik.uni-leipzig.de/preprints/p1207.0070.pdf}
}

@inproceedings{cynthiaspectrahedra,
  title={The geometry of spectrahedra},
  author={Vinzant, Cynthia},
  journal={Sum of Squares: Theory and Applications},
  booktitle={Proceedings of Symposia in Applied Mathematics},
  volume={77},
  pages={11--35},
  year={2020},
 publisher={Providence, RI, USA: American Mathematical Society}
}

@article{gurvits2002largest,
  title={Largest separable balls around the maximally mixed bipartite quantum state},
  author={Gurvits, Leonid and Barnum, Howard N.},
  journal={Physical Review A},
volume = {66},
  issue = {6},
  eid = {062311},
  numpages = {7},
  year = {2002},
  month = {12},
  publisher = {American Physical Society},
  doi = {10.1103/PhysRevA.66.062311}
}

@article{halder2021characterizing,
  title = {Characterizing the boundary of the set of absolutely separable states and their generation via noisy environments},
  author = {Halder, Saronath and Mal, Shiladitya and Sen(De), Aditi},
  journal = {Physical Review A},
  volume = {103},
  issue = {5},
  eid = {052431},
  numpages = {9},
  year = {2021},
  month = {05},
  publisher = {American Physical Society},
  doi = {10.1103/PhysRevA.103.052431}
}

@article{hildebrand2007positive,
author = {Hildebrand, Roland},
year = {2007},
month = {11},
title = {Positive partial transpose from spectra},
volume = {76},
number={5},
journal = {Physical Review A},
doi = {10.1103/PhysRevA.76.052325}
}

@article{ramana1995some,
  title={Some geometric results in semidefinite programming},
  author={Ramana, Motakuri and Goldman, Alan J.},
  journal={Journal of Global Optimization},
  volume={7},
  number={1},
  pages={33--50},
  year={1995},
  doi = {10.1007/BF01100204}
}

@book{rockafellar2015convex,
  title={Convex analysis},
  author={Rockafellar, Ralph Tyrrell},
  volume={28},
  year={1997},
  publisher={Princeton university press}
}

@inproceedings{gurvits2003classical,
author = {Gurvits, Leonid},
title = {{Classical deterministic complexity of Edmonds' Problem and quantum entanglement}},
year = {2003},
isbn = {1581136749},
publisher = {Association for Computing Machinery},
address = {New York, NY, USA},
doi = {10.1145/780542.780545},
booktitle = {Proceedings of the Thirty-Fifth Annual ACM Symposium on Theory of Computing},
pages = {10--19},
numpages = {10},
%location = {San Diego, CA, USA},
series = {STOC '03}
}

@article{horodecki2009quantum,
  title = {Quantum entanglement},
  author = {Horodecki, Ryszard and Horodecki, Pawe\l{} and Horodecki, Micha\l{} and Horodecki, Karol},
  journal = {Reviews of Modern Physics},
  volume = {81},
  number = {2},
  pages = {865--942},
  numpages = {0},
  year = {2009},
  month = {06},
  publisher = {American Physical Society},
  doi = {10.1103/RevModPhys.81.865}
}

@article{Horodecki1996Separability,
    author = {Horodecki, Micha{\l} and Horodecki, Pawe{\l} and Horodecki, Ryszard},
    title = {Separability of mixed states: necessary and sufficient conditions},
    journal = {Physical Letters A},
    volume = {223},
    number = {1--2},
    pages = {1--8},
    year = {1996},
    doi = {10.1016/S0375-9601(96)00706-2 }
}

@article{johnston2018inverse,
  title={The inverse eigenvalue problem for entanglement witnesses},
  author={Johnston, Nathaniel and Patterson, Everett},
  journal={Linear Algebra and its Applications},
  volume={550},
  pages={1--27},
  year={2018},
  publisher={Elsevier},
  doi={10.1016/j.laa.2018.03.043}
}

@article{SongChen2025,
  author    = {Song, Zhiwei and Chen, Lin},
  title     = {Extreme points of sets of absolutely separable and positive partial transpose states},
  journal   = {Physical Review A},
  volume    = {112},
  number    = {2},
  eid     = {022409},
  numpages ={20},
  year      = {2025},
  month ={08},
  publisher = {American Physical Society},
  doi = {10.1103/7pnp-qgrq}
}

@article{Lasserre2015,
  author    = {Lasserre, Jean-Bernard},
  title     = {Volume of slices and sections of the simplex in closed form},
  journal   = {Optimization Letters},
  volume    = {9},
  number    = {7},
  pages     = {1263--1269},
  year      = {2015},
  publisher = {Springer},
  doi       = {10.1007/s11590-015-0898-z}
}

@article{XiongSze2026,
  author    = {Xiong, L. and Sze, N.-S.},
  title     = {Spectral criteria for absolute positive partial transpose ({PPT}) in qudit-qudit state systems},
  journal   = {Journal of Mathematical Physics},
  volume    = {67},
  number    = {1},
  year      = {2026},
  month = {01},
   eid = {012201},
  publisher = {AIP Publishing},
  doi = {10.1063/5.0273779}
}

@article{guhne2009entanglement,
  title={Entanglement detection},
  author={G\"{u}hne, Otfried and T\'{o}th, G\'{e}za},
  journal={Physics Reports},
  volume={474},
  number={1-6},
  pages={1--75},
  year={2009},
  publisher={Elsevier},
  doi={10.1016/j.physrep.2009.02.004 },
  month={04}
}

@article{fawzi2021set,
  title={The set of separable states has no finite semidefinite representation except in dimension 3$\times$2},
  author={Fawzi, Hamza},
  journal={Communications in Mathematical Physics},
  volume={386},
  number={3},
  pages={1319--1335},
  year={2021},
  publisher={Springer},
 doi ={ 10.1007/s00220-021-04163-2}
}

@article{dieudonne1948generalisation,
  title={Sur une g{\'e}n{\'e}ralisation du groupe orthogonal {\`a} quatre variables},
  author={Dieudonn{\'e}, Jean},
  journal={Archiv der Mathematik},
  volume={1},
  number={4},
  pages={282--287},
  year={1948},
  publisher={Springer}, 
  doi ={10.1007/bf02038756}
}

@book{hartshorne1977,
  title={Algebraic Geometry},
  author={Hartshorne, Robin},
  journal={Algebraic Geometry},
  volume={52},
  year={1977},
  series={Graduate Texts in Mathematics},
  publisher={Springer},
  doi ={10.1007/978-1-4757-3849-0 }
}

@article{scheiderer2022extreme,
  title={Extreme points of Gram spectrahedra of binary forms},
  author={Scheiderer, Claus},
  journal={Discrete \& Computational Geometry},
  volume={67},
  number={4},
  pages={1174--1190},
  year={2022},
  publisher={Springer}, 
  doi ={10.1007/s00454-022-00385-w}
}

@article{jivulescu2015positive,
  title={Positive reduction from spectra},
  author={Jivulescu, Maria Anastasia and Lupa, Nicolae and Nechita, Ion and Reeb, David},
  journal={Linear Algebra and its Applications},
  volume={469},
  pages={276--304},
  year={2015},
  publisher={Elsevier},
  doi ={10.1016/j.laa.2014.11.031}
}

@book{horn2012matrix,
  title={Matrix analysis},
  author={Horn, Roger A and Johnson, Charles R},
  year={1985},
  publisher={Cambridge university press}, 
  doi ={12}
}

@article{phi2025maximum,
  title={On the maximum purity of absolutely separable bipartite states},
  author={Dung, Hoang Phi and Khoi, Vu The},
  journal={preprint arXiv:2510.19508},
  doi={10.48550/arXiv.2510.19508 },
  year={2025}
}

@article{zyczkowski1998volume,
  title={Volume of the set of separable states},
  author={\.{Z}yczkowski, Karol and Horodecki, Pawe{\l} and Sanpera, Anna and Lewenstein, Maciej},
  journal={Physical Review A},
  volume={58},
  number={2},
  pages = {883--892},
  year={1998},
  month = {08},
  publisher = {American Physical Society},
  doi = {10.1103/PhysRevA.58.883}
}

@article{fok1989volume,
  title={{Volume estimation by Monte Carlo methods}},
  author={Fok, Danny SK and Crevier, Daniel},
  journal={Journal of Statistical Computation and Simulation},
  volume={31},
  number={4},
  pages={223--235},
  year={1989},
  publisher={Taylor \& Francis}, 
  doi ={10.1080/00949658908811145}
}

@article{Clauser1969,
  title = {Proposed Experiment to Test Local Hidden-Variable Theories},
  author = {Clauser, John F. and Horne, Michael A. and Shimony, Abner and Holt, Richard A.},
  journal = {Physical Review Letters},
  volume = {23},
  issue = {15},
  pages = {880--884},
  year = {1969},
  month = {10},
  publisher = {American Physical Society},
  doi = {10.1103/PhysRevLett.23.880}
}

@article{ekert1991quantum,
  title = {Quantum cryptography based on {B}ell's theorem},
  author = {Ekert, Artur K.},
  journal = {Physical Review Letters},
  volume = {67},
  number = {6},
  pages = {661--663},
  year = {1991},
  month = {08},
  publisher = {American Physical Society},
  doi = {10.1103/PhysRevLett.67.661}
}

@article{bennett1993teleporting,
  title = {Teleporting an unknown quantum state via dual classical and {E}instein-{P}odolsky-{R}osen channels},
  author = {Bennett, Charles H. and Brassard, Gilles and Cr\'epeau, Claude and Jozsa, Richard and Peres, Asher and Wootters, William K.},
  journal = {Physical Review Letters},
  volume = {70},
  number = {13},
  pages = {1895--1899},
  numpages = {0},
  year = {1993},
  month = {03},
  publisher = {American Physical Society},
  doi = {10.1103/PhysRevLett.70.1895},
}

@article{schrodinger1935discussion,
  title = {Discussion of probability relations between separated systems},
  volume = {31},
 % ISSN = {1469-8064},
  DOI = {10.1017/s0305004100013554},
  number = {4},
  journal = {Mathematical Proceedings of the Cambridge Philosophical Society},
  publisher = {Cambridge University Press},
  author = {Schr\"{o}dinger, Erwin},
  year = {1935},
  month = {10},
  pages = {555--563}
}

@article{einstein1935can,
  title = {Can quantum-mechanical description of physical reality be considered complete?},
  author = {Einstein, Albert and Podolsky, Boris and Rosen, Nathan},
  journal = {Physical Review},
  volume = {47},
  number = {10},
  pages = {777--780},
  numpages = {0},
  year = {1935},
  publisher = {American Physical Society},
  month={05},
  doi = {10.1103/PhysRev.47.777},
}

@article{werner1989quantum,
  title = {Quantum states with {E}instein-{P}odolsky-{R}osen correlations admitting a hidden-variable model},
  author = {Werner, Reinhard F.},
  journal = {Physical Review A},
  volume = {40},
  number = {8},
  pages = {4277--4281},
  numpages = {0},
  year = {1989},
  month = {10},
  publisher = {American Physical Society},
  doi = {10.1103/PhysRevA.40.4277}
}

@article{gharbian2010,
  title={{Strong NP-hardness of the quantum separability problem}},
  author={Gharibian, Sevag},
  journal={Quantum Information and Computation},
  volume={10},
  pages={343--360},
  year={2010}
}

@article{asherperes1996,
  title = {Separability criterion for density matrices},
  author = {Peres, Asher},
  journal = {Physical Review Letters},
  volume = {77},
  number = {8},
  pages = {1413--1415},
  year = {1996},
  month = {08},
  publisher = {American Physical Society},
  doi = {10.1103/PhysRevLett.77.1413}
}

@article{kus2001geometry,
  title={Geometry of entangled states},
  author={Ku\'{s}, Marek and \.{Z}yczkowski, Karol},
  journal={Physical Review A},
  volume = {63},
  number = {3},
  eid = {032307},
  numpages = {13},
  year = {2001},
  month = {02},
  publisher = {American Physical Society},
  doi = {10.1103/PhysRevA.63.032307}
}

@misc{knill2003,
  title={Separability from spectrum},
  author={Knill, Emmanuel},
  note = {Open problems in quantum information theory},
  url= {http://qig.itp.uni-hannover.de/qiproblems/15},
  year={2003}
}

@article{johnston2014separability,
  title={Is Separability from Spectrum Determined by the Partial Transpose?},
  author={Arunachalam, Srinivasan and Johnston, Nathaniel and Russo, Vincent},
   journal={Quantum Information and Computation},
  volume = {15},
  number = {7--8},
  pages ={0694--0720},
  year={2015}
}

@article{johnston2013sepfromspectra,
  title={Separability from spectrum for qubit-qudit states},
  author={Johnston, Nathaniel},
  journal={Physical Review A},
  volume={88},
  number={6},
  eid={062330},
  numpages={5},
  year={2013},
  month = {12},
  publisher ={American Physical Society},
   doi = {10.1103/PhysRevA.88.062330}
}

@article{abellanet2025sufficient,
  title={Sufficient criteria for absolute separability in arbitrary dimensions via linear map inverses},
  author={Abellanet-Vidal, Jofre and M\"{u}ller-Rigat, Guillem and Rajchel-Mieldzio\'{c}, Grzegorz and Sanpera, Anna},
  journal={Reports on Progress in Physics},
  volume={88},
  number={10},
  eid={107601},
  year={2025},
  doi ={10.1088/1361-6633/ae0cfa },
  publisher={IOP Publishing}
}

@article{kondra2026fundamental,
  title={Fundamental limitations on entanglement extraction from purity},
  author={Kondra, Tulja Varun and Hita, Pedro Barrios and Neumann, Justus and Kampermann, Hermann and Bru{\ss}, Dagmar},
  journal={preprint arXiv:2605.29197},
  doi={10.48550/arXiv.2605.29197 },
  year={2026}
}

@article{wang2026extreme,
  title={{Extreme points of absolutely PPT states with exactly three distinct eigenvalues}},
  author={Wang, Nalan and Chen, Lin and Song, Zhiwei},
  journal={preprint arXiv:2603.20717 },
    primaryClass={quant-ph},
    doi={10.48550/arXiv.2603.20717 }, 
  year={2026}
}

@article{cao2004determinant,
  title={Determinant preserving transformations on symmetric matrix spaces},
  author={Cao, Chongguang and Tang, Xiaomin},
  journal={The Electronic Journal of Linear Algebra},
  volume={11},
  pages={205--211},
  year={2004},
  publisher = {University of Wyoming Libraries},
  month ={01},
  doi ={10.13001/1081-3810.1133 }
}

@book{barvinok:convexity,
    author = {Alexander Barvinok},
    title = {A Course in Convexity},
    publisher = {American Mathematical Society},
    year = {2002},
    series = {Graduate Studies in Mathematics},
    volume = {54}
}

@book{BCR:RAG,
    author = {Jacek Bochnak and Michel Coste and Marie-Fran{\c{c}}oise Roy},
    title = {Real Algebraic Geometry },
    publisher = {Springer Verlag},
    year = {1998},
    series = {Ergebnisse der Mathematik und ihrer Grenzgebiete; Modern Surveys in Mathematics},
    volume = {36},
    doi = {10.1007/978-3-662-03718-8}
}

@book{tent2012course,
  title={A course in model theory},
  author={Tent, Katrin and Ziegler, Martin},
  number={40},
  year={2012},
  publisher={Cambridge University Press}
}


\newpage
\appendix
\section{Proofs of certain results in Subsection \ref{subsec:boundary}}
\label{app:Appendix_A1}

In this section, we  
prove Lemmas \ref{lem:polynomial_orbit} and \ref{lem:signed-permutation}, and Theorem \ref{thm:kappa-irreducible-components}, which we restate for the reader's convenience. 

\medskip
\noindent
\textbf{Lemma \ref{lem:polynomial_orbit}.}\ 
\emph{The set of determinant polynomials 
$P = \{ \det(L_\pi(\lambda))\}_{\pi \in S_{mn}}$ of $\mathrm{APPT}_{m,n}$ is the group orbit under $S_{mn}$ given by
\begin{equation*}
  P = \mathrm{Orb}_{S_{mn}}(\det(L_{\mathrm{id}}(\lambda))),
\end{equation*}
where $\det(L_{\mathrm{id}}(\lambda))$ is the determinant polynomial under the identity permutation.} 

\begin{proof}
Recall that the left action of the symmetric group $S_{mn} $ on $\mathbb{R}^{mn}$ induces a family of ring automorphisms $\phi_{\pi} :\mathbb{R}[\lambda]\to \mathbb{R}[\lambda]$  for each permutation $\pi \in S_{mn}$ defined by $ \phi_{\pi}(h(\lambda)) = h(\pi^{-1} \cdot \lambda)$ where $h(\lambda) = \det(\mathcal{L}(\lambda))$.

First, we prove that $\phi_{\pi}$ is indeed a ring automorphism. Consider arbitrary polynomials $h, g \in \mathbb{R}[\lambda]$ and permutations $\pi \in S_{mn}$. Notice that $\phi_{\pi}$ by definition preserves the addition and multiplication axioms as \begin{equation*}
    \phi_\pi(h + g)(\lambda) = (h + g)(\pi^{-1} \cdot \lambda) = h(\pi^{-1} \cdot \lambda) + g(\pi^{-1} \cdot \lambda) = \phi_\pi(h)(\lambda) + \phi_\pi(g)(\lambda)
\end{equation*}
and 
\begin{equation*}
    \phi_\pi(hg)(\lambda) = (hg)(\pi^{-1} \cdot \lambda) = h(\pi^{-1} \cdot \lambda) g(\pi^{-1} \cdot \lambda) = \phi_\pi(h)(\lambda) \phi_\pi(g)(\lambda).
\end{equation*}
Additionally, each permutation $\pi\in S_{mn}$ admits a unique inverse $\pi^{-1} \in S_{mn}$. Thus, by applying the corresponding mapping of the inverse, it follows that
\begin{equation*}
    \phi_{\pi^{-1}}(\phi_\pi(h(\lambda))) = h\big((\pi^{-1})^{-1}\cdot (\pi^{-1} \cdot \lambda)\big) = h(\pi \pi^{-1} \cdot \lambda) = h(id \cdot \lambda) = h(\lambda).
\end{equation*} 
Since $\phi_\pi$ satisfies all axioms, it defines a ring automorphism. 

Now we proceed to show that $\phi_\pi$ commutes with the determinant to generate the group orbit. Consider the  linear matrix constraint $L_{\mathrm{id}}(\lambda)$  under the identity permutation. By definition as seen in Eq.~\eqref{eq:base_matrix}, each entry term is a linear homogeneous polynomial in  $\mathbb{R}[\lambda]$ for all permutations $\pi$. Since $\phi_\pi$ is an automorphism, it acts linearly over the field $\mathbb{R}$ such that by extending it to the matrix ring, $\phi_\pi$ acts on the entries of $L_{\mathrm{id}}$ as
\begin{equation}\begin{split}
\label{eq:matrix_ring}
    \phi_\pi \left( [L_{\mathrm{id}}(\lambda)]_{i,j} \right)
    &= \begin{cases} 
\phi_\pi \left(2\lambda_{p(i,i)}\right)& \text{if } i = j,\\
\phi_\pi \left(\lambda_{p(i,j)} - \lambda_{q(i,j)}\right)& \text{if } i \neq j,
    \end{cases} \\
    &=  \begin{cases} 
2\lambda_{\pi(p(i,i))} & \text{if } i = j, \\
\lambda_{\pi(p(i,j))} - \lambda_{\pi(q(i,j))} & \text{if } i \neq j,
    \end{cases} \\
    &= [L_{\pi}(\lambda)]_{i,j}
\end{split}\end{equation}
This way, it follows that $L_\pi(\lambda) = \phi_\pi \left( L_{\mathrm{id}}(\lambda) \right)$. Now, the determinant $h_{\mathrm{id}}(\lambda) = \det(L_{\mathrm{id}}(\lambda))$ is a homogeneous polynomial of degree $m$ constructed via the addition and multiplication of the matrix entries of $L_{\mathrm{id}}(\lambda)$. Thus, since $\phi_\pi$ is a ring automorphism, we must have that $\det(\phi_\pi(L_{\mathrm{id}}(\lambda))) = \phi_\pi(\det(L_{\mathrm{id}}(\lambda)))$. Therefore, it follows that for any arbitrary permutation $\pi\in S_{mn}$, 
\begin{equation}\label{eq:determinant_permute}
    h_\pi(\lambda) = \det(L_\pi(\lambda)) = \det(\phi_\pi(L_{\mathrm{id}}(\lambda))) = \phi_\pi(h_{\mathrm{id}}(\lambda)).
\end{equation}
By definition, $\mathrm{Orb}_{S_{mn}}(h_{\mathrm{id}}) = \{ \phi_\pi(h_{\mathrm{id}}(\lambda)) \mid \pi \in S_{mn} \}.$ Therefore, it follows from Eq.~\eqref{eq:determinant_permute} that 
\begin{equation*}
   P = \{ h_\pi(\lambda) \mid \pi \in S_{mn} \} = \mathrm{Orb}_{S_{mn}}(h_{\mathrm{id}}),
\end{equation*}
concluding the proof. 
\end{proof}

\begin{lemma}
\label{lem:induced_linear_map}
Suppose that $\det(L_{\pi}(\lambda)) = \det(L_{\mathrm{id}}(\lambda))$ for some permutation $\pi \in S_{mn}$, then the map  $T_\pi: \mathrm{Sym}_m(\RR)\longrightarrow\mathrm{Sym}_m(\RR)$ given by
$$T_\pi(L_{\mathrm{id}}(\lambda))
    =L_\pi(\lambda),$$
is well defined, linear and invertible over the space of real symmetric matrices.
\end{lemma}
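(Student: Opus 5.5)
The key observation is that the entries of $L_{\mathrm{id}}(\lambda)$ are linearly independent linear forms in $\lambda$. This was noted before Eq.~\eqref{eq:loci_union}: the index sets $\{p(i,j)\}_{i\le j}$ and $\{q(i,j)\}_{i<j}$ are disjoint, so each upper-triangular entry involves coordinates not used by any other entry. Consequently the linear map
\[
\Phi_{\mathrm{id}}:\RR^{mn}\to\mathrm{Sym}_m(\RR),\qquad \lambda\mapsto L_{\mathrm{id}}(\lambda),
\]
is surjective onto $\mathrm{Sym}_m(\RR)$, which has dimension $\binom{m+1}{2}$. Indeed, setting $x_{i,j}=[L_{\mathrm{id}}(\lambda)]_{i,j}$ and choosing, for instance, $\lambda_{p(i,j)}=x_{i,j}$ (halved on the diagonal) and $\lambda_{q(i,j)}=0$ realises any prescribed symmetric matrix. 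The same holds for $\Phi_\pi:\lambda\mapsto L_\pi(\lambda)$, since $\Phi_\pi=\Phi_{\mathrm{id}}\circ P_{\pi}$ for the coordinate permutation $P_\pi$ implementing $\phi_\pi$, as in Eq.~\eqref{eq:matrix_ring}.

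\emph{Well-definedness.} We define $T_\pi(X)=L_\pi(\lambda)$ for any $\lambda$ with $L_{\mathrm{id}}(\lambda)=X$. Surjectivity of $\Phi_{\mathrm{id}}$ guarantees such a $\lambda$ exists. The point is independence of the choice, i.e.\ $\ker\Phi_{\mathrm{id}}\subseteq\ker\Phi_\pi$, and this is where the hypothesis on determinants enters. Suppose $\mu\in\ker\Phi_{\mathrm{id}}$. Then for every $\lambda$ and every $t\in\RR$,
\[
\det L_\pi(\lambda+t\mu)=\det L_{\mathrm{id}}(\lambda+t\mu)=\det L_{\mathrm{id}}(\lambda).
\]
So $\det(Y+tZ)$ is constant in $t$, where $Y=L_\pi(\lambda)$ ranges over all of $\mathrm{Sym}_m(\RR)$ (by surjectivity of $\Phi_\pi$) and $Z=L_\pi(\mu)$ is fixed. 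Taking $Y=\1$ gives $\det(\1+tZ)=1$ for all $t$, so all eigenvalues of the symmetric matrix $Z$ vanish, hence $Z=0$. Therefore $\mu\in\ker\Phi_\pi$. Note that this step needs only $\det(\1+tZ)\equiv1$, so the argument is short once surjectivity of $\Phi_\pi$ is in hand.

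\emph{Linearity and invertibility.} Linearity of $T_\pi$ follows from linearity of $\Phi_{\mathrm{id}}$ and $\Phi_\pi$: given $X=\Phi_{\mathrm{id}}(\lambda)$ and $X'=\Phi_{\mathrm{id}}(\lambda')$, we have $aX+bX'=\Phi_{\mathrm{id}}(a\lambda+b\lambda')$, and applying $\Phi_\pi$ gives $T_\pi(aX+bX')=aT_\pi(X)+bT_\pi(X')$. For invertibility, the argument above is symmetric in $\mathrm{id}$ and $\pi$ (the hypothesis is an equality of polynomials), so it also gives $\ker\Phi_\pi\subseteq\ker\Phi_{\mathrm{id}}$. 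Hence the two kernels coincide and $T_\pi$ is injective. Since $T_\pi$ maps the finite-dimensional space $\mathrm{Sym}_m(\RR)$ to itself, injectivity implies it is bijective, with inverse $T_\pi^{-1}$ given by the analogous construction with the roles of $\mathrm{id}$ and $\pi$ swapped. As a byproduct, $\det T_\pi(X)=\det X$ for all $X$, which sets up the subsequent use of a determinant-preserver theorem (e.g.\ \cite{cao2004determinant}) in proving Lemma~\ref{lem:signed-permutation}.
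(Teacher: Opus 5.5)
Your proof is correct and follows essentially the same route as the paper: surjectivity of both maps $\lambda\mapsto L_{\mathrm{id}}(\lambda)$ and $\lambda\mapsto L_\pi(\lambda)$, the inclusion $\ker L_{\mathrm{id}}\subseteq\ker L_\pi$ obtained by evaluating the determinant identity at a preimage of the identity matrix and using eigenvalues to force $L_\pi(\mu)=0$, and finite-dimensionality for invertibility. The differences are cosmetic. You test $\det(\mathbb{I}+tZ)\equiv 1$, whereas the paper tests $\det(s\mathbb{I}+Z)=s^m$. You also get injectivity of $T_\pi$ from the symmetric reverse inclusion, whereas the paper gets kernel equality by rank--nullity and then shows surjectivity of $T_\pi$ directly.
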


\begin{proof} 
By construction, both $L_{\mathrm{id}}:
\mathbb{R}^{mn}\longrightarrow\mathrm{Sym}_m(\mathbb{R})$ and $ L_\pi:
    \mathbb R^{mn}\longrightarrow\mathrm{Sym}_m(\mathbb{R})
$ are surjective linear maps. Indeed, the diagonal entries can be chosen independently using
the variables $\lambda_{p(i,i)}$, and similarly,  the off-diagonal entries can be
chosen independently using the differences
   $ \lambda_{p(i,j)}-\lambda_{q(i,j)}$.

Suppose $ L_{\mathrm{id}}(\lambda) =  L_{\mathrm{id}}(\mu) $ for $\lambda, \mu\in \RR^{mn}$. By linearity, $L_{\mathrm{id}}(\lambda-\mu) = 0$ implying that $\lambda-\mu\in \ker(L_{\mathrm{id}})$. Thus, for well-definedness, it is enough to show that $\lambda-\mu\in \ker(L_\pi )$, i.e.,  $\ker(L_{\mathrm{id}})\subseteq \ker(L_\pi ) $. 
Let $\mathbf{z}\in\ker(L_{\mathrm{id}})$ and set $\mu=\lambda+t\mathbf{z} $ for $t\in\RR$. Then since $ L_{\mathrm{id}}(\lambda) =  L_{\mathrm{id}}(\lambda+t\mathbf{z}) $, it follows that $\det(L_{\mathrm{id}}(\lambda)) = \det(L_{\mathrm{id}}(\lambda +t\mathbf{z}))$. By the assumption that $\det(L_{\pi}) = \det(L_{\mathrm{id}})$, we have the equivalence $\det L_\pi(\lambda+t\mathbf{z})= \det L_\pi(\lambda)$. Since $L_\pi$ is surjective, the matrix ranges over all of the symmetric space $\mathrm{Sym}_m(\mathbb{R}) $. Therefore, 
$$ \det(X +t L_\pi)= \det X$$
 for all $X \in\mathrm{Sym}_m(\mathbb{R})$ and $t\in\RR$. Taking $X = s\1_m$ and $t=1$, we have that $\det(s\1_m + L_\pi)= s^m $ for all $s \in \RR.$ Since $L_\pi$ is real symmetric, we can assume that $\mathbf{z}=\spec(L_\pi)$. This implies that $s^m = \prod_{i=1}^{m}(s+z_i)$. Therefore, we must have that for each eigenvalue of $L_\pi$, $z_1=z_2=\cdots=z_m$. Thus, $L_\pi = 0$ and consequently $\mathbf{z} \in\ker( L_\pi )$, implying $\lambda-\mu\in \ker( L_\pi)$ and $  L_\pi(\lambda) =L_\pi(\mu)$. Thus, 
 $\ker(L_{\mathrm{id}})\subseteq \ker(L_\pi )$.
In addition, since both \(L_{\mathrm{id}}\) and \(L_\pi\) are surjective maps onto the
same finite-dimensional space, their kernels have the same dimension. Thus, $\ker(L_\pi )\subseteq \ker(L_{\mathrm{id}})$, which results in  $\ker (L_{\mathrm{id}})=\ker (L_\pi)$.

Linearity of $T_\pi$ follows directly from the linearity of
$L_{\mathrm{id}}$ and $L_\pi$. Finally, $T_\pi$ is surjective since for every $Y\in\mathrm{Sym}_m(\mathbb R)$, we can always find  a  vector $\lambda$ such that $ L_\pi(\lambda)=Y$. Then $Y=T_\pi\left( L_{\mathrm{id}}(\lambda)\right).$ Thus, since $T_\pi$ is a surjective linear map from
$\mathrm{Sym}_m(\mathbb R)$ to itself, it is invertible.
\end{proof}

\noindent
\textbf{Lemma \ref{lem:signed-permutation}.}\ 
\emph{
As polynomials, $\det(L_{\pi}(\lambda)) = \det(L_{id}(\lambda))$ for some permutation $\pi \in S_{mn}$, if and only if there exists a signed permutation matrix $A$ such that $L_{\pi}(\lambda) = A L_{id}(\lambda) A^T$.}

\begin{proof}
By Lemma~\ref{lem:induced_linear_map}, the permutation $\pi \in  S_{mn} $ induces an
invertible linear map satisfying $T_\pi(L_{\mathrm{id}}(\lambda))= L_\pi(\lambda)
 $. Moreover, since $\det( L_\pi(\lambda))=\det (L_{\mathrm{id}}(\lambda))$,
and \(L_{\mathrm{id}}\) is surjective, one has $\det( T_\pi(X))=\det( X)$ for every $X\in\mathrm{Sym}_m(\mathbb R)$. Thus, by the theorem of linear determinant preservers on the space of real symmetric matrices \cite{cao2004determinant,dieudonne1948generalisation}, there exist $\alpha\in\mathbb R\setminus\{0\}$ and an invertible matrix $B$ such that
 $ T_\pi(X)=\alpha BXB^T$ for every $X\in\operatorname{Sym}_m(\mathbb R).$ As such, 
\begin{equation}
\label{eq:Lpi_alpha_congruence}
    L_\pi(\lambda)=\alpha B L_{\mathrm{id}}(\lambda)B^T.
\end{equation}
The $k$-th diagonal entry of the r.h.s. of Eq.~\eqref{eq:Lpi_alpha_congruence} has the quadratic form 
\begin{equation} \label{eq:matrix_diag}
    [\alpha B L_{id}(\lambda) B^T]_{k,k} = 2\alpha\sum_{i=1}^m x_{k,i}^2 \lambda_{p(i,i)} + 2\alpha\sum_{1\leq i<j\leq m} x_{k,i} x_{k,j} \left(\lambda_{p(i,j)} - \lambda_{q(i,j)}\right),
\end{equation}
where $(x_{k,1}, \dots, x_{k,m})$ denotes the $k$-th row vector of $B$. On the other hand, $[L_\pi(\lambda)]_{k,k} =2\lambda_{\pi(p(k,k))}.$ Therefore, 
\begin{equation}
\label{eq:positive_parity_M}
    2\lambda_{\pi(p(k,k))} = \sum_{i=1}^m 2x_{k,i}^2 \lambda_{p(i,i)} +  \sum_{1\leq i<j\leq m} x_{k,i} x_{k,j} \left(\lambda_{p(i,j)} - \lambda_{q(i,j)}\right).
\end{equation}
Since the coordinates  $\{\lambda_k\}_{k=1}^{mn}$ are linearly independent, their coefficients on both sides of  Eq.~\eqref{eq:positive_parity_M} must be equal. Suppose $ \pi(p(k,k))\notin\{p(1,1),\ldots,p(m,m)\}, $ then the coefficient of each diagonal variable
$\lambda_{p(i,i)}$ on the left-hand side must equal zero. Comparing the coefficients of these variables gives $\alpha x_{k,i}^2=0$, hence $x_{k,i} = 0$ for all $i$ as $\alpha\neq 0$. This means that the $k$-th row of $B$ is the vector $(0, \dots, 0)$. This would imply that Eq.~\eqref{eq:positive_parity_M} reduces to the polynomial identity $2\lambda_{\pi(p(k,k))} =0$. This creates a contradiction as $2\neq 0$. Thus, for linear independence, the unique coordinate $2\lambda_{\pi(p(k,k))} $ must equal precisely one diagonal variable on the right-hand side of Eq.~\eqref{eq:positive_parity_M}. Therefore, there must exist a unique index $l \in \{1, \ldots, m\}$ such that $\pi(p(k,k)) = p(l,l)$ so their coefficients are equal: $ \alpha x_{k,l}^2 = 1 $.
Thus, for all $i\neq l$, the linear independence of the coordinates reduces the other diagonal coefficients to zero, so that $\alpha x_{k,i}^2 = 0$ and hence $x_{k,i} = 0$.
In particular, $\alpha>0$, and every $k$-th row of $B$ contains exactly one non-zero entry, $x_{k,l}=\pm\frac1{\sqrt{\alpha}}, $ for all $k\in\{1,\ldots, m\}$. And since $B$ is an invertible matrix, no two rows share a non-zero entry in the same column. Set $A:=\sqrt{\alpha}\,B.$ Then every nonzero entry of \(A\) is equal to \(1\) or \(-1\), and every
row and column contains exactly one nonzero entry. Thus, $A$ must be a signed permutation matrix such that $L_\pi(\lambda)=\alpha B L_{\mathrm{id}}(\lambda)B^T= A L_{\mathrm{id}}(\lambda)A^T.$

Conversely, let $A = DC_\pi$ be a signed permutation matrix such that $C_\pi$ is the standard permutation matrix associated with  a permutation $\pi\in S_{mn}$ and $D = \diag(\epsilon_1, \ldots, \epsilon_m)$ with $ \epsilon_i \in \{ -1,1\}$ for all $i= 1,\ldots, m$. Then $C_\pi$ simultaneously rearranges the rows and columns of $L_{\mathrm{id}}(\lambda)$ by permuting the diagonal and off-diagonal pairs of the matrix. By conjugating the matrix $L_{\mathrm{id}}(\lambda)$ with $C_\pi$ each entry gains a coefficient, $\epsilon_i\epsilon_j$. When $\epsilon_i\epsilon_j<0 $, the off-diagonal pair $\lambda_{p(i,j)}-\lambda_{q(i,j)}$ is swapped by $\lambda_{q(i,j)}-\lambda_{p(i,j)} $. Thus there exists a permutation $\pi^*\in S_{mn}$ satisfying 
$L_{\pi^*}(\lambda)=A L_{\mathrm{id}}(\lambda)A^T.$ Since permutation matrices are orthogonal, $\det(A) = \pm 1$, and consequently 
$$\det(L_{\pi^*}(\lambda))  = \det(A)^2 \det(L_{\mathrm{id}}(\lambda)) = \det(L_{\mathrm{id}}(\lambda)),$$
which concludes the proof.
\end{proof}

\noindent 
\textbf{Theorem \ref{thm:kappa-irreducible-components}.}\ 
\emph{Let $\partial \mathrm{APPT}_{m,n}$ be  the topological boundary of the set of absolute $\mathrm{PPT}$ spectra and $V_{mn}=\{\lambda\in \RR^{mn}\mid \sum_{i=1}^{mn}\lambda_i=1 \}$ the affine hyperplane of normalized spectra. Then there exists permutations $\pi_1,\ldots,\pi_{\kappa_{m,n}}\in \widetilde{S}$ such that 
$$ \partial\mathrm{APPT}_{m,n}=\mathrm{APPT}_{m,n}\cap\left(\bigcup_{i=1}^{\kappa_{m,n}}
    \mathcal{Z}_{V_{mn}} \left(
        \det L_{\pi_i}(\lambda)
    \right)\right),$$ 
where 
$$\kappa_{m,n} = \frac{(mn)!}{(mn-m^2)!\cdot 2^{m-1} \cdot m!}.$$  
Furthermore, each algebraic set $\mathcal{Z}_{V_{mn}} \left( \det L_{\pi_i} \right) $ taken with respect to the affine space $V_{mn}$ is a distinct irreducible hypersurface of dimension $(mn-2)$.}


\begin{proof}
Define $h_{\mathrm{id}}(\lambda)\coloneq\det(L_{\mathrm{id}}(\lambda))$. By Lemma \ref{lem:polynomial_orbit}, the set of  determinant polynomials is the orbit $ \mathrm{Orb}_{S_{mn}}(h_{\mathrm{id}}(\lambda))$. Thus, the number of distinct determinant polynomials is the order of this group orbit: $\kappa_{m,n}=|\mathrm{Orb}_{S_{mn}}(h_{\mathrm{id}}(\lambda))|$. Let the stabilizer set of $h_{\mathrm{id}}$ be defined by $$\mathrm{Stab}_{S_{mn}}(h_{\mathrm{id}}) = \{ \pi \in S_{mn} \mid \pi \cdot h_{\mathrm{id}} =h_{\mathrm{id}}  \}.$$ Then by the orbit-stabilizer theorem, 
\begin{equation}\label{eq:stabilizer_orbit}
    \kappa_{m,n} = \frac{|S_{mn}|}{|\mathrm{Stab}_{S_{mn}}(h_{\mathrm{id}})|} = \frac{(mn)!}{|\mathrm{Stab}_{S_{mn}}(h_{\mathrm{id}})|}.
\end{equation}Thus, it remains to identify the stabilizer of $h_{\mathrm{id}} $ and find how many distinct permutations satisfy the stabilizer condition. 

Define the total index set $I \coloneq \{1,\ldots,mn\}$ for any spectrum $\lambda\in \text{APPT}_{m,n}$ and consider the constraint matrix $L_{\mathrm{id}}(\lambda)$ defined by the identity permutation. Since $L_{\mathrm{id}}(\lambda)$ is an $m \times m$ symmetric matrix, it consists of $m$ diagonal entries and $\frac{m(m-1)}{2}$ distinct off-diagonal pairs which altogether constitute an active index $J \subset I$. More precisely,  $J=\left\{p(i,i): 1\leq i\leq m\right\}\cup\left\{p(i,j),q(i,j):1\leq i<j\leq m\right\}.$
Then, the cardinality of the active set is $|J | = m^2$, leaving $mn-m^2$ eigenvalues which do not appear in the matrix constraint. These inactive eigenvalues correspond to the index set $I\setminus J$ so that $|I\setminus J| = mn-m^2.$ Since each of the eigenvalues with index in $I\setminus J$ do not appear in the matrix, any permutation acting exclusively on these indices leave the matrix and subsequently, its determinant, invariant. Thus, there exists a subgroup $H_{I\setminus J}$ of the stabilizer group $ \mathrm{Stab}_{S_{mn}}(h_{\mathrm{id}})$ such that 
\begin{equation*}
    H_{I\setminus J}= \{\pi \in S_{mn} : \pi(k) = k \quad \forall\; k \in J \}.
    \end{equation*}
Since any $\pi\in H_{I\setminus J}$ only permutes the $mn-m^2$ indices within $I\setminus J$, the subgroup $ H_{I\setminus J}$ is canonically isomorphic to the symmetric group $ S_{(mn-m^2)}$ and therefore has order $|H_{I\setminus J}| = (mn-m^2)!$.

Next, we consider the permutations acting solely on the active index set $J$ that preserve the stabilizer condition.  By Lemma~\ref{lem:signed-permutation}, for a permutation
$\pi\in S_{mn}$, $\pi\in\mathrm{Stab}_{S_{mn}}(h_{\mathrm{id}})$ if and only if there exists a signed permutation matrix $A$, independent
of $\lambda$, such that $ L_\pi(\lambda)=A L_{\mathrm{id}}(\lambda)A^T.$ Here, $A = D\Pi$ is such that $\Pi$ is the standard permutation matrix and $D = \diag(\epsilon_1, \ldots, \epsilon_m)$ with $ \epsilon_i \in \{ +1,-1\}$ for all $i= 1,\ldots, m$.
Since the indices of $J$ are constrained to the entries of the $m\times m$ matrix $L_{\mathrm{id}}(\lambda),$ any valid permutation must preserve the structure of the matrix.
In particular, every stabilizing permutation preserves the active set
$J$. Indeed, notice that the right-hand side
   $ A L_{\mathrm{id}}(\lambda)A^T$
depends only on the coordinates indexed by $J$ whereas $L_\pi(\lambda)$ depends on the coordinates indexed by $\pi(J)$. Thus $\pi(J) = J$. As such, the only valid permutations $\pi$ acting on $J$ are exactly those that simultaneously rearrange the rows and columns of $L_{\mathrm{id}}(\lambda)$ (here, $A = I\cdot \Pi$ ) or simultaneously flip the sign of specific rows and their corresponding columns (here, $A = D\cdot I$ ). 

First, we consider the case where $A = I\cdot \Pi_\tau $ where $\tau\in S_{m}$ represents the bijection $\tau: \{1,\ldots,m\} \to \{1,\ldots,m\}$ that simultaneously rearranges  the rows and columns of $L_{\mathrm{id}}(\lambda)$ and $\Pi_{\tau}$ denotes its associated matrix permutation. By conjugating the matrix $L_{\mathrm{id}}(\lambda)$ with $\Pi_{\tau}$, we obtain a new matrix $L^*(\lambda)$ with rearranged rows and columns such that $L^*(\lambda) = \Pi_{\tau} L_{\mathrm{id}}(\lambda) \Pi_{\tau}^T.$ Since permutation matrices are orthogonal, $\det(\Pi_{\tau}) = \pm 1$ and so it follows that \begin{equation}\label{eq:tau_determinant}
    \det(L^*(\lambda)) = \det(\Pi_\tau) \det(L_{\mathrm{id}}(\lambda)) \det(\Pi_\tau^T) = (\pm 1)^2 \det(L_{\mathrm{id}}(\lambda)) = h_{\mathrm{id}}(\lambda).
\end{equation}More importantly, rearranging the rows and columns repositions the eigenvalues associated with indices in $J$ within the matrix $L^*(\lambda)$. By doing so, we can always find a specific permutation $\pi_\tau \in S_{mn}$ induced by $\tau$ such that for all $k \in I$
\begin{equation*}
    \pi_\tau(k) = \begin{cases} 
      p((\tau(i)\wedge \tau(j)), (\tau(i)\vee\tau(j) ))& \text{if } k = p(i,j) \text{ for some } i,j \le m, \\
      q((\tau(i)\wedge \tau(j)), (\tau(i)\vee\tau(j) )) & \text{if } k = q(i,j) \text{ for some } i \neq j \le m, \\
      k & \text{if } k \notin J.
   \end{cases}
\end{equation*}
Thus, it can be seen that for this index permutation $\pi_\tau(k)$, we have $L^*(\lambda) =\Pi_{\tau} L_{\mathrm{id}}(\lambda) \Pi_{\tau}^T  =L_{\pi_\tau}(\lambda) $ whenever the signed permutation matrix is $A = I\cdot \Pi_\tau$ and it follows directly from Eq.~\eqref{eq:tau_determinant} that the stabilizer condition is preserved for all $\pi_\tau\in S_{mn}$.  Hence, $\pi_\tau\in\mathrm{Stab}_{S_{mn}}(h_{\mathrm{id}})  $. Since the set of all bijections $\tau\in S_m$ acting on the $m$ rows and columns of the matrix $L_{\mathrm{id}}(\lambda)$ generates a corresponding set of induced permutations $\pi_\tau\in S_{mn}$, these permutations form a subgroup $H_{J_{\tau}}$ of the stabilizer group $\mathrm{Stab}_{S_{mn}}(h_{\mathrm{id}})$. This subgroup is canonically isomorphic to the symmetric group $S_m$ and therefore $|H_{J_{\tau}}|= m!$.

On the other hand, consider the case when $A= D$
where  $D = \mathrm{diag}(\epsilon_1, \epsilon_2, \dots, \epsilon_m)$ and $\epsilon_i \in \{1, -1\}$. Then $L^*(\lambda) = D L_{\mathrm{id}}(\lambda) D^T$. Since $D$ is orthogonal and $D^T = D$, $\det(D) = \prod_{i=1}^m \epsilon_i = \pm 1$, so that the determinant polynomial $h_{\mathrm{id}}(\lambda)$ is preserved as
\begin{equation}\label{eq:D_determinant}
    \det(L^*(\lambda)) = \det(D) \det(L_{\mathrm{id}}(\lambda)) \det(D) = (\pm 1)^2 \det(L_{\mathrm{id}}(\lambda)) = h_{\mathrm{id}}(\lambda).
\end{equation}
Notice that the diagonal entries remain unchanged as $[D L_{\mathrm{id}} D]_{i,i} = \epsilon_i^2 (2\lambda_{p(i,i)}) = 2\lambda_{p(i,i)}$ while the off-diagonal entries may change as follows whenever $i<j$:
\begin{equation*}
 [D L_{id} D]_{i,j} = \epsilon_i \epsilon_j \left( \lambda_{p(i,j)} - \lambda_{q(i,j)} \right) =
    \begin{cases}
        \lambda_{p(i,j)} - \lambda_{q(i,j)}  & \text{ if } \epsilon_i \epsilon_j = 1,\\
        \lambda_{q(i,j)} - \lambda_{p(i,j)} & \text{ if } \epsilon_i \epsilon_j = -1.
    \end{cases}
\end{equation*}
Since the indices of $J$ are constrained to the coordinates above, there exists an index permutation $\pi_D\in S_{mn}$ such that for $\epsilon_i \epsilon_j = -1$, the off-diagonal eigenvalue pair $\lambda_{p(i,j)} , \lambda_{q(i,j)}$ swap positions. Therefore, this permutation induced by the matrix $D$ is defined for all indices $k\in I$ such that 
\begin{equation*}
    \pi_D(k) = \begin{cases} 
      q(i,j) & \text{if } k = p(i,j) \text{ and } \epsilon_i \epsilon_j = -1, \\
      p(i,j) & \text{if } k = q(i,j) \text{ and } \epsilon_i \epsilon_j = -1, \\
      k & \text{otherwise}.
   \end{cases}
\end{equation*} Under this permutation, it follows that $L^*(\lambda) = D L_{\mathrm{id}}(\lambda) D^T = L_{\pi_D}(\lambda)$ where the associated signed permutation matrix is exactly $A = D\cdot I$. By Eq.~\eqref{eq:D_determinant}, the stabilizer condition holds true. Hence, $\pi_D \in\mathrm{Stab}_{S_{mn}}(h_{\mathrm{id}})  $.   Since each diagonal entry $\epsilon_i$ of $D$ can either be $+1$ or $-1$, and there are exactly $m$ of them, there are $2^m$ possible diagonal matrices of the form $D$. However, notice that trivially $D L_{\mathrm{id}} D = (-D) L_{\mathrm{id}} (-D)$. Therefore, it follows that the total number of distinct, sign-swapping induced permutations, including the identity is $2^{m-1}.$ Again, these permutations form a subgroup $H_{J_{D}}$ which isomorphic to $\mathbb{Z}_{2}^{m-1}$ so that $|H_{J_{D}}|= 2^{m-1}.$

Now since the subgroups $H_{J_{\tau}}$ and $H_{J_{D}}$ are uniquely defined via their respective signed permutation matrices, $H_{J_{\tau}}\cap H_{J_{D}} = \{\mathrm{id}\}$. Indeed, every element of $H_{J_D}$ fixes the diagonal active indices $p(i,i)$, whereas a nontrivial element of $H_{J_\tau}$ permutes at least two of these indices. Additionally, for arbitrary coordinates $k = p(i,j) \in J,$ we have that for $i<j$,
\begin{align*}
   (\pi_\tau  \pi_D  \pi_\tau^{-1})(p(i,j)) &= \pi_\tau \left( \pi_D \left( p(\tau^{-1}(i), \tau^{-1}(j)) \right) \right) \\
    &= \pi_\tau \left( \begin{cases} 
        q(\tau^{-1}(i), \tau^{-1}(j)) & \text{if } \epsilon_{\tau^{-1}(i)} \epsilon_{\tau^{-1}(j)} = -1 \\ 
        p(\tau^{-1}(i), \tau^{-1}(j)) & \text{otherwise} 
    \end{cases} \right) \\
    &= \begin{cases} 
        q(\tau(\tau^{-1}(i)), \tau(\tau^{-1}(j))) & \text{if } \epsilon_{\tau^{-1}(i)} \epsilon_{\tau^{-1}(j)} = -1 \\ 
        p(\tau(\tau^{-1}(i)), \tau(\tau^{-1}(j))) & \text{otherwise} 
    \end{cases} \\
    &= \begin{cases} 
        q(i,j) & \text{if } \epsilon_{\tau^{-1}(i)} \epsilon_{\tau^{-1}(j)} = -1 \\ 
        p(i,j) & \text{otherwise.} 
    \end{cases}
\end{align*}
Similarly, it can be seen that for $k = q(i,j)\in J$, \begin{equation*}
     (\pi_\tau  \pi_D  \pi_\tau^{-1})(q(i,j))  = \begin{cases} 
        p(i,j) & \text{if } \epsilon_{\tau^{-1}(i)} \epsilon_{\tau^{-1}(j)} = -1 \\ 
        q(i,j) & \text{otherwise} 
    \end{cases}     \end{equation*}
and $(\pi_\tau \circ \pi_D \circ \pi_\tau^{-1})(k) = k$ for $k \notin J.$ Altogether,  the composition defines a new permutation such that $\pi_\tau \pi_D \pi_\tau^{-1} = \pi_{\tilde{D}} \in H_{J_D}$, parametrized by $\tilde{\epsilon}_i = \epsilon_{\tau^{-1}(i)}$. Therefore, $H_{J_D}$ is a normal subgroup of $\mathrm{Stab}_{S_{mn}}(h_{\mathrm{id}})$. Thus, it follows that the total subgroup $H_J$ of the stabilizer group defined under the index set $J$ is isomorphic to the semidirect product of the two generating groups,
$$H_J \cong H_{J_{D}}\rtimes H_{J_\tau}\cong (\mathbb{Z}_2)^{m-1} \rtimes S_m.$$ 
Thus, $H_J$ has order $ |H_J| = 2^{m-1}\cdot m!$.

Furthermore, since the stabilizer subgroups $H_{I\setminus J}$ and $H_J$ act on strictly disjoint index sets,  $H_{I\setminus J}\cap H_J = \{\mathrm{id}\}$ and their elements commute, it implies that both sets are normal subgroups of the stabilizer group. By Lemma~\ref{lem:signed-permutation}, every element of the stabilizer $\mathrm{Stab}_{S_{mn}}(h_{\mathrm{id}})$ preserves the active index set $J$ and its action on $J$ is induced by a signed simultaneous row-and-column permutation. Since every signed permutation matrix has the form $A =D\Pi_\tau$, the subgroup $H_J$ constructed above exhausts the full stabilizer set. That is, 
$$\mathrm{Stab}_{S_{mn}}(h_{\mathrm{id}}) = \{\pi\in S_{mn}\mid L_\pi = AL_{\mathrm{id}}A^T \textrm{ for a signed permutation matrix } A = D\Pi\}. $$ Thus, every stabilizing permutation preserves $J$ and subsequently preserves $I\setminus J$. This means that each element of the stabilizer decomposes uniquely as a product of a permutation supported on $I\setminus J$ and a permutation supported on $J$. Thus, 
$$\mathrm{Stab}_{S_{mn}}(h_{\mathrm{id}})\cong H_{I\setminus J}\times H_J.$$ 
Therefore, the order of the stabilizer group is given by 
\begin{equation*}
        |\mathrm{Stab}_{S_{mn}}(h_{\mathrm{id}})| = |H_{I\setminus J}| \cdot |H_{J_{D}}|\cdot | H_{J_\tau}| =(mn-m^2)! \cdot 2^{m-1} \cdot m!.
\end{equation*}
Substituting this into Eq.~\eqref{eq:stabilizer_orbit}, it follows that
\begin{equation*}
        \kappa_{m,n} =  \frac{(mn)!}{(mn-m^2)!\cdot 2^{m-1} \cdot m!}.
\end{equation*}

Since  $\partial\mathrm{APPT}_{m,n} =\left\{\lambda\in\mathrm{APPT}_{m,n}:\det(\mathcal{L}(\lambda))=0\right\}$ from Theorem~\ref{thm:boundary_characterization} and $\det(\mathcal{L}(\lambda)) = \prod_{\pi\in \widetilde{S}} \det(L_\pi)=0 $, it follows that 
$ \partial \mathrm{APPT}_{m,n} = \mathrm{APPT}_{m,n} \cap \left( \bigcup_{\pi \in S_{mn}} \mathcal{Z}_{V_{mn}}(\det(L_{\pi}(\lambda)) )\right)$ where $V_{mn}=\{\lambda\in \RR^{mn}: \sum_{i=1}^{mn}\lambda_i=1 \}$.
Choose permutations $\pi_1, \ldots, \pi_{\kappa_{m,n}}\in S_{mn}$ corresponding to the distinct polynomials in the orbit. We may equivalently choose these polynomials from $\widetilde{S}$ as permutations of the $(mn-m^2)$ inactive indices of the spectrum leave the matrix constraint unchanged. Then we have  $$\bigcup_{\pi \in S_{mn}} \mathcal{Z}_{V_{mn}}(\det(L_{\pi}(\lambda))) =\bigcup_{i=1}^{\kappa_{m,n}} \mathcal{Z}_{V_{mn}}(\det(L_{\pi_i}(\lambda))) $$
where each $\det(L_{\pi_i}(\lambda))$ are distinct irreducible polynomials of degree $m$ in the polynomial ring $\mathbb{R}[\lambda]$, by construction. Suppose two algebraic sets generated by any two distinct polynomials $\det(L_{\pi_i}(\lambda)),\det(L_{\pi_j}(\lambda))$ in the affine space $V_{mn}$  are equal. Since the polynomials are irreducible,  $\det(L_{\pi_i}(\lambda))= c\det(L_{\pi_j}(\lambda))$. Evaluating at the maximally mixed spectrum $\mathbf{u}_{mn}= \frac{1}{mn}(1,\ldots,1)$, it follows that
\begin{align*}
    \det(L_{\pi_i}(\mathbf{u}_{mn}))&= c\det(L_{\pi_j}(\mathbf{u}_{mn}))\\
    \left(\frac{2}{mn}\right)^m &= c\left(\frac{2}{mn}\right)^m \neq 0.
\end{align*}
Hence $c=1$ and contradicts the assumption of equal algebraic sets. Thus, the hypersurfaces $\mathcal{Z}_{V_{mn}}(\det(L_{\pi_i}(\lambda)))$ are pairwise distinct for $i =1,\ldots, \kappa_{m,n}$. The affine space $V_{mn}$ has dimension $mn-1$ and for each $i$, the restriction of the polynomial $\det(L_{\pi_i}(\lambda)) $ to $V_{mn}$ is a non-constant irreducible polynomial. Therefore, by \cite[Proposition~1.13]{hartshorne1977}, each algebraic set  $\mathcal{Z}_{V_{mn}}(\det(L_{\pi_i}(\lambda)))$ in the $(mn-1)$-dimensional affine space $V_{mn}$ is an irreducible hypersurface of dimension
$$\dim(\mathcal{Z}_{V_{mn}}(\det(L_{\pi_i}(\lambda))) ) = (mn-1)-1= mn-2,$$
as desired.
\end{proof}

\end{document}